\def\BibTeX{{\rm B\kern-.05em{\sc i\kern-.025em b}\kern-.08emT\kern-.1667em\lower.7ex\hbox{E}\kern-.125emX}}
\newcommand{\nfrac}{\nicefrac}
\definecolor{minttext}{RGB}{0,204,171}
\newenvironment{algoproc}[1][ht]
  {
   \begin{algorithm}[#1]
   \long\def\@caption##1[##2]##3{%
     \par
     \begingroup\@parboxrestore
     \if@minipage\@setminipage\fi
     \normalsize \@makecaption{\AlCapSty{\AlCapFnt\algorithmcfname}}{\ignorespaces ##3}%
     \par\endgroup
   }}
  {\end{algorithm}}
\newenvironment{algorithm-hbox}{\begin{algoproc}}{\end{algoproc}}
\newtheorem{fact}{Fact}[section]
\newtheorem{definition2}{Definition}[section]
\renewcommand{\leq}{\leqslant}
\renewcommand{\geq}{\geqslant}
\newcommand{\down}[1]{\downarrow\mathclose(#1)}
\newcommand{\msf}[1]{\ensuremath{{\mathsf {#1}}}}
\newcommand{\abs}[1]{\left|#1\right|}   
\newcommand{\inparen}[1]{\left(#1\right)}             
\newcommand{\insquare}[1]{\left[#1\right]}             
\newcommand{\inangle}[1]{\left\langle#1\right\rangle} 
\newcommand{\expect}[1]{\mathbb{E}\insquare{#1}}
\newcommand{\chdag}{\ensuremath{\mbox{$\msf{ch}$-DAG}}}
\newcommand{\chrbc}{\msf{ch}-RBC}
\newcommand{\dagr}{\msf{DAG}-round}
\newcommand{\alef}{\msf{Aleph}}
\newcommand{\quickalef}{\msf{Quick Aleph}}
\newcommand{\asynchr}{\msf{async}-round}
\newcommand{\chdags}{\ensuremath{\mbox{$\msf{ch}$-DAGs}}}
\newcommand{\enc}{\mathrm{Enc}}
\newcommand{\dec}{\mathrm{Dec}}
\renewcommand{\H}{\mathrm{R}}
\newcommand{\hashg}{\mathrm{hash}}
\newcommand{\hash}{\mathrm{hash}}
\newcommand{\wt}{\widetilde}
\newcommand{\cD}{\mathcal{D}}
\newcommand{\cP}{\mathcal{P}}
\newcommand{\eps}{\varepsilon}
\newcommand\Z{\mathbb Z}
\newcommand\N{\mathbb N}
\newcommand\poly{\mathrm{poly}}
\newcommand{\ourskip}{\smallskip}
\renewcommand\footnotetextcopyrightpermission[1]{}
\begin{document}
\SetEndCharOfAlgoLine{}

\title[Aleph: Efficient Atomic Broadcast]{Aleph: Efficient Atomic Broadcast in Asynchronous Networks with Byzantine Nodes}

\author{Adam G\k{a}gol$^{1,2}$, Damian Le\'{s}niak$^{1,2}$, Damian Straszak$^{1}$, Micha\l{} \'{S}wi\k{e}tek$^{1,2}$}
\affiliation{\vspace{0.1cm}
\institution{$^1$ Aleph Zero Foundation}
\institution{$^2$ Jagiellonian University}
}

\date{June 2019}

\begin{abstract}
The spectacular success of Bitcoin and Blockchain Technology in recent years has provided enough evidence that a widespread adoption of a common cryptocurrency system is not merely a distant vision, but a scenario that might come true in the near future.
However, the presence of Bitcoin's obvious shortcomings such as excessive electricity consumption, unsatisfying transaction throughput, and large validation time (latency) makes it clear that a new, more efficient system is needed.

We propose a protocol in which a set of nodes maintains and updates a linear ordering of transactions that are being submitted by users.
Virtually every cryptocurrency system has such a protocol at its core, and it is the efficiency of this protocol that determines the overall throughput and latency of the system.
We develop our protocol on the grounds of the well-established field of Asynchronous Byzantine Fault Tolerant (ABFT) systems.
This allows us to formally reason about correctness, efficiency, and security in the strictest possible model, and thus convincingly prove the overall robustness of our solution. 

Our protocol improves upon the state-of-the-art HoneyBadgerBFT by Miller {\it et al}. by reducing the asymptotic latency while matching the optimal communication complexity.
Furthermore, in contrast to the above, our protocol does not require a trusted dealer thanks to a novel implementation of a trustless ABFT Randomness Beacon.

\end{abstract}

\keywords{Byzantine Fault Tolerance, Asynchrony, DAG, Atomic Broadcast, Randomness Beacon, Consensus, Cryptography}

\maketitle

\section{Introduction}\label{sec:intro}

The introduction of Bitcoin and the Blockchain in the seminal paper of Satoshi Nakamoto~\cite{Nakamoto08} is already considered a pivotal point in the history of Finanancial Technologies.
While the rise of Bitcoin's popularity clearly shows that there is significant interest in a globally distributed currency system, the scalability issues have become a significant hurdle to achieve it.
Indeed, Bitcoin's latency of $30$ to $60$ minutes, the throughput of $7$ transactions per second, and the excessive power usage of the proof of work consensus protocol have motivated the search for alternatives.

At the core of virtually every cryptocurrency system lies a mechanism that collects transactions from users and constructs a total ordering of them, i.e., either explicitly or implicitly forming a blockchain of transactions.
This total ordering is then used to determine which transaction came first in case of double-spending attempts and thus to decide which transactions should be validated.
The protocol that guides the maintenance and growth of this total ordering is the heart of the whole system.
In Bitcoin, the protocol is Proof of Work, but there are also systems based on Proof of Stake~\cite{KRDO17, BG17} and modifications of these two basic paradigms~\cite{Popov16, KB18}.
Aside from efficiency, the primary concern when designing such protocols is their security.
While Bitcoin's security certainly has passed the test of time, numerous newly proposed designs claim security but fall short of providing convincing arguments.
In many such cases, serious vulnerabilities have been discovered, see~\cite{CV17, AGMAKT17}.

Given these examples, one may agree that for a new system to be trusted, strong mathematical foundations should guarantee its security.
What becomes important then are the assumptions under which the security claim is pursued -- in order to best imitate the highly adversarial execution environment of a typical permissionless blockchain system, one should work in the strictest possible model.
Such a model -- the Asynchronous Network model with Byzantine Faults -- has spawned a large volume of research within the field of Distributed Systems for the past four decades.
Protocols that are designed to work in this model are called Asynchronous Byzantine Fault Tolerant (ABFT) -- and are resistant to harsh network conditions: arbitrarily long delays on messages, node crashes, or even multiple nodes colluding in order to break the system.
Interestingly, even though these protocols seem to perfectly meet the robustness requirements for these kinds of applications, they have still not gained much recognition in the crypto-community.
This is perhaps because the ABFT model is often considered purely theoretical, and in fact, the literature might be hard to penetrate by an inexperienced reader due to heavy mathematical formalism.
Indeed, several of the most important results in this area~\cite{FLP85, Bracha87, CR93, BT83} have been developed in the '80s and '90s and were likely not meant for deployment at that time but rather to obtain best asymptotic guarantees.
Now, 30 years in the future, perhaps surprisingly, the ABFT model has become more practically relevant than ever, since the presence of bad actors in modern distributed ledger systems is inevitable, and their power ranges from blocking or taking over several nodes to even entirely shutting down large parts of the network.

In recent important work~\cite{MXCSS16}, Miller {\it et al}. presented the HoneyBadgerBFT (HBBFT) protocol, taking the first step towards practical ABFT systems.
HBBFT achieves optimal, constant communication overhead, and its validation time scales logarithmically with the number of nodes.
Moreover, importantly, HBBFT is rather simple to understand, and its efficiency has been also confirmed by running large-scale experiments.
Still, an unpleasant drawback of HBBFT, especially in the context of trustless applications, is that it requires a trusted dealer to initialize.

In this paper, we present a completely new ABFT protocol that keeps all of the good properties of HBBFT and improves upon it in two important aspects: tightening the complexity bounds on latency from logarithmic to constant and eliminating the need for a trusted dealer.
Furthermore even though being developed for the asynchronous setting, it matches the  optimistic-case performance of $3$-round validation time of state-of-the-art synchronous protocols~\cite{CL99}.
We believe that our protocol is simple to understand, due to its transparent structure that clearly separates the network layer from the protocol logic.
We also present a contribution that might be of independent interest: an efficient, completely trustless ABFT Randomness Beacon that generates common, unpredictable randomness. 
Since such a mechanism is necessary in many blockchain-based systems, we believe it might see future applications.
Finally, we believe that this paper, while offering a valuable theoretical improvement, also contributes to bridging the gap between the theory and practice of ABFT systems.

\section{Our Results}\label{sec:our-results}

The main goal of this paper\footnote{In order to make this text accessible also for readers with no background in Secure Distributed Systems, the narration of the paper focuses on providing intuitions and explaining the core ideas, at the cost of occasionally being slightly informal. At the same time, we stay mathematically rigorous when it comes to theorems and proofs.} is to design a distributed system that runs in a trustless network environment and whose purpose is to build a collective total ordering of messages submitted by users.
Apart from blockchain, such systems have several other applications, for example implementing state machine replication, where messages can be arbitrary operations to be executed by a state machine, and the purpose of the system is to keep the states of several copies of the machine consistent by executing the commands in the same order.

A major decision to make when designing systems of this kind is how to realistically model a network environment where such a system would run.
In the subsequent paragraphs, we introduce the model we work in and argue why we find it the most suitable for applications in distributed financial systems.
\ourskip

\noindent{\bf Nodes and Messages.}
The system consists of $N$ parties $\cP = \{\cP_1, \cP_2, \ldots, \cP_N\}$ that are called \emph{nodes}.
Each node $\cP_i$ identifies itself through its public key $pk_i$ for which it holds a private key $sk_i$ that allows it to sign messages if necessary.
Messages in the system are point-to-point, i.e., a node $\cP_i$ can send a message $m$ to another node $\cP_j$; the node $\cP_j$ is then convinced that the message came from $\cP_i$ because of the signature.
We assume signatures are unforgeable, so a node also cannot deny sending a particular message, since it was signed with its private key.
\ourskip

\noindent{\bf Network Model.}
The crucial part of the model are the assumptions about the message delivery and delays.
These assumptions are typically formulated by defining the abilities of an adversary, i.e., a powerful entity that watches the system and performs actions to slow it down or cause its malfunction.
The first assumption that is somewhat necessary is that the adversary cannot destroy messages that were sent, i.e., when a node sends a message, then it eventually reaches the recipient.
In practice, this assumption can be enforced by sending the same message multiple times if necessary.
Note also that an adversary with the ability to destroy messages would easily block every possible system, by just destroying all messages.
Given that, the most powerful ability an adversary could possibly have is to delay messages for an \emph{arbitrary} amount of time. 
This is what we assume and what is known in the literature as the \emph{Asynchronous Network Model}.
That means the adversary can watch messages being sent and schedule their delivery in an arbitrary order.

In contrast, another popular model is the \emph{Synchronous Network Model}\footnote{The Synchronous Model comes in several variants depending on whether the global bound $\Delta$ is known to the algorithm or not and whether there is an initial, finite period of asynchrony.}, where a global-bound $\Delta$ exists such that whenever a message is sent, it is delivered after time at most $\Delta$.
As one can imagine, this assumption certainly makes protocol design easier; however, the crucial question to address is: which of these models -- asynchronous or synchronous -- better fits the typical execution environment of a cryptocurrency system, i.e., the Internet.

While the asynchronous model may seem overly conservative, since no real-life adversary has full control over the network delays, there are mechanisms that may grant him partial control, such as timed DoS attacks.
Additionally, synchrony assumptions may be violated due to factors such as transient network partitions or a massive CPU load on several nodes preventing them from sending messages timely.

Finally, the key quality of any protocol meant for finance-related applications is its overall robustness, i.e., a very general notion of resilience against changing network conditions and against other unforeseeable factors.
The archetypical partially synchronous algorithm PBFT~\cite{CL99} (and its numerous variants~\cite{MGGR05, KADCW09, BKM18}) works in two modes: optimistic and pessimistic. 
The often-claimed simplicity of PBFT indeed manifests itself in the optimistic mode, but the pessimistic mode (that could as well become the default one under unfavorable network conditions) is in fact a completely separate algorithm that is objectively complex and thus prone to implementation errors.
Notably, Miller {\it et al.} in~\cite{MXCSS16} demonstrate an attack scenario on protocols from the PBFT family that completely blocks their operation.
In the same paper~\cite{MXCSS16}, it is also reasoned that asynchronous protocols are substantially more robust, as the model somewhat forces a single, homogeneous operation mode of the algorithm.
As such, we believe that the asynchronous model, even though it enforces stricter conditions, is the best way to describe the environment in which our system is going to operate.
\ourskip

\noindent{\bf Node Faults.}
For the kind of system we are trying to design, one cannot assume that all the nodes always proceed according to the protocol.
A node could simply crash, or go offline, and thus stop sending messages. 
Alternatively, a node or a set of nodes could act maliciously (be controlled by the adversary) and send arbitrary messages in order to confuse the remaining nodes and simply break the protocol.
These latter kinds of nodes are typically referred to in the literature as \emph{dishonest}, \emph{malicious}, \emph{faulty}, or \emph{Byzantine} nodes, and a protocol that solves a given problem in the presence of dishonest nodes is called Byzantine Fault Tolerant (BFT).
In cryptocurrency systems the presence of dishonest nodes is more than guaranteed as there will always be parties trying to take advantage of design flaws in order to gain financial benefits.
It is known that no asynchronous system can function properly (reach consensus) in the presence of $\nfrac{N}{3}$ or more dishonest nodes~\cite{BT83}; thus, we make the standard assumption that the total number of nodes is $N=3f+1$, and $f$ of them are dishonest.
\ourskip

\subsection{Our Contribution}
Before presenting our contributions, let us formalize the problem of building a total ordering of transactions, which in the literature is known under the name of \emph{Atomic Broadcast}.
\begin{definition2}[Atomic Broadcast]
Atomic Broadcast is a problem in which a set of nodes commonly constructs a total ordering of a set of transactions, where the transactions arrive at nodes in an on-line fashion, i.e., might not be given all at once. In a protocol that is meant to solve Atomic Broadcast, the following primitives are defined for every node:
\begin{enumerate}
    \item $\msf{Input}(tx)$ is called whenever a new transaction $tx$ is received by the node,
    \item $\msf{Output}(pos, tx)$ is called when the node decides to place the transaction $tx$ at the position $pos\in \N$.
\end{enumerate}
We say that such a protocol implements Atomic Broadcast if it meets all the requirements listed below:
\begin{enumerate}
    \item {\bf Total Order.} Every node outputs a sequence of transactions in an incremental manner, i.e., before outputting a transaction at position $pos\in \N$ it must have before output transactions at all positions $<pos$, and only one transaction can be output at a given position. 
    \item {\bf Agreement.} Whenever an honest node outputs a transaction $tx$ at position $pos$, every other honest node eventually outputs the same transaction at this position.
    \item {\bf Censorship Resilience.} Every transaction $tx$ that is input at some honest node is eventually output by all honest nodes.
\end{enumerate}
\end{definition2}
\noindent
The above definition formally describes the setting in which all nodes listen for transactions and commonly construct an ordering of them.
While the Total Order and Agreement properties ensure that all the honest nodes always produce the same ordering, the Censorship Resilience property is to guarantee that no transaction is lost due to censorship (especially important in financial applications) but also guarantees that the system makes progress and does not become stuck as long as new transactions are being received.

Let us briefly discuss how the performance of such an Atomic Broadcast protocol is measured.
The notion of time is not so straightforward when talking about asynchronous protocols, as the adversary has the power to arbitrarily delay messages between nodes.
For this reason, the running time of such a protocol is usually measured in the number of asynchronous rounds it takes to achieve a specific goal~\cite{CR93}.
Roughly speaking, the protocol advances to round number $r$ whenever all messages sent in round $r-2$ have been delivered; for a detailed description of asynchronous rounds, we refer the reader to Section~\ref{sec:async_round}.

The second performance measure that is typically used when evaluating such protocols is \emph{communication complexity}, i.e., how much data is being sent between the nodes (on average, by a single honest node).
To reduce the number of parameters required to state this result, we assume that a transaction, a digital signature, an index of a node, etc., all incur a constant communication overhead when sent; in other words, the communication complexity is measured in machine words, which are assumed to fit all the above objects.
Our first contribution is the \alef{} protocol for solving Atomic Brodcast whose properties are described in the following

\begin{theorem}[Atomic Broadcast]\label{thm:AB}\label{THM:AB}
The \alef{} protocol implements Atomic Broadcast over $N=3f+1$ nodes in asynchronous network, of which $f$ are dishonest and has the following properties:
\begin{enumerate}
    \item {\bf Latency.} For every transaction $tx$ that is input to at least $k$ honest nodes, the expected number of asynchronous rounds until it is output by every honest node is $O(\nfrac{N}{k})$.
    \item {\bf Communication Complexity.} The total communication complexity of the protocol in $R$ rounds\footnote{Here by a round we formally mean a \dagr{}s, as formally defined in Section~\ref{sec:asynch-as-dag}.} is $O(T+R\cdot N^2\log N)$ per node, where $T$ is the total number of transactions input to honest nodes during $R$ rounds.
\end{enumerate}
\end{theorem}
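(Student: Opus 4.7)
The plan is to decompose the theorem into the three Atomic Broadcast correctness properties (Total Order, Agreement, Censorship Resilience) and the two performance bounds. Since \alef{} is organized around a DAG of units, my first step would be to show that every honest node eventually sees the same DAG, up to any fixed round, and that the total ordering it derives from the DAG is a deterministic function of it. The first part follows by appealing to the guarantees of \chrbc{}, which ensures that no two honest nodes ever accept different units at the same (creator, round) slot and that every accepted unit is eventually delivered everywhere. The second part uses the common coin from the randomness beacon to elect a ``head'' unit at the end of each \dagr{} and linearizes each head's past in a canonical (say, lexicographic) order; since both the DAG and the coin outputs are common across honest nodes, so is the resulting ordering. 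This establishes Total Order and Agreement.

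For the Latency part of the theorem, I would argue as follows. A transaction $tx$ input to at least $k$ honest nodes is carried by at least $k$ freshly created units in each round, until it is committed. Because the randomness beacon is unbiased and unpredictable even to an adversary who controls message scheduling, a head elected at the end of a round contains $tx$ in its past with probability at least a constant times $\nfrac{k}{N}$. A geometric tail bound then yields expected latency $O(\nfrac{N}{k})$, and the almost-sure version of the same argument gives Censorship Resilience.

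For the Communication Complexity bound, I would count contributions per \dagr{}. Each honest node produces one unit per round, consisting of the list of pending transactions it wants to propose plus $O(N)$ parent references, and disseminates it via \chrbc{}. Summed over $R$ rounds the transaction payload contributes $O(T)$ per node, since each transaction is proposed a constant number of times before being committed. Using erasure coding with Merkle commitments, \chrbc{}-ing one unit costs $O(N\log N)$ per node, so the $N$ units per round sum to $O(N^{2}\log N)$; the beacon share for that round fits inside the same budget. The main obstacle will be the latency analysis: it requires a joint argument that the randomness beacon is simultaneously adversarially unpredictable in the asynchronous model and cheap enough not to blow the per-round budget, and that these two properties combine to show that the adversary cannot selectively withhold $tx$-carrying units from being elected head. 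Tying the trustless beacon construction to the head-selection step is where the quantitative $O(\nfrac{N}{k})$ bound is really earned.
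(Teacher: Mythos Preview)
Your decomposition into correctness and performance is sound, and the Total Order/Agreement sketch is essentially what the paper does. The latency argument, however, misidentifies where the $O(N/k)$ comes from, and as a consequence your communication bound does not balance.

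In \alef{}, honest nodes do \emph{not} place every pending transaction into every unit; each node randomly samples a $1/N$ fraction of its buffer (this is precisely the tweak announced for Section~\ref{sec:proofs-ab}). The $O(N/k)$ is the expected number of rounds until $tx$ first lands in \emph{some} honest unit $U$: a geometric waiting time with per-round success probability roughly $1-(1-1/N)^k\approx k/N$ (Lemma~\ref{lemma:transaction-input}). After that point the argument is essentially deterministic: by the \chrbc{} latency guarantee every honest node receives $U$ within $O(1)$ \asynchr{}s and attaches it as a parent of its next unit, so a few \dagr{}s later \emph{every} unit in the DAG is above $U$; in particular the head of that round is above $U$ regardless of which unit is elected. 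The unpredictability of the beacon is used only to bound the head-decision delay by $O(1)$ (Lemma~\ref{lemma:choose-head}), not to place $tx$ in the head's past.

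Your claim that ``a head elected at the end of a round contains $tx$ in its past with probability at least a constant times $k/N$'' is therefore both too weak and aimed at the wrong target: the head's lower cone contains essentially all units from $O(1)$ rounds back, so this probability is $1$ after a constant number of rounds, not $k/N$. Worse, your model in which $tx$ is carried by $k$ fresh units every round until committed, combined with an $O(N/k)$ latency, would place $tx$ in $\Theta(N)$ units in total---directly contradicting your later assertion that ``each transaction is proposed a constant number of times before being committed''. That assertion is exactly the nontrivial statement the paper has to prove (Lemma~\ref{lemma:transaction-units-relation}), and its proof relies on the $1/N$ sampling strategy together with the uniform-buffer-distribution assumption; it cannot be asserted for free.
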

\noindent
We believe that the most important parameter of such a protocol and also the one that is hardest to optimize in practice\footnote{The bandwidth is potentially unbounded and can be improved in various ways, while the speed of light puts a hard limit on the network delay.} is the transaction latency.
This is why we mainly focus on achieving the optimal $O(1)$ latency\footnote{The $O(1)$ latency is achieved for transactions that are input to at least $k=\Omega(N)$ honest nodes. This is the same regime as in~\cite{MXCSS16} where it is assumed that $k\geq \nfrac{2}{3}N$.}.
In the Honey Badger BFT~\cite{MXCSS16} the latency is $\Omega(\log N)$ in the optimistic case, while it becomes $\Omega(\beta \log N)$ when there are roughly $\beta N^2 \log N$ unordered transactions in the system at the time when $tx$ is input.
In contrast, the latency of our protocol is $O(1)$ independently from the load.

Similarly to~\cite{MXCSS16} we need to make somewhat troublesome assumptions about the rate at which transactions arrive in the system to reason about the communication complexity, see Section~\ref{ssec:related_work} for comparison.
Still, in the regime of~\cite{MXCSS16} where a steady inflow of transactions is being assumed (i.e., roughly $N^2$ per round), we match the optimal $O(1)$ communication complexity per transaction of Honey Badger BFT~\cite{MXCSS16}.
As a practical note, we also mention that a variant of our protocol (Section~\ref{sec:practical}) achieves the optimal $3$-round validation latency in the ``optimistic case" akin to partially synchronous protocols from the PBFT family~\cite{CL99, BKM18} (see Section~\ref{subsec:synchronous} for a proof).
On top of that, our protocol satisfies the so-called {\it Responsiveness} property \cite{PassS17} which intuitively means that it makes progress at speed proportional to the instantaneous network throughput and is not slowed down by predefined timeouts.

Importantly, we believe that aside from achieving optimal latency and communication complexity, \alef{} is simple, clean, and easy to understand, which makes it well fit for a practical implementation.
This is a consequence of its modular structure, which separates entirely the network communication layer from the protocol logic.
More specifically, the only role of the network layer is to maintain a common (among nodes) data structure called a Communication History DAG, and the protocol logic is then stated entirely through combinatorial properties of this structure.
We introduce our protocol in Section~\ref{sec:ab} and formally prove its properties in Sections~\ref{sec:consensus_analysis},~\ref{sec:proofs-ab}  and \ref{ssec:proofs-coin}.

Another important property of our protocol is that unlike~\cite{CKS05, MXCSS16}, it does not require a trusted dealer.
The role of a trusted dealer is typically to distribute certain cryptographic keys among nodes before the protocol starts.
Clearly, in blockchain systems, no trusted entities can be assumed to exist, and thus a trusted setup is tricky if not impossible to achieve in real-world applications.

Our second contribution is an important, stand-alone component of our protocol that allows us to remove the trusted dealer assumption.
More specifically, it can be seen as a protocol for generating unpredictable, common randomness, or in other words, it implements an ABFT Randomness Beacon.
Such a source of randomness is indispensable in any ABFT protocol for Atomic Broadcast, since by the FLP-impossibility result~\cite{FLP85}, it is not possible to reach consensus in this model using a deterministic protocol.
Below we give a formalization of what it means for a protocol to implement such a randomness source.
The number $\lambda$ that appears below is the so-called security parameter (i.e., the length of a hash, digital signature, etc.). 
\begin{definition2}[Randomness Beacon]\label{def:randomnessbeacon}
We say that a pair of protocols $\inparen{\msf{Setup}, \msf{Toss}(m)}$ implements a Randomness Beacon if after running $\msf{Setup}$ once, each execution of $\msf{Toss}(m)$ (for any nonce $m\in \{0,1\}^\star$) results in $\lambda$ fresh random bits. More formally, we require
\begin{itemize}
    \item {\bf Termination.} All honest nodes correctly terminate after running either $\msf{Setup}$ or $\msf{Toss}(m)$,
    \item {\bf Correctness.} For a nonce $m\in \{0,1\}^\star$, $\msf{Toss}(m)$ results in all honest nodes outputting a common bitstring $\sigma_{m} \in\{0,1\}^{\lambda}$,
    \item {\bf Unpredictability.} No computationally bounded adversary can predict the outcome of $\msf{Toss}(m)$ with non-negligible probability.
\end{itemize}
\end{definition2}

\noindent 
In the literature such a source of randomness (especially the variant that outputs just a single bit) is often called a {\it Common Coin}~\cite{CKS05, MXCSS16, CKPS01} or a {\it Global Coin}~\cite{CR93}.
We note that once the $\msf{Toss}(m)$ protocol terminates, the value of $\sigma_m$ is known, and after revealing it, another execution of $\msf{Toss}(m)$ will not provide new random bits; thus, the Unpredictability property is meant to be satisfied only before $\msf{Toss}(m)$ is initiated.
As our second main contribution, in Section~\ref{sec:global_coin} we introduce the \msf{ABFT}-\msf{Beacon} protocol, and in Section~\ref{ssec:proofs-coin} we prove the following
\begin{theorem}[\msf{ABFT}-\msf{Beacon}]\label{thm:strongcoin}
The \msf{ABFT}-\msf{Beacon} protocol implements a Randomness Beacon $\inparen{\msf{Setup}, \msf{Toss}(m)}$ such that:
\begin{itemize}
    \item the $\msf{Setup}$ phase takes $O(1)$ asynchronous rounds to complete and has $O(N^2 \log N)$ communication complexity per node,
    \item each subsequent call to $\msf{Toss}(m)$ takes $1$ asynchronous round and has $O(N)$ communication complexity per node.
\end{itemize}
\end{theorem}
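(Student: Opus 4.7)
The plan is to implement the beacon on top of an $(f+1)$-out-of-$N$ threshold signature scheme with a unique-signature property (for instance, threshold BLS), so that hashing the threshold signature on any message $m$ yields unpredictable but agreed randomness. Under this reduction, $\msf{Setup}$ becomes an asynchronous distributed key generation (DKG) that produces a common group verification key together with secret shares held by the honest nodes, and $\msf{Toss}(m)$ has each node simply broadcast its signature share on the nonce $m$, combine the first $f+1$ valid shares received, and output the hash of the resulting group signature.

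The three beacon properties of $\msf{Toss}$ then fall out quickly. Termination holds because at most $f$ nodes are dishonest, so every honest node eventually collects $f+1$ valid shares and can interpolate. Correctness holds because signature uniqueness makes every reconstruction identical regardless of which subset of shares was used, yielding a common bitstring $\sigma_m$ after hashing. Unpredictability reduces to the standard pseudorandomness of threshold signatures under the chosen hardness assumption (e.g.\ CDH in the random oracle model), since before $\msf{Toss}(m)$ is initiated the adversary holds only $f$ shares, one below the reconstruction threshold, so $\sigma_m$ is computationally indistinguishable from uniform. The communication accounting is immediate: one all-to-all exchange of constant-size shares costs $O(N)$ words per node in a single asynchronous round, matching the claimed bounds for $\msf{Toss}$.

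The substantive work lies in $\msf{Setup}$. I would have each node act as a dealer of a random degree-$f$ polynomial using a verifiable secret sharing with polynomial commitments of size $O(\log N)$, so that distributing shares costs $O(N \log N)$ words per dealer and hence $O(N^2 \log N)$ words per node once all dealers act in parallel. The main obstacle, and the delicate part of the proof, is to have the honest nodes agree in $O(1)$ asynchronous rounds on a common ``core set'' of at least $N-f$ successfully dealt polynomials whose sum defines the group secret, because the standard Asynchronous Common Subset primitive requires a common coin---precisely the object we are trying to build, which would make the construction circular. I expect the proof to sidestep this circularity by leveraging the \chdag{} primitive from Section~\ref{sec:ab} to provide commonly observed completion certificates for the dealings after a constant number of \dagr{}s, so that the core set can be read off deterministically from the DAG topology rather than from a separate randomized agreement sub-protocol. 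The bulk of the argument will then be verifying that this DAG-based pinning really does stabilise in $O(1)$ rounds for every honest node and fits within the $O(N^2 \log N)$ per-node budget, while using only primitives that do not themselves invoke the beacon.
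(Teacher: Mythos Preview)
Your analysis of $\msf{Toss}$ is essentially what the paper does and is fine. The gap is in $\msf{Setup}$, specifically in how you propose to break the circularity.

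You write that the core set ``can be read off deterministically from the DAG topology rather than from a separate randomized agreement sub-protocol.'' This cannot work: different honest nodes hold different local copies of the \chdag{} at any moment, and agreeing on a \emph{common} core set is itself a consensus problem, which by FLP cannot be solved deterministically under asynchrony. The paper is explicit about this obstacle and does \emph{not} avoid randomized agreement. Instead, each node $i$ computes from its own round-$6$ unit a \emph{local} set $T_i$ of accepted dealings (these sets need not coincide across nodes), and $T_i$ defines a randomness source $\msf{MultiCoin}_i$. The nodes then run the full $\msf{ChooseHead}$ consensus of Section~\ref{sec:ab} to pick one head at round $6$, and the crucial trick is that the $\msf{SecretBits}(i,\cdot)$ calls needed to decide on the unit $U[i;6]$ are implemented using $\msf{MultiCoin}_i$ itself. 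This is sound because the protocol guarantees that whenever $\msf{SecretBits}(i,\cdot)$ is invoked, the caller already has $U[i;6]$ in its \chdag{} and hence knows $T_i$; it is the reason the paper insists on the otherwise-redundant first argument of $\msf{SecretBits}$. So the circularity is broken not by eliminating randomness from the agreement, but by bootstrapping: each candidate randomness source powers the consensus instance that decides whether it becomes the chosen one.

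A smaller inaccuracy: the $\log N$ factor in the $O(N^2\log N)$ bound does not come from $O(\log N)$-size polynomial commitments. The paper uses plain Pedersen-style commitments of size $O(N)$ (a vector of $f{+}1$ group elements), so each unit in the setup carries $O(N)$ words; the $\log N$ arises from the Merkle-tree branches inside the erasure-coded reliable broadcast used to disseminate units.
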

\noindent
We also remark that the \msf{ABFT}-\msf{Beacon} is relatively light when it comes to computational complexity, as the setup requires roughly $O(N^3)$ time per node, and each subsequent toss takes typically $O(N)$ time (under a slightly relaxed adversarial setting); see Section~\ref{sec:global_coin}.

As an addition to our positive results, in Section~\ref{sec:fork-bomb} we introduce the {\it Fork Bomb} -- a spam attack scenario that affects most known DAG-based protocols.
In this attack, malicious nodes force honest nodes to download exponential amounts of data and thus likely crash their machines.
This attack when attempted to prevent at the implementation layer by banning ``suspect nodes'' is likely to harm honest nodes as well.
Thus, we strongly believe that without a mechanism preventing this kind of attacks already at the protocol layer, liveness is not guaranteed.
The basic version of \alef{} is resistant against this attack through the use of reliable broadcast to disseminate information among nodes.
In Section~\ref{sec:practical} we also show a mechanism to defend against this attack for a gossip-based variant of \alef{}.

\subsection{Related Work}\label{ssec:related_work}
\ourskip

\noindent{\bf Atomic Broadcast.}
For an excellent introduction to the field of Distributed Computing and overview of Atomic Broadcast and Consensus protocols we refer the reader to the book~\cite{CGR11}.
A more recent work of~\cite{CV17} surveys existing consensus protocols in the context of cryptocurrency systems. 

The line of work on synchronous BFT protocols was initiated in~\cite{CL99} with the introduction of PBFT.
PBFT and its numerous variants~\cite{MGGR05, KADCW09, BKM18} tolerate byzantine faults, yet their efficiency relies on the good behavior of the network, and drops significantly when entering the (in some cases implicit) ``pessimistic mode''.
As thoroughly reasoned in~\cite{MXCSS16}, synchronous algorithms might not be well suited for blockchain-related applications, because of their lack of robustness and vulnerability to certain types of attacks.

In the realm of asynchronous BFT protocols, a large part of the literature focuses on the more classical Consensus problem, i.e., reaching binary agreement by all the honest nodes.
As one can imagine, ordering transactions can be reduced to a sequence of binary decisions, and indeed there are known generic reductions that solve Atomic Broadcast by running Consensus multiple times~\cite{CNV06, CKPS01,BE03, MHS11}.
However, all these reductions either increase the number of rounds by a super-constant factor or introduce a significant communication overhead.
Thus, even though Consensus protocols with optimal number of $O(1)$ rounds~\cite{CR93} and optimal communication complexity~\cite{CKS05, MMR15} were known early on, only in the recent work of~\cite{MXCSS16} the Honey Badger BFT (HBBFT) protocol with optimal $O(1)$ communication complexity per transaction was proposed.

The comparison of our protocol to HBBFT is not straightforward since the models of transaction arrivals differ.
Roughly, HBBFT assumes that at every round, every node has $\Omega(N^2)$ transactions in its buffer.
Under this assumption the communication complexity of HBBFT per epoch, per node is roughly $O(N^2)$, and also $\Omega(N^2)$ transactions are ordered in one epoch, hence the optimal $O(1)$ per transaction is achieved.
However, the design of HBBFT that is optimized towards low communication complexity has the negative effect that the latency might be large under high load.
More precisely, if $\beta N^2$ transactions are pending in the system\footnote{For the sake of this comparison we only consider transactions that have been input to $\Omega(N)$ honest nodes.} when $tx$ is being input, the latency of $tx$ is $\approx \beta $ epochs, thus $\approx \beta \log(N)$ rounds.
Our algorithm, when adjusted to this model would achieve $\approx \beta$ rounds of latency (thus $\log(N)$-factor improvement), while retaining the same, optimal communication complexity.

In this paper we propose a different assumption on the transaction buffers that allows us to better demonstrate the capabilities of our protocol when it comes to latency.
We assume that at every round the ratio between lengths of transaction buffers of any two honest nodes is at most a fixed constant.
In this model, our protocol achieves $O(1)$ latency, while a natural adaptation of HBBFT would achieve latency $O(\log(N))$, thus again a factor-$\log(N)$ improvement. 
A qualitative improvement over HBBFT that we achieve in this paper is that we completely get rid of the trusted dealer assumption.
We also note that the definitions of Atomic Broadcast slightly differ between this paper and~\cite{MXCSS16}: we achieve Censorship Resilience assuming that it was input to even a single honest node, while in~\cite{MXCSS16} it has to be input to $\Omega(N)$ nodes. 

The recent work\footnote{We would like to thank the anonymous reviewer for bringing this work to our attention.} of Abraham et al.~\cite{AMS19} studies a closely related Validated Asynchronous Byzantine Agreement (VABA) problem, which is, roughly speaking, the problem of picking one value out of $N$ proposed by the nodes.
The protocol in~\cite{AMS19} achieves $O(1)$ latency and has optimal communication complexity of $O(N)$ per node.
We believe that combining it with the ideas present in HoneyBadgerBFT can yield an algorithm with the same communication complexity as HoneyBadgerBFT but with latency improved by a factor of $\log N$.
However, when compared to ours, such a protocol still requires a trusted dealer and achieves weaker censorship resilience.

Finally, we remark that our algorithm is based on maintaining a DAG data structure representing the communication history of the protocol.
This can be seen as a realization of Lamport's ``happened-before'' relation~\cite{Lamport78} or the so-called Causal Order~\cite{HT94}.
To the best of our knowledge, the first instance of using DAGs to design asynchronous protocols is the work of~\cite{MM99}.
More recently DAGs gained more attention in the blockchain space~\cite{DH18, Baird16, Popov16}.
\ourskip

\noindent{\bf Common Randomness.} For a thorough overview of previous work on generating randomness in distributed systems and a discussion on the novelty of our solution we refer to Section~\ref{ssec:comparison_beacons}.


\section{Atomic Broadcast}\label{sec:ab}

This section is devoted to an outline and discussion of the \alef{} protocol. 
We start by sketching the overall idea of the algorithm and explaining how it works from a high-level perspective.
In this process we present the most basic variant of the \alef{} protocol, which already contains all the crucial ideas and gives a provably correct implementation of Atomic Broadcast.
On the theoretical side, however, this basic variant might suffer from a slightly sub-optimal communication complexity.
In Section~\ref{sec:proofs-ab} we describe a simple tweak to the protocol which allows us to finally achieve the communication complexity as claimed in Theorem~\ref{thm:AB}.
We refer to Sections~\ref{sec:consensus_analysis} and~\ref{sec:proofs-ab} for proofs of correctness and efficiency of \alef{}. 

\subsection{Asynchronous communication as a DAG}\label{sec:asynch-as-dag}
The concept of a ``communication round'' as explained in the preliminaries, is rather natural in the synchronous model, but might be hard to grasp when talking about asynchronous settings. This is one of the reasons why asynchronous models are, generally speaking, harder to work with than their (partially) synchronous counterparts, especially when it comes to proving properties of such protocols.  
\ourskip

\noindent{\bf Units and DAGs.}
To overcome the above issue, we present a general framework for constructing and analyzing asynchronous protocols that is based on maintaining (by all the nodes) the common ``communication history'' in the form of an append-only structure: a DAG (Directed Acyclic Graph).

The DAGs that we consider in this paper originate from the following idea: we would like to divide the execution of the algorithm into virtual rounds so that in every round $r$ every node emits exactly one \emph{Unit} that should be thought of as a message broadcast to all the other nodes. Moreover, every such unit should have ``pointers'' to a large enough number of units from the previous round, emitted by other nodes. Such pointers can be realized by including hashes of the corresponding units, which, assuming that our hash function is collision-free, allows to uniquely determine the ``parent units''. Formally, every unit has the following fields:
\begin{itemize}
    \item {\bf Creator.} Index and a signature of unit's creator.
    \item {\bf Parents.} A list of units' hashes. 
    \item {\bf Data.} Additional data to be included in the unit. 
\end{itemize}
The fact that a unit $U$ has another unit $V$ included as its parent signifies that the information piece carried by $V$ was known to $U$'s creator at the time of constructing $U$, i.e., $V$ causally preceeds $U$.
All the nodes are expected to generate such units in accordance to some initially agreed upon rules (defined by the protocol) and maintain their local copies of the common DAG, to which new units are continuously being added. 
\ourskip

\noindent{\bf Communication History DAGs.}
To define the basic rules of creating units note first that this DAG structure induces a partial ordering on the set of units. To emphasize this fact, we often write $V \leq U$ if either $V$ is a parent of $U$, or more generally (transitive closure) that $V$ can be reached from $U$ by taking the ``parent pointer'' several times.
This also gives rise to the notion of \emph{\dagr{}} of a unit $U$ that is defined to be the maximum length of a downward chain starting at $U$. In other words, recursively, a unit with no parents has \dagr{} $0$ and otherwise a unit has \dagr{} equal to the maximum of \dagr{}s of its parents plus one.
We denote the \dagr{} of a unit $U$ by $\H(U)$.
Usually we just write ``round'' instead of \dagr{} except for parts where the distinction between \dagr{} and \asynchr{} (as defined in Section~\ref{sec:async_round}) is relevant (i.e., mostly Section~\ref{sec:proofs-ab}).
We now proceed to define the notion of a \emph{\chdag{}} (communication history DAG) that serves as a backbone of the \alef{} protocol.

\begin{definition2}[\chdag]\label{def:chdag}
We say that a set of units $\cD$ created by $N=3f+1$ nodes forms a \chdag{}  if the parents of every unit in $\cD$ also belong to $\cD$ and additionally the following conditions hold true
\begin{enumerate}
    \item {\bf Chains.} For each honest node $\cP_i\in \cP$, the set of units in $\cD$ created by $\cP_i$ forms a chain.
    \item {\bf Dissemination.} Every round-$r$ unit in $\cD$  has at least $2f+1$ parents of round $r-1$.
    \item{\bf Diversity} Every unit in $\cD$ has parents created by pairwise distinct nodes.
\end{enumerate}
\end{definition2}
\noindent 
What the above definition tries to achieve is that, roughly, every node should create one unit in every round and it should do so after learning a large enough portion (i.e. at least $2f+1$) of units created in the previous round.
The purpose of the {\bf Chains} rule is to forbid \emph{forking}, i.e., a situation where a node creates more than one unit in a single round.
The second rule, {\bf Dissemination}, guarantees that a node creating a unit in round $r$ learned as much as possible from the previous round -- note that as there are only $N-f=2f+1$ honest nodes in the system, we cannot require that a node receives more than $2f+1$ units, as byzantine nodes might not have created them.
The unit may have additional parents, but the {\bf Diversity} rule ensures that they are created by different nodes - otherwise units could become progressively bigger as the \chdag{} grows, by linking to all the units in existence, hence increasing the communication complexity of the protocol.
\ourskip

\noindent{\bf Building DAGs.}
The pseudocode $\msf{DAG}-\msf{Grow}(\cD)$ provides a basic implementation of a node that takes part in maintaining a common DAG. Such a node initializes first $\cD$ to an empty DAG and then runs two procedures $\msf{CreateUnit}$ and $\msf{ReceiveUnits}$ in parallel. 
To create a unit at round $r$, we simply wait until $2f+1$ units of round $r-1$ are available and then, for every node, we pick its unit of highest round (i.e., of round at most $r-1$) and include all these $N$ (unless some nodes created no units at all, in which case $<N$) units as parents of the unit.
In other words, we wait just enough until we can advance to the next round, and attach to our round-$r$ unit everything we knew at that point in time.

\renewcommand{\algorithmcfname}{\msf{DAG}-\msf{Grow}$(\cD)$}
\begin{algorithm-hbox}
  \SetKwFunction{FMain}{$\msf{CreateUnit}(\msf{data})$}
  \SetKwProg{Fn}{ }{:}{}
  \Fn{\FMain}
    {
    \For{$r=0,1,2, \ldots$}
        {
        \If{r>0}
        {
            {\bf wait until} $|\{U\in \cD:\H(U)=r-1\}|\geq 2f+1$
        }
        $P \leftarrow \{$maximal $\cP_i$'s unit of round $<r$ in $\cD:\cP_i\in\cP\}$\\
        {\bf create} a new unit $U$ with $P$ as parents\\
        {\bf include} \msf{data} in $U$\\
        {\bf add} $U$ to $\cD$\\
        $\msf{RBC}(U)$
        }
    }
\SetKwFor{Loop}{loop forever}{}{}
  \SetKwFunction{FMain}{$\msf{ReceiveUnits}$}
  \SetKwProg{Fn}{ }{:}{}
  \Fn{\FMain}
    {
    \SetKwProg{Upon}{upon}{ do}{end}
    \Loop{}
    {
    \Upon{receiving a unit $U$ via \msf{RBC}}
        {
        {\bf add} $U$ to $\cD$
        }
    }
    }
  \caption{ }

\end{algorithm-hbox}

Both $\msf{CreateUnit}$ and $\msf{ReceiveUnits}$ make use of a primitive called \msf{RBC} that stands for \emph{Reliable Broadcast}. This is an asynchronous protocol that guarantees that every unit broadcast by an honest node is eventually received by all honest nodes.
More specifically we use the validated RBC protocol which also ensures that incorrect units (with incorrect signatures or incorrect number of parents, etc.) are never broadcast successfully.
Furthermore, our version of RBC forces every node to broadcast exactly one unit per round, thus effectively banning forks.
We refer to Section~\ref{sec:rbc} for a thorough discussion on Reliable Broadcast.

The validated RBC algorithm internally checks whether a certain $\msf{Valid}(U)$ predicate is satisfied when receiving a unit $U$.
This predicate makes sure that the requirements in Definition~\ref{def:chdag} are satisfied, as well as verifies that certain data, required by the protocol is included in $U$.
Consequently, only valid units are added to the local \chdags{} maintained by nodes.
Furthermore, as alluded above, there can be only a single copy of a unit created by a particular node in a particular round.
This guarantees that the local copies of \chdags{} maintained by different nodes always stay consistent.

Let us now describe more formally the desired properties of a protocol used to grow and maintain a common \chdag{}.
For this it is useful to introduce the following convention: we denote the local copy of the \chdag{} maintaned by the $i$th node by $\cD_i$.

\begin{definition2}\label{def:protocol-properties}
We distinguish the following properties of a protocol for constructing a common \chdag{}
\begin{enumerate}
    \item {\bf Reliable:} for every unit $U$ added to a local copy $\cD_i$ of an honest node $\cP_i$, $U$ is eventually added to the local copy $\cD_j$ of every honest node $\cP_j$.
    \item {\bf Ever-expanding:} for every honest node $\cP_i$ the local copy $\cD_i$ grows indefinitely, i.e., $\H(\cD_i):=\max\{r(U)\mid U\in \cD_i\}$ is unbounded.
    \item {\bf Fork-free:} whenever two honest nodes $i_1, i_2$ hold units $U_1 \in \cD_{i_1}$ and $U_2 \in \cD_{i_2}$ such that both $U_1$, $U_2$ have the same creator and the same round number, then $U_1=U_2$.
\end{enumerate}
\end{definition2}
\noindent 
Having these properties defined, we are ready to state the main theorem describing the process of constructing \chdag{} by the \msf{DAG}-\msf{Grow} protocol.

\begin{theorem}\label{thm:broadcast-to-dag}
The \msf{DAG}-\msf{Grow} protocol is reliable, ever-expanding, and fork-free
Additionally, during asynchronous round $r$ each honest node holds a local copy of $\cD$ of round at least $\Omega(r)$.
\end{theorem}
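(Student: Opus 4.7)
The plan is to establish the three structural properties from Definition~\ref{def:protocol-properties} in turn, each one reducing almost directly to a guarantee supplied by the validated \msf{RBC} primitive, and then to separately bound the rate at which the \dagr{} advances in terms of \asynchr{}s.

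For reliability I would observe that every unit added to $\cD_i$ by honest $\cP_i$ either (a) was created locally by $\msf{CreateUnit}$, in which case it is immediately dispatched through $\msf{RBC}(U)$, or (b) was delivered inside $\msf{ReceiveUnits}$, which only fires upon an RBC delivery. Correctness of validated RBC then guarantees that $U$ is eventually delivered to every honest $\cP_j$ and inserted into $\cD_j$. For fork-freeness I would invoke the property of the version of RBC used here (stated just after the pseudocode) that each node is allowed at most one successful broadcast per round; since any $U_1 \in \cD_{i_1}$ and $U_2 \in \cD_{i_2}$ with the same creator and round must both have entered through RBC, they must coincide.

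For the ever-expanding property I would induct on $r$, showing that every honest node eventually creates and RBC-broadcasts a unit $U$ with $\H(U)=r$. The base $r=0$ is immediate because the {\bf wait until} clause is skipped. For the inductive step, once every honest node has broadcast its round-$(r-1)$ unit, reliability ensures that each honest node eventually sees all $\geq 2f+1$ such units in its local $\cD$, passes the wait condition, and produces a round-$r$ unit; this unit satisfies the Dissemination and Diversity conditions of Definition~\ref{def:chdag} by construction of the parent set $P$, so it is accepted by $\msf{Valid}$ at every honest recipient. Hence $\H(\cD_i)$ is unbounded for every honest $\cP_i$.

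Finally, to obtain the $\Omega(r)$ bound it suffices to show that advancing one \dagr{} costs only $O(1)$ \asynchr{}s. I would prove by induction that if every honest node has already created and RBC-sent its round-$(r-1)$ unit by \asynchr{} $t$, then by \asynchr{} $t+c$ (for a constant $c$ equal to the \asynchr{} latency of validated RBC) every honest node has received the $\geq 2f+1$ honest round-$(r-1)$ units it needs, produced its own round-$r$ unit, and broadcast it. Iterating gives \dagr{} at least $\Omega(r)$ by \asynchr{} $r$. The main obstacle is delicate but narrow: one must verify that the adversary's scheduling freedom cannot stall the chain of RBC instances beyond $O(1)$ \asynchr{}s per step, which relies on the precise \asynchr{}-latency guarantee of the validated RBC protocol to be established in Section~\ref{sec:rbc}; modulo that lemma the argument above is routine.
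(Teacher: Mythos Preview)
Your proposal is correct and follows essentially the same approach as the paper: each of the three structural properties is reduced to the corresponding guarantee of \chrbc{} (Agreement for reliability, the one-broadcast-per-round uniqueness for fork-freeness, and eventual delivery for ever-expanding), and the $\Omega(r)$ bound is obtained from the constant \asynchr{}-latency of \chrbc{}. The paper packages the last point as a separate lemma (Lemma~\ref{lemma:unit-production-speed}, showing one new unit every $5$ \asynchr{}s) and phrases ever-expanding by contradiction rather than direct induction, but these are cosmetic differences.
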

\noindent
For a proof we refer the reader to Section~\ref{sec:dag_grow_proof}.
\ourskip

\noindent{\bf Benefits of Using DAGs.}
After formally introducing the idea of a \chdag{} and explaining how it is constructed we are finally ready to discuss the significance of this concept.
First of all, \chdags{} allow for a clean and conceptually simple separation between the communication layer (sending and receiving messages between nodes) and the protocol logic (mainly deciding on relative order of transactions).
Specifically, the network layer is simply realized by running Reliable Broadcast in the background, and the protocol logic is implemented as running off-line computations on the local copy of the \chdag{}. 
One can think of the local copy of the \chdag{} as the \emph{state} of the corresponding node; all decisions of a node are based solely on its state.
One important consequence of this separation is that the network layer, being independent from the logic, can be as well implemented differently, for instance using ordinary broadcast or random gossip (see Section~\ref{sec:practical}).

In the protocol based on \chdags{} the concept of an adversary and his capabilities are arguably easier to understand.
The ability of the adversary to delay a message now translates into a unit being added to some node's local copy of the \chdag{} with a delay.
Nonetheless, every unit that has ever been created will still be eventually added to all the \chdags{} maintained by honest nodes.
A consequence of the above is that the adversary can (almost arbitrarily) manipulate the structure of the \chdag{}, or, in other words, he is able to force a given set of round-$(r-1)$ parents for a given round-$r$ unit.
But even then, it needs to pick at least $2f+1$ round-$(r-1)$ units, which enforces that more than a half of every unit's parents are created by honest nodes.

\subsection{Atomic broadcast via \chdag{}}\label{ssec:ab}
In this section we show how to build an Atomic Broadcast protocol based on the \chdag{} maintained locally by all the nodes.
Recall that nodes receive transactions in an on-line fashion and their goal is to construct a common linear ordering of these transactions. 
Every node thus gathers transactions in its local buffer and whenever it creates a new unit, all transactions from its buffer are included in the data field of the new unit and removed from the buffer.\footnote{This is the simplest possible strategy for including transactions in the \chdag{} and while it is provably correct it may not be optimal in terms of communication complexity. We show how to fix this in Section~\ref{sec:proofs-ab}.} Thus, to construct a common linear ordering on transactions it suffices to construct a linear ordering of units in the \chdag{} (the transactions within units can be ordered in an arbitrary yet fixed manner, for instance alphabetically). The ordering that we are going to construct also has the nice property that it \emph{extends} the ordering of units induced by the \chdag{} (i.e. the causal order). 

Let us remark at this point that all primitives that we describe in this section  take a local copy $\cD$ of the \chdag{} as one of their parameters and return either
\begin{itemize}
    \item a result (which might be a single bit, a unit, etc.), or
    \item $\bot$, signifying that the result is not yet available in $\cD$.
\end{itemize}
The latter means that in order to read off the result, the local copy $\cD$ needs to grow further.
We occasionally omit the $\cD$ argument, when it is clear from the context which local copy should be used.
\ourskip

\noindent{\bf Ordering Units.}
The main primitive that is used to order units in the \chdag, $\msf{OrderUnits}$, takes a local copy $\cD$ of the \chdag{} and outputs a list $\msf{linord}$ that contains a subset of units in $\cD$.
This list is a prefix of the global linear ordering that is commonly generated by all the nodes.
We note that $\msf{linord}$ will normally not contain all the units in $\cD$ but a certain subset of them.
More precisely, $\msf{linord}$ contains all units in $\cD$ except those created in the most recent (typically around $3$) rounds.
While these top units cannot be ordered yet, the structural information about the \chdag{}  they carry is used to order the units below them.
Intuitively, the algorithm that is run in the \chdag{} at round $r$ makes decisions regarding units that are several rounds deeper, thus the delay.

Note that different nodes might hold different versions of the \chdag{} at any specific point in time, but what we guarantee in the \chdag{} growing protocol is that all copies of \chdag{} are consistent, i.e., all the honest nodes always see exactly the same version of every unit ever created, and that every unit is eventually received by all honest nodes.
The function $\msf{OrderUnits}$ is designed in such a way that even when called on different versions of the \chdag{} $\cD_1$, $\cD_2$, as long as they are consistent, the respective outputs $\msf{linorder}_1$, $\msf{linorder}_2$ also agree, i.e., one of them is a prefix of the other.

\renewcommand{\algorithmcfname}{\alef{}}
\begin{algorithm-hbox}
  \SetKwFunction{FMain}{$\msf{OrderUnits}$}
  \SetKwProg{Fn}{ }{:}{}
  \Fn{\FMain{$\cD$}}
    {
        $\msf{linord} \leftarrow []$\\
        \For{$r=0,1,\dots, \H(\cD)$}
            {
            $V_r \leftarrow \msf{ChooseHead}(r, \cD)$\\
            \lIf {$V_r=\bot$}
            {
                {\bf break}
            }
            $\msf{batch}\leftarrow \{U\in \cD:U\leq V_r$ {\bf and} $U\notin \msf{linord}$\}\\
            {\bf order} $\msf{batch}$ deterministically\\
            {\bf append} $\msf{batch}$ to \msf{linord}\\
            }
        
        {\bf output} \msf{linord}
    }
  \SetKwFunction{FMain}{$\msf{ChooseHead}$}
  \SetKwProg{Fn}{ }{:}{}
  \Fn{\FMain{$r, \cD$}}
    {
        $\pi_r \leftarrow \msf{GeneratePermutation}(r,\cD)$\\
        \lIf{$\pi_r = \bot$}
        {{\bf output } $\bot$}
        \Else
        {
            $(U_1, U_2, \ldots, U_k)\leftarrow \pi_r$\\
            \For{$i=1,2, \ldots, k$}
            {
                \If{$\msf{Decide}(U_i, \cD)=1$}
                {
                    {\bf output} $U_i$ \\
                }
                \ElseIf{\msf{Decide}$(U_i, \cD)=\bot$}
                {
                    {\bf output} $\bot$ \\
                }
            }
            {\bf output} $\bot$
        }
    }

\caption{ }
\end{algorithm-hbox}

The $\msf{OrderUnits}$ primitive is rather straightforward.
At every round $r$, one unit $V_r$ from among units of round $r$ is chosen to be a ``head'' of this round, as implemented in $\msf{ChooseHead}$.
Next, all the units in $\cD$ that are less than $V_r$, but are not less than any of $V_0, V_1, \ldots, V_{r-1}$ form the $r$th batch of units.
The batches are sorted by their round numbers and units within batches are sorted topologically breaking ties using the units' hashes.
\ourskip

\noindent{\bf Choosing Heads.}
Perhaps surprisingly, the only nontrivial part of the protocol is choosing a head unit for each round.
It is not hard to see that simple strategies for choosing a head fail in an asynchronous network.
For instance, one could try picking always the unit created by the first node to be the head: this does not work because the first node might be byzantine and never create any unit.
To get around this issue, one could try another tempting strategy: to choose a head for round $r$, every node waits till round $r+10$, and declares as the head the unit of round $r$ in its copy of the \chdag{} that has the smallest hash.
This strategy is also doomed to fail, as it might cause inconsistent choices of heads between nodes: this can happen when some of the nodes see a unit with a very small hash in their \chdag{} while the remaining ones did not receive it yet, which might have either happened just by accident or was forced by actions of the adversary. 
Note that under asynchrony, one can never be sure whether missing a unit from some rounds back means that there is a huge delay in receiving it or it was never created (the creator is byzantine).
More generally, this also justifies that in any asynchronous BFT protocol it is never correct to wait for one fixed node to send a particular message.

Our strategy for choosing a head in round $r$ is quite simple: pick the first unit (i.e., with lowest creator id) that is visible by every node.
The obvious gap in the above is how do we decide that a particular unit is visible?
As observed in the example above, waiting a fixed number of rounds is not sufficient, as seeing a unit locally does not imply that all other nodes see it as well.
Instead, we need to solve an instance of \emph{Binary Consensus} (also called \emph{Binary Agreement}).
In the pseudocode this is represented by a $\msf{Decide}(U)$ function that outputs $0$ or $1$; we discuss it in the subsequent paragraph.

There is another minor adjustment to the above scheme that aims at decreasing the worst case latency, which in the just introduced version is $O(\log N)$ rounds\footnote{The reason is that the adversary could cause the first $\Omega(N)$ units to be decided $0$. Since the delay of each such decision is a geometric random variable with expectation $\theta(1)$, the maximum of $\Omega(N)$ of them is $\Omega(\log N)$.}.
When using a random permutation (unpredictable by the adversary) instead of the order given by the units creator indices, the latency provably goes down to $O(1)$.
Such an unpredictable random permutation is realized by the $\msf{GeneratePermutation}$ function.

\ourskip

\noindent{\bf Consensus.}
For a round-$(r+1)$ unit $U$, by $\down{U}$ we denote the set of all round-$r$ parents of $U$.
Consider now a unit $U_0$ in round $r$; we would like the result of $\msf{Decide}(U_0, \cD)$ to ``answer'' the question whether all nodes can see the unit $U_0$.
This is done through voting: starting from round $r+1$ every unit casts a ``virtual" vote\footnote{The idea of virtual voting was used for the first time in~\cite{MM99}.} on $U_0$.
These votes are called virtual because they are never really broadcast to other nodes, but they are computed from the \chdag{}.
For instance, at round $r+1$, a unit $U$ is said to vote $1$ if $U_0<U$ and $0$ otherwise, which directly corresponds to the intuition that the nodes are trying to figure out whether $U_0$ is visible or not.

\renewcommand{\algorithmcfname}{\alef{}-\msf{Consensus}$(\cD)$}
\begin{algorithm-hbox}
  \SetKwFunction{FMain}{$\msf{Vote}$}
  \SetKwProg{Fn}{ }{:}{}
  \Fn{\FMain{$U_0,U, \cD$}}
    {
    \lIf{$\H(U) \leq \H(U_0)+1$}{{\bf output}\footnotemark  $~[U_0 < U]$}
    \Else{
    $A \leftarrow \{\msf{Vote}(U_0, V, \cD) :V \in \down{U} \}$\\
    {\bf if} $A = \{\sigma\}$ {\bf then output} $\sigma$\\
    {\bf else output} $\msf{CommonVote}(U_0,\H(U), \cD)$
    }
    }
  \SetKwFunction{FMain}{$\msf{UnitDecide}$}
  \SetKwProg{Fn}{ }{:}{}
  \Fn{\FMain{$U_0,U, \cD$}}
    {
    \lIf{$\H(U)<\H(U_0)+2$}{{\bf output} $\bot$}
    $\msf{v}\leftarrow\msf{CommonVote}(U_0,U)$\\
    \lIf{$|\{V\!\!\in \down{U}\!\!:\!\msf{Vote}(U_0,\! V)\! =\! \msf{v} \}|\!\!\geq\!\! 2f\!\!+\!1$}
        {{\bf output} \msf{v}}
    \lElse
        {{\bf output} $\bot$}
    }

  \SetKwFunction{FMain}{$\msf{Decide}$}
  \SetKwProg{Fn}{ }{:}{}
  \Fn{\FMain{$U_0, \cD$}}
    {
    \If{$\exists_{U\in \cD} \msf{UnitDecide}(U_0,U, \cD)=\sigma\in\{0,1\}$}
        {
        {\bf output} $\sigma$
        }
    \lElse
        {{\bf output} $\bot$}}
  \caption{ }
\end{algorithm-hbox}
\footnotetext{In the expression $[U_0<U]$ we use the Iverson bracket notation, i.e., $[U_0<U]=1$ if $U_0<U$ and it is $0$ otherwise.}

Starting from round $r+2$ every unit can either make a final decision on a unit or simply vote again.
This process is guided by the function $\msf{CommonVote}(U_0, r', \cD)$ that provides a common bit $\in \{0,1\}$ for every round $r'\geq r+2$.
Suppose now that $U$ is of round $r'\geq r+2$ and at least $2f+1$ of its round-$(r'-1)$ parents in the \chdag{} voted $1$ for $U_0$, then if $\msf{CommonVote}(U_0, r', \cD)=1$, then unit $U$ is declared to decide $U_0$ as $1$.
Otherwise, if either there is no supermajority vote among parents (i.e., at least $2f+1$ matching votes) or the supermajority vote does not agree with the $\msf{CommonVote}$ for this round, the decision is not made yet.
In this case, the unit $U$ revotes using either the vote suggested by its parents (in case it was unanimous) or using the default $\msf{CommonVote}$.
Whenever any of the units $U \in \cD$ decides $U_0$ then it is considered decided with that particular decision bit.

Crucially, the process is designed in such a way that when some unit $U$ decides $\sigma \in \{0,1\}$ on some unit $U_0$ then we prove that no unit ever decides $\overline{\sigma}$ (the negation of $\sigma$) on $U_0$ and also that every unit of high enough round decides $\sigma$ on $U_0$ as well.
At this point it is already not hard to see that if a unit $U$ makes decision $\sigma$ on $U_0$ then it follows that {\bf all} the units of round $\H(U)$ (and any round higher than that) vote $\sigma$ on $U_0$. To prove that observe that if a unit $V$ of round $r$ votes $\sigma'$ then either:
\begin{itemize}
    \item all its parents voted $\sigma'$ and hence $\sigma'=\sigma$, because $U$ and $V$ have at least $f+1$ parents in common, or
    \item $\sigma'=\msf{CommonVote}(U_0, \H(V), \cD)$, but since $U$ decided $\sigma$ for $U_0$  then $\msf{CommonVote}(U_0, \H(U), \cD)=\sigma$ and thus $\sigma = \sigma'$ because $\H(U) = \H(V)$.
\end{itemize}
The above gives a sketch of ``safety'' proof of the protocol, i.e., that there will never be inconsistent decisions regarding a unit $U_0$.

Another property that is desirable for a consensus protocol is that it always terminates, i.e., eventually outputs a consistent decision.
In the view of the FLP Theorem~\cite{FLP85} this cannot be achieved in the absence of randomness in the protocol.
This is the reason why we need to inject random bits to the protocol and this is done in $\msf{CommonVote}$.
We show that, roughly, if $\msf{CommonVote}$ provides a random bit (that cannot be predicted by the adversary well in advance) then at every round the decision process terminates with probability at least $\nfrac{1}{2}$.
Thus, the expected number of rounds until termination is $O(1)$.

We provide formal proofs of properties of the above protocol in Section~\ref{sec:proofs-ab}.
In the next section, we show how the randomness in the protocol is generated to implement $\msf{CommonVote}$.

\subsection{Common Randomness}\label{ssec:randomness}
As already alluded to in Subsection~\ref{ssec:ab}, due to FLP-impossibility~\cite{FLP85} there is no way to achieve binary agreement in finite number of rounds when no randomness is present in the protocol.
We also note here that not any kind of randomness will do and the mere fact that a protocol is randomized does not ``protect'' it from FLP-impossibility.
It is crucial to keep the random bits hidden from the adversary till a certain point, intuitively until the adversary has committed to what decision to pursue for a given unit at a certain round.
When the random bit is revealed after this commitment, then there is a probability of $\nfrac{1}{2}$ that the adversary ``is caught'' and cannot delay the consensus decision further.

That means, in particular, that a source of randomness where the nodes are initialized with a common random seed and use a pseudo-random number generator to extract fresh bits is not sufficient, since such a strategy actually makes the algorithm deterministic.

Another issue when generating randomness for the protocol is that it should be \emph{common}, thus in other words, the random bit output by every honest node in a particular round $r$ should agree between nodes.
While this is strictly required for safety of the algorithm that is presented in the paragraph on Consensus, one can also consider simple variants thereof for which a relaxed version of commonness is sufficient.
More specifically, if the bit output at any round is common with probability $p\in (0,1]$ then one can achieve consensus in $O(\nfrac{1}{p})$ rounds.
Indeed, already in~\cite{Bracha87} Bracha constructed such a protocol and observed that if every node locally tosses a coin independently from the others than this becomes a randomness source that is common with probability $p\approx 2^{-N}$ and thus gives a protocol that takes $O(2^N)$ rounds to terminate.
We refer to Subsection~\ref{ssec:related_work} for an overview of previous work on common randomness.

In our protocol, randomness is injected via a single procedure $\msf{SecretBits}$ whose properties we formalize in the definition below
\begin{definition2}[Secret Bits]\label{def:secret-bits}
The $\msf{SecretBits}(i, revealRound)$ primitive takes an index $i\in \{1,2,\ldots,N\}$  and $ revealRound \in \N$ as parameters, outputs a $\lambda$-bits secret $s$, and has the following properties whenever initiated by the $\msf{ChooseHead}$ protocol
\begin{enumerate}
    \item no computationally bounded adversary can guess $s$ with non-negligible probability, as long as no honest node has yet created a unit of round $revealRound$,
    \item the secret $s$ can be extracted by every node that holds any unit of round $revealRound+1$.
\end{enumerate}
\end{definition2}
\noindent 
As one might observe, the first parameter $i$ of the $\msf{SecretBits}$ function seems redundant.
Indeed, given an implementation of $\msf{SecretBits}$, we could as well use $\msf{SecretBits}(1, \cdot)$ in place of $\msf{SecretBits}(i,\cdot)$ for any $i\in [N]$ (here and in the remaining part of the paper $[N]$ stands for $\{1, \ldots, N\}$)  and it would seemingly still satisfy the definition above.
The issue here is very subtle and will become clear only in Section~\ref{sec:global_coin}, where we construct a $\msf{SecretBits}$ function whose computability is somewhat sensitive to what $i$ it is run with\footnote{More precisely, intuitively $\msf{SecretBits}(i, \cdot)$ is not expected to work properly if for instance node $\cP_i$ has produced no unit at all. Also, importantly, the $\msf{ChooseHead}$ algorithm will never call $\msf{SecretBits}(i, \cdot)$ in such a case.}.
In fact, we recommend the reader to ignore this auxiliary parameter, as our main implementation of $\msf{SecretBits}(i, revealRound)$ is anyway oblivious to the value of $i$ and thus essentially there is exactly one secret per round in the protocol.

The simplest attempt to implement $\msf{SecretBits}$ would be perhaps to use an external trusted party that observes the growth of the \chdag{} and emits one secret per round, whenever the time comes. 
Clearly, the correctness of the above relies crucially on the dealer being honest, which is an assumption we cannot make: indeed if there was such an honest entity, why do not we just let him order transactions instead of designing such a complicated protocol? 

Instead, our approach is to utilize a threshold secret scheme (see~\cite{Shamir79}).
In short, at round $r$ every node is instructed to include its \emph{share} of the secret (that is hidden from every node) in the unit it creates.
Then, in round $r+1$, every node collects $f+1$ different such shares included in the previous round and reconstructs the secret from them.
Crucially, any set of $f$ or less shares is not enough to derive the secret, thus the adversary controlling $f$ nodes still needs at least one share from an honest node.
While this allows the adversary to extract the secret one round earlier then the honest nodes, this advantage turns out to be irrelevant, as, intuitively, he needed to commit to certain actions several turns in advance (see Section~\ref{sec:consensus_analysis} for a detailed analysis).

Given the $\msf{SecretBits}$ primitive, we are ready to implement\\ $\msf{GeneratePermutation}$ and $\msf{CommonVote}$ (the pseudocode is given in the table \alef{}-\msf{CommonRandomness}).

\renewcommand{\algorithmcfname}{\alef{}-\msf{CommonRandomness}}
\begin{algorithm-hbox}
  \SetKwFunction{FMain}{$\msf{CommonVote}$}
  \SetKwProg{Fn}{ }{:}{}
  \Fn{\FMain{$U_0, r, \cD$}}
    {
    \lIf{$r\leq\H(U_0)+3$}
        {
        {\bf output} $1$
        }
    \lIf{$r=\H(U_0)+4$}
        {
        {\bf output} $0$
        }
    \Else
        {
        $i \leftarrow \mbox{the creator of $U_0$}$\\
        $x\leftarrow \msf{SecretBits}(i, r, \cD)$\\
        \lIf{$x=\bot$}{{\bf output} $\bot$}
        {\bf output} the first bit of $\hash(x)$\\
        }
    }
  \SetKwFunction{FMain}{$\msf{GeneratePermutation}$}
  \SetKwProg{Fn}{ }{:}{}
  \Fn{\FMain{$r, \cD$}}
    {
    \For{each unit $U$ of round $r$ in $\cD$}
    {
        $i \leftarrow \mbox{the creator of $U$}$\\
        $x\leftarrow \msf{SecretBits}(i, r+4, \cD)$\\
        \lIf{$x=\bot$}{{\bf output} $\bot$}
        {\bf assign} $\mathrm{priority}(U) \leftarrow \hash(x||U) \in \{0,1\}^\lambda$\\
    }
    {\bf let} $(U_1, U_2, \ldots, U_k)$ be the units in $\cD$ of round $r$ sorted by $\mathrm{priority}(\cdot)$\\
    {\bf output} $(U_1, U_2, \ldots, U_k)$\\
    }
  \caption{ }
\end{algorithm-hbox}

In the pseudocode, by $\hash$ we denote a hash function\footnote{We work in the standard Random Oracle Model.} that takes an input and outputs a bistring of length $\lambda$.
We remark that the $\msf{CommonVote}$ at round $R(U_0)+4$ being $0$ is to enforce that units that are invisible at this round will be quickly decided negatively (see Lemma~\ref{lemma:negative-decisions}).
To explain the intuition behind $\msf{GeneratePermutation}$, suppose for a second that $\msf{SecretBits}(i, r+4)$ outputs the same secret $x$ independently of $i$ (as it is the case for our main algorithm).
Then the above pseudocode assigns a pseudorandom priority $\hash(x||U)$ (where, for brevity $U$ denotes some serialization of $U$) to every unit $U$ or round $r$ which results in a random ordering of these units, as required.

At this point, the only remaining piece of the protocol is the $\msf{SecretBits}$ procedure.
We provide two implementations in the subsequent Section (see Lemma~\ref{lemma:secretbits}): one simple, but requiring a trusted dealer, and the second, more involved but completely trustless.

\section{Randomness Beacon}\label{sec:global_coin}

The goal of this section is to construct an ABFT Randomness Beacon, in order to provide an efficient implementation of $\msf{SecretBits}$ that is required by our Atomic Broadcast protocol.
We start by a detailed review of previous works and how do they compare to the result of this paper.
Subsequently we describe how to extract randomness from threshold signatures, which is a basic building block in our approach, and after that we proceed with the description of our randomness beacon.

\subsection{Comparison to Previous Work on Distributed Randomness Beacons}\label{ssec:comparison_beacons}

We distinguish two main approaches for solving the problem of generating randomness in distributed systems in the presence of byzantine nodes: using Verifiable Secret Sharing (VSS) and via Threshold Signatures.
We proceed to reviewing these two approaches and subsequently explain what does our work bring to the field.
This discussion is succinctly summarized in Table~\ref{table:beacon}.

\noindent {\bf Verifiable Secret Sharing.}
The rough idea of VSS can be explained as follows: a dealer first initializes a protocol to distribute shares $(x_1, x_2, \ldots, x_N)$ of a secret $x$ to the nodes $1,2, \ldots, N$ so that afterwards every set of $(f+1)$ nodes can extract $x$ by combining their shares, but any subset of $f$ nodes cannot do that\footnote{The thresholds $f$ and $(f+1)$ here are not the only possible, but are most relevant for our setting, see~\cite{CKLS02} for a more detailed treatment.}.
An honest dealer is requested to pick $x$ uniformly at random (and erase $x$ from memory); yet as one can imagine, this alone is not enough to build a reliable, distributed randomness source, with the main issue being: how to elect a dealer with the guarantee that he is honest?
In the seminal paper~\cite{CR93} Canetti and Rabin introduce the following trick: let all nodes in the network act as dealers and perform VSS with the intention to combine all these secrets into one (say, by xoring them) that would be random and unpredictable.
Turning this idea into an actual protocol that works under full asynchrony is especially tricky, yet~\cite{CR93} manage to do that and end up with a protocol that has $O(N^7)$ communication complexity.
This $O(N^7)$ essentially comes from executing an $O(N^5)$ communication VSS protocol $N^2$ times.
Later on this has been improved by Cachin et al.~\cite{CKLS02} to $O(N^4)$ by making the VSS protocol more efficient.
The issue with the AVSS approach is that the communication cost of generating one portion of randomness is rather high -- it requires to start everything from scratch when new random bits are requested.
This issue is not present in the approach based on threshold signatures, where a one-time setup is run and then multiple portions of random bits can be generated cheaply.

\noindent {\bf Threshold Signatures.}
For a technical introduction to this approach we refer to Section~\ref{ssec:threshold_sigs}.
The main idea is that if the nodes hold keys for threshold signatures with threshold $f+1$, then the threshold signature $\sigma_m \in \{0,1\}^\lambda$ of a message $m$ is unpredictable for the adversary and provides us with roughly $\lambda$ bits of entropy.

The idea of using threshold signatures as a source of randomness is not new and has been studied in the literature~\cite{CKS05}, especially in the context of blockchain~\cite{HMW18,MXCSS16}.
While this technique indeed produces unbiased randomness {\bf assuming} that the tossing keys have been correctly dealt to all nodes, the true challenge becomes: {\it How to deal secret keys without involving a trusted dealer?}
This problem is known in the literature under the name of Distributed Key Generation (DKG).
There has been a lot of prior work on DKG~\cite{Pedersen91,GJKR07,GJKR03,KHG12}, however none of the so far proposed protocols has been designed to run under full asynchrony.
Historically, the first implemenation of DKG was proposed in the work of Pedersen~\cite{Pedersen91}.
The network model is not formally specified in~\cite{Pedersen91}, yet one can implement it in the synchronous BFT model with $f$ out of $N=3f+1$ nodes being malicious.
Pedersen's protocol, as later shown by Gennaro et al. in~\cite{GJKR07}, does not guarantee a uniform distribution over private keys; in the same work~\cite{GJKR07} a fix is proposed that closes this gap, but also in a later work~\cite{GJKR03} by the same set of authors it is shown that Pedersen's protocol is still secure\footnote{We note that out protocol can be seen as an adaptation of Pedersen's and thus also requires an ad-hoc proof of security (provided in Lemma~\ref{lemma:multicoin-cannot-be-predicted2}), as the distribution of secret keys might be skewed.}.
The DFINITY randomness beacon~\cite{HMW18} runs DKG as setup and subsequently uses threshold signatures to generate randomness.

We note that the AVSS scheme by Cachin et al.~\cite{CKLS02} can be also used as a basis of DKG since it is polynomial-based and has the additional property of every node learning a certain "commitment" to the secret shares obtained by all the nodes, which in the setting of threshold signatures corresponds to {\it public keys} of the threshold scheme (see Section~\ref{ssec:threshold_sigs}).
This allows to construct a DKG protocol as follows: each node runs its own instance of AVSS and subsequently the nodes agree on a subset of $\geq f+1$ such dealt secrets to build the keys for threshold signatures.
This scheme looks fairly simple, yet it is extremely tricky to implement correctly.
One issue is security, i.e., making sure that the adversary is not "front-running" and cannot reveal any secret too early, but more fundamentally: this algorithm requires {\bf consensus} to chooses a subset of secrets.
ABFT consensus requires {\bf randomness} (by FLP theorem~\cite{FLP85}), which we are trying to obtain, and thus we fall into an infinite loop.
To obtain consensus without the help of a trusted dealer one is forced to use variants of the Canetti-Rabin protocol~\cite{CR93} (recall that we are working here under the hard constraint of having no trusted dealer, which is not satisfied by most protocols in the literature) that unfortunately has very high communication complexity.
Nevertheless using this idea, one can obtain an ABFT DKG protocol with $O(N^4)$ communication complexity.
The approach of~\cite{KHG12} uses the ideas of~\cite{CKLS02} and roughly follows the above outlined idea, but avoids the problem of consensus by working in a relaxed model: weak synchrony (which lies in between partial synchrony and full asynchrony).

\begin{table}
\begin{center}
    \begin{tabular}{ | c | c | c |c |}
    
    \hline
    \multirow{2}{*}{Protocol }& \multirow{2}{*}{Model} & \multicolumn{2}{ |c| }{Communication}   \\ \cline{3-4}
    & & Setup~~~~~ & Query\\ \hline \hline
     {\bf This Work} &  async. BFT & $O(N^2 \log N)$ &  $O(N)$\\ \hline
     Canetti, Rabin~\cite{CR93} &  async. BFT & - &  $O(N^7)$ \\ \hline
     Cachin et al.~\cite{CKLS02} & async. BFT & - & $O(N^4)$ \\ \hline
      Kate et al. DKG \cite{KHG12} &  w. sync. BFT & $O(N^3)$ &  - \\ \hline
       Pedersen DKG \cite{Pedersen91, GJKR03} & sync. BFT & $O(N^2)$ &  -  \\ \hline
      Gennaro et al. DKG \cite{GJKR07} & sync. BFT & $O(N^2)$ &  -  \\ \hline
      DFINITY~\cite{HMW18} & sync. BFT & $O(N^2)$ &  $O(N)$ \\ \hline
     
      RandShare~\cite{SJKGGKFF17} & sync. BFT & - &  $O(N^2)$   \\ \hline
      SCRAPE~\cite{CD17} & sync. BFT & - &  $O(N^2)$   \\ \hline
    
    \hline
    \end{tabular}
\end{center}
\caption{Comparison of randomness beacons. The \emph{Model} column specifies what are the necessary assumptions for the protocol to work.
The \emph{Communication} columns specify communication complexities (per node) of a one-time setup phase (run at the very beginning), and of one query for fresh random bits.
Some of them were designed for DKG and thus we do not specify the complexity of query (yet it can be made $O(N)$ in all cases). Some protocols also do not run any setup phase, in which case the complexity is omitted.}\label{table:beacon}
\end{table}

\noindent {\bf Our Approach.}
We base our randomness beacon on threshold signatures, yet make a crucial observation that {\bf full DKG is not necessary} for this purpose.
Indeed what we run instead as setup (key generation phase) of our randomness beacon is a weaker variant of DKG.
Very roughly: the end result of this variant is that the keys are dealt to a subset of at least $2f+1$ nodes (thus possibly up to $f$ nodes end up without keys).
This allows us to construct a protocol with setup that incurs only $\wt{O}(N^2)$ communication, whereas a variant with full DKG\footnote{We do not describe this variant in this paper; it can be obtained by replacing univariate by bivariate polynomials and adjusting the protocol accordingly (see~\cite{CKLS02}).} would require $\wt{O}(N^3)$.

To obtain such a setup in an asynchronous setting we need to deal with the problem (that was sketched in the previous paragraph) of reaching consensus without having a common source of randomness in place.
Intuitively a ``disentanglement" of these two problems seems hard: in one direction this can be made formal via FLP Theorem~\cite{FLP85} (consensus requires randomness); in the opposite direction, intuitively, generating a common coin toss requires all the nodes to agree on a particular random bit.

Canetti and Rabin present an interesting way to circumvent this problem in~\cite{CR93}: their consensus algorithm requires only a "weak coin", i.e., a source of randomness that is common (among nodes) with constant, non-zero probability (and is allowed to give different results for different nodes otherwise).
Then, via a beautiful construction they obtain such a randomness source that (crucially) does not require any consensus.

Our way to deal with this problem is different.
Roughly speaking: in the setup each node "constructs" its  unbiased source of randomness and reliably broadcasts it to the remaining nodes.
Thus after this step, the problem becomes to pick one out of $N$ randomness sources (consensus).
To this end we make a binary decision on each of these sources and pick as our randomness beacon the first that obtained a decision of ``$1$"; however each of these binary consensus instances uses its own particular source of randomness (the one it is making a decision about).
At the same time we make sure that the nodes that were late with showing their randomness sources to others will be always decided $0$.
While this is the basic idea behind our approach, there are several nontrivial gaps to be filled in order to obtain $\wt{O}(N^2)$ communication complexity and $O(1)$ latency.

\noindent {\bf Other Designs.}
More recently, in the work of~\cite{SJKGGKFF17} a randomness source RandShare has been introduced, which runs a certain variant of Pedersen's protocol to extract random bits.
The protocol is claimed by the authors to work in the asynchronous model, yet fails to achieve that goal, as any protocol that {\bf waits for} messages from {\bf all the nodes} to proceed, fails to achieve liveness under asynchrony (or even under partial synchrony).
In the Table~\ref{table:beacon} we list it as synchronous BFT, as after some adjustments it can be made to run in this model.
In the same work~\cite{SJKGGKFF17} two other randomness beacons are also proposed: RandHound and RandHerd, yet both of them rely on strong, non-standard assumptions about the network and the adversary and thus we do not include them in Table~\ref{table:beacon}.
The method used by SCRAPE~\cite{CD17} at a high level resembles Pedersen's protocol but requires access to a blockchain in order to have a total order on messages sent out during the protocol executation.

Finally, we mention an interesting line of work on generating randomness based on VDFs (Verifiable Delay Functions~\cite{LW17,BBBF18,Wesolowski19,Pietrzak19}).
Very roughly, the idea is to turn a biasable randomness (such as the hash of a Bitcoin block) into unbiasable randomness via a VDF, i.e., a function $f: \{0,1\}^\lambda \to \{0,1\}^\lambda$ that cannot be computed quickly and whose computation cannot be parallelized, yet it is possible to prove that $f(x)=y$ for a given $y$ much faster than actually computing $f(x)$.
The security of this approach relies on the assumption that the adversary cannot evaluate $f$ at random inputs much faster than honest participants.

\subsection{Randomness from Threshold Signatures}\label{ssec:threshold_sigs}
In this subsection we present the main cryptographic component of our construction: generating randomness using Threshold Signatures. 
When using a trusted dealer, this component is already enough to implement $\msf{SecretBits}$, but the bulk of this section is devoted to proving that we can eliminate the need for a trusted dealer.

\noindent{\bf Randomness through Signatures.}
The main idea for generating randomness is as follows: suppose that there is a key pair $(tk, vk)$ of private key and public key, such that $tk$ is unknown, while $vk$ is public, and $tk$ allows to sign messages for some public-key crypto-system that is deterministic\footnote{A system that for a given message $m$ there exists only one correct signature for key pair $(tk, vk)$.}. (We refer to $tk$ as to the ``tossing key" while $vk$ stands for ``verification key"; we use these names to distinguish from the regular private-public key pairs held by the nodes.)
Then, for any message $m$, its digital signature $\sigma_m$ generated with respect to $tk$ cannot be guessed, but can be verified using $vk$, thus $\hash(\sigma_m)\in \{0,1\}^{\lambda}$ provides us with $\lambda$ random bits!
There seems to be a contradiction here though: how can the tossing key be secret and at the same time we are able to sign messages with it? 
Surprisingly, this is possible using {\it Threshold Cryptography}: the tossing key $tk$ is ``cut into pieces" and distributed among $N$ nodes so that they can jointly sign messages using this key but no node (or a group of dishonest nodes) can learn $tk$.

More specifically, we use a threshold signature scheme built upon BLS signatures~\cite{BLS04}.
Such a scheme works over a GDH group $G$, i.e., a group in which the computational Diffie-Hellman problem is hard (i.e. computing $g^{xy}$ given $g^x, g^y \in G$) but the decisional Diffie-Hellman problem is easy (i.e. verifying that $z=xy$ given $g^x, g^y, g^z \in G$). For more details and constructions of such groups we refer the reader to~\cite{BLS04}.
We assume from now on that $G$ is a fixed, cyclic, GDH group generated by $g\in G$ and that the order of $G$ is a large prime $q$.
A tossing key $tk$ in BLS is generated as a random element in $\Z_q$ and the public key is $y=g^{tk}$.
A signature of a message $m$ is simply $\sigma_m = \wt{m}^{tk}$, where $\wt{m} \in G$ is the hash (see~\cite{BLS04} for a construction of such hash functions) of the message being a random element of $G$.

\noindent{\bf Distributing the Tossing Key.} 
To distribute the secret tossing key among all nodes, the Shamir's Secret Sharing Scheme~\cite{Shamir79} is employed. 
A trusted dealer generates a random tossing key $tk$ along with a random polynomial $A$ of degree $f$ over $\Z_q$ such that $A(0)=tk$ and privately sends $tk_i =A(i)$ to every node $i=1,2, \ldots, N$.
The dealer also publishes verification keys, i.e., $VK=(g^{tk_1}, \ldots, g^{tk_N})$.
Now, whenever a signature of a message $m$ needs to be generated, the nodes generate {\it shares} of the signature by signing $m$ with their tossing keys, i.e., each node $i$ multicasts $\wt{m}^{A(i)}$.
The main observation to make is that now $\wt{m}^{A(0)}$ can be computed given only $\wt{m}^{A(j)}$ for $f+1$ different $j$'s (by interpolation) and thus it is enough for a node $\cP_j$ to multicast $\wt{m}^{A(j)}$ as its share and collect $f$ such shares from different nodes to recover $\sigma_m$.
On the other hand, any collection of at most $f$ shares is not enough to do that, therefore the adversary cannot sign $m$ all by himself.
For details, we refer to the pseudocode of $\msf{ThresholdSignatures}$.

\renewcommand{\algorithmcfname}{\msf{ThresholdSignatures}}
\begin{algorithm-hbox}

\SetKwFunction{FMain}{$\msf{GenerateKeys}$}
  \SetKwProg{Fn}{ }{:}{}
  \Fn{\FMain{}}
  {
    Let $G=\inangle{g}$ be a GDH group of prime order $q$\\
    generate a random polynomial $A$ of degree $f$ over $\mathbb{Z}_q$\\
    let $TK=(tk_1, \ldots, tk_N)$ with $tk_i= A(i)$ for $i\in [N]$,\\
    let $VK =(vk_1,\dots,vk_N)$ with $vk_i = g^{tk_i}$ for $i\in [N]$\\
    {\bf output} $(TK, VK)$
    }
  \SetKwFunction{FMain}{$\msf{CreateShare}$}
  \SetKwProg{Fn}{ }{:}{}
  \Fn{\FMain{$m, tk_i$}}
    {
    $\wt{m} \leftarrow \hashg(m)$\\
    {\bf output} $\wt{m}^{tk_i}$
    }
  \SetKwFunction{FMain}{$\msf{VerifyShare}$}
  \SetKwProg{Fn}{ }{:}{}
  \Fn{\FMain{$m, s, i, VK$}}
    {
    $\wt{m} \leftarrow \hashg(m)$\\
        \tcc{can do the check below since $G$ is GDH}
        \If
        {$\log_g(vk_i) = \log_{\wt{m}}(s)$}
        {
        {\bf output} \msf{True}
        }
        {\bf else output} \msf{False}
    }
    \SetKwFunction{FMain}{$\msf{GenerateSignature}$}
  \SetKwProg{Fn}{ }{:}{}
  \Fn{\FMain{$m, S, VK$}}
    {
    \tcc{Let $S=\{(s_i, i)\}_{i\in P}$ where $|P| = f+1$}
    \tcc{Assume $\forall_{j}~~ \msf{VerifyShare}(m, s_j, i_{j})=\msf{True}$}
    {\bf interpolate} $A(0)$, i.e., find $l_1, l_2, \ldots, l_{f+1}\in \Z_q$ s.t.
    $$\textstyle{A(0)=\sum_{j=1}^{f+1} l_j A(i_j)}$$\\
    $\sigma_m \leftarrow \prod_{j=1}^{f+1} s_j^{l_j}$\\
    {\bf output} $\sigma_m$
    }
\caption{ }
\end{algorithm-hbox}

\ourskip

\noindent{\bf SecretBits Through Threshold Signatures.}
Given the just introduce machinery of threshold signatures, the $\msf{SecretBits}(i, r)$ primitive is straightforward to implement.
Moreover, as promised in Section~\ref{sec:ab}, we  give here an implementation that is oblivious to its first argument, i.e., it does only depend on $r$, but not on $i$.

First of all, there is a setup phase whose purpose is to deal keys for generating secrets to all nodes.
We start by giving a simple version in which an honest dealer is required for the setup.
Subsequently, we explain how can this be replaced by a trustless setup, to yield the final version of $\msf{SecretBits}$.

In the simpler case, the trusted dealer generates tossing keys and verification keys $(TK, VK)$ for all the nodes using the $\msf{GenerateKeys}$ procedure, and then openly broadcasts $VK$ to all nodes and to every node $i$ he secretly sends the tossing key $tk_i$.

Given such a setup, when $\msf{SecretBits}(j, r)$ is executed by the protocol, every node $i$ can simply ignore the $j$ argument and generate
$$s_i(m)\leftarrow \msf{CreateShare}(m, tk_i)$$
where $m$ is a nonce determined from the round number, say $m=\mbox{``$r$''}$.
Next, after creating its round-$(r+1)$ unit $U$, the $P_i$ collects all the shares $S$ included in $U$'s parents at round $r$ and computes:
\begin{align*}
\sigma_m &\leftarrow \msf{GenerateSignature}(m,S,VK)\\
    s(m) &\leftarrow \hash(\sigma_m)
\end{align*}
and $s(m)\in \{0,1\}^\lambda$ is meant as the output of $\msf{SecretBits}(\cdot, r)$.

Finally, to get rid of the trusted dealer, in Subsection~\ref{ssec:informal-setup} we describe a trustless protocol that performs the setup instead of a trusted dealer.
The crucial difference is that the keys are not generated by one entity, but jointly by all the nodes.
Moreover, in the asynchronous version of the setup, every honest node $i$ learns the verification key $VK$, some key $tk_i$ and a set of ``share dealers'' $T \subseteq [N]$ of size $2f+1$, such that every node $P_i$ with $i\in T$ has a correct tossing key $tk_i$.
This, while being slightly weaker than the outcome of the setup of trusted dealer, still allows to implement~$\msf{SecretBits}$ as demostrated in the below lemma.

\begin{lemma}[Secret Bits]\label{lemma:secretbits}
The above scheme in both versions (with and without trusted dealer) correctly implements $\msf{SecretBits}$. 
\end{lemma}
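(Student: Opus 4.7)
The plan is to verify the two properties of Definition~\ref{def:secret-bits} separately for the trusted-dealer and trustless variants. In both cases the combinatorial side of the argument rests on the Dissemination property of \chdags{}, while the cryptographic side rests on the BLS threshold signature scheme and the Random Oracle Model for $\hash$. Throughout I will write $r = revealRound$ and $m = \text{``}r\text{''}$, so that shares $s_i(m) = \wt{m}^{tk_i}$ produced by node $\cP_i$ are attached to its round-$r$ unit.

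First I would establish property 2 (extractability). Fix any round-$(r+1)$ unit $U$ living in some honest node's local \chdag. By Dissemination, $U$ has at least $2f+1$ parents of round $r$, and by Diversity these parents are created by pairwise distinct nodes; validity of $U$ additionally guarantees (via $\msf{Valid}$ and $\msf{VerifyShare}$) that the shares inside these parents verify against $VK$. In the trusted-dealer variant, every node holds a correct tossing key, so of the $2f+1$ parents at least $f+1$ are honest and carry correct shares $\wt{m}^{A(i)}$; Lagrange interpolation in $\msf{GenerateSignature}$ recovers $\sigma_m = \wt{m}^{A(0)}$ and yields $s(m) = \hash(\sigma_m)$. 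In the trustless variant only the nodes in the set $T\subseteq[N]$ with $|T|\geq 2f+1$ hold correct tossing keys; since the honest nodes in $T$ number at least $(2f+1)-f = f+1$ and the parents of $U$ are $\geq 2f+1$ out of $3f+1$ nodes, inclusion-exclusion gives at least $(f+1)+(2f+1)-(3f+1) = f+1$ honest parents in $T$, whose shares again suffice to interpolate $\sigma_m$.

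Next I would establish property 1 (unpredictability). Until some honest node first creates a round-$r$ unit, no honest share $s_i(m)$ has been added to the \chdag{} — the reliable broadcast of a round-$r$ unit by an honest node is the earliest moment at which any honest share is released. Hence the adversary's view of shares is confined to the (at most $f$) shares held by dishonest nodes. In the trusted-dealer variant, $tk = A(0)$ with $A$ uniform of degree $f$ over $\Z_q$, so these $f$ shares information-theoretically hide $A(0)$; computing $\sigma_m = \wt{m}^{A(0)}$ therefore reduces to breaking BLS on the GDH group $G$, which is infeasible by assumption. Modeling $\hash$ as a random oracle then upgrades this to the statement that $\hash(\sigma_m)$ cannot be predicted with non-negligible advantage. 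In the trustless variant, the distribution of $tk$ is not uniform because the adversary participates in its generation, but the fact that $\sigma_m$ is still unpredictable under this skewed distribution is exactly the statement of Lemma~\ref{lemma:multicoin-cannot-be-predicted2}, which I would invoke as a black box after verifying that its hypotheses are met by the setup described in Subsection~\ref{ssec:informal-setup}.

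The main obstacle is the unpredictability side in the trustless variant: the extractability argument is a clean counting exercise regardless of which variant we are in, but once the dealer is removed we no longer have a uniform $tk$ and must ensure that the adversary cannot bias the key generation enough to predict threshold signatures. All of that subtlety is encapsulated in Lemma~\ref{lemma:multicoin-cannot-be-predicted2}, so the real content of the present lemma is checking that the shares emitted at round $r$ indeed remain hidden before any honest round-$r$ unit is broadcast, and that the $2f+1$ parents required by Dissemination always contain enough valid shares to reconstruct — both of which follow directly from the definition of \chdag{} and the validity checks enforced by $\msf{RBC}$.
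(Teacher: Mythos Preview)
Your proposal is correct and follows essentially the same approach as the paper. The paper's own proof is extremely terse---it dispatches the trusted-dealer case by citing~\cite{Boldyreva03} and handles the trustless case by invoking Lemma~\ref{lemma:can-toss-approved-key-set} (for extractability) and Lemma~\ref{lemma:multicoin-cannot-be-predicted2} (for unpredictability)---whereas you unpack the counting argument behind Lemma~\ref{lemma:can-toss-approved-key-set} explicitly. One minor simplification: in the trustless extractability step you intersect the parents with the \emph{honest} members of $T$ to get $f+1$ shares, but the paper just intersects the $\geq 2f+1$ parents directly with $T$ (since $|T|\geq 2f+1$, inclusion--exclusion already yields $\geq f+1$), relying on $\msf{VerifyShare}$ and the $\msf{Valid}$ predicate to discard any bad shares from dishonest members of $T$; both routes work.
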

\noindent
We refer the reader to Subsection~\ref{ssec:proofs-coin} for a proof.

\subsection{Randomness Beacon with Trustless Setup} \label{ssec:informal-setup}
In Section~\ref{ssec:threshold_sigs} we have discussed how to implement a Randomness Beacon based on a trusted dealer.
Here, we devise a version of this protocol that has all the benefits of the previous one and at the same time is completely trustless.
For the sake of clarity, we first provide a high level perspective of the new ideas and how do they combine to give a trustless Randomness Beacon, next we provide a short summary of the protocol in the form of a very informal pseudocode, and finally we fill in the gaps by giving details on how the specific parts are implemented and glued together. 
\ourskip

\noindent{\bf Key Boxes.}
Since no trusted dealer is available, perhaps the most natural idea is to let all the nodes serve as dealers simultaneously. 
More precisely we let every node emit (via RBC, i.e., by placing it as data in a unit) a tossing \emph{Key Box} that is constructed by a node $k$ (acting as a dealer) as follows:
\begin{itemize}
\item as in $\msf{GenerateKeys()}$, sample a random polynomial $$A_k(x)=\sum_{j=0}^f a_{k,j} x^j \in \Z_q[x]$$ of degree $f$, 
\item compute a commitment to $A_k$ as
$$C_k = (g^{a_{k,0}}, g^{a_{k,1}}, \ldots, g^{a_{k,f}}).$$
\item define the tossing keys $TK_k$ and verification keys $VK_k$
\begin{align*}
    tk_{k,i} &:= A_k(i) & \mbox{ for } i=1,2, \ldots, N\\
    vk_{k,i} &:= g^{tk_i} & \mbox{ for } i=1,2, \ldots, N.\\
\end{align*}
Note that in particular each verification key $vk_{k,i}$ for $i\in [N]$ can be computed from the commitment $C_k$ as $vk_{k,i} = \prod_{j=0}^f C_{k,j}^{i^j}$.
\item encrypt the tossing keys for every node $i$ using the dedicated public key\footnote{We assume that as part of PKI setup each node $i$ is given exactly $N$ different key pairs for encryption: $(sk_{k \to i}, pk_{k \to i})$ for $k\in [N]$. The key $pk_{k \to i}$ is meant to be used by the $k$th node to encrypt a message whose recipient is $i$ (denoted as $\enc_{k \to i}(\cdot)$). This setup is merely for simplicity of arguments -- one could instead have one key per node if using verifiable encryption or double encryption with labels.} $pk_{k \to i}$ as
$$e_{k,i} := \enc_{k\to i}\inparen{tk_{k,_i}}$$
and let $E_k := (e_{k,1}, e_{k,2}, \ldots, e_{k,N})$. 
\item the $k$th key box is defined as $KB_k = (C_k, E_k)$.
\end{itemize}
In our protocol, every node $\cP_k$ generates its own key box $KB_k$ and places $(C_k, E_k)$ in his unit of round $0$.
We define the $k$th key set to be $KS_k = (VK_k, TK_k)$ and note that given the key box $KB_k = (C_k, E_k)$, every node can reconstruct $VK_k$ and moreover, every node $i$ can decrypt his tossing key $tk_{k,i}$ from the encrypted part $E_k$, but is not able to extract the remaining keys.
The encypted tossing keys $E_k$ can be seen as a certain way of emulating "authenticated channels".

\ourskip

\noindent{\bf Verifying Key Sets.} 
Since at least $\nfrac{2}{3}$ of nodes are honest, we also know that $\nfrac{2}{3}$ of all the key sets are safe to use, because they were produced by honest nodes who properly generated the key sets and the corresponding key boxes an erased the underlying polynomial (and thus all the tossing keys). 
Unfortunately, it is not possible to figure out which nodes cheated in this process (and kept the tossing keys that were supposed to be erased).

What is even harder to check, is whether a certain publicly known key box $KB_k$ was generated according to the instructions above.
Indeed, as a node $\cP_i$ we have access only to our tossing key $tk_{k,i}$ and while we can verify that this key agrees with the verification key (check if $g^{tk_{k,i}} = vk_{k,i}$), we cannot do that for the remaining keys that are held by other nodes.
The only way to perform verification of the key sets is to do that in collaboration with other nodes.
Thus, in the protocol, there is a round at which every node ``votes'' for correctness of all the key sets it has seen so far.
As will be explained in detail later, these votes cannot be ``faked'' in the following sense: if a node $\cP_i$ votes on a key set $KS_k$ being incorrect, it needs to provide a proof that its key $tk_{k,i}$ (decrypted from $e_{k,i}$) is invalid, which cannot be done if $\cP_k$ is an honest dealer.
Thus consequently, dishonest nodes cannot deceive the others that a valid key set is incorrect.
\ourskip

\noindent{\bf Choosing Trusted Key Sets.}
At a later round these votes are collected and summarized locally by every node $\cP_i$ and a trusted set of key sets $T_i \subseteq [N]$ is determined.
Intuitively, $T_i$ is the set of indices $k$ of key sets such that:
\begin{itemize}
    \item the node $\cP_i$ is familiar with $KB_k$,
    \item the node $\cP_i$ has seen enough votes on $KS_k$ correctness that it is certain that generating secrets from $KS_k$ will be successful,
\end{itemize}
%
%
The second condition is determined based solely on the \chdag{} structure below the $i$th node unit at a particular round.
What will be soon important is that, even though the sets $T_i$ do not necessarily coincide for different $i$, it is guaranteed that $|T_i|\geq f+1$ for every $i$, and thus crucially at least one honest key set is included in each of them.
\ourskip

\noindent{\bf Combining Tosses.}
We note that once the set $T_i$ is determined for some index $i\in [N]$, this set essentially defines a global common source of randomness that cannot be biased by an adversary.
Indeed, suppose we would like to extract the random bits corresponding to nonce $m$.
First, in a given round, say $r$, every node should include its share for nonce $m$ corresponding to every key set that it voted as being correct.  
In the next round, it is guaranteed that the shares included in round $r$ are enough to recover the random secret $\sigma_{m,k}=m^{A_k(0)} \in G$ (the threshold signature of $m$ generated using key set $KS_k$) for every $k\in T_i$.
Since up to $f$ out of these secrets might be generated by the adversary, we simply take
$$\tau_{m,i} := \prod_{k\in T_i} \sigma_{m,k} = m^{\sum_{k\in T_i} A_k(0)} \in G$$
to obtain a uniformly random element of $G$ and thus (by hashing it) a random bitstring of length $\lambda$ corresponding to node $\cP_i$, resulting from nonce $m$.
From now on we refer to the $i$th such source of randomness (i.e., corresponding to $\cP_i$)  as $\msf{MultiCoin}_i$.
\ourskip

\noindent{\bf Agreeing on Common MultiCoin.}
So far we have said that every node $\cP_i$ defines locally its strong source of randomness $\msf{MultiCoin}_i$.
Note however that paradoxically, this abundance of randomness sources is actually problematic: which one shall be used by all the nodes to have a ``truly common'' source of randomness?
This is nothing other than a problem of ``selecting a head'', i.e.,  choosing one unit from a round -- a problem that we already solved in Section~\ref{sec:ab}!
At this point however, an attentive reader is likely to object, as the algorithm from Section~\ref{sec:ab} only works provided a common source of randomness.
Therefore, the argument seems to be cyclic as we are trying to construct such a source of randomness from the algorithm from Section~\ref{sec:ab}.
Indeed, great care is required here: as explained in Section~\ref{sec:ab}, all what is required for the $\msf{ChooseHead}$ protocol to work is a primitive $\msf{SecretBits}(i, r)$ that is supposed to inject a secret of $\lambda$ bits at the $r$th round of the protocol.
Not going too deep into details, we can appropriately implement such a method by employing the MultiCoins we have at our disposal.
This part, along with the construction of MultiCoins, constitutes the technical core of the whole protocol.
\ourskip

\noindent{\bf Combining Key Sets.}
Finally, once the head is chosen to be $l\in [N]$, from now on one could use $\msf{MultiCoin}_l$ as the common source of randomness. 
If one does not care about savings in communication and computational complexity, then the construction is over.
Otherwise, observe that tossing the $\msf{MultiCoin}_l$ at a nonce $m$ requires in the worst case $N$ shares from every single node.
This is sub-optimal, and here is a simple way to reduce it to just $1$ share per node.
Recall that every Key Set $KS_k$ that contributes to $\msf{MultiCoin}_l$ is generated from a polynomial $A_k \in \Z_q[x]$ of degree $f$. 
It is not hard to see that by simple algebraic manipulations one can combine the tossing keys and all the verification keys of all key sets $KS_k$ for $k\in T_l$ so that the resulting Key Set corresponds to the sum of these polynomials
\[
    A(x) := \inparen{\sum_{k\in T_l} A_k(x)}\in \Z_q[x].
\]
This gives a source of randomness that requires one share per nonce from every node; note that since the set $T_l$ contains at least one honest node, the polynomial $A$ can be considered random.
\ourskip

\noindent{\bf Protocol Sketch.}
We are now ready to provide an informal sketch of the $\msf{Setup}$ protocol and $\msf{Toss}$ protocol, see the $\msf{ABFT}-\msf{Beacon}$ box below. 
The content of the box is mostly a succinct summary of Section~\ref{ssec:informal-setup}.
What might be unclear at this point is the condition of the if statement in the $\msf{Toss}$ function.
It states that the tossing key $tk_i$ is supposed to be correct in order to produce a share: indeed it can happen that one of the key boxes that is part of the $\msf{MultiCoin}_l$ does not provide a correct tossing key for the $i$th node, in which case the combined tossing key $tk_i$ cannot be correct either.
This however is not a problem, as the protocol still guarantees that at least $2f+1$ nodes hold correct combined tossing keys, and thus there will be always enough shares at our disposal to run $\msf{ExtractBits}$.
\renewcommand{\algorithmcfname}{\msf{ABFT}-\msf{Beacon}}
\begin{algorithm-hbox}
    \SetKwFunction{FMain}{$\msf{Setup}$}
  \SetKwProg{Fn}{ }{:}{}
  \Fn{\FMain{}}
    {
    \tcc{Instructions for node $\cP_i$.}
    {\bf Initialize} growing the \chdag{} $\cD$\\
    In the data field of your units include (as specified in Section~\ref{ssec:setup-details}):\\
    \Indp
        {\bf At round $0$}: a key box $KB_i$,\\
        {\bf At round $3$}: votes regarding correctness of key boxes present in $\cD$,\\
        {\bf At rounds $\geq 6$}: shares necessary to extract randomness from \msf{SecretBits}.\\
    \Indm
    {\bf Run} $\msf{ChooseHead}$ to determine the head unit at round $6$ and let $l$ be its creator.\\
    {\bf Combine} the key sets $\{KS_j : j\in T_l\}$ and let $(tk_i, VK)$ be the corresponding tossing key and verification keys.
    }
    \SetKwFunction{FMain}{$\msf{Toss}$}
  \SetKwProg{Fn}{ }{:}{}
  \Fn{\FMain{$m$}}
    {
    \tcc{Code for node $\cP_i$.}
    \If{$tk_i$ is correct}
    {
        $s_i\leftarrow \msf{CreateShare}(m, tk_i)$\\
        {\bf multicast} $(s_i, i)$\\
    }
    {\bf wait} until receiving a set of $f+1$ valid shares $S$\\
    \tcc{validity is checked using $\msf{VerifyShare}()$}
{\bf output} $\sigma_m :=\msf{ExtractBits}(m,S,VK)$
    }
\caption{ }
\end{algorithm-hbox}

\subsection{Details of the Setup}\label{ssec:setup-details}
We provide some more details regarding several components of the $\msf{Setup}$ phase of $\msf{ABFT}-\msf{Beacon}$ that were treated informally in the previous subsection.
\ourskip

\noindent{\bf Voting on Key Sets.}
Just before creating the unit at round $3$ the $i$th node is supposed to inspect all the key boxes that are present in its current copy of the \chdag. Suppose $KB_k$ for some $k\in [N]$ is one of such key sets.
The only piece of information about $KB_k$ that is known to $\cP_i$ but hidden from the remaining nodes is its tossing key.
Node $\cP_i$ recovers this key by decrypting it using its secret key (dedicated for $k$) $sk_{k \to i}$ 
$$tk_{k,i} \leftarrow \dec_{k\to i}(e_{k,i}).$$
Now, if node $\cP_k$ is dishonest, it might have included an incorrect tossing key, to check correctness, the $i$th node verifies whether
\begin{equation*}
g^{tk_{k,i}} \stackrel{?}{=} vk_{k,i},
\end{equation*}
where $g$ is the fixed generator of the group $G$ we are working over. 
The $i$th node includes the following piece of information in its round-$3$ unit
\begin{equation*}
    \msf{VerKey}\inparen{KB_k, i} = 
    \begin{cases}
        1 &~~~~ \mbox{if  $tk_{k,i}$ correct} \\
        \dec_{k \to i}(e_{k,i})~~~ &~~ \mbox{oth.}
    \end{cases}
\end{equation*}
\noindent 
Note that if a node $\cP_i$ votes that $KB_k$ is incorrect (by including the bottom option of $\msf{VerKey}$ in its unit) it cannot lie, since other nodes can verify whether the plaintext it provided agrees with the ciphertext in $KB_k$ (as the encryption scheme is assumed to be deterministic) and if that is not the case, they treat a unit with such a vote as invalid (and thus not include it in their \chdag{}). 
Thus, consequently, the only way dishonest nodes could cheat here is by providing positive votes for incorrect key boxes.
This can not harm honest nodes, because by positively verifying some key set a node declares that from now on it will be providing shares for this particular key set whenever requested. 
If in reality the corresponding tossing key is not available to this node, it will not be able to create such shares and hence all its units will be henceforth considered invalid.

One important property this scheme has is that it is safe for an honest recipient $\cP_i$ of $e_{k,i}$ to reveal the plaintext decryption of $e_{k,i}$ in case it is not (as expected) the tossing key $tk_{k,i}$ -- indeed if $\cP_k$ is dishonest then either he actually encrypted some data $d$ in $e_{k,i}$ in which case he learns nothing new (because $d$ is revealed) or otherwise he obtained $e_{k,i}$ by some other means, in which case $\dec_{k \to i} (e_{k,i})$ is a random string, because no honest node ever creates a ciphertext encrypted with $p_{k \to i}$ and we assume that the encryption scheme is semantically secure.
Note also that if instead of having $N$ key pairs per node we used a similar scheme with every node  having just one key pair, then the adversary could reveal some tossing keys of honest nodes through the following attack: the adversary copies an honest node's (say $j$th) ciphertext $e_{j,i}$ and includes it as $e_{k,i}$ in which case $\cP_i$ is forced to reveal $\dec_i(e_{j,i}) = tk_{j,i}$ which should have remained secret!
This is the reason why we need dedicated key pairs for every pair of nodes.
\ourskip

\noindent{\bf Forming MultiCoins.}
The unit $V:=U[i; 6]$ created by the $i$th node in round $6$ defines a set $T_i \subseteq [N]$ as follows: $k\in N$ is considered an element of $T_i$ if and only if all the three conditions below are met
\begin{enumerate}
    \item $U[k; 0] \leq V$,
    \item For every $j\in [N]$ such that $U[j; 3]\leq V$ it holds that
    $$\msf{VerKey}\inparen{KB_k, j} = 1.$$
\end{enumerate}
\noindent 
At this point it is important to note that every node that has $U[i;6]$ in its copy of the \chdag{} can compute $T_i$ as all the conditions above can be checked deterministically given only the \chdag{}.
\ourskip

\noindent{\bf SecretBits via MultiCoins.}
Recall that to implement $\msf{CommonVote}$ and $\msf{GeneratePermutation}$ it suffices to implement a more general primitive $\msf{SecretBits}(i, r)$ whose purpose is to inject a secret at round $r$ that can be recovered by every node in round $r+1$ but cannot be recovered by the adversary till at least one honest node has created a unit of round $r$.

The technical subtlety that becomes crucial here is that the $\msf{SecretBits}(i,r)$ is only called for $r\geq 9$ and only by nodes that have $U[i;6]$ in their local \chdag{} (see Lemma~\ref{lemma:multicoin-can-be-tossed}).
More specifically, this allows us to implement $\msf{SecretBits}(i, \cdot)$ through $\msf{MultiCoin}_i$.
The rationale behind doing so is that every node that sees $U[i;6]$ can also see all the Key Sets that comprise the $i$th $\msf{MultiCoin}$ and thus consequently it ``knows'' what $\msf{MultiCoin}_i$ is\footnote{We epmhasize that using a fixed MultiCoin, for instance $\msf{MultiCoin}_1$ would not be correct here, as there is no guarantee the $1$st node has delivered its unit at round $6$. More generally, it is crucial that for different units $U_0$ we allow to use different MultiCoins, otherwise we would have solved Byzantine Consensus without randomness, which is impossible by the FLP Theorem~\cite{FLP85}. }.

Suppose now that we would like to inject a secret at round $r$ for index $i\in [N]$.
Define a nonce $m:=\mbox{``i||r''}$ and request all the nodes $\cP_k$ such that $U[i;6] \leq U[k, r]$ to include in $U[k, r]$ a share for the nonce $m$ for every Key Set $KS^j$ such that $j\in T_i$.
In addition, if $\cP_k$ voted that $KS_j$ is incorrect in round $3$, or $\cP_k$ did not vote for $KB_j$ at all (since $KB_j$ was not yet available to him at round $3$) then $\cP_k$ is not obligated to include a share (note that its unit $U[k,3]$ that is below $U[k, r]$ contains evidence that its key was incorrect).

As we then show in Section~\ref{ssec:setup-details}, given any unit $U\in\cD$ of round $r+1$ one can then extract the value of $\msf{MultiCoin}_i$ from the shares present in round-$r$ units in $\cD$.
Thus, consequently, the value of $\msf{SecretBits}(i, r)$ in a \chdag{} $\cD$ is available whenever any unit in $\cD$ has round $\geq r+1$, hence we arrive at

\begin{lemma}[Secret Bits from Multicoins]\label{lemma:secretbits2}
The above defined scheme based on MultiCoins correctly implements $\msf{SecretBits}$. 
\end{lemma}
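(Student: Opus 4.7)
The plan is to verify the two properties of Definition~\ref{def:secret-bits}---unpredictability before round $r$ and extractability at round $r+1$---for the $\msf{MultiCoin}$-based implementation, reusing Lemma~\ref{lemma:secretbits} as the ``single key set'' core and promoting it to the combined key set $\sum_{j \in T_i} A_j$. The key preliminary facts I would pin down first are: (i) whenever $\msf{ChooseHead}$ invokes $\msf{SecretBits}(i, r)$ we have $r \geq 9$ and $U[i;6]$ lies in the local \chdag{}, so $T_i$ is a deterministic function of $\cD$ below $U[i;6]$ (as noted in Section~\ref{ssec:setup-details}); (ii) $|T_i| \geq f+1$, hence $T_i$ contains at least one index $j^\star$ whose key box $KB_{j^\star}$ was dealt by an honest node; and (iii) the voting scheme is sound, i.e., a unit of a honest node never positively votes on a $KB_k$ without holding a valid $tk_{k,i}$, and dishonest nodes cannot forge a negative vote against an honest $KB_{j^\star}$ (since the deterministic encryption prevents them from producing a matching plaintext).

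For extractability, I would argue that any round-$(r+1)$ unit $U \in \cD$ has, by the \textbf{Dissemination} rule, at least $2f+1$ round-$r$ parents. Every such parent $U[k;r]$ satisfying $U[i;6] \leq U[k;r]$ is \emph{required} to include, for each $j\in T_i$ on which $\cP_k$ voted positively at round $3$, a share for the nonce $m = \text{``}i \| r\text{''}$; otherwise the unit fails $\msf{Valid}$ and cannot be in $\cD$. By definition of $T_i$, every $j \in T_i$ was voted positively by every voter whose round-$3$ unit lies below $U[i;6]$, and a counting argument (intersecting the $2f+1$ round-$r$ parents of $U$ with the $2f+1$ round-$3$ voters below $U[i;6]$) yields at least $f+1$ valid shares for each $j \in T_i$. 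Running $\msf{GenerateSignature}$ recovers $\sigma_{m,j} = \wt{m}^{A_j(0)}$ for all $j \in T_i$, and hence $\tau_{m,i} = \prod_{j\in T_i} \sigma_{m,j}$ and its hash.

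For unpredictability, I would fix the honest index $j^\star \in T_i$ from (ii) above. Before any honest node has produced a round-$r$ unit, no honest node has released a share of nonce $m$ under key set $KS_{j^\star}$; so the adversary holds at most $f$ of the $N$ shares of $\sigma_{m,j^\star}$. Since the honest dealer erased the polynomial $A_{j^\star}$ after the setup, the adversary's view of $A_{j^\star}$ consists of $f$ evaluations plus the commitment $C_{j^\star}$, and the standard security reduction for BLS threshold signatures (hardness of CDH in the GDH group $G$, as in the proof of Lemma~\ref{lemma:secretbits}) shows that $\sigma_{m,j^\star}$ is computationally indistinguishable from a uniformly random element of $G$. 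Multiplying by the remaining $\sigma_{m,j}$ for $j \in T_i \setminus \{j^\star\}$ (which the adversary may even know in full) preserves this indistinguishability, so $\tau_{m,i}$ is pseudorandom in $G$ and, in the Random Oracle Model, $\hash(\tau_{m,i})$ yields $\lambda$ fresh bits.

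The main obstacle I expect is the careful bookkeeping that connects the abstract $\msf{SecretBits}$ specification, which is indexed purely by a reveal-round, to the MultiCoin mechanism, which is parametrized by the \chdag{} position $U[i;6]$ and the set $T_i$. In particular I need to justify that the adversary cannot adaptively alter $T_i$ late enough to exclude every honest $j^\star$: once $U[i;6]$ has been reliably broadcast, $T_i$ is frozen, and the $f+1$-lower-bound on $|T_i|$ must be established through a pigeonhole on the $2f+1$ round-$3$ voters below $U[i;6]$ combined with the soundness of the voting scheme. Once this is locked in, the two bullets of Definition~\ref{def:secret-bits} follow essentially by invoking Lemma~\ref{lemma:secretbits} on the honest key set $KS_{j^\star}$ and propagating the guarantee through the product structure of $\tau_{m,i}$.
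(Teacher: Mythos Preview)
Your plan is correct and traces the same three steps as the paper (verify that $\msf{SecretBits}(i,r)$ is only invoked once $U[i;6]$ is present and $r$ is large enough; extractability at round $r{+}1$; unpredictability before round $r$), but the paper's proof is far more modular: after the invocation check it simply cites Lemma~\ref{lemma:multicoin-can-be-tossed} for extractability and Lemma~\ref{lemma:multicoin-cannot-be-predicted2} for unpredictability, whereas you inline both. Your counting argument for extractability is essentially the content of Lemma~\ref{lemma:multicoin-can-be-tossed} (the same $\mathcal{S}_3$-style intersection). Where you genuinely diverge is the unpredictability reduction: the paper's Lemma~\ref{lemma:multicoin-cannot-be-predicted2} plants the CDH challenge $\alpha$ in \emph{every} honest polynomial simultaneously (so the combined secret becomes $\gamma\alpha+\mu$ with $\gamma=|H|>0$ regardless of which honest indices land in $T_i$), giving a tight reduction with no case analysis. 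Your ``fix one honest $j^\star$ and strip off the other factors'' route also works, but to turn it into a reduction you need the simulator---not the adversary---to compute $\sigma_{m,j}$ for all $j\neq j^\star$ (it knows $A_j$ for honest $j\neq j^\star$ and can interpolate $A_j(0)$ for dishonest $j$ from the $\geq f{+}1$ honest verifications of $KB_j$), and you must guess $j^\star$ upfront since $T_i$ is adversarially shaped, costing a harmless factor-$N$ in tightness.

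One caution on your references: do not invoke Lemma~\ref{lemma:secretbits} as the ``single key set core''. Its trustless case is itself proved via Lemma~\ref{lemma:multicoin-cannot-be-predicted2}, so citing it here is circular. What you actually want is just the standard threshold-BLS unforgeability statement for a single honestly dealt key set.
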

\noindent
The proof of the above lemma is provided in Section~\ref{ssec:proofs-coin}.
\ourskip

\section{Acknowledgements}

First of all, authors would like to thank Matthew Niemerg for introducing us to the topic, constant support, and countless hours spent on valuable discussions.
Additionally, we would like to show our gratitude to Micha\l{} Handzlik, Tomasz Kisielewki, Maciej Gawron, and \L{}ukasz Lachowski, for reading the paper, proposing changes that improved its consistency and readability, and discussions that helped us to make the paper easier to understand.

\noindent 
This research was funded by the Aleph Zero Foundation.

\bibliographystyle{ACM-Reference-Format}
\bibliography{references}


\begin{thebibliography}{47}


\ifx \showCODEN    \undefined \def \showCODEN     #1{\unskip}     \fi
\ifx \showDOI      \undefined \def \showDOI       #1{#1}\fi
\ifx \showISBNx    \undefined \def \showISBNx     #1{\unskip}     \fi
\ifx \showISBNxiii \undefined \def \showISBNxiii  #1{\unskip}     \fi
\ifx \showISSN     \undefined \def \showISSN      #1{\unskip}     \fi
\ifx \showLCCN     \undefined \def \showLCCN      #1{\unskip}     \fi
\ifx \shownote     \undefined \def \shownote      #1{#1}          \fi
\ifx \showarticletitle \undefined \def \showarticletitle #1{#1}   \fi
\ifx \showURL      \undefined \def \showURL       {\relax}        \fi
\providecommand\bibfield[2]{#2}
\providecommand\bibinfo[2]{#2}
\providecommand\natexlab[1]{#1}
\providecommand\showeprint[2][]{arXiv:#2}

\bibitem[\protect\citeauthoryear{Abd{-}El{-}Malek, Ganger, Goodson, Reiter, and
  Wylie}{Abd{-}El{-}Malek et~al\mbox{.}}{2005}]%
        {MGGR05}
\bibfield{author}{\bibinfo{person}{Michael Abd{-}El{-}Malek},
  \bibinfo{person}{Gregory~R. Ganger}, \bibinfo{person}{Garth~R. Goodson},
  \bibinfo{person}{Michael~K. Reiter}, {and} \bibinfo{person}{Jay~J. Wylie}.}
  \bibinfo{year}{2005}\natexlab{}.
\newblock \showarticletitle{Fault-scalable Byzantine fault-tolerant services}.
  In \bibinfo{booktitle}{\emph{Proceedings of the 20th {ACM} Symposium on
  Operating Systems Principles 2005, {SOSP} 2005, Brighton, UK, October 23-26,
  2005}}. \bibinfo{pages}{59--74}.
\newblock
\urldef\tempurl%
\url{https://doi.org/10.1145/1095810.1095817}
\showDOI{\tempurl}


\bibitem[\protect\citeauthoryear{Abraham, Gueta, Malkhi, Alvisi, Kotla, and
  Martin}{Abraham et~al\mbox{.}}{2017}]%
        {AGMAKT17}
\bibfield{author}{\bibinfo{person}{Ittai Abraham}, \bibinfo{person}{Guy Gueta},
  \bibinfo{person}{Dahlia Malkhi}, \bibinfo{person}{Lorenzo Alvisi},
  \bibinfo{person}{Ramakrishna Kotla}, {and} \bibinfo{person}{Jean{-}Philippe
  Martin}.} \bibinfo{year}{2017}\natexlab{}.
\newblock \showarticletitle{Revisiting Fast Practical Byzantine Fault
  Tolerance}.
\newblock \bibinfo{journal}{\emph{CoRR}}  \bibinfo{volume}{abs/1712.01367}
  (\bibinfo{year}{2017}).
\newblock
\showeprint[arxiv]{1712.01367}
\urldef\tempurl%
\url{http://arxiv.org/abs/1712.01367}
\showURL{%
\tempurl}


\bibitem[\protect\citeauthoryear{Abraham, Malkhi, and Spiegelman}{Abraham
  et~al\mbox{.}}{2019}]%
        {AMS19}
\bibfield{author}{\bibinfo{person}{Ittai Abraham}, \bibinfo{person}{Dahlia
  Malkhi}, {and} \bibinfo{person}{Alexander Spiegelman}.}
  \bibinfo{year}{2019}\natexlab{}.
\newblock \showarticletitle{Asymptotically Optimal Validated Asynchronous
  Byzantine Agreement}. In \bibinfo{booktitle}{\emph{Proceedings of the 2019
  {ACM} Symposium on Principles of Distributed Computing, {PODC} 2019, Toronto,
  ON, Canada, July 29 - August 2, 2019.}} \bibinfo{pages}{337--346}.
\newblock
\urldef\tempurl%
\url{https://doi.org/10.1145/3293611.3331612}
\showDOI{\tempurl}


\bibitem[\protect\citeauthoryear{Baird}{Baird}{2016}]%
        {Baird16}
\bibfield{author}{\bibinfo{person}{Leemon Baird}.}
  \bibinfo{year}{2016}\natexlab{}.
\newblock \showarticletitle{The swirlds hashgraph consensus algorithm: Fair,
  fast, byzantine fault tolerance}.
\newblock \bibinfo{journal}{\emph{Swirlds Tech Reports SWIRLDS-TR-2016-01,
  Tech. Rep.}} (\bibinfo{year}{2016}).
\newblock


\bibitem[\protect\citeauthoryear{Ben{-}Or and El{-}Yaniv}{Ben{-}Or and
  El{-}Yaniv}{2003}]%
        {BE03}
\bibfield{author}{\bibinfo{person}{Michael Ben{-}Or} {and} \bibinfo{person}{Ran
  El{-}Yaniv}.} \bibinfo{year}{2003}\natexlab{}.
\newblock \showarticletitle{Resilient-optimal interactive consistency in
  constant time}.
\newblock \bibinfo{journal}{\emph{Distributed Computing}} \bibinfo{volume}{16},
  \bibinfo{number}{4} (\bibinfo{year}{2003}), \bibinfo{pages}{249--262}.
\newblock
\urldef\tempurl%
\url{https://doi.org/10.1007/s00446-002-0083-3}
\showDOI{\tempurl}


\bibitem[\protect\citeauthoryear{Boldyreva}{Boldyreva}{2003}]%
        {Boldyreva03}
\bibfield{author}{\bibinfo{person}{Alexandra Boldyreva}.}
  \bibinfo{year}{2003}\natexlab{}.
\newblock \showarticletitle{Threshold Signatures, Multisignatures and Blind
  Signatures Based on the Gap-Diffie-Hellman-Group Signature Scheme}. In
  \bibinfo{booktitle}{\emph{Public Key Cryptography - {PKC} 2003, 6th
  International Workshop on Theory and Practice in Public Key Cryptography,
  Miami, FL, USA, January 6-8, 2003, Proceedings}}. \bibinfo{pages}{31--46}.
\newblock
\urldef\tempurl%
\url{https://doi.org/10.1007/3-540-36288-6\_3}
\showDOI{\tempurl}


\bibitem[\protect\citeauthoryear{Boneh, Bonneau, B{\"{u}}nz, and Fisch}{Boneh
  et~al\mbox{.}}{2018}]%
        {BBBF18}
\bibfield{author}{\bibinfo{person}{Dan Boneh}, \bibinfo{person}{Joseph
  Bonneau}, \bibinfo{person}{Benedikt B{\"{u}}nz}, {and} \bibinfo{person}{Ben
  Fisch}.} \bibinfo{year}{2018}\natexlab{}.
\newblock \showarticletitle{Verifiable Delay Functions}. In
  \bibinfo{booktitle}{\emph{Advances in Cryptology - {CRYPTO} 2018 - 38th
  Annual International Cryptology Conference, Santa Barbara, CA, USA, August
  19-23, 2018, Proceedings, Part {I}}}. \bibinfo{pages}{757--788}.
\newblock
\urldef\tempurl%
\url{https://doi.org/10.1007/978-3-319-96884-1\_25}
\showDOI{\tempurl}


\bibitem[\protect\citeauthoryear{Boneh, Lynn, and Shacham}{Boneh
  et~al\mbox{.}}{2004}]%
        {BLS04}
\bibfield{author}{\bibinfo{person}{Dan Boneh}, \bibinfo{person}{Ben Lynn},
  {and} \bibinfo{person}{Hovav Shacham}.} \bibinfo{year}{2004}\natexlab{}.
\newblock \showarticletitle{Short Signatures from the Weil Pairing}.
\newblock \bibinfo{journal}{\emph{J. Cryptology}} \bibinfo{volume}{17},
  \bibinfo{number}{4} (\bibinfo{year}{2004}), \bibinfo{pages}{297--319}.
\newblock
\urldef\tempurl%
\url{https://doi.org/10.1007/s00145-004-0314-9}
\showDOI{\tempurl}


\bibitem[\protect\citeauthoryear{Bracha}{Bracha}{1987}]%
        {Bracha87}
\bibfield{author}{\bibinfo{person}{Gabriel Bracha}.}
  \bibinfo{year}{1987}\natexlab{}.
\newblock \showarticletitle{Asynchronous Byzantine Agreement Protocols}.
\newblock \bibinfo{journal}{\emph{Inf. Comput.}} \bibinfo{volume}{75},
  \bibinfo{number}{2} (\bibinfo{year}{1987}), \bibinfo{pages}{130--143}.
\newblock
\urldef\tempurl%
\url{https://doi.org/10.1016/0890-5401(87)90054-X}
\showDOI{\tempurl}


\bibitem[\protect\citeauthoryear{Bracha and Toueg}{Bracha and Toueg}{1983}]%
        {BT83}
\bibfield{author}{\bibinfo{person}{Gabriel Bracha} {and} \bibinfo{person}{Sam
  Toueg}.} \bibinfo{year}{1983}\natexlab{}.
\newblock \showarticletitle{Resilient Consensus Protocols}. In
  \bibinfo{booktitle}{\emph{Proceedings of the Second Annual {ACM}
  {SIGACT-SIGOPS} Symposium on Principles of Distributed Computing, Montreal,
  Quebec, Canada, August 17-19, 1983}}. \bibinfo{pages}{12--26}.
\newblock
\urldef\tempurl%
\url{https://doi.org/10.1145/800221.806706}
\showDOI{\tempurl}


\bibitem[\protect\citeauthoryear{Buchman, Kwon, and Milosevic}{Buchman
  et~al\mbox{.}}{2018}]%
        {BKM18}
\bibfield{author}{\bibinfo{person}{Ethan Buchman}, \bibinfo{person}{Jae Kwon},
  {and} \bibinfo{person}{Zarko Milosevic}.} \bibinfo{year}{2018}\natexlab{}.
\newblock \showarticletitle{The latest gossip on {BFT} consensus}.
\newblock \bibinfo{journal}{\emph{CoRR}}  \bibinfo{volume}{abs/1807.04938}
  (\bibinfo{year}{2018}).
\newblock
\showeprint[arxiv]{1807.04938}
\urldef\tempurl%
\url{http://arxiv.org/abs/1807.04938}
\showURL{%
\tempurl}


\bibitem[\protect\citeauthoryear{Buterin and Griffith}{Buterin and
  Griffith}{2017}]%
        {BG17}
\bibfield{author}{\bibinfo{person}{Vitalik Buterin} {and}
  \bibinfo{person}{Virgil Griffith}.} \bibinfo{year}{2017}\natexlab{}.
\newblock \showarticletitle{Casper the Friendly Finality Gadget}.
\newblock \bibinfo{journal}{\emph{CoRR}}  \bibinfo{volume}{abs/1710.09437}
  (\bibinfo{year}{2017}).
\newblock
\showeprint[arxiv]{1710.09437}
\urldef\tempurl%
\url{http://arxiv.org/abs/1710.09437}
\showURL{%
\tempurl}


\bibitem[\protect\citeauthoryear{Cachin, Guerraoui, and Rodrigues}{Cachin
  et~al\mbox{.}}{2011}]%
        {CGR11}
\bibfield{author}{\bibinfo{person}{Christian Cachin}, \bibinfo{person}{Rachid
  Guerraoui}, {and} \bibinfo{person}{Lu{\'{\i}}s E.~T. Rodrigues}.}
  \bibinfo{year}{2011}\natexlab{}.
\newblock \bibinfo{booktitle}{\emph{Introduction to Reliable and Secure
  Distributed Programming {(2.} ed.)}}.
\newblock \bibinfo{publisher}{Springer}.
\newblock
\showISBNx{978-3-642-15259-7}
\urldef\tempurl%
\url{https://doi.org/10.1007/978-3-642-15260-3}
\showDOI{\tempurl}


\bibitem[\protect\citeauthoryear{Cachin, Kursawe, Lysyanskaya, and
  Strobl}{Cachin et~al\mbox{.}}{2002}]%
        {CKLS02}
\bibfield{author}{\bibinfo{person}{Christian Cachin}, \bibinfo{person}{Klaus
  Kursawe}, \bibinfo{person}{Anna Lysyanskaya}, {and} \bibinfo{person}{Reto
  Strobl}.} \bibinfo{year}{2002}\natexlab{}.
\newblock \showarticletitle{Asynchronous verifiable secret sharing and
  proactive cryptosystems}. In \bibinfo{booktitle}{\emph{Proceedings of the 9th
  {ACM} Conference on Computer and Communications Security, {CCS} 2002,
  Washington, DC, USA, November 18-22, 2002}}. \bibinfo{pages}{88--97}.
\newblock
\urldef\tempurl%
\url{https://doi.org/10.1145/586110.586124}
\showDOI{\tempurl}


\bibitem[\protect\citeauthoryear{Cachin, Kursawe, Petzold, and Shoup}{Cachin
  et~al\mbox{.}}{2001}]%
        {CKPS01}
\bibfield{author}{\bibinfo{person}{Christian Cachin}, \bibinfo{person}{Klaus
  Kursawe}, \bibinfo{person}{Frank Petzold}, {and} \bibinfo{person}{Victor
  Shoup}.} \bibinfo{year}{2001}\natexlab{}.
\newblock \showarticletitle{Secure and Efficient Asynchronous Broadcast
  Protocols}. In \bibinfo{booktitle}{\emph{Advances in Cryptology - {CRYPTO}
  2001, 21st Annual International Cryptology Conference, Santa Barbara,
  California, USA, August 19-23, 2001, Proceedings}}.
  \bibinfo{pages}{524--541}.
\newblock
\urldef\tempurl%
\url{https://doi.org/10.1007/3-540-44647-8\_31}
\showDOI{\tempurl}


\bibitem[\protect\citeauthoryear{Cachin, Kursawe, and Shoup}{Cachin
  et~al\mbox{.}}{2005}]%
        {CKS05}
\bibfield{author}{\bibinfo{person}{Christian Cachin}, \bibinfo{person}{Klaus
  Kursawe}, {and} \bibinfo{person}{Victor Shoup}.}
  \bibinfo{year}{2005}\natexlab{}.
\newblock \showarticletitle{Random Oracles in Constantinople: Practical
  Asynchronous Byzantine Agreement Using Cryptography}.
\newblock \bibinfo{journal}{\emph{J. Cryptology}} \bibinfo{volume}{18},
  \bibinfo{number}{3} (\bibinfo{year}{2005}), \bibinfo{pages}{219--246}.
\newblock
\urldef\tempurl%
\url{https://doi.org/10.1007/s00145-005-0318-0}
\showDOI{\tempurl}


\bibitem[\protect\citeauthoryear{Cachin and Tessaro}{Cachin and
  Tessaro}{2005}]%
        {CT05}
\bibfield{author}{\bibinfo{person}{Christian Cachin} {and}
  \bibinfo{person}{Stefano Tessaro}.} \bibinfo{year}{2005}\natexlab{}.
\newblock \showarticletitle{Asynchronous verifiable information dispersal}. In
  \bibinfo{booktitle}{\emph{24th IEEE Symposium on Reliable Distributed Systems
  (SRDS'05)}}. IEEE, \bibinfo{pages}{191--201}.
\newblock


\bibitem[\protect\citeauthoryear{Cachin and Vukolic}{Cachin and
  Vukolic}{2017}]%
        {CV17}
\bibfield{author}{\bibinfo{person}{Christian Cachin} {and}
  \bibinfo{person}{Marko Vukolic}.} \bibinfo{year}{2017}\natexlab{}.
\newblock \showarticletitle{Blockchain Consensus Protocols in the Wild}.
\newblock \bibinfo{journal}{\emph{CoRR}}  \bibinfo{volume}{abs/1707.01873}
  (\bibinfo{year}{2017}).
\newblock
\showeprint[arxiv]{1707.01873}
\urldef\tempurl%
\url{http://arxiv.org/abs/1707.01873}
\showURL{%
\tempurl}


\bibitem[\protect\citeauthoryear{Canetti and Rabin}{Canetti and Rabin}{1993}]%
        {CR93}
\bibfield{author}{\bibinfo{person}{Ran Canetti} {and} \bibinfo{person}{Tal
  Rabin}.} \bibinfo{year}{1993}\natexlab{}.
\newblock \showarticletitle{Fast asynchronous Byzantine agreement with optimal
  resilience}. In \bibinfo{booktitle}{\emph{Proceedings of the Twenty-Fifth
  Annual {ACM} Symposium on Theory of Computing, May 16-18, 1993, San Diego,
  CA, {USA}}}. \bibinfo{pages}{42--51}.
\newblock
\urldef\tempurl%
\url{https://doi.org/10.1145/167088.167105}
\showDOI{\tempurl}


\bibitem[\protect\citeauthoryear{Cascudo and David}{Cascudo and David}{2017}]%
        {CD17}
\bibfield{author}{\bibinfo{person}{Ignacio Cascudo} {and}
  \bibinfo{person}{Bernardo David}.} \bibinfo{year}{2017}\natexlab{}.
\newblock \showarticletitle{{SCRAPE:} Scalable Randomness Attested by Public
  Entities}. In \bibinfo{booktitle}{\emph{Applied Cryptography and Network
  Security - 15th International Conference, {ACNS} 2017, Kanazawa, Japan, July
  10-12, 2017, Proceedings}}. \bibinfo{pages}{537--556}.
\newblock
\urldef\tempurl%
\url{https://doi.org/10.1007/978-3-319-61204-1\_27}
\showDOI{\tempurl}


\bibitem[\protect\citeauthoryear{Castro and Liskov}{Castro and Liskov}{1999}]%
        {CL99}
\bibfield{author}{\bibinfo{person}{Miguel Castro} {and}
  \bibinfo{person}{Barbara Liskov}.} \bibinfo{year}{1999}\natexlab{}.
\newblock \showarticletitle{Practical Byzantine Fault Tolerance}. In
  \bibinfo{booktitle}{\emph{Proceedings of the Third {USENIX} Symposium on
  Operating Systems Design and Implementation (OSDI), New Orleans, Louisiana,
  USA, February 22-25, 1999}}. \bibinfo{pages}{173--186}.
\newblock
\urldef\tempurl%
\url{https://dl.acm.org/citation.cfm?id=296824}
\showURL{%
\tempurl}


\bibitem[\protect\citeauthoryear{Correia, Neves, and Ver{\'{\i}}ssimo}{Correia
  et~al\mbox{.}}{2006}]%
        {CNV06}
\bibfield{author}{\bibinfo{person}{Miguel Correia},
  \bibinfo{person}{Nuno~Ferreira Neves}, {and} \bibinfo{person}{Paulo
  Ver{\'{\i}}ssimo}.} \bibinfo{year}{2006}\natexlab{}.
\newblock \showarticletitle{From Consensus to Atomic Broadcast: Time-Free
  Byzantine-Resistant Protocols without Signatures}.
\newblock \bibinfo{journal}{\emph{Comput. J.}} \bibinfo{volume}{49},
  \bibinfo{number}{1} (\bibinfo{year}{2006}), \bibinfo{pages}{82--96}.
\newblock
\urldef\tempurl%
\url{https://doi.org/10.1093/comjnl/bxh145}
\showDOI{\tempurl}


\bibitem[\protect\citeauthoryear{Danezis and Hrycyszyn}{Danezis and
  Hrycyszyn}{2018}]%
        {DH18}
\bibfield{author}{\bibinfo{person}{George Danezis} {and} \bibinfo{person}{David
  Hrycyszyn}.} \bibinfo{year}{2018}\natexlab{}.
\newblock \showarticletitle{Blockmania: from Block DAGs to Consensus}.
\newblock \bibinfo{journal}{\emph{arXiv preprint arXiv:1809.01620}}
  (\bibinfo{year}{2018}).
\newblock


\bibitem[\protect\citeauthoryear{Fischer, Lynch, and Paterson}{Fischer
  et~al\mbox{.}}{1985}]%
        {FLP85}
\bibfield{author}{\bibinfo{person}{Michael~J. Fischer},
  \bibinfo{person}{Nancy~A. Lynch}, {and} \bibinfo{person}{Mike Paterson}.}
  \bibinfo{year}{1985}\natexlab{}.
\newblock \showarticletitle{Impossibility of Distributed Consensus with One
  Faulty Process}.
\newblock \bibinfo{journal}{\emph{J. {ACM}}} \bibinfo{volume}{32},
  \bibinfo{number}{2} (\bibinfo{year}{1985}), \bibinfo{pages}{374--382}.
\newblock
\urldef\tempurl%
\url{https://doi.org/10.1145/3149.214121}
\showDOI{\tempurl}


\bibitem[\protect\citeauthoryear{Gennaro, Jarecki, Krawczyk, and Rabin}{Gennaro
  et~al\mbox{.}}{2003}]%
        {GJKR03}
\bibfield{author}{\bibinfo{person}{Rosario Gennaro}, \bibinfo{person}{Stanislaw
  Jarecki}, \bibinfo{person}{Hugo Krawczyk}, {and} \bibinfo{person}{Tal
  Rabin}.} \bibinfo{year}{2003}\natexlab{}.
\newblock \showarticletitle{Secure Applications of Pedersen's Distributed Key
  Generation Protocol}. In \bibinfo{booktitle}{\emph{Topics in Cryptology -
  {CT-RSA} 2003, The Cryptographers' Track at the {RSA} Conference 2003, San
  Francisco, CA, USA, April 13-17, 2003, Proceedings}}.
  \bibinfo{pages}{373--390}.
\newblock
\urldef\tempurl%
\url{https://doi.org/10.1007/3-540-36563-X\_26}
\showDOI{\tempurl}


\bibitem[\protect\citeauthoryear{Gennaro, Jarecki, Krawczyk, and Rabin}{Gennaro
  et~al\mbox{.}}{2007}]%
        {GJKR07}
\bibfield{author}{\bibinfo{person}{Rosario Gennaro}, \bibinfo{person}{Stanislaw
  Jarecki}, \bibinfo{person}{Hugo Krawczyk}, {and} \bibinfo{person}{Tal
  Rabin}.} \bibinfo{year}{2007}\natexlab{}.
\newblock \showarticletitle{Secure Distributed Key Generation for Discrete-Log
  Based Cryptosystems}.
\newblock \bibinfo{journal}{\emph{J. Cryptology}} \bibinfo{volume}{20},
  \bibinfo{number}{1} (\bibinfo{year}{2007}), \bibinfo{pages}{51--83}.
\newblock
\urldef\tempurl%
\url{https://doi.org/10.1007/s00145-006-0347-3}
\showDOI{\tempurl}


\bibitem[\protect\citeauthoryear{Hadzilacos and Toueg}{Hadzilacos and
  Toueg}{1994}]%
        {HT94}
\bibfield{author}{\bibinfo{person}{Vassos Hadzilacos} {and}
  \bibinfo{person}{Sam Toueg}.} \bibinfo{year}{1994}\natexlab{}.
\newblock \bibinfo{booktitle}{\emph{A modular approach to fault-tolerant
  broadcasts and related problems}}.
\newblock \bibinfo{type}{{T}echnical {R}eport}. \bibinfo{institution}{Cornell
  University}.
\newblock


\bibitem[\protect\citeauthoryear{Hanke, Movahedi, and Williams}{Hanke
  et~al\mbox{.}}{2018}]%
        {HMW18}
\bibfield{author}{\bibinfo{person}{Timo Hanke}, \bibinfo{person}{Mahnush
  Movahedi}, {and} \bibinfo{person}{Dominic Williams}.}
  \bibinfo{year}{2018}\natexlab{}.
\newblock \showarticletitle{Dfinity technology overview series, consensus
  system}.
\newblock \bibinfo{journal}{\emph{arXiv preprint arXiv:1805.04548}}
  (\bibinfo{year}{2018}).
\newblock


\bibitem[\protect\citeauthoryear{Kate, Huang, and Goldberg}{Kate
  et~al\mbox{.}}{2012}]%
        {KHG12}
\bibfield{author}{\bibinfo{person}{Aniket Kate}, \bibinfo{person}{Yizhou
  Huang}, {and} \bibinfo{person}{Ian Goldberg}.}
  \bibinfo{year}{2012}\natexlab{}.
\newblock \showarticletitle{Distributed Key Generation in the Wild}.
\newblock \bibinfo{journal}{\emph{{IACR} Cryptology ePrint Archive}}
  \bibinfo{volume}{2012} (\bibinfo{year}{2012}), \bibinfo{pages}{377}.
\newblock
\urldef\tempurl%
\url{http://eprint.iacr.org/2012/377}
\showURL{%
\tempurl}


\bibitem[\protect\citeauthoryear{Kiayias, Russell, David, and
  Oliynykov}{Kiayias et~al\mbox{.}}{2017}]%
        {KRDO17}
\bibfield{author}{\bibinfo{person}{Aggelos Kiayias}, \bibinfo{person}{Alexander
  Russell}, \bibinfo{person}{Bernardo David}, {and} \bibinfo{person}{Roman
  Oliynykov}.} \bibinfo{year}{2017}\natexlab{}.
\newblock \showarticletitle{Ouroboros: {A} Provably Secure Proof-of-Stake
  Blockchain Protocol}. In \bibinfo{booktitle}{\emph{Advances in Cryptology -
  {CRYPTO} 2017 - 37th Annual International Cryptology Conference, Santa
  Barbara, CA, USA, August 20-24, 2017, Proceedings, Part {I}}}.
  \bibinfo{pages}{357--388}.
\newblock
\urldef\tempurl%
\url{https://doi.org/10.1007/978-3-319-63688-7\_12}
\showDOI{\tempurl}


\bibitem[\protect\citeauthoryear{Kotla, Alvisi, Dahlin, Clement, and
  Wong}{Kotla et~al\mbox{.}}{2009}]%
        {KADCW09}
\bibfield{author}{\bibinfo{person}{Ramakrishna Kotla}, \bibinfo{person}{Lorenzo
  Alvisi}, \bibinfo{person}{Michael Dahlin}, \bibinfo{person}{Allen Clement},
  {and} \bibinfo{person}{Edmund~L. Wong}.} \bibinfo{year}{2009}\natexlab{}.
\newblock \showarticletitle{Zyzzyva: Speculative Byzantine fault tolerance}.
\newblock \bibinfo{journal}{\emph{{ACM} Trans. Comput. Syst.}}
  \bibinfo{volume}{27}, \bibinfo{number}{4} (\bibinfo{year}{2009}),
  \bibinfo{pages}{7:1--7:39}.
\newblock
\urldef\tempurl%
\url{https://doi.org/10.1145/1658357.1658358}
\showDOI{\tempurl}


\bibitem[\protect\citeauthoryear{Kwon and Buchman}{Kwon and Buchman}{[n.d.]}]%
        {KB18}
\bibfield{author}{\bibinfo{person}{Jae Kwon} {and} \bibinfo{person}{Ethan
  Buchman}.} \bibinfo{year}{[n.d.]}\natexlab{}.
\newblock \showarticletitle{A Network of Distributed Ledgers}.
\newblock  (\bibinfo{year}{[n.\,d.]}).
\newblock
\urldef\tempurl%
\url{https://cosmos.network/cosmos-whitepaper.pdf}
\showURL{%
\tempurl}


\bibitem[\protect\citeauthoryear{Lamport}{Lamport}{1978}]%
        {Lamport78}
\bibfield{author}{\bibinfo{person}{Leslie Lamport}.}
  \bibinfo{year}{1978}\natexlab{}.
\newblock \showarticletitle{Time, Clocks, and the Ordering of Events in a
  Distributed System}.
\newblock \bibinfo{journal}{\emph{Commun. {ACM}}} \bibinfo{volume}{21},
  \bibinfo{number}{7} (\bibinfo{year}{1978}), \bibinfo{pages}{558--565}.
\newblock
\urldef\tempurl%
\url{https://doi.org/10.1145/359545.359563}
\showDOI{\tempurl}


\bibitem[\protect\citeauthoryear{Lenstra and Wesolowski}{Lenstra and
  Wesolowski}{2017}]%
        {LW17}
\bibfield{author}{\bibinfo{person}{Arjen~K. Lenstra} {and}
  \bibinfo{person}{Benjamin Wesolowski}.} \bibinfo{year}{2017}\natexlab{}.
\newblock \showarticletitle{Trustworthy public randomness with sloth, unicorn,
  and trx}.
\newblock \bibinfo{journal}{\emph{{IJACT}}} \bibinfo{volume}{3},
  \bibinfo{number}{4} (\bibinfo{year}{2017}), \bibinfo{pages}{330--343}.
\newblock
\urldef\tempurl%
\url{https://doi.org/10.1504/IJACT.2017.10010315}
\showDOI{\tempurl}


\bibitem[\protect\citeauthoryear{Miller, Xia, Croman, Shi, and Song}{Miller
  et~al\mbox{.}}{2016}]%
        {MXCSS16}
\bibfield{author}{\bibinfo{person}{Andrew Miller}, \bibinfo{person}{Yu Xia},
  \bibinfo{person}{Kyle Croman}, \bibinfo{person}{Elaine Shi}, {and}
  \bibinfo{person}{Dawn Song}.} \bibinfo{year}{2016}\natexlab{}.
\newblock \showarticletitle{The Honey Badger of {BFT} Protocols}. In
  \bibinfo{booktitle}{\emph{Proceedings of the 2016 {ACM} {SIGSAC} Conference
  on Computer and Communications Security, Vienna, Austria, October 24-28,
  2016}}. \bibinfo{pages}{31--42}.
\newblock
\urldef\tempurl%
\url{https://doi.org/10.1145/2976749.2978399}
\showDOI{\tempurl}


\bibitem[\protect\citeauthoryear{Milosevic, Hutle, and Schiper}{Milosevic
  et~al\mbox{.}}{2011}]%
        {MHS11}
\bibfield{author}{\bibinfo{person}{Zarko Milosevic}, \bibinfo{person}{Martin
  Hutle}, {and} \bibinfo{person}{Andr{\'{e}} Schiper}.}
  \bibinfo{year}{2011}\natexlab{}.
\newblock \showarticletitle{On the Reduction of Atomic Broadcast to Consensus
  with Byzantine Faults}. In \bibinfo{booktitle}{\emph{30th {IEEE} Symposium on
  Reliable Distributed Systems {(SRDS} 2011), Madrid, Spain, October 4-7,
  2011}}. \bibinfo{pages}{235--244}.
\newblock
\urldef\tempurl%
\url{https://doi.org/10.1109/SRDS.2011.36}
\showDOI{\tempurl}


\bibitem[\protect\citeauthoryear{Moser and Melliar{-}Smith}{Moser and
  Melliar{-}Smith}{1999}]%
        {MM99}
\bibfield{author}{\bibinfo{person}{Louise~E. Moser} {and}
  \bibinfo{person}{P.~M. Melliar{-}Smith}.} \bibinfo{year}{1999}\natexlab{}.
\newblock \showarticletitle{Byzantine-Resistant Total Ordering Algorithms}.
\newblock \bibinfo{journal}{\emph{Inf. Comput.}} \bibinfo{volume}{150},
  \bibinfo{number}{1} (\bibinfo{year}{1999}), \bibinfo{pages}{75--111}.
\newblock
\urldef\tempurl%
\url{https://doi.org/10.1006/inco.1998.2770}
\showDOI{\tempurl}


\bibitem[\protect\citeauthoryear{Most{\'{e}}faoui, Moumen, and
  Raynal}{Most{\'{e}}faoui et~al\mbox{.}}{2015}]%
        {MMR15}
\bibfield{author}{\bibinfo{person}{Achour Most{\'{e}}faoui},
  \bibinfo{person}{Hamouma Moumen}, {and} \bibinfo{person}{Michel Raynal}.}
  \bibinfo{year}{2015}\natexlab{}.
\newblock \showarticletitle{Signature-Free Asynchronous Binary Byzantine
  Consensus with t {\textless} n/3, O(n2) Messages, and {O(1)} Expected Time}.
\newblock \bibinfo{journal}{\emph{J. {ACM}}} \bibinfo{volume}{62},
  \bibinfo{number}{4} (\bibinfo{year}{2015}), \bibinfo{pages}{31:1--31:21}.
\newblock
\urldef\tempurl%
\url{https://doi.org/10.1145/2785953}
\showDOI{\tempurl}


\bibitem[\protect\citeauthoryear{Nakamoto}{Nakamoto}{2008}]%
        {Nakamoto08}
\bibfield{author}{\bibinfo{person}{Satoshi Nakamoto}.}
  \bibinfo{year}{2008}\natexlab{}.
\newblock \showarticletitle{Bitcoin: A peer-to-peer electronic cash system}.
\newblock  (\bibinfo{year}{2008}).
\newblock


\bibitem[\protect\citeauthoryear{Pass and Shi}{Pass and Shi}{2017}]%
        {PassS17}
\bibfield{author}{\bibinfo{person}{Rafael Pass} {and} \bibinfo{person}{Elaine
  Shi}.} \bibinfo{year}{2017}\natexlab{}.
\newblock \showarticletitle{Hybrid Consensus: Efficient Consensus in the
  Permissionless Model}. In \bibinfo{booktitle}{\emph{31st International
  Symposium on Distributed Computing, {DISC} 2017, October 16-20, 2017, Vienna,
  Austria}}. \bibinfo{pages}{39:1--39:16}.
\newblock
\urldef\tempurl%
\url{https://doi.org/10.4230/LIPIcs.DISC.2017.39}
\showDOI{\tempurl}


\bibitem[\protect\citeauthoryear{Pedersen}{Pedersen}{1991}]%
        {Pedersen91}
\bibfield{author}{\bibinfo{person}{Torben~P. Pedersen}.}
  \bibinfo{year}{1991}\natexlab{}.
\newblock \showarticletitle{A Threshold Cryptosystem without a Trusted Party
  (Extended Abstract)}. In \bibinfo{booktitle}{\emph{Advances in Cryptology -
  {EUROCRYPT} '91, Workshop on the Theory and Application of of Cryptographic
  Techniques, Brighton, UK, April 8-11, 1991, Proceedings}}.
  \bibinfo{pages}{522--526}.
\newblock
\urldef\tempurl%
\url{https://doi.org/10.1007/3-540-46416-6\_47}
\showDOI{\tempurl}


\bibitem[\protect\citeauthoryear{Pietrzak}{Pietrzak}{2019}]%
        {Pietrzak19}
\bibfield{author}{\bibinfo{person}{Krzysztof Pietrzak}.}
  \bibinfo{year}{2019}\natexlab{}.
\newblock \showarticletitle{Simple Verifiable Delay Functions}. In
  \bibinfo{booktitle}{\emph{10th Innovations in Theoretical Computer Science
  Conference, {ITCS} 2019, January 10-12, 2019, San Diego, California, {USA}}}.
  \bibinfo{pages}{60:1--60:15}.
\newblock
\urldef\tempurl%
\url{https://doi.org/10.4230/LIPIcs.ITCS.2019.60}
\showDOI{\tempurl}


\bibitem[\protect\citeauthoryear{Pointcheval and Stern}{Pointcheval and
  Stern}{1996}]%
        {PS96}
\bibfield{author}{\bibinfo{person}{David Pointcheval} {and}
  \bibinfo{person}{Jacques Stern}.} \bibinfo{year}{1996}\natexlab{}.
\newblock \showarticletitle{Security Proofs for Signature Schemes}. In
  \bibinfo{booktitle}{\emph{Advances in Cryptology - {EUROCRYPT} '96,
  International Conference on the Theory and Application of Cryptographic
  Techniques, Saragossa, Spain, May 12-16, 1996, Proceeding}}.
  \bibinfo{pages}{387--398}.
\newblock
\urldef\tempurl%
\url{https://doi.org/10.1007/3-540-68339-9\_33}
\showDOI{\tempurl}


\bibitem[\protect\citeauthoryear{Popov}{Popov}{2016}]%
        {Popov16}
\bibfield{author}{\bibinfo{person}{Serguei Popov}.}
  \bibinfo{year}{2016}\natexlab{}.
\newblock \showarticletitle{The tangle}.
\newblock \bibinfo{journal}{\emph{cit. on}} (\bibinfo{year}{2016}),
  \bibinfo{pages}{131}.
\newblock


\bibitem[\protect\citeauthoryear{Shamir}{Shamir}{1979}]%
        {Shamir79}
\bibfield{author}{\bibinfo{person}{Adi Shamir}.}
  \bibinfo{year}{1979}\natexlab{}.
\newblock \showarticletitle{How to Share a Secret}.
\newblock \bibinfo{journal}{\emph{Commun. {ACM}}} \bibinfo{volume}{22},
  \bibinfo{number}{11} (\bibinfo{year}{1979}), \bibinfo{pages}{612--613}.
\newblock
\urldef\tempurl%
\url{https://doi.org/10.1145/359168.359176}
\showDOI{\tempurl}


\bibitem[\protect\citeauthoryear{Syta, Jovanovic, Kokoris{-}Kogias, Gailly,
  Gasser, Khoffi, Fischer, and Ford}{Syta et~al\mbox{.}}{2017}]%
        {SJKGGKFF17}
\bibfield{author}{\bibinfo{person}{Ewa Syta}, \bibinfo{person}{Philipp
  Jovanovic}, \bibinfo{person}{Eleftherios Kokoris{-}Kogias},
  \bibinfo{person}{Nicolas Gailly}, \bibinfo{person}{Linus Gasser},
  \bibinfo{person}{Ismail Khoffi}, \bibinfo{person}{Michael~J. Fischer}, {and}
  \bibinfo{person}{Bryan Ford}.} \bibinfo{year}{2017}\natexlab{}.
\newblock \showarticletitle{Scalable Bias-Resistant Distributed Randomness}. In
  \bibinfo{booktitle}{\emph{2017 {IEEE} Symposium on Security and Privacy, {SP}
  2017, San Jose, CA, USA, May 22-26, 2017}}. \bibinfo{pages}{444--460}.
\newblock
\urldef\tempurl%
\url{https://doi.org/10.1109/SP.2017.45}
\showDOI{\tempurl}


\bibitem[\protect\citeauthoryear{Wesolowski}{Wesolowski}{2019}]%
        {Wesolowski19}
\bibfield{author}{\bibinfo{person}{Benjamin Wesolowski}.}
  \bibinfo{year}{2019}\natexlab{}.
\newblock \showarticletitle{Efficient Verifiable Delay Functions}. In
  \bibinfo{booktitle}{\emph{Advances in Cryptology - {EUROCRYPT} 2019 - 38th
  Annual International Conference on the Theory and Applications of
  Cryptographic Techniques, Darmstadt, Germany, May 19-23, 2019, Proceedings,
  Part {III}}}. \bibinfo{pages}{379--407}.
\newblock
\urldef\tempurl%
\url{https://doi.org/10.1007/978-3-030-17659-4\_13}
\showDOI{\tempurl}


\end{thebibliography}

\appendix

\section{Practical Considerations}\label{sec:practical}

The protocol that we described in the main body of the paper has excellent theoretical properties and achieves optimal asymptotic guarantees, however in the original form might not be viable for practical implementation.
The high level reason for that is that it was designed to operate in the harshest possible adversarial setting (i.e. the adversary controlling $f$ out of $3f+1$ nodes and being able to arbitrarily delay messages) and it was not optimized for the "optimistic case".
This means intuitively that as of now, the protocol can withstand arbitrarily oppressive "attacks" of the adversary, but does not get any faster when no attacks take place.

In this section we present several adjustments to the protocol that allow us to make it significantly faster in the optimistic case, without compromising its strong security guarantees.
As a consequence, we obtain a version of the protocol that is well suited for practical implementation -- we show in particular (see Section~\ref{subsec:synchronous}) that under partial synchrony it matches the optimistic $3$-round validation delay of PBFT and Tendermint~\cite{CL99,BKM18}.

Below, we list the main sources of practical inefficiency of the protocol:
\begin{enumerate}
\item The reliance on RBC: each \dagr{} effectively takes $3$ \asynchr{}s, so in practice the protocol requires more rounds than simple synchronous or partially synchronous protocols such as PBFT. Furthermore, performing $\Omega(N)$ instances of RBC simultaneusly by a node, forces it to send $\Omega(N^2)$ distinct messages per round, which might not be feasible.
\item The worst-case assumption that the adversary can pretty much arbitrarily manipulate the structure of the $\chdag{}$ forces the randomness to be revealed with large delay and causes inefficiency.
\item The total size of metadata (total size of units, ignoring transactions) produced in one round by all the nodes is quite large: $\Omega(N^2 \lambda)$ bits, because each round-$r$ unit contains hashes of $\Omega(N)$ units from round $r-1$.
\end{enumerate}

We show how to solve these three issues in Sections~\ref{sec:rbc_gossip}, \ref{sec:new_voting}
 and \ref{subsec:size-of-units} respectively.

\subsection{From RBC to Multicast and Random Gossip}\label{sec:rbc_gossip}

As a practical solution, we propose to use a combination of multicast and random gossip in place of RBC to share the \chdag{} between nodes.
More specifically, the following \msf{Quick}-\msf{DAG}-\msf{Grow} algorithm is meant to replace the \msf{DAG}-\msf{Grow} from the main body (based on RBC).

\renewcommand{\algorithmcfname}{\msf{Quick}-\msf{DAG}-\msf{Grow}$(\cD)$}
\begin{algorithm-hbox}
  \SetKwFunction{FMain}{$\msf{CreateUnit}(\msf{data})$}
  \SetKwProg{Fn}{ }{:}{}
  \Fn{\FMain}
    {
    \For{$r=0,1,2, \ldots$}
        {
        \If{r>0}
        {
            {\bf wait until} $|\{U\in \cD:\H(U)=r-1\}|\geq 2f+1$
        }
        $P \leftarrow \{$maximal $\cP_i$'s unit of round $<r$ in $\cD:\cP_i\in\cP\}$\\
        {\bf create} a new unit $U$ with $P$ as parents\\
        {\bf include} \msf{data} in $U$\\
        {\bf add} $U$ to $\cD$\\
        {\bf multicast} $U$
        }
    }
\SetKwFor{Loop}{loop forever}{}{}
  \SetKwFunction{FMain}{$\msf{ReceiveUnits}$}
  \SetKwProg{Fn}{ }{:}{}
  \Fn{\FMain}
    {
    \SetKwProg{Upon}{upon}{ do}{end}
    \Loop{}
    {
    \Upon{receiving a unit $U$}
        {
        \lIf{$U$ is correct}
        {{\bf add} $U$ to buffer $\mathcal{B}$}
        
        \While{exists $V\in \mathcal{B}$ whose all parents are in $\cD$}
            {

                {\bf move} $V$ from $\mathcal{B}$ to $\cD$
            }
        }
    }
    }
\SetKwFor{Loop}{loop forever}{}{}
  \SetKwFunction{FMain}{$\msf{Gossip}$}
  \SetKwProg{Fn}{ }{:}{}
  \Fn{\FMain}
    {
    \tcc{Run by node $i$}
    \SetKwProg{Upon}{upon}{ do}{end}
    \Loop{}
        {
        $j\leftarrow$ randomly sampled node\\
        {\bf send} $j$ concise info about $\cD_i$\\
        {\bf receive} all units in $\cD_j\setminus\cD_i$\\
        {\bf send} all units in $\cD_i\setminus\cD_j$\\
        }
    }
  \caption{ }

\end{algorithm-hbox}

At this point we emphasize that while we believe that our proposal yields a truly efficient algorithm, it is rather cumbersome to formally reason about the communication complexity of such a solution (specifically about the gossip part) and we do not attempt to provide a formal treatment.
Instead, in a practical implementation one needs to include a set of rules that make it impossible for malicious nodes to request the same data over and over again (in order to slow down honest nodes).
To analyze further effects of switching from RBC to multicast + gossip, recall that RBC in the original protocol allows us to solve the following two problems
\begin{enumerate}
    \item the {\bf data availability} problem, as a given piece of data is locally output by an honest node in RBC only when it is guaranteed to be eventually output by every other honest node,
    \item the problem of {\bf forks}, since only one version of each unit may be locally output by an honest node.
\end{enumerate} 

\noindent 
The data availability problem, we discuss in Section~\ref{subsec:data-availability}.
Regarding forks: their existence in the \chdag{} does not pose a threat to the theoretical properties of our consensus protocol.
Indeed, in Section~\ref{subsec:analysis_quickalef} we show that the protocol is guaranteed to reach consensus even in the presence of forks in \chdag{}.
The only (yet serious) problem with forks is that if malicious nodes produce a lot of them and all these forks need to be processed by honest nodes, this can cause crashes due to lack of resources, i.e., RAM or disc space.
To counter this issue we add an auxiliary mechanism to the protocol whose purpose is bounding the number of possible forks (see Section~\ref{subsec:bounding-number-of-forks}).
Additionally, we show that without a very specific set of precautions, such an attack is rather simple to conduct and, to best of our knowledge, affects most of the currently proposed DAG-based protocols. 
We formalize it as a \emph{Fork Bomb} attack in section \ref{sec:fork-bomb}.

\subsubsection{Ensuring data availability via gossip}\label{subsec:data-availability}
Suppose for a moment that we have multicast as the only mechanism for sending units to other nodes (i.e., the creator multicasts his newly created unit to all the nodes).
While this is clearly the fastest way to disseminate information through a network, it is certainly prone to adversarial attacks on data availability.
This is a consequence of the fact that only the creator of this unit would then share it with other nodes.
Hence, if the creator is malicious, it may introduce intentional inconsistencies in local copies of \chdag{}s stored by honest nodes by sending newly created units only to some part of the network or even sending different variants to different nodes.
To see that such an attack can not be trivially countered, let us explore possible scenarios, depending on the particular rules of unit creation:
\begin{itemize}
    \item If node $i$ {\bf is allowed} to choose a unit $U$ as a parent for its newly created unit $V$ despite the fact that it does not know the whole ``context`` of $U$ (i.e., it does not have all of $U$'s parents in its local copy of \chdag{}), then we risk that the functions $\msf{Vote}(V, \cdot)$ and $\msf{UnitDecide}(V, \cdot)$ will never succeed to terminate with a non-$\bot$ verdict. Hence, node $i$ may never be able to make a decision about $V$.
    \item If on the other hand $i$ {\bf is not allowed} to choose $U$ as a parent in such a case, then the growth of \chdag{} may stall. Indeed, another honest node might have created $U$ with a parent $W$ that was available to him but not $i$ (because $W$ was created by a malicious node).
\end{itemize}

Hence it seems that to counter such intentional or unintentional (resulting for example from temporary network failures) inconsistencies, there has to be a mechanism in place that allows nodes to exchange information about units they did not produce. 
To this end in the \msf{Quick}-\msf{DAG}-\msf{Grow} protocol each node regularly reconciliates its local version of the \chdag{} with randomly chosen peers, in a gossip-like fashion.
Since each two honest nodes sync with each other infinitely often (with probability $1$), this guarantees data availability.

\subsubsection{Bounding the number of forks}\label{subsec:bounding-number-of-forks}

We introduce a mechanism that bounds the number of possible forks to $N$ variants (or rather "branches") per node.

Note that if a (dishonest) node starts broadcasting a forked unit, this is quickly noticed by all honest nodes, and the forker can be banned after such an incident.
Note however that we cannot reliably ``rollback" the \chdag{} to a version before the fork has happened, as two different honest nodes might have already build their units on two different fork branches.
Still, since units are signed by their creators, the forker can be proven to be malicious and punished accordingly, for example by slashing its stake.
This means that the situation when a malicious node just creates a small number of variants of a unit is not really dangerous -- it will be detected quite soon and this node will be banned forever.

What might be potentially dangerous though is when a group of malicious nodes collaborate to build a multi-level "Fork Bomb" (Section~\ref{sec:fork-bomb}) composed of a huge number of units and the honest nodes are forced to download them all, which could lead to crashes.
This is the main motivation behind introducing the \emph{alert protocol}.

We note that sending a unit via RBC can be thought of as ``committing'' to a single variant of a unit. 
In the absence of RBC the simplest attack of an adversary would be to send a different variant of a unit to every node.
Since just after receiving this unit, honest nodes cannot be aware of all the different variants, each of them might legally use a different variant as its parent.
This means that there is no way to bound the number of different variants of one forked unit in the \chdag{} by less than $N$ and we would like to propose a mechanism that does not allow to create more (while without any additional rules there might be an exponential number of variants -- see Section~\ref{sec:fork-bomb}).

In the solution we propose every node must ``commit'' to at most one variant of every received unit, by analogy to what happens in RBC. The general rule is that honest nodes can accept only those units that someone committed to. By default (when no forks are yet detected), the commitment will be simply realized by creating a next round unit that has one particular variant as its parent. On the other hand, if a fork is observed, then every node will have to send its commitment using RBC, to prevent attacks akin to the fork bomb (Section~\ref{sec:fork-bomb}).

A node can work in one of two states: normal and alert. In the normal state, the node multicasts its own units, gossips with other nodes, and builds its local copy of the $\chdag{}$. 
Now assume that a node $k$ detects a fork, i.e., obtains two distinct variants $U_1$ and $U_2$ of the round-$r$ unit created by $i$.
Then node $k$ enters the alert mode.
It stops building the $\chdag{}$ and broadcasts a special message $\msf{Alert}(k, i)$ using RBC.
The message consists of three parts:
\begin{enumerate}
    \item Proof of node $i$ being malicious, i.e., the pair $(U_1, U_2)$.
    \item The hash and the round number of the unit of highest round created by node $i$ that is currently present in $\cD_k$ (possibly null).
    \item The alert $id$ (every node assigns numeric ids to its alerts, ie., $id=0,1,2, \ldots$ and never starts a new alert before finishing the previous one).
\end{enumerate}

\noindent
By broadcasting the hash, node $k$ commits to a single variant of a whole chain of units created by node $i$, up to the one of highest round, currently known by node $k$. Since up to this point node $k$ did not know that node $i$ was malicious, then indeed units created by node $i$ in $\cD_k$ form a chain. %

Implementing this part requires some additional care, since now units cannot be validated and added one by one. Also, the sender should start with proving that the highest unit actually exists (by showing the unit and the set of its parents), so a malicious node cannot send arbitrarily large chunks of units, that cannot be parsed individually.

At this point we also note that since every node performs its alerts sequentially (by assigning numbers to them), in the worst case there can be at most $O(N)$ alerts run in parallel.

We now provide details on how a node should behave after a fork is discovered. Assume that node $k$ is honest, and $i$ is malicious (forking).
\begin{enumerate}
    \item Node $k$ stops communicating with node $i$ immediately after obtaining a proof that $i$ is malicious.
    \item Immediately after obtaining a proof that $i$ is malicious, node $k$ enters the alert mode, and broadcasts $\msf{Alert}(k, i)$.
    \item Node $k$ can issue at most one $\msf{Alert}(k, \cdot)$ at once, and alerts must be numbered consecutively.
    \item Node $k$ cannot participate in $\msf{Alert}(j, \cdot)$ if it has not locally terminated all previous alerts issued by node $j$.
    \item Node $k$ exits alert mode immediately after $\msf{Alert}(k, i)$ ends locally for $k$, and stores locally the proof of finishing the alert, along with the proof that node $i$ is malicious.
    \item After exiting the alert mode, node $k$ can still accept units created by $i$, but only if they are below at least one unit that some other node committed to. %
\end{enumerate}

\noindent 
The \msf{Quick}-\msf{DAG}-\msf{Grow} protocol enriched by the alert mechanism implemented through the above rules can serve as a reliable way to grow a \chdag{} that provably contains at most $N$ forks per node.
More specifically, in Section~\ref{sec:quick_aleph_dag} we prove the following

\begin{theorem}\label{thm:quick-dag-grow-properties}
Protocol \msf{Quick}-\msf{DAG}-\msf{Grow} with alert system is \msf{reliable} and \msf{growing} (as in definition \ref{def:protocol-properties}).
Additionally, every unit can be forked at most $N$ times.
\end{theorem}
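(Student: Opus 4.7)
The proof splits into three subclaims matching the statement: reliability, ever-expansion, and the $N$-fork bound. The first two follow fairly directly once the alert machinery is understood; the fork bound is the combinatorially delicate piece and will be the main obstacle.

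I would begin with reliability. Once an honest node $\cP_j$ moves a unit $U$ from its buffer $\mathcal{B}$ into $\cD_j$, all of $U$'s ancestors already lie in $\cD_j$. The \msf{Gossip} loop samples peers uniformly at random forever, so with probability $1$ each ordered pair of honest nodes synchronizes infinitely often. On any such sync with an honest $\cP_k$, every unit currently in $\cD_j \setminus \cD_k$ is transmitted to $\cP_k$, who admits $U$ to $\cD_k$ once all its parents have arrived; iterating this bottom-up finishes the argument. Entering alert mode halts local DAG growth for $\cP_k$ temporarily, but does not block the background \msf{Gossip} routine from populating $\mathcal{B}$, so units are never lost.

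Next I would prove ever-expansion by induction on $r$. The base case is trivial since every honest node immediately creates a round-$0$ unit. For the step, the inductive hypothesis together with reliability gives every honest node eventually at least $2f+1$ round-$(r-1)$ units in $\cD_i$, at which point the {\bf wait until} clause in \msf{CreateUnit} releases and $\cP_i$ produces its round-$r$ unit. The only new subtlety compared to Theorem~\ref{thm:broadcast-to-dag} is that a node may be in alert mode, but (i) each honest node runs at most one alert at a time, (ii) alerts are numbered and processed sequentially per forker, and (iii) there are at most $f$ byzantine creators, so each alert terminates by the RBC liveness guarantee and returns the node to normal operation; hence alerts cannot indefinitely postpone round creation.

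The hard part is the $N$-fork bound. I would prove the invariant: for every creator $i$ and every round $r$, the number of distinct variants of $U[i;r]$ that ever enter any honest node's local DAG is at most $N$. Partition variants into those admitted \emph{before} any honest node has broadcast $\msf{Alert}(\cdot, i)$ completing, and those admitted \emph{after}. For the ``before'' regime, observe that as soon as an honest node $\cP_k$ sees two variants of any unit created by $i$, it refuses to add further units from $i$ and initiates $\msf{Alert}(k, i)$, committing (via the hash field in the alert) to one particular chain of variants it has already admitted. Thus each honest node contributes at most one committed chain. For the ``after'' regime, the protocol rule explicitly forbids admitting a new variant of $U[i;r]$ unless it sits below a unit that some node has committed to through a completed alert; by RBC's correctness each such commitment is unique per sender and binds all honest nodes to the same hash. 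Summing over the at most $N$ possible committing senders (including dishonest ones, who can produce at most one committed variant per forker before being exposed) yields the bound of $N$ variants. The main obstacle is carefully handling the transition window in which different honest nodes may be in different states (some in alert, some not yet aware of the fork): I would resolve this by showing that any variant admitted during the window corresponds either to a pre-alert commitment of some honest $\cP_k$ or to a chain already below such a commitment, so the global count never exceeds $N$.
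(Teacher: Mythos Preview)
Your proposal is essentially correct and follows the same three-part decomposition as the paper, with the same underlying ideas in each part. Two small points of divergence are worth noting.

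For \textbf{reliability}, you assume that the \msf{Gossip} routine keeps running (populating $\mathcal{B}$) while a node is in alert mode; the protocol description says only that the node ``stops building the \chdag{}'', so this assumption is not clearly licensed. The paper sidesteps this by instead arguing that the total number of alert-RBC instances any honest node engages in is bounded (at most $N$ per discovered forker, hence $O(N^2)$ overall), so there is a time $T_0$ after which no alerts are active and gossip proceeds unhindered. You already have the pieces of this finiteness argument in your ever-expanding paragraph; moving them into the reliability argument would close the gap cleanly.

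For the \textbf{fork bound}, the paper's argument is more direct than your temporal ``before/after'' partition: an honest node admits a variant of $U[i;r]$ only if it is either the first version that node received, or a version some other node publicly committed to via an alert; since there are at most $N-1$ other possible committers, the bound of $N$ follows immediately. Your partition reaches the same conclusion but with an extra layer of case analysis (the ``transition window'') that the scenario-based phrasing avoids.
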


\subsection{Adjustments to the Consensus Mechanism}\label{sec:new_voting}
In \alef{} protocol the mechanism of choosing round heads is designed in a way such that no matter whether the adversary is involved or not, the head is chosen in an expected constant number of rounds.
While this is optimal from the theoretical perspective, in practice we care a lot about constants, and for this reason we describe a slight change to the consensus mechanism that allows us to achieve latency of $3$ rounds under favorable network conditions (which we expect to be the default) and still have worst-case constant latency.

The key changes are to make the permutation (along which the head is chosen) ``a little" deterministic and to change the pattern of initial deterministic common votes.

\noindent{\bf Permutation.} Recall that in order to choose the head for round $r$ in \alef{} first a random permutation over units at this round $\pi_r$ is commonly generated and then in the order determined by $\pi_r$ the first unit decided $1$ is picked as the head.
The issue is that in order for the randomness to actually ``trick" the adversary, and not allow him to make the latency high, the permutation can be revealed at earliest at round $r+4$ in case of \alef{}, and $r+5$ in case of \quickalef{}\footnote{The difference stems from the potential existence of forks in \chdag{} constructed in \quickalef{} and will become more clear after reading lemma~\ref{lemma:quick-choose-head}}.
In particular the decision cannot be made earlier than a round after the permutation is revealed.
To improve upon that in the optimistic case, we make the permutation $\pi_r$ partially deterministic and allow to recover the first entry of the permutation already in round $r$.
More specifically, the mechanism of determining $\pi_r$ is as follows:
\begin{itemize}
    \item Generate pseudo-randomly an index $i_0 \in [N]$ based on just the round number $r$ or on the head of round $r-1$.
    \item Let $\tau_r$ be a random permutation constructed as previously using $\msf{SecretBits}(\cdot, r+5)$.
    \item In $\pi_r$ as first come all units created by node $i_0$ in round $r$ (there might be multiple of them because of forking), sorted by hashes, and then come all the remaining units of round $r$ in the order as determined by $\tau_r$.
\end{itemize}
\noindent Such a design allows the nodes to learn $i_0$ and hence the first candidate for the head right away in round $r$ and thus (in case of positive decision) choose it already in round $r+3$.
The remainder of the permutation is still random and cannot be manipulated by the adversary, thus the theoretical guarantees on latency are preserved.
For completeness we provide pseudocode for the new version of $\msf{GeneratePermutation}$.
In the pseudocode $\msf{DefaultIndex}$ can be any deterministic function that is computed from $\cD$ after the decision on round $r-1$ has been made.
Perhaps the simplest option is $\msf{DefaultIndex}(r, \cD):=1+(r~~\mathrm{mod}~~N)$.
In practice we might want to use some more complex strategy, which promotes nodes creating units that spread fast; for instance, it could depend on the head chosen for the $(r-1)$-round.

\renewcommand{\algorithmcfname}{$\msf{GeneratePermutation}(r,\cD)$}
\begin{algorithm-hbox}
    \lIf{$\H(\cD)<r+3$}{{\bf output $\bot$}}
    $i_0 \leftarrow \msf{DefaultIndex}(r,\cD)$\\
    $(V_1, V_2,\ldots, V_l) \leftarrow $ {list of units created by $\cP_{i_0}$ in round $r$ sorted by hashes}\\
    \tcc{Typically $l=1$, $l>1$ can only happen if $\cP_{i_0}$ forked in round $r$.}
    \For{each unit $U$ of round $r$ in $\cD$}
    {
        $i \leftarrow \mbox{the creator of $U$}$\\
        $x\leftarrow \msf{SecretBits}(i, r+5, \cD)$\\
        \lIf{$x=\bot$}{{\bf output} $(V_1, V_2,\ldots, V_l)$}
        {\bf assign} $\mathrm{priority}(U) \leftarrow \hash(x||U) \in \{0,1\}^\lambda$\\
    }
    {\bf let} $(U_1, U_2, \ldots, U_k)$ be the units in $\cD$ of round $r$ not created by $\cP_{i_0}$ sorted by $\mathrm{priority}(\cdot)$\\
    {\bf output} $(V_1, V_2,\ldots, V_l,U_1, U_2, \ldots, U_k)$\\
  \caption{ }
\end{algorithm-hbox}

\noindent{\bf Common Votes.} In \quickalef{} we use a slightly modified sequence of initial deterministic common votes as compared to \alef{}. 
This change is rather technical but, roughly speaking, is geared towards making the head decision as quickly as possible in the protocol.
Here is a short summary of the new $\msf{CommonVote}$ scheme.
Let $U_0$ be a unit created by $i$ and $r:=\H(U_0)$, then for $d=2,3,\ldots$ we define
$$\msf{CommonVote}(U_0, r+d) := 
\begin{cases}
1 & \mbox{for }d = 2\\
0 & \mbox{for }d = 3\\
\msf{SecretBits}(i, r+d+1) & \mbox{for }d\geq 4
\end{cases}
$$
Where in the last case we extract a single random bit from $\msf{SecretBits}$, as previously.

\subsection{Reducing size of units}\label{subsec:size-of-units}

We note that encoding parents of a unit by a set of their hashes is rather inefficient as it takes roughly $N \cdot \lambda$ bits to store just this information ($\lambda$ bits per parent).
Since the reasonable choices for lambda in this setting are $256$ or $512$, this is a rather significant overhead.
In this section we propose a solution that reduces this to just a small constant number (i.e. around $2$) of bits per parent.

Recall that in the absence of forks every unit is uniquely characterized by its creator id and by round number.
Since forking is a protocol violation that is severely penalized (for instance, by slashing the node's stake), one should expect it to happen rarely if at all.
For this reason it makes sense to just use this simple encoding of units: $(i, r)$ (meaning the unit created by $i$ at round $r$) and add a simple fallback mechanism that detects forked parents.
More specifically, the parents of a unit $U$ are encoded as a combination of the following two pieces of data
\begin{enumerate}
    \item A list $L_U=[r_1, r_2, \ldots, r_N]$ of length $N$ that contains the round numbers of parents of $U$ corresponding to creators $1,2, \ldots, N$. In the typical situation the list has only a small number of distinct elements, hence can be compressed to just $2$ or $3$ bits per entry.
    \item The ``control hash" $$h_U := h(h_1||h_2||\ldots||h_N),$$ where $h_1, h_2, \ldots, h_N$ are the hashes of parents of $U$.
\end{enumerate}
\noindent 
Consequently, the above encoding requires $O(N + \lambda)$ bits instead of $\Omega(N\lambda)$ bits as the original one.
While such a construction does not allow to identify forks straight away, it does allow to identify inconsistencies. Indeed, if an honest node $k$ receives a unit $U$ from node $j$, then it will read the parent rounds and check if all parent units are already present in $\cD_k$. If not, then these units will be requested from the sender of $U$, and failing to obtain them will result in dropping the unit $U$ by node $k$.
In case $k$ knows all parent units, it may verify the control hash and if inconsistency is detected, node $k$ should request\footnote{The request should include signed hashes of all units from $\cD_k$ that are indeed parents of $U$ to prevent $k$ from sending false requests.} all parent units present in $\cD_j$ from node $j$. If node $j$ is also honest, then node $k$ will reveal a forking unit among parents of $U$ and issue an alert before adding $U$ to $\cD_k$. On the other hand, in case $\cP_j$ does not provide parents of $U$, $\cP_k$ can simply drop the unit $U$.

\section{Analysis of protocols constructing \chdags{}}

Within the paper, two possible protocols constructing \chdag{} are considered, namely, \msf{DAG}-\msf{Grow} and \msf{Quick}-\msf{DAG}-\msf{Grow}. 
In this section, we analyze these protocols in the context of desirable properties as in definition \ref{def:protocol-properties}.

\subsection{\msf{DAG}-\msf{Grow}}\label{sec:dag_grow_proof}
We provide a proof of Theorem~\ref{thm:broadcast-to-dag}, i.e., we show that the \msf{DAG}-\msf{Grow} protocol is reliable, ever-expanding, fork-free and advances the DAG rounds at least as quickly as the asynchronous are progressing. 

\begin{proof}[Proof of Theorem~\ref{thm:broadcast-to-dag}]$ $ 

\noindent{\bf Reliable.} Node $\cP_i$ adds a unit $U$ to its \chdag{} only if it is locally output by \chrbc{}.
By the properties of \chrbc{} we conclude that if a unit is locally output for one honest node, it is eventually locally output by every other honest node.
\ourskip

\noindent{\bf Ever-expanding.}
Assume the contrary, i.e., there exists $r$ such that no honest node $\cP_i$ can gather $2f+1$ units of round $r$ in $\cD_i$ and hence no node is able to produce a unit of round $r+1$.
Let $r_0$ be the minimum such $r$.
We know that $r_0 > 0$, since every honest node participates in broadcasting of units of round $0$ without waiting for parents.
The units of round $r_0-1$ created by honest nodes are eventually added to local copies of all honest nodes, hence at some point, every honest node can create and broadcast a unit of round $r$, which then is eventually included in all local copies of $\chdag{}$.
As there are $2f+1$ honest nodes, we arrive at a contradiction.
\ourskip

\noindent{\bf Fork-free.}
This is a direct consequence of the ``reliability" of \chrbc{}, see Lemma \ref{lemma:RBC}({\it Fast Agreement})
\ourskip

\noindent{\bf DAG rounds vs asynchronous rounds.}
By Lemma~\ref{lemma:unit-production-speed}
each honest node produces at least one new unit each $5$ \asynchr{}s.
Since each new unit needs to have higher \dagr{}, it concludes the proof. 
\end{proof}

\subsection{\msf{Quick}-\msf{DAG}-\msf{Grow}}\label{sec:quick_aleph_dag}

Since \msf{Quick}-\msf{DAG}-\msf{Grow} does not rely on reliable broadcast, it does not enjoy all the good properties of the \msf{DAG}-\msf{Grow} protocol.
Most notably, it allows nodes to process forks and add forked units to the \chdag{}. 
We prove that it is still reliable and ever-growing, and additionally that the number of forks created by a single node is bounded by $N$ for each round, i.e., we prove Theorem~\ref{thm:quick-dag-grow-properties}.

\begin{proof}[Proof of Theorem~\ref{thm:quick-dag-grow-properties}]$ $

\noindent{\bf Reliable.}
First, we observe that while the alert system may slow the protocol down for some time, it may not cause it to stall indefinitely.
Indeed, each honest node needs to engage in at most $N$ alert \msf{RBC} instances per each discovered forker, i.e., at most $\frac{1}{3}N^2$ instances in total. 
Since an RBC instance started by an honest node is guaranteed to terminate locally for each node, there is some point in time of the protocol execution, name it $T_0$, after which no more alerts instantiated by honest nodes are active anymore.
Consequently, honest nodes can gossip their local copies of \chdag{} after $T_0$ without any obstacles.
This means in particular that every pair of honest nodes exchange their \chdag{} information infinitely often (with probability one), which implies reliability.

\ourskip

\noindent{\bf Ever-expanding.}
This follows from reliability as the set of honest is large enough ($2f+1$) to advance the rounds of the \chdag{} by themselves.
Indeed, by induction, for every round number $r>0$ every honest node eventually receives at least $2f+1$ units created by non-forkers (for instance honest nodes) in round $r-1$ and hence can create a correct unit of round $r$ by choosing these units as parents.

\noindent{\bf Bounding number of forks.}
An honest node $i$ adds a fork variant to its local copy of \chdag{} only in one of the two scenarios:
\begin{itemize}
    \item It was the first version of that fork that $i$ has received,
    \item Some other node has publicly committed to this version via the \msf{alert} system. 
\end{itemize}
Since there are only $N-1$ other nodes that could cause alerts, the limit of maximum of $N$ versions of a fork follows. 
\end{proof}

\section{Analysis of consensus protocols}\label{sec:consensus_analysis}

The organization of this section is as follows:
In Subsection~\ref{subsec:analysis_alef} we analyze the consensus mechanism in the \alef{} protocol.
In particular we show that $\msf{ChooseHead}$ is consistent between nodes and incurs only a constant number of rounds of delay in expectation. 
This also implies that every unit (created by an honest node) waits only a constant number of asynchronous rounds until it is added to the total order, which is necessary in the proof of Theorem~\ref{thm:AB}.
The Subsection~\ref{subsec:analysis_quickalef} is analogous but discusses \quickalef{} (introduced in Section~\ref{sec:practical}) instead.
Finally in Subsection \ref{subsec:synchronous} we show an optimistic bound of $3$ rounds validation for \quickalef{}.

Since the proofs in Subsection~\ref{subsec:analysis_quickalef} are typically close adaptations of these in Subsection~\ref{subsec:analysis_alef} we recommend the reader to start with the latter.
In the analysis of \quickalef{} we distinguish between two cases: when the adversary attacks using forks, and when no forks have happened recently.
In the former case the latency might increase from expected $O(1)$ to $O(\log N)$ rounds, yet each time this happens at least one of the nodes gets banned\footnote{Banning forkers is implemented by adding one simple rule to the growing protocol: never pick forking nodes as parents.} and thus it can happen only $f$ times during the whole execution of the protocol.
We do not attempt to optimize constants in this section, but focus only on obtaining optimal asymptotic guarantees.
A result of practical importance is obtained in Subsection~\ref{subsec:synchronous} where we show that $3$ rounds are enough in the "optimistic" case.

Throughout this section we assume that the $\msf{SecretBits}$ primitive satisfies the properties stated in Definition~\ref{def:secret-bits}.
An implementation of $\msf{SecretBits}$ is provided in Section~\ref{ssec:setup-details}.

\subsection{\alef{}}\label{subsec:analysis_alef}

\noindent 

We would to prove that the expected number of rounds needed for a unit to be ordered by the primitive OrderUnits is constant, and that each honest node is bound to produce the same ordering. 
To achieve this, first we need a series of technical lemmas.

\begin{lemma}[Vote latency]\label{lemma:vote-time} 
Let $X$ be a random variable which, for a unit $U_0$, indicates the number of rounds after which all units vote unanimously on $U_0$.
Formally, let $U_0$ be a unit of round $r$ and define $X=X(U_0)$ to be the smallest $l$ such that there exists $\sigma\in \{0, 1\}$ such that for every unit $U$ of round $r+l$ we have $Vote(U_0, U) = \sigma$. 
Then for $K\in\mathbb{N}$ we have
    \[
        P(X\geq K)\leq 2^{4-K}.
    \]
\end{lemma}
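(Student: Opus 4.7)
The plan is to obtain the geometric tail bound on $X$ by combining a purely structural ``intersection argument'' on the \chdag{} with the randomness of $\msf{CommonVote}$ provided by \msf{SecretBits}. My first step is to prove an intersection lemma: any two units at the same round $r+l$ share at least $f+1$ common parents. This follows because each has at least $2f+1$ parents created by pairwise distinct nodes (the Diversity rule of Definition~\ref{def:chdag}), drawn from a pool of $N=3f+1$ creators, and fork-freeness of the \chdag{} (Theorem~\ref{thm:broadcast-to-dag}) ensures that sharing a creator forces sharing the actual unit.

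From this I would derive the central structural dichotomy: for every round $r+l$, define $U_l\subseteq\{0,1\}$ as the set of values $v$ such that some round-$(r+l)$ unit has all of its parents voting $v$. Then $|U_l|\leq 1$, since otherwise the $\geq f+1$ common parents of a ``unanimous-$0$'' unit and a ``unanimous-$1$'' unit would have to vote both values. Combined with the definition of $\msf{Vote}$, this yields the transition rule: non-unanimity at round $r+l$ is possible only when $U_l=\{v\}$ for some $v$, at least one round-$(r+l)$ unit has mixed parents, and $c_l:=\msf{CommonVote}(U_0,r+l)$ equals $\overline{v}$. Moreover, once every round-$(r+l)$ unit votes the same value $\sigma$, the same is true for every round above it (all parents then agree), giving monotonicity of unanimity.

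The heart of the argument is to exploit the randomness of $c_l$ for $l\geq 5$. By Definition~\ref{def:secret-bits} the bit extracted from $\msf{SecretBits}(i,r+l)$ is uniformly random and cannot be guessed by the adversary until some honest node has created a round-$(r+l)$ unit. Since the (unique, if nonempty) value $v\in U_l$ is a function of the round-$(r+l-1)$ substructure of $\cD$, which is frozen before any honest round-$(r+l)$ unit appears, the adversary must commit to $v$ before $c_l$ is revealed. Consequently $P(c_l=\overline{v})\leq \nicefrac{1}{2}$, which by the preceding paragraph yields $P(X\geq l+1\mid X\geq l)\leq \nicefrac{1}{2}$ for every $l\geq 5$. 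Iterating this halving (and observing that the right-hand side $2^{4-K}\geq 1$ is trivial for $K\leq 4$) gives the claimed bound $P(X\geq K)\leq 2^{4-K}$.

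The main obstacle I anticipate is making the independence argument watertight: a byzantine adversary may withhold units and release byzantine round-$(r+l)$ units only after learning $c_l$, so one must argue that this does not let the adversary bias $U_l$ against $c_l$. The key point is that the vote of any round-$(r+l)$ unit is a deterministic function of its round-$(r+l-1)$ parents (and $c_l$ itself only in the mixed-parent case), so the adversary's only lever on $U_l$ is its choice of parent sets among already-existing round-$(r+l-1)$ units, whose votes were fixed before $c_l$ could have leaked. This, together with the intersection lemma pinning down $|U_l|\leq 1$, forces $P(c_l=\overline{v})\leq \nicefrac{1}{2}$ and closes the geometric decay argument.
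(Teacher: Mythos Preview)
Your argument is correct and essentially the same as the paper's: the paper picks $\sigma$ as a value with at least $f+1$ votes among $\down{U}$ where $U$ is the \emph{first honest} round-$(r+l)$ unit (so $\sigma$ is fixed before $c_l$ is revealed) and observes that every round-$(r+l)$ unit must then have a $\sigma$-voting parent, which is precisely your intersection lemma forcing $U_l\subseteq\{\sigma\}$ and hence ``$c_l=\sigma\Rightarrow$ unanimous''. The explicit first-honest-unit commit point makes the independence step a bit cleaner than your ``frozen round-$(r+l-1)$ substructure'' phrasing, since the adversary \emph{can} add further round-$(r{+}l{-}1)$ units after $c_l$ leaks; what saves the argument is that the $\geq f+1$ $\sigma$-voters already present in $\down{U}$ (with distinct creators, by fork-freeness) suffice to force $\overline{\sigma}\notin U_l$ regardless of any later additions.
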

\begin{proof}
Fix a unit $U_0$.
Let $r'>r+4$ be a round number and let $\cP_k$ be the first honest node to create a unit $U$ of the round $r'$.
Let $\sigma$ denote a vote on $U_0$ that was cast by at least $f+1$ units in $\down{U}$. 
Then, every unit of round $r'$ will have at least one parent that votes $\sigma$. 
It is easy to check that if also $\sigma = \msf{CommonVote}(U_0, r', \cD_k)$, then every unit of round $r'$ is bound to vote $\sigma$.

By Definition \ref{def:secret-bits} (1), the votes of units in $\down{U}$ are independent of $\msf{CommonVote}(U_0, r', \cD_k)$ since it uses $\msf{SecretBits}(i, r', \cD_k)$.
Therefore, if at round $r'-1$ voting was not unanimous, then with probability at least $1/2$ it will be unanimous starting from round $r'$ onward. 
Since $P(X\geq 5)\leq \frac{1}{2}$, we may calculate $P(X\geq K)\leq 2^{-K+4}$, for $K>4$ by induction, and observe that it  is trivially true for $K=1,2,3,4$.
\end{proof}

\begin{lemma}[Decision latency]\label{lemma:UnitDecide}
Let $Y$ be a random variable that, for a unit $U_0$, indicates the number of rounds after which all honest nodes decide on $U_0$.
Formally, let $U_0$ be a unit of a round $r$ and define $Y=Y(U_0)$ to be the smallest $l$ such that there exists $\sigma$ such that for every honest node $\cP_i$ if $\H(\cD_i)\geq r+l$, then $\msf{Decide}(U_0; \cD_i)=\sigma$. 
Then for $K\in\mathbb{N}, K>0$ we have
\[
    P(Y\geq K)\leq \frac{K}{2^{K-5}} = O\inparen{K\cdot 2^{-K}}.
\]
\end{lemma}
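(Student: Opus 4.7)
My plan is to condition on the vote-latency variable $X = X(U_0)$ from Lemma~\ref{lemma:vote-time} and show that, once voting becomes unanimous on some $\sigma\in\{0,1\}$, a decision is forced within geometrically few further rounds by the sequence of common votes. The first step is to observe that unanimity propagates: if every round-$(r+x)$ unit votes $\sigma$, then each round-$(r+x+1)$ unit has all of its $\geq 2f+1$ parents voting $\sigma$, so the set $A$ inside $\msf{Vote}$ equals $\{\sigma\}$ and the unit again votes $\sigma$. By induction this holds for every $d\geq X$. Consequently, for any round-$(r+d)$ unit $U$ with $d>X$, the supermajority $|\{V\in\down{U} : \msf{Vote}(U_0,V)=\sigma\}|\geq 2f+1$ is automatic, so $\msf{UnitDecide}(U_0, U)=\sigma$ precisely when $\msf{CommonVote}(U_0, r+d)=\sigma$, and equals $\bot$ otherwise.

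Next I would analyze the common votes using the \alef{}-\msf{CommonRandomness} code. The values at $d\in\{2,3,4\}$ are deterministic ($1, 1, 0$ respectively), while from $d=5$ onward $\msf{CommonVote}(U_0, r+d)$ is the first bit of $\hash(\msf{SecretBits}(i, r+d))$. By the unpredictability clause of Definition~\ref{def:secret-bits}, each such bit is uniform in $\{0,1\}$, independent of the \chdag{} through round $r+d-1$, and independent across different $d$. Combining with unanimity yields: if $X\leq 3$ and $\sigma=0$ a decision is forced by round $r+4$; if $X\leq 2$ and $\sigma=1$ by round $r+3$; and otherwise the first deciding round is the smallest $d\geq \max(5, X+1)$ at which $\msf{CommonVote}$ matches $\sigma$, whose offset from $\max(5,X+1)$ is a geometric random variable with success probability $1/2$. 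Hence $P(Y\geq K\mid X=x)\leq 2^{-(K-\max(5,\,x+1))}$.

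The claimed bound then follows by splitting on $X$ and plugging in $P(X\geq x)\leq 2^{4-x}$ from Lemma~\ref{lemma:vote-time}: the $x\leq 4$ contribution is at most $2^{5-K}$, while each term in the $x\geq 5$ sum telescopes to $2^{4-x}\cdot 2^{x+1-K}=2^{5-K}$, giving a total of at most $(K-4)\cdot 2^{5-K}\leq K/2^{K-5}$. To bridge from ``all round-$(r+d)$ units decide'' to the lemma's formulation of $Y$ in terms of $\H(\cD_i)$, I would note that any local \chdag{} with $\H(\cD_i)\geq r+d$ must contain at least one round-$(r+d)$ unit, and $\msf{UnitDecide}$ depends only on the DAG structure below that unit, which is identical across honest copies by the fork-freeness and reliability properties of \msf{DAG}-\msf{Grow}.

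The main obstacle is justifying the independence used in the geometric tail: $\msf{CommonVote}(U_0, r+d)$ for $d\geq 5$ must be independent of the event $\{X\leq d-1\}$ and of the corresponding $\sigma$. This is exactly what the secrecy clause of Definition~\ref{def:secret-bits} buys us---the underlying secret is not revealed until an honest node creates a round-$(r+d)$ unit, so it is independent of every adversarial choice (DAG structure, message scheduling) that determines the voting pattern through round $r+d-1$. Once this independence is granted, the remaining computation is routine bookkeeping.
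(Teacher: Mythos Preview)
Your approach is essentially the paper's: condition on the vote-latency variable $X$, use a geometric tail for the number of further rounds until a matching $\msf{CommonVote}$ appears, and sum. The decomposition into deterministic vs.\ random common votes and the independence justification via Definition~\ref{def:secret-bits} are both aligned with what the paper does.

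There is, however, a genuine gap in your bridging step. You write that ``$\msf{UnitDecide}$ depends only on the DAG structure below that unit,'' and conclude that $\H(\cD_i)\geq r+d$ suffices for node $i$ to decide. This is false: for $d\geq 5$, computing $\msf{UnitDecide}(U_0,U,\cD_i)$ at a round-$(r+d)$ unit $U$ requires $\msf{CommonVote}(U_0,r+d,\cD_i)$, which in turn calls $\msf{SecretBits}(i,r+d,\cD_i)$. By clause~(2) of Definition~\ref{def:secret-bits}, that secret can only be extracted by a node holding \emph{some unit of round $r+d+1$}. So if $\H(\cD_i)=r+d$, the node still returns $\bot$ for $\msf{CommonVote}$ and hence for $\msf{UnitDecide}$; the variable you have bounded is really the paper's $Y'$ (the first round at which every unit would decide, given the secret), not $Y$. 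The paper handles exactly this point by introducing $Y'$, proving the tail bound for $Y'$, and then observing $Y\leq Y'+1$ (one extra round to read the last required secret), which shifts the bound to $P(Y\geq K)\leq P(Y'\geq K-1)\leq K/2^{K-5}$. With your current accounting the exact constant $K/2^{K-5}$ does not follow; once you insert the $+1$ shift and rerun the sum you recover the claimed inequality.
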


\begin{proof}
First, we need to check that $Y$ is well-defined, i.e. for every $U_0$ there is $l\in\mathbb N$ and there is $\sigma$ such that for every honest node $\cP_i$ if $\H(\cD_i)\geq r+l$, then $\msf{Decide}(U_0; \cD_i)=\sigma$.

We observe that if there is an honest node $\cP_k$ and a unit $U\in\cD_k$ such that $\msf{UnitDecide}(U_0, U, \cD_k)=\sigma\neq\bot$, then eventually for every honest node $\cP_i$, we will have 
$\msf{Decide}(U_0; \cD_i)=\sigma$.
Indeed, fix an honest node $\cP_k$ and a unit $U\in\cD_k$ of round $r'>r+4$ such that $\msf{UnitDecide}(U_0, U, \cD_k)=\sigma\neq\bot$.
Then, at least $2f+1$ units of round $r'-1$ vote $\sigma$ and $\msf{CommonVote}(U_0, r')=\sigma$. 
Now by the definitions of $\msf{UnitVote}$, we see that every other unit $U'$ of round $r'$ must have $\msf{UnitVote}(U_0, U', \cD_k)=\sigma$, and thus by a simple induction we get that, every unit of round greater than $r'$ will also vote $\sigma$ on $U_0$.
Finally, by the definition of $\msf{Decide}$ we see that once $\msf{UnitDecide}$ outputs a $\sigma\neq\bot$, then the result of $\msf{UnitVote}$ is $\sigma$ and does not change with the growth of the local copy of $\chdag{}$.

 By the above, if we prove that $\msf{UnitDecide}$ will always eventually output non-$\bot$, then the definition of $Y$ will be correct.

Let $Y'=Y'(U_0)$ be defined as the smallest $l$ such that for every unit $U$ of round $r+l$ the function $\msf{UnitDecide(U_0, U)}$ outputs $\sigma$. 
If $X(U_0)=l$, then either $Y'(U_0)=l$, or $Y'(U_0)$ is the index of the first round after $l$ at which $\msf{CommonVote}$ equals $\sigma$.
The probablity that at round $r+l$ nodes vote unanimously $\sigma$ for the first time, but then no $\msf{CommonVote}$ was equal to $\sigma$ before round $r+L$ equals $P(X=l)/2^{L-1-l}$, and thus we have:

\begin{align*}
P(Y'\geq L) &\leq \sum_{l=0}^{L-1} 2^{-L+1+l}P(X=l) + P(X\geq L) \\
&= \sum_{l=0}^{L-1}2^{-L+1+l}P(X=l) + \sum_{l=L}^{+\infty} P(X=l) \\
&= 2^{-L}P(X\geq 0)  + \sum_{l=0}^{L-1}  2^{-L+l}P(X\geq l) \\
& \leq \frac{L+1}{2^{L-4}}
\end{align*}

Since $X$ is well-defined, then so is $Y'$ and we see that  $\msf{UnitDecide}$ eventually outputs a non-$\bot$.
Moreover, observe that $Y\leq Y'+1$.
Indeed, if every node of round $r+l$ has decided $\sigma$, then the $\chdag{}$ of height $\geq r+l+1$ includes at least one of the decisions, and we can read the last required secret. 
Finally, $Y\leq Y'+1$ implies 
\[ 
    P(Y\geq K)\leq P(Y'\geq K-1)\leq \frac{K}{2^{K-5}}.
\]
\end{proof}

The following lemma guarantees that at every level there are a lot of units that are decided on $1$.
\begin{lemma}[Fast positive decisions]\label{lemma:popular-units}
Assume that an honest node $\cP_i$ has just created a unit $U$ of round $r+3$. 
At this point in the protocol execution, there exists a set $\mathcal{S}_r$ of at least $2f+1$ units of round $r$ such that for every $U_0\in \mathcal{S}_r$, every honest node will eventually have $\msf{Decide}(U_0, \cD_i)=1$.
\end{lemma}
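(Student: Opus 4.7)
The candidate set is $\mathcal{S}_r := \{U_0 \in \cD_i : \H(U_0) = r \text{ and } \msf{UnitDecide}(U_0, U, \cD_i) = 1\}$. Once this is shown to have size at least $2f+1$, the conclusion follows immediately: by the safety sketch in Section~\ref{ssec:ab}, no unit can ever decide $\overline{1}$ on $U_0$, and by reliability of $\msf{DAG}$-$\msf{Grow}$ (Theorem~\ref{thm:broadcast-to-dag}), $U$ eventually reaches every honest $\cD_k$, so $\msf{Decide}(U_0, \cD_k) = 1$ for every honest $\cP_k$. The proof therefore reduces to (i) characterizing membership in $\mathcal{S}_r$ and (ii) lower-bounding $|\mathcal{S}_r|$.

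For (i), I unfold the voting recursion. Since $\H(U) = r+3$ gives $\msf{CommonVote}(U_0, r+3) = 1$, $U$ decides $U_0$ as $1$ iff at least $2f+1$ of its round-$(r+2)$ parents $V_j$ vote $1$. Going one level deeper and invoking $\msf{CommonVote}(U_0, r+2) = 1$ (which converts any split vote at round $r+2$ into a $1$), a round-$(r+2)$ parent $V_j$ votes $1$ unless \emph{every} round-$(r+1)$ parent $W$ of $V_j$ fails to have $U_0$ among its round-$r$ parents. Hence $U_0 \in \mathcal{S}_r$ exactly when at least $2f+1$ of the $V_j$'s contain at least one round-$(r+1)$ parent having $U_0$ as parent.

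For (ii), let $V_1, \ldots, V_m$ (with $m \geq 2f+1$) be $U$'s round-$(r+2)$ parents, let $Q_j \subseteq [N]$ be the creators of $V_j$'s round-$(r+1)$ parents (with $|Q_j| \geq 2f+1$ by Dissemination and Diversity), and let $D(W_{c'}) \subseteq [N]$ denote the round-$r$ parent-creators of the unique (by fork-freedom, Theorem~\ref{thm:broadcast-to-dag}) round-$(r+1)$ unit $W_{c'}$ with creator $c'$, so $|D(W_{c'})| \geq 2f+1$. Writing $S_j := \bigcup_{c' \in Q_j} D(W_{c'})$, the cardinality task becomes counting the round-$r$ creators $c$ for which $c \in S_j$ holds for at least $2f+1$ indices $j$. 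The main obstacle is exactly this combinatorial bound: a direct pigeonhole on $\sum_j |S_j| \geq m(2f+1)$ yields only $\Theta(f)$ popular round-$r$ creators. The improvement to $2f+1$ should come from combining the quorum-overlap property $|Q_j \cap Q_{j'}| \geq f+1$ with the crucial fork-free fact that $D(W_{c'})$ depends only on $c'$ and is therefore shared by every $S_j$ containing $c'$; so heavy overlap of the $Q_j$'s (forced by their size $\geq 2f+1$ in $[N] = [3f+1]$) drives many round-$r$ creators simultaneously into many $S_j$'s, yielding the desired $2f+1$ popular round-$r$ units.
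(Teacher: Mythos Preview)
Your candidate set $\mathcal{S}_r = \{U_0 : \msf{UnitDecide}(U_0, U, \cD_i) = 1\}$ requires $U$ itself to decide, which (when $|\down{U}| = 2f+1$) means \emph{every} round-$(r+2)$ parent $V_j$ of $U$ must vote $1$ on $U_0$. You correctly identify the resulting combinatorial bound as ``the main obstacle'' and leave only a sketch. That sketch cannot be completed: the bound is false for $f \geq 3$.

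To see why, note that $V_j$ votes $0$ on a round-$r$ unit $a_c$ exactly when every round-$(r+1)$ parent of $V_j$ omits $a_c$, which is possible once $a_c$ is omitted by at least $2f+1$ of the round-$(r+1)$ units $W_k$. Each $W_k$ omits at most $f$ round-$r$ units (and never its creator's own $a_k$), giving at most $(3f+1)f$ omissions in total; the adversary needs $(f+1)(2f+1)$ omissions to make $f+1$ distinct round-$r$ units $a_1, \ldots, a_{f+1}$ each omitted by $\geq 2f+1$ of the $W_k$'s, and $(f+1)(2f+1) \leq (3f+1)f$ holds for $f \geq 3$. A concrete scheme for $f=3$, $N=10$: let each $W_k$ omit exactly three of $\{a_1,a_2,a_3,a_4\}$ (respecting $a_k \in D(W_k)$), so every $a_i$ with $i\leq 4$ is omitted by at least seven $W_k$'s. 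Then for each $i\leq 4$ one can form an honest $V_{j_i}$ (with $j_i \neq i$) whose seven round-$(r+1)$ parents all omit $a_i$, so $V_{j_i}$ votes $0$ on $a_i$. Delivering $V_{j_1},\ldots,V_{j_4}$ (plus three more) as the round-$(r+2)$ parents of the honest unit $U$ makes $U$ fail to $\msf{UnitDecide}$ any of $a_1,\ldots,a_4$, leaving $|\mathcal{S}_r| \leq 6 < 7 = 2f+1$. All of this is realisable purely through message scheduling, with no dishonest units.

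The paper's argument avoids this trap by not asking $U$ to $\msf{UnitDecide}$ anything. Instead it locates a single round-$(r+1)$ unit $U_0^\star$ that is below at least $f+1$ of the units in $\down{U}$ (a one-line pigeonhole on the $\geq (2f+1)(f+1)$ edges from $\down{U}$ into a fixed $(2f+1)$-set of round-$(r+1)$ units), and sets $\mathcal{S}_r := \down{U_0^\star}$, which already has $\geq 2f+1$ elements. Since at most $2f$ round-$(r+2)$ units can fail to be above $U_0^\star$, every round-$(r+3)$ unit $V$ has some parent above $U_0^\star$, so $V$ merely \emph{votes} $1$ on every $U_r \in \mathcal{S}_r$ (using $\msf{CommonVote}=1$ at rounds $r+2,r+3$). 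Unanimous voting then propagates to all higher rounds, yielding eventual decision $1$ without ever needing a $(2f+1)$-supermajority among any single unit's parents. The weaker target (all round-$(r+3)$ units vote $1$, rather than $U$ decides $1$) is what makes the $2f+1$ count attainable.
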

\begin{proof}

Let $K$ be a set of $2f+1$ nodes that created units in $\down{U}$. 
Additionally, let $\mathcal{T}$ be the set of $2f+1$ units of round $r+1$ created by nodes in $K$.
Every unit in $\down{U}$ is above $2f+1$ units of round $r+1$, hence it is above at least $f+1$ units in $\mathcal{T}$ (the remaining $f$ units may be created by nodes outside of $K$). 
By the pigeonhole hole principle, there exists a unit $U_0\in\mathcal{T}$ such that at least $f+1$ units in $\down{U}$ are above it. 
Set $\mathcal{S}_r := \down{U_0}$.

Let $V$ be a unit of round $r+3$ and $V'\in\down{V}$ be its parent that is above $U_0$ (which has to exist since every subset of $2f+1$ units of round $r+2$ must include at least one unit above $U_0$).
Since $\msf{CommonVote}(U_r,W)$ equals $1$ for all $W$ of rounds $r+1,r+2,r+3$, for $U_r\in S_r$ we have $\msf{Vote}(U_r,U_0)=1$, hence $\msf{Vote}(U_r,V')=1$, and finally, $\msf{Vote}(U_r,V)=1$.

Thus, during all subsequent rounds $\msf{Vote}(U_r,\cdot)=1$ and $U_r$ will be decided $1$ as soon as the next $\msf{CommonVote}$ equals $1$.
\end{proof}

Intuitively, this lemma states that the set of potential heads that will be eventually positively decided is large.
Additionally, it is defined relatively quickly, i.e., before the adversary sees the content of any unit of round $r+3$.
Importantly, it is defined before the permutation in which potential heads will be considered is revealed.
Note that it has some resemblance to the spread protocol defined in~\cite{BE03}.

In general, the above result cannot be improved, as the adversary can always slow down $f$ nodes, thus their units, unseen by others, may not be considered ``ready'' to put in the linear order.
Also, note that this lemma does not require any common randomness, as the votes in round $r+3$ depend only on the structure of the $\chdag{}$.

\begin{lemma}[Fast negative decisions]\label{lemma:negative-decisions}
Let $U$ be a unit of round $r$ such that for some honest node $i$, $U\notin \cD_i$ even though $\H(\cD_i) \geq r+4$.
Then, for any local view $\cD$, $\msf{Decide}(U,\cD) \neq 1$. 
\end{lemma}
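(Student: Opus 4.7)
Write $U_0$ for $U$ and set $r:=\H(U_0)$. The plan is to show that from round $r+4$ onward every unit in every valid view votes $0$ on $U_0$, and that a $1$-decision at the initial rounds $r+2$ or $r+3$ is precluded by a counting argument bounding how many units can have $U_0$ as an ancestor. The main obstacle is this counting step, which combines the Dissemination and parent-closure parts of the \chdag{} definition with the fork-freeness of $\msf{DAG}$-$\msf{Grow}$ (Theorem~\ref{thm:broadcast-to-dag}).

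The first ingredient is the characterisation: for $d\in\{1,2,3\}$, if $\msf{Vote}(U_0,V,\cD)=1$ for a round-$(r+d)$ unit $V$ in any view $\cD$, then $U_0$ is an ancestor of $V$. This follows by induction on $d$ from unfolding the pseudocode: at round $r+1$ the vote is literally $[U_0<V]$, and at rounds $r+2,r+3$ the else-branch combined with $\msf{CommonVote}(U_0,r+d,\cD)=1$ forces at least one round-$(r+d-1)$ parent in $\down{V}$ to vote $1$, which by the induction hypothesis has $U_0$ as an ancestor, so $V$ does too.

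Next, I would bound the global ancestor sets. Because $\cD_i$ contains a round-$(r+4)$ unit and is closed under parents, repeated unrolling of Dissemination gives $|\{V\in\cD_i : \H(V)=r'\}|\geq 2f+1$ for every $r'\in\{r+1,r+2,r+3\}$; none of these units can have $U_0$ as an ancestor, since otherwise parent-closure would pull $U_0$ itself into $\cD_i$. Combined with fork-freeness (at most $N=3f+1$ units per round globally), this shows that for each $r'\in\{r+1,r+2,r+3\}$ at most $N-(2f+1)=f$ round-$r'$ units globally have $U_0$ as an ancestor. I would then prove by induction on $r'\geq r+4$ that in every valid view, every round-$r'$ unit $V$ votes $0$ on $U_0$: for $r'=r+4$, Diversity gives $V$ at least $2f+1$ round-$(r+3)$ parents with pairwise distinct creators, at most $f$ of which can vote $1$, so $A$ is either $\{0\}$ or $\{0,1\}$; in the latter case $\msf{CommonVote}(U_0,r+4,\cD)=0$ supplies the verdict $0$. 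The induction step is immediate from the else-branch.

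To conclude, suppose $\msf{Decide}(U_0,\cD)=1$ for some view $\cD$. Then there is $U^*\in\cD$ with $\msf{UnitDecide}(U_0,U^*,\cD)=1$, which forces $\msf{CommonVote}(U_0,\H(U^*),\cD)=1$ and at least $2f+1$ of $U^*$'s round-$(\H(U^*)-1)$ parents voting $1$. By Diversity, these parents have distinct creators, so $\H(U^*)\in\{r+2,r+3\}$ would yield $2f+1$ round-$(\H(U^*)-1)$ units globally having $U_0$ as an ancestor, contradicting the bound of $f$ just established; $\H(U^*)=r+4$ contradicts $\msf{CommonVote}(U_0,r+4,\cD)=0$; and $\H(U^*)\geq r+5$ contradicts the induction that all round-$(\H(U^*)-1)$ units vote $0$. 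Hence no such $U^*$ exists, giving $\msf{Decide}(U_0,\cD)\neq 1$.
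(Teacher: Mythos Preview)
Your proof is correct. The overall shape matches the paper's: both arguments exploit that $\cD_i$ already contains $\geq 2f+1$ round-$(r+3)$ units none of which lie above $U_0$, combine this with fork-freeness and $\msf{CommonVote}(U_0,r+4)=0$ to force every round-$(r+4)$ unit to vote $0$, and then propagate. The difference is in how the early rounds $r+2,r+3$ are dispatched. You run a global count (at most $f$ units per round $r'\in\{r+1,r+2,r+3\}$ can have $U_0$ as an ancestor) and then rule out $\msf{UnitDecide}=1$ at those rounds directly by the $2f+1$-versus-$f$ clash. The paper instead argues only locally from a single $V_0\in\cD_i$ and an overlap of parent sets to get the round-$(r+4)$ unanimity, and for the final contradiction it leans on the propagation property sketched earlier in Section~\ref{ssec:ab}: once some unit decides $1$, \emph{every} unit at that round and above must vote $1$, which collides with the round-$(r+4)$ unanimous $0$. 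Your version is more self-contained (it does not appeal to that earlier sketch), at the cost of the extra ``vote $1$ implies ancestor'' characterisation and the explicit case split on $\H(U^*)$; the paper's version is terser but leaves the $\H(V_1)\in\{r+2,r+3\}$ case to the reader.
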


\begin{proof}
Assume for contradiction that $\cD$ is local view such that $\msf{Decide}(U,\cD) = 1$.
By definition of \msf{Decide}, there has to be unit $V_1\in\cD$ such that $\msf{UnitDecide}(U,V_1,\cD) = 1$.
Let now $\cD'$ be a local view such that $\cD,\cD_i\subseteq\cD'$.
Such $\cD'$ is possible to construct since all local views have to be compatible by point $(2)$ of Theorem~\ref{thm:broadcast-to-dag}.

Let $V_0$ be a unit of \dagr{} $\H(\cD_i)+4$ in $\cD_i$ (and hence in $\cD'$).
Since $V_0$ can't be above $U$, for every $V\in \down{V_0}$ we have $\msf{Vote}(U,V,\cD_i) = 0$.
Since $|\! \down{V_0}|\geq 2f+1$, each unit of \dagr{} $r+4$ in $\cD'$ is above at least one unit in $\down{V_0}$ and hence votes either $0$ or by \msf{CommonVote}.
But \msf{CommonVote} for $U$ always equals $0$ at \dagr{} $\H(U)+4$ by definition, and consequently, each unit of \dagr{} $r+4$ has to \msf{Vote} $0$ for $U$.
If all units votes unanimously at any given \dagr{}, such vote is passed over to the next \dagr{}s contradicting the existence of $V_1$ deciding $1$, what concludes the proof.
\end{proof}

\begin{fact}\label{fact:var-max}
Let $X_1, \ldots, X_M$ be random variables, $K\in\mathbb{R}$. 
Then
\[
    P(\max(X_1, \ldots, X_M)\geq K) \leq \sum_{m=1}^M P(X_m\geq K)
\]
\end{fact}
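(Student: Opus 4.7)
The statement is a standard union-bound fact, so the plan is very short. The key observation is the event-level identity
\[
\{\max(X_1,\ldots,X_M)\geq K\} \;=\; \bigcup_{m=1}^{M}\{X_m\geq K\},
\]
since the maximum of a finite collection of real numbers is at least $K$ if and only if at least one of the entries is at least $K$. First I would state and justify this identity in one line (both inclusions are immediate from the definition of $\max$).

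Having rewritten the left-hand event as a finite union, the next step is to apply countable subadditivity of the probability measure, i.e.\ Boole's inequality:
\[
P\!\left(\bigcup_{m=1}^{M}\{X_m\geq K\}\right) \;\leq\; \sum_{m=1}^{M} P(X_m\geq K).
\]
Chaining the equality with this inequality yields the claim. There is no real obstacle: the result holds for arbitrary (not necessarily independent or identically distributed) random variables on a common probability space, which is the setting implicit in the paper, so no additional hypotheses need to be verified. In the analysis that will use this fact (e.g.\ to bound the latency over $M=\Omega(N)$ candidate heads via Lemma~\ref{lemma:UnitDecide}), the bound will typically be combined with the tail estimate on $Y$ to sum a geometric-type series, but that is downstream work and not part of proving the fact itself.
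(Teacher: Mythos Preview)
Your proposal is correct and matches the paper's own proof: the paper likewise observes the set containment $\{\max(X_1,\ldots,X_M)\geq K\}\subseteq\bigcup_{m=1}^{M}\{X_m\geq K\}$ and then (implicitly) applies the union bound. The only cosmetic difference is that you state the event identity as an equality rather than just an inclusion, which is fine since only one direction is needed.
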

\begin{proof}
Simply observe that if $\max(X_1, \ldots, X_M)\geq K$, then for at least one $m\in [M]$ it must be $X_m\geq K$
\[
    \{\max(X_1, \ldots, X_M)\geq K\} \subseteq  \bigcup_{m=1}^M \{X_m\geq K\}.
\]
\end{proof}

The next lemma shows the bound on the number of additional rounds that are needed to choose a head.

\begin{lemma}[ChooseHead latency]\label{lemma:choose-head}
The function \msf{ChooseHead} satisfies the following properties:
\begin{itemize}
    \item {\bf Agreement}. For every round $r$ there is a uniquely chosen head $U$, i.e., for every \chdag{} $\cD$ maintained by an honest node, $\msf{ChooseHead}(r,\cD)\in \{\bot, U\}$.
    \item {\bf Low latency}. Let $Z_r$ be a random variable defined as the smallest $l$ such that for every local copy $\cD$ of height $r+l$ we have $\msf{ChooseHead}(r, \cD) \neq \bot$.
Then for $K\in\mathbb{N}, K>0$ we have $P(Z_r\geq K)= O\inparen{K\cdot 2^{-K}}$
\end{itemize}

\end{lemma}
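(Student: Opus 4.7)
For Agreement, the plan is to exploit two facts: (i) the priority function $\mathrm{priority}(U) = \hash(x \| U)$ with $x = \msf{SecretBits}(i,r+4,\cdot)$ is deterministic, so whenever two local views can both compute a priority for the same round-$r$ unit $U$ they agree (by Definition~\ref{def:secret-bits}(2)); and (ii) the consensus-level $\msf{Decide}$ function is safe---once any honest node sees $\msf{Decide}(U_0, \cD) = \sigma \in \{0,1\}$, every later and every other compatible view yields the same $\sigma$, which is precisely the safety property sketched in Section~\ref{ssec:ab}. Given these, if a view $\cD_1$ outputs $U^*$ then $U^*$ has $\msf{Decide}(U^*, \cdot) = 1$ globally, every unit preceding $U^*$ in $\pi_r(\cD_1)$ is pinned to a $0$ decision, and the only way a different view $\cD_2$ could output $U^{**} \neq U^*$ is if some round-$r$ unit absent from $\cD_1$ (and necessarily placed before $U^*$ in $\cD_2$'s permutation) achieves a $1$ decision. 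I rule this out using Lemma~\ref{lemma:negative-decisions}: a unit missing from any honest view at round $r+4$ can never be decided $1$.

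For Low Latency, I combine three ingredients. Lemma~\ref{lemma:popular-units} guarantees a set $\mathcal{S}_r$ of $\geq 2f+1$ round-$r$ units that are eventually decided $1$ in every view. Definition~\ref{def:secret-bits}(1) ensures the adversary cannot bias the permutation, so the positions of $\mathcal{S}_r$-units in $\pi_r$ are effectively uniformly random among the at most $N = 3f+1$ round-$r$ units present in the \chdag{}. Lemma~\ref{lemma:UnitDecide} provides the per-unit tail bound $P(Y \geq K) = O(K \cdot 2^{-K})$ on the time before $\msf{Decide}$ delivers a value other than $\bot$. Let $j^*$ denote the position of the first $\mathcal{S}_r$-member in $\pi_r$; a hypergeometric estimate gives $P(j^* \geq k) \leq \binom{f}{k}/\binom{N}{k} \leq (1/3)^k$. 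Conditioned on $j^* \leq k$, $\msf{ChooseHead}$ terminates within $K$ additional rounds provided all $k$ preceding units have non-$\bot$ $\msf{Decide}$ values, which by Fact~\ref{fact:var-max} fails with probability at most $O(k \cdot K \cdot 2^{-K})$. Balancing $k \approx K/C$ for a sufficiently large constant $C$ gives $P(Z_r \geq K) = O(K \cdot 2^{-K})$, after absorbing the constant offset needed to make $\pi_r$ computable (the local view must reach round $r+5$ for $\msf{SecretBits}(\cdot, r+4, \cdot)$ to be extractable, by Definition~\ref{def:secret-bits}(2)).

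The main obstacle is Agreement: I must rule out the scenario where a round-$r$ unit created by a malicious node reaches some honest views only after the head has been provisionally chosen by others and slips into the permutation at a position preceding $U^*$. The Low Latency calculation is routine given the three cited lemmas; the one subtlety there is avoiding a naive union bound over all $N$ units (which would blow the $2^{-K}$ tail up by a factor of $N$) by truncating at position $j^* \leq K/C$. For Agreement I expect to extract a clean monotonicity-and-safety statement for $\msf{Decide}$ across compatible views from the voting analysis in Section~\ref{ssec:ab}, and combine it with Lemma~\ref{lemma:negative-decisions} to pin every late-arriving round-$r$ unit to a $0$ decision, so that it can never flip the head to a different unit.
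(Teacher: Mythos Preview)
Your Agreement argument is essentially the paper's: consistency of priorities (via $\msf{SecretBits}$), safety of $\msf{Decide}$, and Lemma~\ref{lemma:negative-decisions} to kill late-arriving units. That part is fine.

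For Low Latency you assemble the same three lemmas as the paper, but your ``balancing $k\approx K/C$'' step does not deliver the stated tail. Your two-term bound is
\[
P(Z_r\geq K)\;\leq\; 3^{-k}\;+\;O\bigl(k\cdot K\cdot 2^{-K}\bigr).
\]
To make $3^{-k}\leq O(K\cdot 2^{-K})$ you need $k\geq K\cdot\log_3 2\approx 0.63K$ (so $C$ must be \emph{small}, not large); but then the second term is $\Theta(K^2\cdot 2^{-K})$, not $O(K\cdot 2^{-K})$. No single choice of $k$ balances both terms to the target. The loss comes from bounding the number of $\msf{Decide}$ outcomes you must wait for by $k$ rather than by the actual position $j^*$. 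The paper avoids the truncation entirely: it conditions on $S_r=s$ (your $j^*=s$) and sums
\[
P(Z_r\geq K)\;\leq\;\sum_{s\geq 1} P(S_r=s)\cdot s\cdot O\bigl(K\cdot 2^{-K}\bigr)\;\leq\;O\bigl(K\cdot 2^{-K}\bigr)\cdot\sum_{s\geq 1} s\cdot 3^{-s+1},
\]
and the geometric series $\sum s\cdot 3^{-s}$ converges to a constant. That is the missing idea: weight each position $s$ by its geometric probability rather than cut off at a fixed $k$ and pay a uniform factor of $k$.
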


\begin{proof}$ $

\noindent {\bf Agreement}. 
Suppose for the sake of contradiction that there exist two \chdags{} $\cD_i$ and $\cD_j$ maintained by honest nodes $i,j$, such that  $\msf{ChooseHead}(r,\cD_i) = U$ and $\msf{ChooseHead}(r,\cD_j) = U'$ for two distinct units $U, U'$.
Note that for this we necessarily have
$$ \H(\cD_i)\geq r+4\ \ \ \mbox{ and}\ \ \  \H(\cD_j)\geq r+4,$$
as otherwise $\msf{GeneratePermutation}(r,\cdot)$ would return $\bot$.

Further, as necessarily $\msf{Decide}(U,\cD_i)=1$ and $\msf{Decide}(U',\cD_j)=1$, by Lemma~\ref{lemma:negative-decisions} we obtain that both $U$ and $U'$ need to be present in $\cD_i$ and $\cD_j$.

To conclude the argument it remains to observe that the priorities computed by  $\msf{GeneratePermutation}(r,\cdot)$ are consistent between different \chdags{} as they are computed deterministically based on the common outcome of $\msf{SecretBits}$.
We thus arrive at the desired contradiction: if, say, $U$ has higher priority than $U'$ then the $j$th node should have choosen $U$ instead of $U'$.

\noindent {\bf Constant latency}. 
Consider the set $\mathcal{S}_r$ from Lemma~\ref{lemma:popular-units}.
Let $\cP_i$ be a node, $x=\msf{SecretBits}(i, r+4, \cD_i)$, and let $$\pi_r=(U_1,
\ldots,U_k)=\msf{GeneratePermutation}(r, \cD_i).$$
The permutation given by priorities  $\hash(x||U_i)$ for $i=1,2, \ldots,N$ is uniformly random by Definition \ref{def:secret-bits} (1) and the assumption that $\hash$ gives random bits in $\{0,1\}^\lambda$.
Moreover, it is independent of the set $\mathcal{S}_r$, as this set is defined when the first unit of round $r+3$ is created and $\pi_r$ was unknown before it.

Let $S_r$ denote a random variable defined as the smallest index $s$ such that $s$-th element of the permutation $\pi_r$ is in $\mathcal{S}_r$.
Since the permutation is uniformly random and independent of $\mathcal{S}_r$, then for $s=1, \ldots, f+1$ we have
\[
    P(S_r=s) = \dfrac{2f+1}{3f+2-s}\prod_{j=1}^{s-1} \dfrac{f+1-j}{3f+2-j},
\]
and $P(S_r>f+1)=0$, hence for $s=1, \ldots, k$, we have
\[
P(S_r=s)\leq 3^{-s+1}.
\]

By Lemma~\ref{lemma:popular-units}, all units from $\mathcal{S}_r$ must be decided $1$, then to calculate $\msf{ChooseHead}(r, \cD_i)$ we need to wait for decisions on at most $S_r$ units of round $r$.
Using Fact~\ref{fact:var-max} for random variables from Lemma~\ref{lemma:UnitDecide} we see that for $K>4$
\[
    P(Z_r\geq K\mid S_r=s) \leq \sum_{s=1}^k P(Y(U_s)\geq K)  \leq s\cdot \frac{K}{2^{K-5}},
\]
therefore
\[
    P(Z_r\geq K) \leq \sum_{s=1}^N P(S_r=s)P(Z_r\geq K\mid S_r=s) =O\inparen {K\cdot 2^{-K}}.
\]
\end{proof}

We end this section with a theorem that shows how long does it take to append a head to the linear order.

\begin{theorem}[OrderUnits latency]\label{thm:choose-head-latency}
Let $W_r$ be a random variable that indicates number of rounds required to append a head of round $r$ to the linear order. 
Formally, $W_r$ is defined as the smallest $l$ such that for every local copy $\cD_i$ of height $r+l$ we have
$$\msf{ChooseHead}(r, \cD_i) \in \msf{OrderUnits}(\cD_i).$$ 
Then
$$\mathbb{E}(W_r) =O(1)$$.
\end{theorem}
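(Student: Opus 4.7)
The plan is to express $W_r$ pointwise in terms of the random variables $Z_0, Z_1, \ldots, Z_r$ from Lemma~\ref{lemma:choose-head}, and then deduce the bound on $\mathbb{E}(W_r)$ by a union bound combined with the tail estimate $P(Z_s \geq K) = O(K\cdot 2^{-K})$ established there. First I would inspect $\msf{OrderUnits}$: its for-loop processes rounds in increasing order and halts (via \textbf{break}) as soon as $\msf{ChooseHead}$ returns $\bot$. Hence, for the head of round $r$ to appear in $\msf{linord}$ in a local copy $\cD$, it is necessary and sufficient that $\msf{ChooseHead}(s,\cD)\neq\bot$ for every $s=0,1,\ldots,r$. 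By the definition of $Z_s$ this is guaranteed once $\H(\cD)\geq s+Z_s$, which yields the pointwise identity
\[
W_r \;=\; \max\!\Bigl(0,\;\max_{0\leq s\leq r}\bigl(Z_s - (r-s)\bigr)\Bigr).
\]

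Second, I would apply a union bound. For any integer $K\geq 1$, substituting $t=r-s$ and using Lemma~\ref{lemma:choose-head},
\[
P(W_r \geq K) \;\leq\; \sum_{s=0}^{r} P\bigl(Z_s \geq K+(r-s)\bigr) \;\leq\; \sum_{t=0}^{\infty} C\cdot(K+t)\cdot 2^{-(K+t)}
\]
for some universal constant $C$. The exponential decay in $t$ makes the series converge uniformly, and a direct calculation gives $\sum_{t\geq 0}(K+t)2^{-t}=2K+2$, so $P(W_r\geq K)=O(K\cdot 2^{-K})$, crucially \emph{independent of $r$}. Converting tail bounds to an expectation,
\[
\mathbb{E}(W_r) \;=\; \sum_{K\geq 1} P(W_r\geq K) \;=\; O\!\Bigl(\sum_{K\geq 1} K\cdot 2^{-K}\Bigr) \;=\; O(1).
\]

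The main obstacle, such as it is, lies in verifying the pointwise identity for $W_r$: one has to check both that $\msf{OrderUnits}$ cannot append the head of round $r$ before every earlier head has been settled, and conversely that it does so as soon as they are all determined. Both directions are immediate from the pseudocode (the for-loop is strictly sequential and ties the appending of batch $r$ to non-$\bot$ outcomes at all previous rounds), combined with the Agreement clause of Lemma~\ref{lemma:choose-head} that ensures consistency of $\msf{ChooseHead}$ across local views. The remaining subtlety worth double-checking is that summing the tails over all $s\leq r$ does not introduce an $r$-dependence; the reindexing $t=r-s$ is exactly what isolates the decisive feature that larger gaps $r-s$ contribute exponentially less, so the bound on $\mathbb{E}(W_r)$ is a genuine absolute constant.
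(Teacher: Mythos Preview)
The proposal is correct and follows essentially the same approach as the paper: both reduce the event $\{W_r\geq K\}$ to the union $\bigcup_{s\leq r}\{Z_s\geq K+(r-s)\}$, apply the tail bound from Lemma~\ref{lemma:choose-head}, and sum the resulting geometric-type series to obtain $P(W_r\geq K)=O(K\cdot 2^{-K})$ uniformly in $r$, then conclude $\mathbb{E}(W_r)=O(1)$. Your explicit pointwise identity $W_r=\max\bigl(0,\max_{s\leq r}(Z_s-(r-s))\bigr)$ is a slightly cleaner packaging of the same observation the paper makes in prose.
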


\begin{proof}
The first part of the proof will show that for all $K\in\mathbb{N}, K>0$ we have
\[
    P(W_r\geq K)= O\inparen{K\cdot 2^{-K}}.
\]
Let $\cP_i$ be a node.
$\msf{ChooseHead}(r, \cD_i) \in \msf{OrderUnits}(\cD_i)$ implies that for all $j=0, 1, \ldots, r$ we have $$\msf{ChooseHead}(j, \cD_i)\neq\bot.$$ 
Therefore, if $W_r\geq K$, then for at least one round $j$ we have $Z_j\geq K+r-j$, where $Z_j$ is a random variable from Lemma~\ref{lemma:choose-head}.
Since
\[
    (W_r\geq K)\subseteq  \bigcup_{j=0}^r (Z_j\geq K+r-j),
\]
then
\begin{align*}
P(W_r\geq K) &\leq \sum_{j=0}^r P(Z_j\geq K+r-j)  \leq O(1)\sum_{j=0}^r \frac{K+j}{2^{K+j}} \\
&= O(1) \sum_{j=K}^{K+r} \frac{j}{2^{j}} \leq O(1) \sum_{j=K}^{+\infty} \frac{j}{2^{j}} = O\inparen{ K\cdot 2^{-K}}.
\end{align*}

Since $W_r$ has values in $\mathbb{N}$, then
$$\mathbb{E}(W_r) = \sum_{K=1}^{+\infty} P(W_r\geq K),$$
and finally we have
\[
\mathbb{E}(Z_r) = \sum_{K=1}^{+\infty} P(Z_r\geq K) = O(1) \sum_{K=1}^{+\infty} \frac{K}{2^{K}} = O(1).
\]
\end{proof}

\subsection{ \quickalef{}}\label{subsec:analysis_quickalef}
\noindent
In this section we analyze the \quickalef{} consensus mechanism. 
The main difference when compared to \alef{} is that now we allow the adversary to create forks.
Below, we introduce two definitions that allow us to capture forking situations in the \chdag{} so that we can prove how the latency of the protocol behaves in their presence.
At the end of this subsection  we also show that these situations are rare and can happen only $O(N)$ times.

\begin{definition2}\label{def:fork-witness}
Let $r$ be a round number. We say that a unit $U$ of round $r+4$ is a \emph{fork witness} if it is above at least two variants of the same unit of round $r$ or $r+1$.
\end{definition2}

\begin{definition2}\label{def:locally-forked}
Let $\mathcal{U}$ be a set of all $2f+1$ units of round $r+4$ created by honest nodes. We say that the $r$th round of the \chdag{} is \emph{publicly forked}, if there are at least $f+1$ fork witnesses among units $\mathcal{U}$. %
\end{definition2}

\noindent
We proceed with the analysis of the latency of \quickalef{} starting again by determining the delay in rounds until all nodes start to vote unanimously.

\begin{lemma}[Vote latency \lbrack for \quickalef{}\rbrack]\label{lemma:quick-vote-time} 
Let $X$ be a random variable that for a unit $U_0$ represents the number of rounds after which all units unanimously vote on $U_0$. 
Formally, let $U_0$ be a unit of round $r$ and define $X=X(U_0)$ to be the smallest $l$ such that for some $\sigma\in \{0, 1\}$, for every unit $U$ of round $r+l$ we have $Vote(U_0, U) = \sigma$. 
Then for $K\in\mathbb{N}$ we have
    \[
        P(X\geq K)\leq \inparen{\frac34}^{(K-6)/2}.
    \]
\end{lemma}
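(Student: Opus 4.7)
The plan is to adapt the argument for the analogous Vote latency lemma in \alef{} to the forked setting of \quickalef{}. The key new difficulty is that the \chdag{} may now contain forks, which breaks the clean pigeonhole used in \alef{}: previously, any two round-$r'$ units of a fork-free \chdag{} necessarily shared at least $f+1$ \emph{identical} parent units (via Dissemination and Diversity plus $N=3f+1$), but with forks the ``shared'' creators might have contributed distinct variants to distinct units. We replace this with a pigeonhole over only honest-creator parents, which never fork, at the price of a weaker probability bound per round.

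First I would set the stage exactly as in the \alef{} proof: fix a round $r'=r+l$ with $l\geq 6$, so that $\msf{CommonVote}(U_0, r')$ is governed by $\msf{SecretBits}(\cdot, r'+1)$, which by Definition~\ref{def:secret-bits} remains unpredictable to the adversary until an honest node has created a unit of round $r'+1$. Crucially, the structure of all round-$r'$ units (and in particular the pattern of forks occurring inside their down-sets) is chosen independently of this coin.

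Next I would leverage the Dissemination and Diversity rules together with the fact that there are at most $f$ byzantine creators to argue that every round-$r'$ unit $V$ has at least $f+1$ parents created by honest nodes, and since honest nodes do not fork, any two round-$r'$ units must share at least one \emph{identical} honest-creator parent (two subsets of size $\geq f+1$ of the $2f+1$ honest nodes must intersect). Letting $\cP_k$ be the first honest node creating a round-$r'$ unit $U$ and $\sigma$ a vote cast by at least $f+1$ units in $\down{U}$, I would try to rerun the \alef{} argument on this honest-creator slice. The favorable case is when at least $f+1$ of $U$'s \emph{honest}-creator parents vote $\sigma$; then by the intersection argument above every round-$r'$ unit has a parent voting $\sigma$, and if in addition $\msf{CommonVote}(U_0,r')=\sigma$ (probability $\tfrac{1}{2}$), the voting at round $r'$ is unanimous.

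The main obstacle, and the source of the degraded constant, is the ``balanced'' case in which $U$'s $\geq f+1$ honest-creator parents split their votes so that neither side reaches $f+1$; here forks among the byzantine creators in $\down{U}$ may prevent us from extending the $\sigma$-voting set to all round-$r'$ units. The plan to absorb this is to analyze two consecutive rounds $(r', r'+1)$ jointly: with probability at least $\tfrac{1}{2}$ the coin at round $r'$ aligns with an honest-majority direction inherited from round $r'-1$, and with probability at least $\tfrac{1}{2}$ (by an independence argument using that the adversary must commit to the round-$r'+1$ parent structure before learning the coin for round $r'+1$, which is revealed only once a round-$r'+2$ unit exists) the slack introduced by forks at round $r'$ is washed out at round $r'+1$. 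This yields a per-pair failure probability of at most $\tfrac{3}{4}$. Combining $\lfloor (K-6)/2\rfloor$ disjoint pairs by the independence of $\msf{SecretBits}$ across distinct reveal rounds gives $P(X\geq K)\leq (3/4)^{(K-6)/2}$; the inequality is trivial for $K\leq 6$ since the right-hand side is then $\geq 1$.
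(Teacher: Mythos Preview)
Your favorable-case argument is correct, but the ``balanced case'' is where the entire difficulty lies, and you have not actually proved anything there: you assert a per-pair failure probability of at most $\tfrac34$ without exhibiting an event of probability $\geq \tfrac14$ that forces unanimity. In the balanced case fewer than $f+1$ of the \emph{honest} parents of $U$ vote $\sigma$, so the pigeonhole over honest creators no longer pins a $\sigma$-voter into every other round-$r'$ unit; your sketch of how the coin at round $r'+1$ ``washes out the slack'' does not specify which concrete property of $\down{U'}$ or $\down{U}$ you use, nor why it holds irrespective of the adversary's choice of forked parents. This is a genuine gap.

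The paper avoids your case split entirely by changing the definition of $\sigma$. Instead of taking a majority among $\down{U}$ (one level down), it looks \emph{two} levels down: let $\sigma\in\{0,1\}$ be such that no $V\in\down{U}$ has $\down{V}$ unanimously voting $\sigma$. Such a $\sigma$ always exists, because any two $V,V'\in\down{U}$ share at least one honest creator among their $\geq 2f+1$ parents of distinct creators, and honest creators do not fork, so $\down{V}\cap\down{V'}\neq\emptyset$; hence $\down{V}$ and $\down{V'}$ cannot be unanimous on opposite values. Now if both $\msf{CommonVote}(U_0,r'-1)$ and $\msf{CommonVote}(U_0,r')$ equal $1-\sigma$ (probability $\tfrac14$, since both are governed by $\msf{SecretBits}$ with reveal rounds $r'$ and $r'+1$ and $U$ is the \emph{first} honest round-$r'$ unit), then every $V\in\down{U}$ votes $1-\sigma$ (its parents are not unanimously $\sigma$, so $V$ follows the common vote), and every round-$r'$ unit $U'$ must also vote $1-\sigma$: otherwise all of $\down{U'}$ would vote $\sigma$, but $\down{U'}$ again shares an honest-creator unit with $\down{U}$, contradiction. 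Stepping over odd $r'>r+5$ gives the bound $(3/4)^{(K-6)/2}$. The two-levels-down definition of $\sigma$ is precisely what buys the uniform $\tfrac14$ without any favorable/balanced dichotomy.
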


\begin{proof}
Fix a unit $U_0$.
Let $r'>r+5$ be an odd round number and let $\cP_k$ be the first honest node to create a unit $U$ of round $r'$.

Let $\sigma$ denote a vote such that there is no unit $V\in\down{U}$ for which every unit in $\down{V}$ votes $\sigma$ on $U_0$.
The value of $\sigma$ is well-defined, i.e., at least one value from $\{0,1\}$ satisfies this property (pick $\sigma=0$ if both $0$ and $1$ satisfy the above property).
Indeed, every two units $V, V'\in\down{U}$ have at least one parent unit in common, thus it is not possible that every unit in $\down{V}$ votes $\sigma$, and every unit in $\down{V'}$ votes $1-\sigma$.

Recall that the adversary cannot predict $\msf{CommonVote}(U_0, r'-1)$ or $\msf{CommonVote}(U_0, r')$ before $U$ is created.
Now, if 
\[\msf{CommonVote}(U_0, r'-1, \cD_k) = \msf{CommonVote}(U_0, r', \cD_k) = 1-\sigma,\]
then every unit in $\down{U}$ votes $1-\sigma$ on $U_0$, and so does $U$.
If some other unit $U'$ of round $r'$ were to vote $\sigma$ on $U_0$, then it would mean that at least $2f+1$ units in $\down{U'}$ vote $\sigma$ on $U_0$, which is necessary to overcome the $r'$-round common vote.
But this is impossible, as this set of $2f+1$ units must include at least one unit from $\down{U}$.
Therefore, if at round $r'-2$ voting was not unanimous, then with probability at least $1/4$ it will be unanimous starting from round $r'$ onward. 
Hence, we may check $P(X\geq K)\leq \inparen{\frac34}^{(K-6)/2}$, for $K>5$ by induction, and observe that it is trivially true for $K=1,2,3,4,5$.
\end{proof}
\noindent 
Here we obtain a slightly worse bound than in the corresponding Lemma~\ref{lemma:vote-time} for the \alef{} protocol.
The main reason for that is that the property that units $U$ and $U'$ have at least $f+1$ common parent units does not longer hold, as now up to $f$ of them may be forked.
One can try to improve this result by introducing different voting rules, and analyzing forked and non-forked cases separately, but this would only make the argument cumbersome, and would not help to decrease the asymptotic latency in the asynchronous case, nor the numerical latency in the partially synchronous case.

\begin{lemma}[Decision latency \lbrack for \quickalef{}\rbrack]\label{lemma:quick-UnitDecide}
Let $Y$ be a random variable that for a unit $U_0$ indicates the number of rounds after which all honest nodes decide on $U_0$.
Formally, let $U_0$ be a unit of a round $r$ and define $Y=Y(U_0)$ to be the smallest $l$ such that there exists $\sigma$ such that for every honest node $\cP_i$ if $\H(\cD_i)\geq r+l$, then $\msf{Decide}(U_0; \cD_i)=\sigma$. 
Then for $K\in\mathbb{N}$ we have
\[
    P(Y\geq K)=  O\inparen {K\cdot \inparen{\frac34}^{K/2}}.
\]
\end{lemma}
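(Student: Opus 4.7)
The plan is to mirror the proof of the corresponding \alef{} decision-latency lemma, substituting Lemma~\ref{lemma:quick-vote-time} in place of its \alef{} analogue and adapting the ``decision propagation'' step to tolerate forks. I will introduce an auxiliary random variable $Y'(U_0)$ equal to the smallest $l$ such that $\msf{UnitDecide}(U_0, U, \cD) = \sigma$ for some common $\sigma \in \{0,1\}$ and every round-$(r+l)$ unit $U$, and establish both a tail bound on $Y'$ and the relation $Y \leq Y' + 1$ verbatim as in \alef{}.

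The crux, and the step I expect to be the main obstacle, is the propagation claim: if for some honest view $\cD_k$ and some unit $U \in \cD_k$ of round $r'$ we have $\msf{UnitDecide}(U_0, U, \cD_k) = \sigma$, then every round-$r'$ unit $U'$ satisfies $\msf{Vote}(U_0, U') = \sigma$. In \alef{} this used the ``$\geq f+1$ common parent units'' intersection, which is no longer available under forks. Instead I argue at the level of creators: by Diversity the parent sets of $U$ and $U'$ share at least $(2f+1)+(2f+1)-N = f+1$ common creators, and at least one of these is honest, hence non-forking, so the variant included in $U'$'s parents from that creator coincides with the variant in $U$'s parents and thus votes $\sigma$. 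Consequently $U'$'s parents cannot be unanimously $1-\sigma$, so $\msf{Vote}(U_0, U')$ either equals the unanimous $\sigma$ of its parents or falls back to $\msf{CommonVote}(U_0, r') = \sigma$ (which is the value that enabled $U$'s decision). Induction on rounds then propagates $\sigma$-votes to all higher rounds and to all subsequent local $\msf{Decide}$ outputs, giving well-definedness of $Y$ and $Y \leq Y' + 1$.

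The remaining step bounds $P(Y' \geq L)$ by following the \alef{} calculation. Conditioning on $X = l$, the event $\{Y' \geq L\}$ requires $\msf{CommonVote}(U_0, r') \neq \sigma$ at every round $r' \in [r+l+1, r+L-1]$ at which it is random. Since the initial two common votes are the fixed pair $(1,0)$ at rounds $r+2, r+3$ and all subsequent common votes are fresh $\msf{SecretBits}$-derived bits, unpredictable to the adversary before the corresponding round by Definition~\ref{def:secret-bits}, these disagreements are independent Bernoulli$(1/2)$ events. Writing
\[
P(Y' \geq L) \;\leq\; \sum_{l=0}^{L-1} 2^{-(L-l-1)}\, P(X = l) \;+\; P(X \geq L)
\]
and applying Abel summation as in the \alef{} proof gives $2^{-L+1} + \sum_{l=1}^{L-1} 2^{-L+l}\, P(X \geq l)$. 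Plugging in $P(X \geq l) \leq (3/4)^{(l-6)/2}$ from Lemma~\ref{lemma:quick-vote-time} turns the inner sum into a geometric series of ratio $2\sqrt{3/4} = \sqrt{3}$ with partial sum $O((\sqrt{3})^L)$, and the $2^{-L}$ prefactor reduces this to $O((\sqrt{3}/2)^L) = O((3/4)^{L/2})$. Combining with $Y \leq Y' + 1$ delivers the claimed $O(K \cdot (3/4)^{K/2})$ bound, the linear $K$ factor being a harmless slackening inherited from the \alef{} bookkeeping.
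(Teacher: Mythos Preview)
Your approach matches the paper's: introduce $Y'$, bound its tail by combining Lemma~\ref{lemma:quick-vote-time} with the independent-$\msf{CommonVote}$ argument, then relate $Y$ to $Y'$. Two small corrections are needed.

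First, in \quickalef{} the $\msf{CommonVote}$ at round $r'=r+d$ (for $d\geq 4$) invokes $\msf{SecretBits}(i, r'+1)$, which by Definition~\ref{def:secret-bits} is extractable only once a unit of round $r'+2$ is present. Hence if $Y'=l$, an honest node needs $\H(\cD_i)\geq r+l+2$ (not $r+l+1$) to read the $\msf{CommonVote}$ at round $r+l$ and thus certify the decision; the paper accordingly uses $Y\leq Y'+2$ rather than $Y\leq Y'+1$. This extra $+1$ is of course harmless for the $O\bigl(K\cdot(3/4)^{K/2}\bigr)$ conclusion.

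Second, your propagation step has a subtle slip: you pick an honest creator common to $\down{U}$ and $\down{U'}$ and assert its unit votes $\sigma$, but $U$ may have up to $N$ round-$(r'-1)$ parents of which only $\geq 2f+1$ are guaranteed to vote $\sigma$, so that particular honest parent could in principle vote $1-\sigma$. The fix is immediate: among the $\geq 2f+1$ $\sigma$-voting parents of $U$, at least $f+1$ have honest (hence non-forking) creators; the $\geq 2f+1$ creators of $\down{U'}$ must contain at least one of these $f+1$, and the corresponding (unique) unit is then a parent of $U'$ voting $\sigma$. The paper, for its part, simply asserts the argument is ``exactly the same, not affected by introducing forks''; your more explicit treatment is a welcome addition once this counting is tightened.
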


\begin{proof}
The proof is analogous to that of Lemma~\ref{lemma:UnitDecide}.
First, we show that $Y$ is well-defined.
The argument is exactly the same, not affected by introducing forks.
If at least one unit decides a non-$\bot$ value $\sigma$, then every unit of equal or higher round votes $\sigma$, and eventually there will be a round with common vote equal to $\sigma$, which will trigger the decision among all nodes.

We define $Y'=Y'(U_0)$ as in the proof of Lemma~\ref{lemma:UnitDecide} and obtain

\begin{align*}
P(Y'\geq L) &\leq 2^{-L}P(X\geq 0)  + \sum_{l=0}^{L-1}  2^{-L+l}P(X\geq l) \\
& \leq (L+1)\inparen{\frac34}^{(L-6)/2}
\end{align*}

The final difference is that now $Y\leq Y'+2$, because we need one additional round to read the value of the last required secret:
\[ 
    P(Y\geq K)\leq P(Y'\geq K-2)\leq (K-1)\inparen{\frac34}^{(K-8)/2}.
\]
\end{proof}

\noindent
Similarly as for \alef{}, for $O(1)$ latency it is crucial to show that a large set of units of round $r$ is guaranteed to be decided positively.
\begin{lemma}[Fast positive decisions \lbrack for \quickalef{}\rbrack ]\label{lemma:quick-popular-units}
Let $r\geq 0$ be a round number. Assume that an honest node $\cP_i$ has created a unit $U$ of round $r+4$.
There exists at least one unit $U_0$ of round $r$ such that $\msf{Decide}(U_0, \cD_i)=1$. If additionally $U$ is not a fork witness, then there exists a set $\mathcal{S}_{i,r}$ of at least $1.5f+1$ units of round $r$ such that for every $U_0\in \mathcal{S}_{i,r}$ we have $\msf{Decide}(U_0, \cD_i)=1$.
\end{lemma}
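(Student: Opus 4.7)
The argument mirrors that of Lemma~\ref{lemma:popular-units}, adjusted for the modified common-vote sequence of \quickalef{}: in particular $\msf{CommonVote}(U_r,r+3)=0$ (rather than $1$ as in \alef{}). The crucial consequence is that propagating the vote $1$ through round $r+3$ now requires \emph{all} round-$(r+2)$ parents of a round-$(r+3)$ unit to vote $1$, not merely one; this is why $U$ must lie at round $r+4$ (one deeper than in the analogous statement for \alef{}) and why $|\mathcal{S}_{i,r}|$ is somewhat smaller.

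The plan is to identify $\mathcal{S}_{i,r}$ through a bipartite degree count. Let $V_1$ and $V_0$ denote the round-$(r+1)$ and round-$r$ ancestors of $U$ in $\cD_i$. Under the Case~2 hypothesis each creator contributes at most one unit to $V_0$ or $V_1$, so $2f+1\leq|V_0|,|V_1|\leq 3f+1$. Declare $U_r\in V_0$ to be \emph{safe} if it is a round-$r$ parent of at least $|V_1|-2f$ units of $V_1$, and set $\mathcal{S}_{i,r}$ to be the set of safe units. The Dissemination condition yields $\sum_{W\in V_1}|\down{W}|\geq|V_1|(2f+1)$; splitting this edge count between safe vertices (degree $\leq|V_1|$) and non-safe vertices (degree $\leq|V_1|-2f-1$) gives
\[
|\mathcal{S}_{i,r}|\;\geq\;|V_1|\;-\;|V_0|\cdot\frac{|V_1|-2f-1}{2f+1},
\]
which, minimised over $|V_0|,|V_1|\in[2f+1,3f+1]$ at $|V_0|=|V_1|=3f+1$, evaluates to $(3f+1)(f+1)/(2f+1)>1.5f+1$. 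For every safe $U_r$, the creator-intersection bound $(|V_1|-2f)+(2f+1)>|V_1|$ shows that every round-$(r+2)$ ancestor $V$ of $U$ has a round-$(r+1)$ parent above $U_r$; that parent votes $1$, and by $\msf{CommonVote}(U_r,r+2)=1$ we get $\msf{Vote}(U_r,V)=1$. Then every round-$(r+3)$ ancestor of $U$ sees all of its round-$(r+2)$ parents voting $1$ and votes $1$ unanimously (overriding $\msf{CommonVote}(U_r,r+3)=0$), and so does $U$; by monotonicity of unanimous votes all later units in $\cD_i$ keep voting $1$ on $U_r$. As soon as the first random $\msf{CommonVote}(U_r,\cdot)=1$ occurs (in $O(1)$ expected rounds by Definition~\ref{def:secret-bits}), some unit triggers $\msf{UnitDecide}(U_r,\cdot,\cD_i)=1$, yielding $\msf{Decide}(U_r,\cD_i)=1$. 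Part~1 follows from the same inequality applied to distinct-creator counts rather than unit counts, whose right-hand side remains strictly positive even in the fork-witness case, so at least one safe $U_r$ always exists.

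The main obstacle is the stability of the safety condition across the global DAG: a round-$(r+2)$ unit that later enters $\cD_i$ might, in principle, use round-$(r+1)$ parents from forked creators with variants not in $V_1$, threatening the creator-intersection bound. The no-fork-witness assumption on $U$ ensures that each round-$(r+1)$ creator present in $V_1$ has a unique below-$U$ variant; combined with the facts that at most $f$ dishonest creators can introduce rogue round-$(r+1)$ variants and that every round-$(r+2)$ parent set still spans $\geq 2f+1$ distinct creators, the intersection argument extends to all subsequent growth of $\cD_i$, provided one restricts attention to the honest sub-pool of safe children of $U_r$. Making this uniform extension rigorous, and checking that the resulting honest-restricted count still meets the $1.5f+1$ threshold, is the most delicate technical step.
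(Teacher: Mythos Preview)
Your edge-counting argument for Part~2 is the paper's in different notation: the safety threshold $|V_1|-2f$ is the paper's $f-b+1$ (with $b=N-|V_1|$), the edge inequality is identical, and the minimum $(3f+1)(f+1)/(2f+1)>1.5f+1$ at $|V_0|=|V_1|=3f+1$ matches. The divergence is in how you pass from ``all round-$(r+2)$ ancestors of $U$ vote $1$'' to $\msf{Decide}=1$, and your route manufactures exactly the obstacle you flag at the end.

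The paper does not wait for a future random $\msf{CommonVote}=1$. It argues: once every unit in $\mathcal{U}_{r+2}$ (the round-$(r+2)$ units below $U$) votes $1$, every unit in $\mathcal{U}_{r+3}$ votes $1$ by unanimity, so $U$ sees $\geq 2f+1$ round-$(r+3)$ parents all voting $1$; the paper then takes the common vote at round $r+4$ to be $1$ and concludes $\msf{UnitDecide}(U_0,U)=1$, hence $\msf{Decide}(U_0,\cD_i)=1$ in the \emph{current} $\cD_i$. The whole argument lives in the down-cone of $U$ and never touches a unit that could ``later enter $\cD_i$'', so the stability problem never arises. Your proposed patch (restrict to honest children of $U_r$) is the right instinct for Part~1, but it cannot sustain the $1.5f+1$ count needed for Part~2, so as written the second half of your proposal has a real gap.

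For Part~1 the paper does not reuse the bipartite inequality. It pigeon-holes over honest units only: every round-$(r+1)$ unit has $\geq f+1$ honest round-$r$ parents, so some honest round-$r$ unit $U_0$ is a parent of $\geq f+1$ honest round-$(r+1)$ units; since honest units cannot be forked, any round-$(r+2)$ unit (with $\geq 2f+1$ distinct-creator, hence $\geq f+1$ honest, round-$(r+1)$ parents) must have one of them above $U_0$, and therefore votes $1$. This works with forks present. Your ``same inequality applied to distinct-creator counts'' is not an argument as stated.
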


\begin{proof}
For convenience, denote by $\mathcal{U}_{r'}$ the set of all units of round $r'$ below $U$.

Let us first observe that whenever for a unit $U_0$ every unit $V\in \mathcal{U}_{r+2}$ votes $1$ on $U_0$, then it is decided $1$ by $U$, i.e.,
$$\inparen{\forall~~V\in \mathcal{U}_{r+2} \ \ \ \msf{Vote}(U_0, V)=1} \Rightarrow \msf{UnitDecide}(U_0, U) = 1,$$
This can be seen as follows: 
\begin{itemize}
    \item every unit $W\in \mathcal{U}_{r+3}$ votes $1$, i.e., $\msf{Vote}(U_0, W)=1$ because $\down{W}$ unanimously vote $1$,
    \item since the common vote at round $r+4$ is $1$ and all round-$(r+3)$ parents of $U$ unanimously vote $1$, $\msf{UnitDecide}(U_0, U) = 1$.
\end{itemize}
\noindent 
The remaining part of the proof is to show that in the general case there is at least one unit $U_0 \in \mathcal{U}_r$ that is voted $1$ by every unit in $\mathcal{U}_{r+2}$ and that there is at least $1.5f+1$ such units in the case when $U$ is not a forking witness.

\medskip
For the first part, note that every unit $W$ of round $r+1$ has at least $f+1$ honest parents at round $r$, thus by reverse counting there exists a unit $U_0$ or round $r$, created by an honest node that is a parent of at least $f+1$ honest units at round $r+1$.
Note that $U_0$ is as desired, i.e., every unit $V$ of round $r+2$ votes $1$ for $U_0$ because at least one honest unit $W \in \down{V}$ is above $U_0$.

\medskip
For the second part, suppose that $U$ is not a fork witness, in which case the sets $\mathcal{U}_r$ and $\mathcal{U}_{r+1}$ contain no forks and hence if we denote
\begin{align*}
    a &:= N-|\mathcal{U}_r|,\\
     b&:=N-|\mathcal{U}_{r+1}|,
\end{align*}
then $0\leq a, b \leq f$.
We define the set $\mathcal{S}_{i,r}$ to be all units $U_0\in\mathcal{U}_r$ that are parents of at least $f-b+1$ units from $\mathcal{U}_{r+1}$.

Suppose now that $U_0$ is such a unit, we show that all units in $\mathcal{U}_{r+2}$ vote $1$ on $U_0$.
Indeed, take any $V\in \mathcal{U}_{r+2}$.
Since $V$ has at least $2f+1$ parents in $\mathcal{U}_{r+1}$, at least one of them is above $U_0$ (by definition of $\mathcal{S}_{i,r}$) and thus $\down{V}$ does not unanimously vote on $0$.
The default vote at round $r+2$ is $1$ therefore $V$ has to vote $1$ on $U_0$, as desired.

Finally, it remains to show that $\mathcal{S}_{i,r}$ is large.
For brevity denote the number of elements in $\mathcal{S}_{i,r}$ by $x$.
We start by observing that the number of edges connecting units from $\mathcal{U}_r$ and $\mathcal{U}_{r+1}$ is greater or equal than $$(2f+1)\cdot|\mathcal{U}_{r+1}|=(2f+1)\cdot(N-b).$$
On the other hand, we can obtain an upper bound on the number of such edges by assuming that each unit from $\mathcal{S}_{i,r}$ has the maximum number of $N-b$ incoming edges and each unit from $\mathcal{U}_r\setminus\mathcal{S}_{i,r}$ has the maximum number of $f-b$ incoming edges, thus:
\[
x\cdot (N-b) + (N-a-x)(f-b) \geq (2f+1)(N-b),
\]
which after simple rearrangements leads to
\[
x \geq 2f+1 -\frac{(f-a)(f-b)}{2f+1}.
\]
Since  $0\leq a,b\leq f$ we obtain
\[
x\geq 1.5f + 1.
\]

\end{proof}

It is instructive to compare Lemma~\ref{lemma:quick-popular-units} regarding \quickalef{} to Lemma~\ref{lemma:popular-units} regarding \alef{}.
Note that in the non-forked case every unit from the set $\mathcal{U}_{r+2}$ has at least $2f+1$ parents in the set $\mathcal{U}_{r+1}$, and by pigeonhole principle must be above all units from $\mathcal{S}_{i,r}$.
Therefore, we decrease the number of rounds after which units from $\mathcal{S}_{i,r}$ become ``popular'' from $3$ to $2$.
This is achieved at the cost of reducing the size of this set by $0.5f$ units, but is necessary to speed up the protocol in the optimistic case.

\begin{lemma}[Fast negative decisions \lbrack for \quickalef{}\rbrack] \label{lemma:quick-negative-decisions}
Let $r\geq 0$ be a round number. Let $U$ be a unit of round $r+3$, and $U_0\not\leq U$ be a unit of round $r$. Then, for any local view $\cD\ni U_0$ with $\H(\cD) \geq r+5$ we have $\msf{Decide}(U_0, \cD) = 0$.
\end{lemma}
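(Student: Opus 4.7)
My plan is to mirror the proof of Lemma~\ref{lemma:negative-decisions} for \alef{}, adapting it to \quickalef{}'s shifted $\msf{CommonVote}$ schedule (with the deterministic $0$ vote at round $r+3$ instead of $r+4$) and to the possible presence of forks. By Theorem~\ref{thm:quick-dag-grow-properties} all local views of honest nodes are mutually consistent, so $\cD$ can be extended to a compatible \chdag{} $\cD'$ containing $U$, and I will carry out all vote/decision analysis inside $\cD'$. A short induction on $\H(V)-r$ shows that every $V \leq U$ with $\H(V) \geq r+1$ satisfies $\msf{Vote}(U_0, V) = 0$: for $\H(V) = r+1$ we have $U_0 \not\leq V$ (since $V \leq U$ and $U_0 \not\leq U$) so the base case of $\msf{Vote}$ returns $0$; for higher rounds, all parents of $V$ are still $\leq U$ of smaller round and hence vote $0$ by the inductive hypothesis. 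Applied to $U$, all $\geq 2f+1$ round-$(r+2)$ parents of $U$ vote $0$, and since $\msf{CommonVote}(U_0, r+3) = 0$ we immediately get $\msf{UnitDecide}(U_0, U, \cD') = 0$.

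The core of the proof is safety: no unit in $\cD'$ UnitDecides $1$ on $U_0$. At round $r+3$ this is automatic from $\msf{CommonVote}(U_0, r+3) = 0$. For round $r+2$, a UnitDecide-$1$ at $V_1$ would require all $\geq 2f+1$ of $V_1$'s round-$(r+1)$ parents to be above $U_0$. I would then contrast the $\geq f+1$ honest creators among $V_1$'s round-$(r+1)$ parents with the $\geq f+1$ honest creators $c$ whose unique round-$(r+1)$ unit is a parent of some $V \in \down{U}$ (and hence is not above $U_0$); inclusion-exclusion over the pool of only $2f+1$ honest nodes forces a common creator, and the corresponding unique honest round-$(r+1)$ unit is then both a parent of $V_1$ and not above $U_0$ --- a contradiction. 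The analogous honest-creator pigeonhole applied one level higher shows every round-$(r+3)$ unit $P$ in $\cD'$ has at least one round-$(r+2)$ parent voting $0$, so $P$ cannot vote $1$ unanimously; combined with $\msf{CommonVote}(U_0, r+3) = 0$ every such $P$ votes $0$, and this unanimous $0$-vote propagates inductively, ruling out UnitDeciding $1$ at every round $\geq r+4$ as well.

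Since $\msf{UnitDecide}(U_0, W, \cD)$ depends only on $W$ and its parents, it agrees with $\msf{UnitDecide}(U_0, W, \cD')$ for every $W \in \cD$, so safety transfers to $\cD$ unchanged, giving $\msf{Decide}(U_0, \cD) \neq 1$. To close the gap from $\neq 1$ to $= 0$, the plan is to exhibit an explicit UnitDeciding-$0$ unit in $\cD$: using $\H(\cD) \geq r+5$ to guarantee a round-$(r+4)$ unit $W \in \cD$ whose $\geq 2f+1$ round-$(r+3)$ parents all vote $0$ by the preceding analysis, one then identifies along $W$ a round-$(r+3)$ unit $P \in \cD$ whose $\geq 2f+1$ round-$(r+2)$ parents are all not above $U_0$, so that the deterministic $\msf{CommonVote}(U_0, r+3) = 0$ rule yields $\msf{UnitDecide}(U_0, P, \cD) = 0$. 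The main obstacle throughout is the presence of forks: in \alef{} the pigeonhole can be applied to all $N$ creators per round, but in \quickalef{} forks break that bound, so one must carefully restrict to the $2f+1$ honest creators whose per-round uniqueness is exactly what supplies every overlap used above.
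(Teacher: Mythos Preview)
Your safety argument---no unit ever $\msf{UnitDecide}$s $1$ on $U_0$---is exactly the paper's argument, only spelled out in greater detail. The paper compresses it into a single sentence (any round-$(r{+}3)$ unit voting $1$ would force $\geq 2f{+}1$ of its round-$(r{+}2)$ parents to lie above $U_0$, and an honest-creator overlap with $\down{U}$ then yields $U_0\leq U$); you additionally handle the round-$(r{+}2)$ $\msf{UnitDecide}$ case explicitly and make the honest-creator pigeonhole precise, which is the right way to cope with forks. Up through ``$\msf{Decide}(U_0,\cD)\neq 1$'' your proof is complete and cleaner than the paper's.

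The gap is in your closing step from $\neq 1$ to $=0$. Your plan is to exhibit $P\in\cD$ of round $r{+}3$ with $\geq 2f{+}1$ round-$(r{+}2)$ parents not above $U_0$, but nothing you proved yields such a $P$. A round-$(r{+}3)$ unit votes $0$ because $\msf{CommonVote}(U_0,r{+}3)=0$, and for that it suffices that \emph{one} parent votes $0$; the honest-creator overlap with $\down{U}$ guarantees exactly one such parent and no more, so the remaining round-$(r{+}2)$ parents of $P$ may well sit above $U_0$ and vote $1$, giving $\msf{UnitDecide}(U_0,P,\cD)=\bot$. The paper's own final sentence is equally imprecise here: it appeals to ``the common vote being equal to $0$'' at round $r{+}5$, but under the \quickalef{} schedule $\msf{CommonVote}(U_0,r{+}d)$ is random for $d\geq 4$, so at height $r{+}5$ one cannot in general point to a $\msf{UnitDecide}=0$ witness in $\cD$. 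What both arguments actually establish is $\msf{Decide}(U_0,\cD)\neq 1$ (which is all the downstream uses in Lemma~\ref{lemma:quick-choose-head} require); getting $\msf{Decide}=0$ needs either $U\in\cD$ (so that $U$ itself is the witness at round $r{+}3$) or waiting for a later round with $\msf{CommonVote}=0$.
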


\begin{proof}
Let $U'$ be any unit of round $r+3$.
Then $U'$ votes $0$ on $U_0$.
Indeed, if $U'$ were to vote $1$, then there would be at least $2f+1$ units in $\down{U'}$ above $U_0$, and therefore at least one unit in $\down{U}$ above $U_0$, but this would imply $U\geq U_0$.
Hence, every unit of round greater or equal to $r+3$ votes $0$ on $U_0$, and finally every unit of round $r+5$ decides $0$ on $U_0$ due to the common vote being equal to $0$.
\end{proof}

\noindent In \quickalef{} the latency of choosing head can differ depending on whether we are in the forking situation or not.
The following lemma states this formally.

\begin{lemma}[ChooseHead latency \lbrack for \quickalef{}\rbrack]\label{lemma:quick-choose-head}
The function \msf{ChooseHead} satisfies the following properties:
\begin{itemize}
    \item {\bf Agreement}. For every round $r$ there is a uniquely chosen head $U$, i.e., for every \chdag{} $\cD$ maintained by an honest node, $\msf{ChooseHead}(r,\cD)\in \{\bot, U\}$.
    \item {\bf Low latency}. Let $Z_r$ be a random variable defined as the smallest $l$ such that for every local copy $\cD$ of height $r+l$ we have $\msf{ChooseHead}(r, \cD) \neq \bot$.
Then for $K\in\mathbb{N}$ we have
\[
   P(Z_r\geq K)= O\inparen{K\cdot\inparen{\frac34}^{K/2}}
\]
if the protocol is locally non-forked at round $r$, and
\[
   P(Z_r\geq K)= O\inparen{N^2\cdot K\cdot\inparen{\frac34}^{K/2}}
\]
otherwise.
\end{itemize}
\end{lemma}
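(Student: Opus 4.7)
The plan is to adapt the two-part proof of Lemma~\ref{lemma:choose-head} to the \quickalef{} setting, substituting Lemma~\ref{lemma:quick-negative-decisions} for Lemma~\ref{lemma:negative-decisions}, Lemma~\ref{lemma:quick-popular-units} for Lemma~\ref{lemma:popular-units}, and Lemma~\ref{lemma:quick-UnitDecide} for Lemma~\ref{lemma:UnitDecide}. The only genuinely new structural feature I have to accommodate is that $\pi_r$ now begins with a deterministic prefix $(V_1,\ldots,V_l)$ of all variants of $\cP_{i_0}$'s round-$r$ unit, sorted by hash, followed by the random tail whose priorities come from $\msf{SecretBits}(\cdot,r+5,\cdot)$.

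For \textbf{Agreement}, I would first argue by induction on $r$ that $i_0=\msf{DefaultIndex}(r,\cD)$ is the same in every honest view: it depends deterministically on $r$ and, at most, on the head of round $r-1$, which by induction is itself unique. Next I would observe that whenever $\msf{ChooseHead}(r,\cD)$ outputs a non-$\bot$ unit $U$, $\msf{Decide}(U,\cD)=1$, so by Lemma~\ref{lemma:quick-negative-decisions} $U$ must appear in every honest local view of height $\geq r+5$. All priorities $\hash(x\|U)$ in the tail and the hash-based tiebreaks in the prefix are deterministic in data common to every honest view (once the relevant secret is revealed), so the two permutations restricted to their common units have the same relative order. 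If two honest views output non-$\bot$ heads $U\neq U'$ with $U$ of strictly higher priority, then in the view that outputs $U'$ the unit $U$ would appear earlier in $\pi_r$ and, by monotonicity of decisions across growing views, would already be decided $1$ there, contradicting the choice of $U'$.

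For \textbf{Low latency} in the non-forked case I would fix an honest $\cP_i$ whose round-$(r+4)$ unit is not a fork witness and invoke the strong half of Lemma~\ref{lemma:quick-popular-units} to obtain a set $\mathcal{S}_{i,r}$ of size at least $1.5f+1$ consisting of round-$r$ units decided $1$. The crucial point is independence: $\mathcal{S}_{i,r}$ is determined by the shape of $\cD_i$ at round $r+4$, while the random part of $\pi_r$ depends on $\msf{SecretBits}(\cdot,r+5,\cdot)$, which by Definition~\ref{def:secret-bits} is unpredictable to the adversary until some honest round-$(r+5)$ unit exists. Letting $S_r$ be the position of the first element of $\mathcal{S}_{i,r}$ in $\pi_r$, the density $|\mathcal{S}_{i,r}|/N\geq 1/2$ together with the at most one deterministic prefix entry yields $P(S_r=s)\leq C\alpha^{s}$ for constants $C$ and $\alpha<1$, so in particular $\mathbb{E}[S_r]=O(1)$. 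Bounding $Z_r\leq \max_{1\leq s\leq S_r} Y(U_s)$ with $Y$ the decision-latency variable of Lemma~\ref{lemma:quick-UnitDecide} and applying Fact~\ref{fact:var-max}, I would then conclude
$$P(Z_r\geq K)\leq \sum_{s\geq 1} P(S_r=s)\cdot s\cdot O\!\inparen{K\cdot\inparen{\tfrac34}^{K/2}}=O\!\inparen{K\cdot\inparen{\tfrac34}^{K/2}}.$$
In the publicly forked case I would instead use only the weak half of Lemma~\ref{lemma:quick-popular-units} (some round-$r$ unit is decided $1$) together with the bound from Theorem~\ref{thm:quick-dag-grow-properties} that each creator forks at most $N$ times, giving $|\pi_r|\leq N^2$ and hence $S_r\leq N^2$ deterministically; a direct union bound via Fact~\ref{fact:var-max} then inflates the estimate by the factor $N^2$, yielding the claimed $O(N^2\cdot K\cdot(3/4)^{K/2})$.

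The main obstacle I expect is making the independence of $\mathcal{S}_{i,r}$ from the permutation randomness fully rigorous: the adversary schedules deliveries adaptively, influencing which round-$r$ and round-$(r+1)$ units lie below the honest $U$, so one has to verify that the event ``$U_s\in\mathcal{S}_{i,r}$'' is measurable with respect to information available \emph{before} $\msf{SecretBits}(\cdot,r+5,\cdot)$ is revealed, which is what legitimises the uniform-over-positions estimate for $S_r$. A secondary care-point is agreement when the deterministic prefixes in two views have different lengths: applying Lemma~\ref{lemma:quick-negative-decisions} separately to each forked variant of $\cP_{i_0}$'s unit shows that any extra $V_j$ present only in one view cannot have been decided $1$ anywhere, so it cannot spoil the ``first decided $1$'' selection across views.
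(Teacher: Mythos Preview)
Your proposal is essentially correct and follows the same overall route as the paper: Agreement via consistency of the permutation plus Lemma~\ref{lemma:quick-negative-decisions}, and Low Latency via Lemma~\ref{lemma:quick-popular-units} combined with the geometric tail from Lemma~\ref{lemma:quick-UnitDecide}, with the forked case handled by the crude $N^2$ union bound.

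There is one refinement in the paper worth noting, precisely at the point you flagged as the main obstacle. You write ``fix an honest $\cP_i$ whose round-$(r+4)$ unit is not a fork witness'' and then appeal to unpredictability of $\msf{SecretBits}(\cdot,r+5,\cdot)$. The paper makes the timing argument tighter: it takes $U$ to be the \emph{first} honest unit of round $r+5$ and observes that, when the round is not publicly forked, at least $f+1$ of the honest round-$(r+4)$ units are non-fork-witnesses, so some such $V$ lies in $\down{U}$. The set $\mathcal{S}_{i,r}$ is then a function of $V$, which exists \emph{before} any honest round-$(r+5)$ share does; this is exactly what justifies treating the random tail of $\pi_r$ as independent of $\mathcal{S}_{i,r}$. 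The paper also restricts the permutation analysis to the set $\mathcal{U}_r$ of round-$r$ units below $V$ (at most $N$, no forks), noting via Lemma~\ref{lemma:quick-negative-decisions} that every other round-$r$ unit is decided $0$ by height $r+5$ and hence irrelevant. Your density estimate $|\mathcal{S}_{i,r}|/N\gtrsim 1/2$ and the resulting geometric bound on $S_r$ match the paper's computation (the paper does it for a generic constant-length deterministic prefix of size $c$ rather than just $c=1$, but this changes nothing asymptotically).
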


\begin{proof}

\noindent {\bf Agreement}. 
Similarly as in the corresponding part of the proof of Lemma~\ref{lemma:choose-head} we observe that the permutation (computed by different nodes) along which the head at a given round is chosen is consistent in the sense that: if two nodes both hold a pair of units $U_0, U_1$ of round $r$ then these units will end up obtaining the same priorities at both nodes and thus will be placed in the same relative order in the permutation.

Consequently all we need to show is that at the time when an honest node $i$ decides a unit $U$ as the head of round $r$ then every unit $U'$ that is not part of its local copy $\cD_i$ will be decided $0$.
This indeed follows from Lemma~\ref{lemma:quick-negative-decisions} since for $\msf{ChooseHead}(r, \cD_i)$ to not be $\bot$ the \chdag{} $\cD_i$ needs to have $\H(\cD_i)\geq r+3$.

\noindent {\bf Low latency}. 
The proof of the non-forked case differs from the proof of Lemma~\ref{lemma:choose-head} in two ways.
Firstly, the set of units that will eventually be decided $1$ is smaller, but still has size $\Omega(N)$.
Secondly, the first unit in the permutation is known in advance.
Here we prove a slightly stronger result, namely we assume that the first $c\geq 0$ units are known, where $c$ should be thought of as a small constant (independent of $N$).
For concreteness one can simply think of $c=1$.

Let $r\geq 0$ be a round number. Let $U$ be the first $(r+5)$-round unit created by an honest node.
If the protocol is locally non-forked at round $r$, then there are at least $f+1$ $(r+4)$-round units that are not fork witnesses, hence there is at least one unit $V\in\down{U}$, created by node $i$, that is not a fork witness.

Let $\mathcal{U}_r$ denote all round-$r$ units below $V$, and consider the set $\mathcal{S}_{i,r}\subseteq \mathcal{U}_r$ from Lemma~\ref{lemma:quick-popular-units}.
The set $\mathcal{U}_r$ has at most $N$ elements and contains no forks.
From Lemma~\ref{lemma:quick-negative-decisions} we know that all round-$r$ units not in $\mathcal{U}_r$ (possibly including some forks of units from $\mathcal{U}_r$) are decided $0$ by every node at round $r+5$.
Therefore, we can only consider units from $\mathcal{U}_r$ when calculating $\msf{GeneratePermutation}(r)$.
Note that the unit priorities are independent from sets $\mathcal{S}_{i,r}, \mathcal{U}_r$, because the unit $V$ exists before the adversary can read $\msf{SecretBits}(\cdot, r+5)$ using shares from the unit $U$.

Let $s_{i,r}$ denote a random variable defined as the smallest index $s$ such that $s$-th element of the permutation on units $\mathcal{U}_r$ is in $\mathcal{S}_{i,r}$, excluding the first $c$ default indices (in other words, $P(s_{i,r}\leq c)=0$).
The permutation is uniformly random and independent of $\mathcal{S}_{i,r}$.
Since we want to calculate an upper bound on the probability distribution of $s_{i,r}$, then we might assume w.l.o.g. that the set $\mathcal{U}_r$ has exactly $3f+1-c$ elements, and that the set $\mathcal{S}_{i,r}$ has $1.5f+1-c$, in both cases excluding units created by the $c$ default nodes.

Then, for $s=1, \ldots, 1.5f+1$ we have
\[
    P(s_{i,r}=s) = \dfrac{1.5f+1-c}{3f+2-c-s}\prod_{j=1}^{s-1} \dfrac{1.5f+1-j}{3f+2-c-j},
\]
and $P(s_{i,r}>1.5f+1)=0$, hence for $s=1, 2, \ldots$ we have
\[
P(s_{i,r}=s)\leq 2^{-s+c}.
\]

By Lemma~\ref{lemma:quick-popular-units}, all units from $\mathcal{S}_{i,r}$ must be decided $1$, then to calculate $\msf{ChooseHead}(r, \cD_i)$ we need to wait for decisions on at most $s_{i,r}+c$ units of round $r$ (here we include the default indices with highest priority).
Using Fact~\ref{fact:var-max} for random variables from Lemma~\ref{lemma:quick-UnitDecide} we see that for $K>5$
\begin{align*}
    P(Z_r\geq K\mid S_r=s) &\leq \sum_{k=1}^{s+c} P(Y(U_k)\geq K) \\
    &\leq (s+c)(K-1) \inparen{\frac34}^{(K-8)/2},
\end{align*}
therefore
\begin{align*}
    P(Z_r\geq K) &\leq \sum_{s=1}^N P(S_r=s)P(Z_r\geq K\mid S_r=s) \\
    &= O\inparen{K\cdot\inparen{\frac34}^{K/2}}.
\end{align*}
\noindent 
On the other hand, if the round is publicly forked, then we observe that we need to wait for decision about at most $N^2$ units of round $r$ (since every unit can be forked at most $N$ times), and use Fact~\ref{fact:var-max}.
\end{proof}

\begin{theorem}[OrderUnits latency \lbrack for \quickalef{}\rbrack]\label{thm:quick-order-units-latency}
Let $W_r$ be a random variable representing the number of rounds required to append a head of round $r$ to the linear order, assuming that the default indices for rounds $0, \ldots, r$ are known. 
Formally, $W_r$ is defined as the smallest $l$ such that for every local copy $\cD_i$ of height $r+l$ we have
\[\msf{ChooseHead}(r, \cD_i) \in \msf{OrderUnits}(\cD_i).\]
Then
\[\mathbb{E}(W_r) =O(1)\]
if the $\dfrac{4}{\log\frac43}\log N$ rounds prior to $r$ are not publicly forked, and
\[\mathbb{E}(W_r) =O(\log N)\]
otherwise.
\end{theorem}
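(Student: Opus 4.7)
The plan is to adapt the argument of Theorem~\ref{thm:choose-head-latency} for \alef{}, substituting the refined bounds of Lemma~\ref{lemma:quick-choose-head} for those of Lemma~\ref{lemma:choose-head}. I would start from the observation that $\msf{ChooseHead}(r, \cD_i) \in \msf{OrderUnits}(\cD_i)$ forces $\msf{ChooseHead}(j, \cD_i) \neq \bot$ for every $j \leq r$, since the loop in $\msf{OrderUnits}$ must have iterated past round $r$ without breaking. Hence the event $\{W_r \geq K\}$ implies that $Z_j \geq K + r - j$ for some $j \leq r$, and a union bound yields
\[
    P(W_r \geq K) \leq \sum_{j=0}^{r} P(Z_j \geq K + r - j).
\]

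Next I would apply Lemma~\ref{lemma:quick-choose-head} to each summand, using the non-forked tail $O(M (3/4)^{M/2})$ for rounds $j$ that are not publicly forked and the forked tail $O(N^2 M (3/4)^{M/2})$ otherwise. The critical algebraic observation is that with $L := \tfrac{4}{\log(4/3)} \log N$ one has $N^2 (3/4)^{L/2} = 1$; consequently, whenever $r - j \geq L$ the extra $N^2$ factor permitted by the forked bound is fully absorbed by the additional geometric decay, and such a summand behaves up to constants like its non-forked counterpart. I would split the sum at $j^\star = r - L$: the near tail (rounds $j > j^\star$, which are non-forked by hypothesis in the first case of the theorem) contributes $\sum_{m=0}^{L-1} O\bigl((K+m)(3/4)^{(K+m)/2}\bigr) = O(K(3/4)^{K/2})$, while the far tail (rounds $j \leq j^\star$), after pulling out the factor $N^2 (3/4)^{L/2} = 1$, reduces to an absolutely convergent geometric series of the same shape.

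From this tail bound, $E[W_r] = \sum_{K \geq 1} P(W_r \geq K)$ is computed directly. In the first case the convergent geometric series sums to a constant, giving the asserted $E[W_r] = O(1)$. In the second case at least one of the recent $L$ rounds supplies an unabsorbed $O(N^2 K (3/4)^{K/2})$ summand; this single term integrates to an additional $O(\log N)$ contribution to the expectation (because the decay only balances the $N^2$ factor after waiting the $L$ rounds), while all other summands remain bounded as in the non-forked analysis, yielding $E[W_r] = O(\log N)$.

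The main obstacle that I expect is the precise bookkeeping around the threshold $L$: the constant $\tfrac{4}{\log(4/3)}$ in the statement is chosen exactly so that $N^2 (3/4)^{L/2} = 1$, so any shorter non-forked window would leave a polynomial-in-$N$ residual capable of ruining the expectation. Carefully verifying that the $N^2$ penalty from potentially forked old rounds is fully absorbed across the entire far tail, and that in the forked case the $O(\log N)$ penalty stems from a single misbehaving recent round rather than compounding across rounds, is the most delicate part of the argument.
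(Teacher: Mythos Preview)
Your proposal is correct and follows essentially the same approach as the paper: the union bound reducing $\{W_r\geq K\}$ to $\bigcup_j\{Z_j\geq K+r-j\}$, the split of the sum at $j^\star=r-L$ with $L=\tfrac{4}{\log(4/3)}\log N$, and the key identity $N^2(3/4)^{L/2}=1$ that absorbs the forked penalty on the far tail are exactly what the paper does. The only cosmetic difference is order of presentation (the paper handles the forked case first, using the crude uniform $N^2$ bound on all summands, then refines via the split for the non-forked case), and your phrasing ``a single misbehaving recent round'' is slightly loose since several recent rounds may be forked, but the geometric decay makes their combined near-tail contribution behave like one term, so the conclusion is unaffected.
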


\begin{proof}
Let $\cP_i$ be a node.
$\msf{ChooseHead}(r, \cD_i) \in \msf{OrderUnits}(\cD_i)$ is equivalent to the statement that for $j=0, 1, \ldots, r$ we have $$\msf{ChooseHead}(j, \cD_i)\neq\bot.$$ 
Therefore, if $W_r\geq K$, then for at least one round $j$ we have $Z_j\geq K+r-j$, where $Z_j$ is a random variable from Lemma~\ref{lemma:quick-choose-head}.

We start with the forked variant. Since
\[
    (W_r\geq K)\subseteq\bigcup_{j=0}^r (Z_j\geq K+r-j),
\]
then

\begin{align*}
P(W_r\geq K) &\leq \sum_{j=0}^r P(Z_j\geq K+r-j) \\
&\leq O(1)\cdot N^2\sum_{j=0}^r(K+j)\inparen{\frac34}^{(K+j)/2} \\
&= O(1)\cdot N^2\sum_{j=K}^{K+r} j\inparen{\frac34}^{j/2} \\
&\leq O(1)\cdot N^2 \sum_{j=K}^{+\infty} j\inparen{\frac34}^{j/2} \\
&= O\inparen{N^2\cdot K\cdot\inparen{\frac34}^{K/2}}.
\end{align*}
\noindent
Note that above we used the fact that the constant hidden under the big O notation is independent of the round number.

Observe now that for $K>\dfrac{4}{\log\frac43}\log N$
$$P(W_r\geq K - 4\log\frac34\log N) = O\inparen{K\cdot\inparen{\frac34}^{K/2}},$$
and since $W_r$ has values in $\mathbb{N}$, then
$$\mathbb{E}(W_r) = \sum_{K=1}^{+\infty} P(W_r\geq K) = O\inparen{\log N}.$$

To analyze the non-forked case we make a very similar argument, but we use the tighter bound for the last $\dfrac{4}{\log\frac43}\log N$ rounds.
Denote $A:=\dfrac{4}{\log\frac43}\log N$ for brevity.

\begin{align*}
&P(W_r\geq K) \leq \sum_{j=0}^r P(Z_j\geq K+r-j) \\
&\leq O(1)\cdot \inparen{N^2\sum_{j=0}^{r-A}(K+j)\inparen{\frac34}^{(K+j)/2} + \sum_{j=r-A+1}^{r}(K+j)\inparen{\frac34}^{(K+j)/2}} \\
&\leq O(1)\cdot \inparen{\sum_{j=A}^{r}(K+j)\inparen{\frac34}^{(K+j)/2} + \sum_{j=r-A+1}^{r}(K+j)\inparen{\frac34}^{(K+j)/2}} \\
&\leq O(1)\cdot \sum_{j=0}^{r}(K+j)\inparen{\frac34}^{(K+j)/2} = O(1)\cdot \sum_{j=K}^{K+r} j\inparen{\frac34}^{j/2} \\
&\leq O(1)\cdot \sum_{j=K}^{+\infty} j\inparen{\frac34}^{j/2} = O\inparen{K\cdot\inparen{\frac34}^{K/2}},
\end{align*}

and therefore
$$\mathbb{E}(W_r) = \sum_{K=1}^{+\infty} P(W_r\geq K) = O\inparen{1}.$$
\end{proof}

\medskip 
\noindent {\bf Bounding the number of forked rounds.} For completeness we prove that there can be only a small, finite number of publicly forked rounds for which the latency could be $O(\log N)$, and for the remaining rounds (the default case) the latency is $O(1)$ as in Lemma~\ref{lemma:quick-choose-head}.

Note that according to the Definition~\ref{def:locally-forked}, when a dishonest node creates a single fork and shows it to one (or a small number) of honest nodes, this does not yet make the corresponding round publicly forked.
Indeed, to increase the protocol latency the adversary must show the fork to the majority of honest nodes, and it must happen during the next $4$ rounds.
This definition captures a certain trade-off inherent in all possible forking strategies, namely if the attack is to increase the protocol latency, it cannot stay hidden from the honest parties.

\begin{definition2}\label{def:discovered-forker}
We say that node $i$ is a \emph{discovered forker}, if there are two variants of the same unit $U, U'$ created by $i$, and there is a round $r$, such that every round-$r$ unit is above $U$ and $U'$.
\end{definition2}

\noindent
Being a discovered forker has two straightforward consequences.
First, no unit of round $r$ or higher can connect directly to units created by the forker, hence the node $i$ is effectively banned from creating units of round higher than $r-2$.
Moreover, the round-$r$ head unit is also above the fork, and thus we can easily introduce some consensus-based mechanisms of punishing the forker.

\begin{lemma}\label{lemma:single-unit-spread}
Let $r\geq 0$ be a round number. Among all $2f+1$ round-$r$ units created by honest nodes there is at least one unit $U_0$ such that at least $f+1$ units of round $r+1$ created by honest nodes are above $U_0$, and every unit of round $r+2$ (ever created) is above $U_0$.
\end{lemma}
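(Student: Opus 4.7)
The plan is a two-stage double-counting argument that uses only the \textbf{Dissemination} and \textbf{Diversity} rules of Definition~\ref{def:chdag} together with the fact that honest nodes do not fork. The first stage locates $U_0$ by pigeonhole on honest round-$(r+1)$ units; the second stage leverages the diversity of parents to force every round-$(r+2)$ unit above $U_0$.

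For the first stage I would consider the bipartite ``parent'' graph between the $2f+1$ honest round-$(r+1)$ units (there are exactly this many since honest nodes create a unique unit per round) and the $2f+1$ honest round-$r$ units. By \textbf{Dissemination} each such $V$ has at least $2f+1$ round-$r$ parents, and by \textbf{Diversity} these parents are attributed to pairwise distinct creators; since only $f$ creators can be dishonest, at least $f+1$ of $V$'s parents are honest round-$r$ units. The total edge count is therefore at least $(2f+1)(f+1)$, distributed over only $2f+1$ honest round-$r$ units, so by pigeonhole some $U_0$ lies below at least $f+1$ honest round-$(r+1)$ units; let $S$ denote the set of creators of those $f+1$ units, so $|S|\geq f+1$.

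For the second stage, let $W$ be any round-$(r+2)$ unit, created by any node and possibly on a fork. By \textbf{Dissemination} $W$ has at least $2f+1$ round-$(r+1)$ parents, and by \textbf{Diversity} they are attributed to at least $2f+1$ distinct creators out of the $3f+1$ total. Since $|S|+(2f+1)>3f+1$, the creator set of $W$'s parents must intersect $S$; pick $i$ in the intersection. Because $i$ is honest it created a unique round-$(r+1)$ unit, and that unit lies above $U_0$ by the definition of $S$, so $W$ lies above $U_0$ through this parent. The only subtlety I anticipate is ensuring that diversity, rather than a raw unit-count, carries the argument: in \quickalef{} a malicious $W$ may link to many forked parent variants, but Diversity forbids two parents with the same creator, so the distinct-creator count is still $\geq 2f+1$, which is exactly what keeps the pigeonhole bound $|S|+(2f+1)>3f+1$ valid. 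Beyond that, I do not expect any further technical obstacle.
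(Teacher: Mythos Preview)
Your proof is correct and follows essentially the same approach as the paper's own proof: both stages use the same pigeonhole argument (honest round-$(r+1)$ units each have at least $f+1$ honest round-$r$ parents, yielding $U_0$; then any round-$(r+2)$ unit's $2f+1$ distinct-creator parents must hit the $f+1$ honest creators above $U_0$). Your explicit remark that \textbf{Diversity} is what makes the creator-count argument survive forked parents in \quickalef{} is a point the paper leaves implicit, but otherwise the arguments coincide.
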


\begin{proof}
Recall that units created by honest nodes cannot be forked.
Every honest unit of round $r+1$ has at least $2f+1$ parents created by distinct nodes, hence it has at least $f+1$ parents created by honest nodes.
By the pigeonhole principle, there exists at least one round-$r$ unit $U_0$, created by an honest node, such that it is a parent of at least $f+1$ units  of round $r+1$ created by honest nodes (denote these units by $\mathcal{U}$).
Now, every unit $U$ of round $r+2$ has at least $2f+1$ parents created by distinct nodes, hence must have at least one parent in $\mathcal{U}$, which implies $U\geq U_0$.
\end{proof}

\begin{lemma}\label{lemma:how-many-forked-rounds}
Whenever a a round is publicly forked then during the next $7$ rounds at least one forker is discovered. 
Consequently, if $f'$ denotes the total number of discovered forkers then at most $7f' = O(N)$ rounds of the protocol can be publicly forked.
\end{lemma}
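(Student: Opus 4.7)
The plan is to derive a newly discovered forker from any publicly forked round using Lemma~\ref{lemma:single-unit-spread} combined with two pigeonhole arguments, and then bound the number of publicly forked rounds via a charging argument. Fix a publicly forked round $r$ and let $\mathcal{F}$ denote the set of fork witnesses among the $2f+1$ honest round-$(r+4)$ units, so $|\mathcal{F}| \geq f+1$ by Definition~\ref{def:locally-forked}. I first claim every honest round-$(r+5)$ unit $X$ has a parent in $\mathcal{F}$: the Dissemination and Diversity properties force $X$ to have $\geq 2f+1$ parents from distinct creators, hence $\geq f+1$ honest round-$(r+4)$ parents; these $f+1$ parents together with the $\geq f+1$ members of $\mathcal{F}$ both sit inside the $(2f+1)$-element set of honest round-$(r+4)$ units, so by pigeonhole they intersect.

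Next I apply Lemma~\ref{lemma:single-unit-spread} to round $r+5$ to obtain an honest round-$(r+5)$ unit $V$ that lies below every round-$(r+7)$ unit. By the previous observation $V$ is above some fork witness $W \in \mathcal{F}$, and by Definition~\ref{def:fork-witness} $W$ is above two distinct variants $U, U'$ of a unit created by some node $i$. Composing $V \geq W \geq U, U'$ with the spreading property of $V$, every round-$(r+7)$ unit is above both $U$ and $U'$, so $i$ satisfies Definition~\ref{def:discovered-forker} with witness round $r+7$, which falls within the next seven rounds of $r$ as required.

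For the cardinality statement, enumerate the publicly forked rounds $r_1 < \cdots < r_m$; the preceding argument furnishes, for each $r_j$, a forker $i_j$ whose discovery condition becomes satisfied at some round $d_j \in (r_j, r_j+7]$. Here I would invoke the banning rule (stated in the footnote accompanying Section~\ref{sec:consensus_analysis}), that a forker is never picked as a parent once exposed, to argue that an already-discovered forker cannot continue to populate fresh honest fork-witness majorities in later rounds; hence the forker $i_j$ extracted from $r_j$'s argument is one whose \emph{first} discovery falls inside $(r_j, r_j+7]$. Charging each $r_j$ to $i_j$'s first-discovery round $\tau(i_j)$, and noting that a single $\tau$ absorbs at most the seven rounds in $[\tau-7, \tau-1]$, summation across the $f'$ total first-time discoveries gives $m \leq 7f' = O(N)$.

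The combinatorial core — the two-step pigeonhole together with Lemma~\ref{lemma:single-unit-spread} — is clean and short, so I expect the main obstacle to lie in the charging argument. Specifically, making precise that the forker produced by the structural argument is genuinely \emph{newly} discovered, rather than a repeat assignment to a forker already exposed by some earlier publicly forked round, requires invoking the banning rule carefully and tracking how variants of a banned forker can no longer re-emerge in fresh fork-witness majorities among honest round-$(r+4)$ units. That bookkeeping is where I anticipate most of the writing effort, whereas the structural insight itself is essentially a one-line consequence of the two applications of pigeonhole above.
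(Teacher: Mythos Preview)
Your argument is correct and matches the paper's approach. Both invoke Lemma~\ref{lemma:single-unit-spread} at round $r+5$ together with the pigeonhole you spell out to place a fork witness below every round-$(r+7)$ unit; for the count, the paper observes directly that a forker discovered via round $r$ can have no units beyond round $r+5$ (by the banning rule) and hence is the culprit of at most six consecutive publicly forked rounds---this is the same bookkeeping you flag as the main obstacle, just anchored at $r$ rather than at $\tau(i_j)$, and the paper is no more detailed about it than you are.
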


\begin{proof}
Let $U_0$ be the round-$(r+5)$ unit whose existence is proved in Lemma~\ref{lemma:single-unit-spread}.
Then $U_0$ is above at least one of $f+1$ fork witnesses, and therefore every unit of round $r+7$ is above a fork created by some node $i$ (of course, this could happen earlier).
Since no unit of round $r+7$ or higher can connect directly to units created by $i$, then no unit of round higher than $r+5$ created by node $i$ can be present in the \chdag{}.
Indeed, every such unit must have children to be added to the local copy held by nodes other than $i$, and those children units would be invalid.
Therefore the node $i$ can create forks at six rounds, from $r$ to $r+5$, and these rounds can possibly be publicly forked, hence the $6f'$ upper bound.
Note that node $i$ could have forked at some publicly forked rounds with number less than $r$, but then some other malicious node must have been discovered.
\end{proof}

\subsection{\quickalef -- The Optimistic Case}\label{subsec:synchronous}

Here, we conduct an analysis of the \quickalef{} protocol behavior in a partially synchronous scenario.
It shows that the protocol apart from being capable of operating with constant latency in an asynchronous environment (as proved in the previous subsection), but also matches the validation time achieved by state-of-the-art synchronous protocols.

Since the protocol itself does not operate on the network layer, but on \chdag{},
for sake of this subsection we need to translate partial synchrony to the language of \chdag{}s.

\begin{definition2}
A \chdag{} is synchronous at round $r$ if every unit of round $r$ produced by an honest node has units of round $r-1$ produced by all honest nodes as its parents. 
\end{definition2}
\noindent 
Note that in case of synchronous network the above condition can be enforced by restricting each honest node to wait $\Delta$ before producing the next unit.
Since the protocol is secure even with no synchrony assumptions, the delay $\Delta$ can be adaptively adjusted to cope with partial synchrony in a similar manner as, say, in Tendermint.

\begin{theorem}
If $\cD$ is a \chdag{} that is  synchronous in rounds $r+1$ and $r+2$ and $\msf{DefaultIndex}(r,\cD)$ is honest, then any unit of round $r+3$ is enough to compute the head at round $r$.
\end{theorem}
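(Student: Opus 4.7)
Let $i_0 := \msf{DefaultIndex}(r,\cD)$; by assumption $i_0$ is honest, so by fork-freeness it produces exactly one round-$r$ unit $U_0$. Fix an arbitrary round-$(r+3)$ unit $U\in\cD$ and let $\cD_U := \{V\in\cD : V\leq U\}$. The goal is to show $\msf{ChooseHead}(r,\cD_U)=U_0$. I would carry out three steps.

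\textbf{Step 1: reduce the permutation to $(U_0)$.} Since $\H(\cD_U)=r+3$, no unit of round $\geq r+6$ is present, hence $\msf{SecretBits}(i, r+5, \cD_U)=\bot$ by the second clause of Definition~\ref{def:secret-bits}. Reading off the $\msf{GeneratePermutation}$ pseudocode for \quickalef{}, as soon as any such call returns $\bot$ the procedure outputs only the list of units created by $\cP_{i_0}$ in round $r$, which by honesty of $i_0$ is the single-element list $(U_0)$. Thus $\msf{ChooseHead}$ will terminate once we establish $\msf{Decide}(U_0,\cD_U)=1$.

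\textbf{Step 2: $U_0\leq U$.} The unit $U$ has at least $2f+1$ round-$(r+2)$ parents, so at least $f+1$ of them are honest; pick any such parent $U_{r+2}$. By synchrony at round $r+2$, every honest round-$(r+2)$ unit has all $2f+1$ honest round-$(r+1)$ units as parents, so $U_{r+2}$'s parent set includes every honest round-$(r+1)$ unit. By synchrony at round $r+1$, each honest round-$(r+1)$ unit has $U_0$ as a parent (since $U_0$ is an honest round-$r$ unit). Chaining these inclusions yields $U_0 \leq U_{r+2}\leq U$, so in particular $U_0\in\cD_U$.

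\textbf{Step 3: $\msf{UnitDecide}(U_0,U_{r+2},\cD_U)=1$.} Here $\H(U_{r+2})=r+2=\H(U_0)+2$, so the first guard of $\msf{UnitDecide}$ passes. By the modified common-vote schedule of \quickalef, $\msf{CommonVote}(U_0,r+2,\cD_U)=1$ (the $d=2$ case). For every honest round-$(r+1)$ parent $V\in\down{U_{r+2}}$, the base case of $\msf{Vote}$ fires at $\H(V)=\H(U_0)+1$, producing $[U_0<V]=1$ by Step~2. Synchrony at round $r+2$ guarantees at least $2f+1$ such honest parents, giving the required supermajority and hence $\msf{UnitDecide}=1$. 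Therefore $\msf{Decide}(U_0,\cD_U)=1$, and since $U_0$ is the sole entry in the permutation, $\msf{ChooseHead}(r,\cD_U)=U_0$.

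The main obstacle, I expect, is not any single deduction but rather lining up the pseudocode corner case in Step 1 (that $\bot$ from $\msf{SecretBits}$ truncates the permutation to just the default prefix) with the synchrony chain of Steps 2--3; dishonest round-$(r+1)$ parents of $U_{r+2}$ need not vote $1$, so one has to use the synchrony hypothesis specifically to secure the $2f+1$ honest parents whose votes are forced to be $1$, rather than relying on any parent count alone.
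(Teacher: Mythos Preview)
Your proof is correct and follows essentially the same approach as the paper's: both arguments use synchrony at round $r+1$ to force all honest round-$(r+1)$ units to vote $1$ on $U_0$, then synchrony at round $r+2$ to ensure an honest round-$(r+2)$ unit has $2f+1$ such parents, triggering $\msf{UnitDecide}=1$ via the deterministic common vote at $d=2$. Your Step~1 is actually more explicit than the paper, which simply asserts that $U_0$ is ``returned as the first proposal'' by $\msf{GeneratePermutation}$ at round $r+3$ without spelling out the $\msf{SecretBits}=\bot$ truncation mechanism you identify.
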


\begin{proof}
Let $i$ be the $\msf{DefaultIndex}$ at round $r$ and $U$ be the unit created by $\cP_i$ at round $r$.
Since the \chdag{} is synchronous at round $r+1$, all honest nodes create a unit that is above $U$ and thus vote $1$ on $U$.
Further, again by synchrony, each unit $V$ of round $r+2$ created by an honest node is above at least $2f+1$ units that are above $U$, hence $V$ "sees" $2f+1$ votes on $1$ for $U$ and since the common vote at this round is $1$ we can conclude that
$$\msf{UnitDecide}(U, V, \cD)=1.$$
\noindent 
Finally, at round $r+3$ the unit $U$ is returned as the first proposal for the head by the $\msf{GeneratePermutation}$ function\footnote{It may seem that the decision could have been made at round $r+2$ already. Such an ``optimization" though breaks safety of the protocol in case of a malicious proposer. Indeed, if the proposer forks the proposed unit, some node can decide it as head in round $r+2$, while another node could miss this variant and instead decide a different variant on $1$ later during the protocol execution. This cannot happen at round $3$ by Lemma~\ref{lemma:negative-decisions}} and thus 
$$\msf{ChooseHead}(r, \cD) = U.$$
\end{proof}

\section{Transaction arrival model and a proof of Theorem~\ref{thm:AB}}\label{sec:proofs-ab}

We are interested in the strongest possible form of protocol latency, i.e., 
not only measuring time between the moment a transaction is placed in a unit and the moment when it is commonly output as a part of total ordering but instead to start the ``timer" already when the transaction is input to a node.
For this, we introduce the following model. 
\smallskip

\noindent {\bf Transaction arrival model.}
We assume that nodes continuously receive transactions as input and store them in their unbounded buffers.
We make the following assumption regarding the buffer sizes:
\begin{definition2}[Uniform buffer distribution]
The ratio between buffer sizes of two honest nodes at the time of creation of their respective units of a given \dagr{} is bounded by a constant $C_B\geq 1$.
\end{definition2}
\noindent 
While it may look as a limiting assumption, it is not hard to ensure it in a real-life scenario by requiring each transaction to be sent to a randomly chosen set of nodes. 
An adversary could violate this assumption by inputting many transactions to the system in a skewed manner (i.e., sending them only to several chosen nodes), but such attack would be expensive to conduct (since usually a small fee is paid for each transaction) and hence each node with such artificially inflated buffer could be considered Byzantine. 
\smallskip

\noindent{\bf Including transactions in units.} Assuming such a model, each honest node randomly samples $\frac{1}{N}$ transactions from its buffer to be included in each new unit. 
Additionally, it removes transactions that are already included in units of its local copy of \chdag{} from its buffer. 
\smallskip

\noindent {\bf DAG-rounds and async-rounds.}
There are two different notions of a ``round'' used throughout the paper -- \dagr{} as defined in Section~\ref{sec:asynch-as-dag} and \asynchr{}, as defined by~\cite{CR93} and in Section~\ref{sec:async_round}.
While the \dagr{} is more commonly used in the paper, it should be thought as a structural property of \chdag{}s and hence not necessarily a measure of passing time. 
Hence, to be in line with the literature, all proofs regarding latency are conducted in terms of \asynchr{}s.
The following lemma provides a useful one-sided estimate of unit production speed in terms of \asynchr{}s, roughly it says that \dagr{}s (asymptotically) progress at least as quickly as \asynchr{}s.

\begin{lemma}\label{lemma:unit-production-speed}
Each honest node initializes \chrbc{} for a newly produced unit at least once during each $5$ consecutive \asynchr{}s.
\end{lemma}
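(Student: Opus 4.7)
My plan is to track how DAG-rounds advance as a function of asynchronous rounds, using the bounded latency of \chrbc{}. Let $\tau_i(r)$ denote the async-round at which honest node $\cP_i$ initiates \chrbc{} for its round-$r$ unit. I will show that $\tau_i(r+1)-\tau_i(r)\leq 5$ for every honest $i$ and every $r\geq 0$, which immediately implies that every window of $5$ consecutive async-rounds contains at least one initiation by $\cP_i$, as required.

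The core ingredient is a bound $\Delta$ on the async-round latency of a single \chrbc{} instance: any \chrbc{} initiated in async-round $t$ by an honest node is delivered to every honest node's local copy of the \chdag{} by async-round $t+\Delta$, and conversely uses $\Delta$ async-rounds of dependency depth. For the specific \chrbc{} variant used in the paper, $\Delta$ is a small constant determined by the number of causal communication hops.

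I would then run an induction on $r$ using $T(r):=\max_{\text{honest } j}\tau_j(r)$. The base case $T(0)=0$ is immediate because round-$0$ units have no parents and every honest node initiates its round-$0$ \chrbc{} at once. For the inductive step, by async-round $T(r-1)$ every honest node has initiated \chrbc{} for round $r-1$; by the RBC latency bound, by async-round $T(r-1)+\Delta$ all $2f+1$ honest round-$(r-1)$ units have been delivered to every local copy, so every honest node can initiate \chrbc{} for round $r$. This gives both $T(r)\leq T(r-1)+\Delta$ and $\tau_i(r+1)\leq T(r)+\Delta$ for every honest $i$.

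To convert this into the per-node bound $\tau_i(r+1)-\tau_i(r)\leq 5$, I would pair the above with a matching lower bound $\tau_i(r)\geq T(r)-O(1)$. The key observation is that $\cP_i$'s round-$r$ unit requires at least $f+1$ honest parents (at most $f$ parents can come from dishonest nodes), so by the time $\cP_i$ initiates round-$r$ \chrbc{}, at least $f+1$ honest nodes must already have initiated their round-$(r-1)$ \chrbc{} and had it delivered to $\cP_i$, each contributing $\Delta$ async-rounds of dependency depth. I expect the main obstacle to be making this lower bound tight: the adversary may try to widen the spread between the fastest and slowest honest node by expediting some deliveries to $\cP_i$ while delaying others, and one must carefully exploit the causal structure of \chrbc{} to rule out unboundedly growing spreads. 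Once this step is established, combining with $\tau_i(r+1)\leq T(r)+\Delta$ yields $\tau_i(r+1)-\tau_i(r)\leq 5$.
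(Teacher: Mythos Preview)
Your overall strategy is sensible, but the proof has a genuine gap precisely where you yourself flag it: the lower bound $\tau_i(r)\geq T(r)-O(1)$. The argument you sketch --- that $\cP_i$'s round-$r$ unit needs at least $f+1$ honest parents --- does not give what you need. It only tells you that \emph{some} $f+1$ honest nodes had their round-$(r-1)$ RBC delivered to $\cP_i$; it says nothing about the remaining honest nodes, and in particular nothing about the node realizing the maximum $T(r)$. The adversary can expedite $f+1$ honest nodes' RBCs to $\cP_i$ while delaying the rest, so this dependency-depth reasoning does not, by itself, bound the spread $T(r)-\tau_i(r)$.

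What you are missing is the \emph{Fast Agreement} property of \chrbc{} (Lemma~\ref{lemma:RBC}): if one honest node locally outputs a unit in async-round $r_a$, then every honest node outputs it by async-round $r_a+2$. The paper's proof exploits this directly and never needs a lower bound on $\tau_i(r)$. It argues forward from $\tau_i(r)$: since $\cP_i$ already holds $2f+1$ round-$(r-1)$ units at that moment, Fast Agreement forces every honest node to hold them within $2$ async-rounds, so every honest node initiates \chrbc{} for round $r$ (or higher) by async-round $\tau_i(r)+2$. Then either all honest nodes initiated exactly round $r$, in which case Latency ($3$ rounds) delivers $2f+1$ round-$r$ units to $\cP_i$ by $\tau_i(r)+5$; or some honest node already initiated round $r'>r$, in which case Fast Agreement again delivers $2f+1$ round-$(r'-1)$ units to $\cP_i$ by $\tau_i(r)+4$. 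Either way $\tau_i(r+1)\leq \tau_i(r)+5$. If you patch your argument with Fast Agreement in this way, it goes through; without it, the spread-bounding step is not established.
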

\begin{proof}

Let $\cP_i$ be an honest node.
We prove that if $\cP_i$ have instantiated \chrbc{} for unit of \dagr{} $r_d$ at \asynchr{} $r_a$ then it necessarily instantiates it again (for a different unit) before \asynchr{} $r_a+5$ ends.
We show it by induction on $r_d$.

If $r_d=1$, then necessarily $r_a=1$ since every node is required to produce a unit in its first atomic step. 
Then, by Lemma~\ref{lemma:RBC}({\bf Latency}), $\cP_i$ will have at least $2f+1$ units of \dagr{} $1$ in its local copy of \chdag{} by the end of \asynchr{} $4$ and hence will be required to instantiate \chrbc{} at that time, what concludes the proof in this case.

Now assume $r_d>1$.
To produce a unit of \dagr{} $r_d$, there must have been at least $2f+1$ units of \dagr{} $r_d-1$ in $\cD_i$.
Then at least $2f+1$ instances of \chrbc{} for units of that \dagr{} have had to locally terminate for $\cP_i$ before $r_a$.
Hence, by Lemma~\ref{lemma:RBC}({\bf Fast Agreement}), we get that every honest node needs to have at least $2f+1$ units of \dagr{} $r_d-1$ in its local copy of \chdag{} before \asynchr{} $r_a+2$ ends.
Consequently, each honest node is bound to instantiate \chrbc{} for its unit of \dagr{} $r_d$ or higher before \asynchr{} $r_a+2$ ends.
We consider the following two cases.
\smallskip

\noindent Case 1. Every honest node has instantiated \chrbc{} for a unit of \dagr{} $r_d$ before \asynchr{} $r_a+2$ ended. 
Then, every honest node (including $\cP_i$) will locally output the result of this \chrbc{} instances before \dagr{} $r_d+5$ ends, by Lemma~\ref{lemma:RBC}({\bf Latency}).
Hence, $\cP_i$ is bound to instantiate \chrbc{} for a unit of \dagr{} at least $r_d+1$ before \asynchr{} $r_a+5$ ends.

\noindent Case 2. If Case 1 does not hold, then by the above, at least one honest node has instantiated \chrbc{} for a unit of \dagr{} $r'_d>r_d$ before \asynchr{} $r_a+2$ ended. 
Then, again using Lemma~\ref{lemma:RBC}({\bf Fast Agreement}), we conclude that every honest node (including $\cP_i$) needs to hold at least $2f+1$ units of \dagr{} $r'_d-1\geq r_d$ before \asynchr{} $r_a+4$ ends. 
Then $\cP_i$ is bound to instantiate \chrbc{} for a unit of \dagr{} at least $r_d+1$ before \asynchr{} $r_a+4$ ends.
\smallskip

\end{proof}

\noindent 
Next, we formulate a lemma connecting the concept of \asynchr{} and the transaction arrival model described on the beginning of this section.

\begin{lemma}\label{lemma:transaction-input}
If a transaction $tx$ is input to at least $k$ honest nodes during \asynchr{} $r_a$ then it is placed in an honest unit in \asynchr{} $r_a+ O(\frac{N}{k})$.
\end{lemma}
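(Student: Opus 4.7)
The plan is to combine two ingredients: \textbf{(i)} Lemma~\ref{lemma:unit-production-speed}, which guarantees that each honest node initiates a new unit at least once in every $5$ consecutive \asynchr{}s, and \textbf{(ii)} the random $1/N$-sampling rule for transaction inclusion, which ensures that so long as $tx$ remains in an honest node's buffer, each fresh unit that node creates contains $tx$ with probability exactly $1/N$ using fresh, locally-drawn randomness that is independent of the adversary.

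First I will fix the set $I$ of at least $k$ honest nodes that received $tx$ by the end of \asynchr{} $r_a$. I will then observe that as long as no unit containing $tx$ has yet entered the local \chdag{} of any node in $I$, the buffer-trimming rule (which only discards transactions already present in the local \chdag{}) leaves $tx$ in the buffer of every node in $I$. Let $T$ be the first \asynchr{} after $r_a$ at which some unit containing $tx$ is placed in the local \chdag{} of an honest node. Conditioned on the event $\{T > t\}$, every subsequent unit created by any $\cP_i \in I$ samples $tx$ with probability $1/N$, and these inclusion events are mutually independent across distinct unit-creation instants (since each honest creator uses its own fresh randomness).

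Next I would apply Lemma~\ref{lemma:unit-production-speed}: within any window of $5t$ consecutive \asynchr{}s starting from $r_a$, each of the $k$ nodes in $I$ creates at least $t$ units, producing at least $kt$ Bernoulli-$(1/N)$ trials on the event that $T$ has not yet fired. Hence
\[
\Pr[T - r_a > 5t] \;\leq\; (1 - 1/N)^{kt} \;\leq\; e^{-kt/N},
\]
and summing the tail,
\[
\mathbb{E}[T - r_a] \;\leq\; 5\sum_{t\geq 0} e^{-kt/N} \;=\; \frac{5}{1-e^{-k/N}} \;=\; O(N/k),
\]
where the last step uses $1-e^{-x}\geq x/2$ for $x\in(0,1]$ together with $k\leq N$, so that $k/N \leq 1$ and $5/(1-e^{-k/N}) \leq 10N/k$.

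The main obstacle — and the point I would write out most carefully — is justifying the independence claim used in the tail bound. The adversary controls message scheduling and hence \emph{when} each honest node in $I$ next creates a unit, but it has no influence over \emph{which} transactions are sampled from an honest node's buffer at that instant; the sampling randomness is drawn locally and is independent of prior history. Consequently, conditioning on $\{T > r_a + 5t\}$ does not bias subsequent sampling outcomes, making the above geometric-style tail bound rigorous and yielding the claimed $O(N/k)$ expected latency.
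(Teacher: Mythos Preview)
Your proof is correct and follows essentially the same approach as the paper's: invoke Lemma~\ref{lemma:unit-production-speed} to get at least one unit-creation per honest node every $5$ \asynchr{}s, then treat each creation as an independent Bernoulli-$(1/N)$ trial for including $tx$ and bound the resulting geometric expectation. The paper's version is terser---it computes a per-$5$-round success probability of at least $k/(2N)$ via an extended Bernoulli inequality and concludes directly---whereas you spell out the tail sum explicitly and are more careful about buffer-trimming and the independence of the sampling randomness from the adversary's scheduling.
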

\begin{proof}
By Lemma~\ref{lemma:unit-production-speed} each of the $k$ honest nodes with $tx$ in buffer is bound to produce at least one unit per $5$ \asynchr{}s.
Since the honest nodes are sampling transactions to be input to units independently, the probability of $tx$ being input to at least one unit during any quintuple of \asynchr{}s after $r_a$ is at least 
\[
1-\Big(1-\frac{1}{N}\Big)^{k} > 1-\Big(1-\frac{k}{N}+\frac{k^2}{2N^2}\Big) = \frac{k}{N} - \frac{k^2}{2N^2} > \frac{k}{2N},
\]
\noindent by extended Bernoulli's inequality, since $\frac{k}{N}\leq 1$.

Hence the expected number of \asynchr{}s before $tx$ is included in a unit is $O(\frac{N}{k})$.

\end{proof}
\smallskip
\noindent 
We now proceed to proving two technical lemmas that are required to show that, on average, every transaction input to the system is placed in a unit only a small (constant) number of times.
We begin with a simple probability bound that is then used in the second lemma.

\begin{lemma}\label{lemma:sets}
Let $n, m \in \N$ and suppose that $Q_1, Q_2, \ldots, Q_{n}$ are finite sets with $|Q_i|\geq m$ for every $i=1,2, \ldots, n$. For every $i=1,2, \ldots, n$ define a random variable $S_i \subseteq Q_i$ that is obtained by including independently at random each element of $Q_i$ with probability $\frac{1}{n}$. If $m\geq 48n$ then
$$P\inparen{\min_{T\subseteq [n], |T|=n/3} \abs{\bigcup_{i\in T} S_i} \leq \frac{m}{6}}\leq e^{-n}.$$
\end{lemma}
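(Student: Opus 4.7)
The plan is to reduce the problem to a sum of independent Bernoullis, compute its expectation carefully, and then apply Chernoff followed by a union bound over $T$. Fix any $T\subseteq [n]$ with $|T|=n/3$, set $Q_T:=\bigcup_{i\in T}Q_i$, and for $x\in Q_T$ let $d_T(x):=|\{i\in T: x\in Q_i\}|$. The key observation is that the random bits defining $S_i$ and $S_{i'}$ are independent across $i\neq i'$ and, for fixed $i$, the inclusions of different $x\in Q_i$ are also independent; hence the indicators $Y_x:=\mathbb{1}\bigl[x\in\bigcup_{i\in T}S_i\bigr]$, $x\in Q_T$, are mutually independent Bernoulli random variables, with $\Pr(Y_x=1)=1-(1-1/n)^{d_T(x)}$.

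Next I would lower-bound $\mu_T:=\mathbb{E}\bigl|\bigcup_{i\in T}S_i\bigr|=\sum_{x\in Q_T}\bigl(1-(1-1/n)^{d_T(x)}\bigr)$. The two constraints are $\sum_x d_T(x)=\sum_{i\in T}|Q_i|\geq (n/3)\cdot m$ and $d_T(x)\leq |T|=n/3$. The map $d\mapsto 1-(1-1/n)^d$ is concave, so for a fixed total $\sum_x d_T(x)$ the sum is minimized by concentrating the $d_T(x)$'s at the extreme value $n/3$; this forces at least $m$ elements to contribute $1-(1-1/n)^{n/3}\geq 1-e^{-1/3}>0.28$ each, giving $\mu_T\geq m(1-e^{-1/3})$, which is safely larger than $m/4$.

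Having $\mu_T\geq m/4$, the bad event $\bigl|\bigcup_{i\in T}S_i\bigr|\leq m/6$ is a $(1-\delta)$-deviation with $\delta\geq 1/3$ away from the mean of a sum of independent Bernoullis. The multiplicative Chernoff bound then gives $\Pr\bigl(\bigl|\bigcup_{i\in T}S_i\bigr|\leq m/6\bigr)\leq \exp(-c\,m)$ for an absolute constant $c>0$. A union bound over the $\binom{n}{n/3}\leq 2^n$ choices of $T$ yields
\[
\Pr\Bigl(\min_{T}\bigl|\textstyle\bigcup_{i\in T}S_i\bigr|\leq m/6\Bigr)\leq 2^n\exp(-c\,m).
\]
Choosing the hypothesis $m\geq 48n$ with enough slack makes $2^n\exp(-c\,m)\leq e^{-n}$, which is the claim.

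The only genuinely delicate step is the expectation bound in the second paragraph: one must notice that the $d_T(x)$'s have a hard cap at $|T|=n/3$ (otherwise a trivial adversarial configuration would make $\mu_T$ vanish) and then invoke concavity correctly to push all weight to $d_T(x)=n/3$. Everything else is a routine combination of Chernoff and the standard binomial estimate, with the constant $48$ chosen so that the Chernoff exponent beats $n\log 2$ from the union bound by an additive $n$.
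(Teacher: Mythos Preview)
Your approach is exactly the paper's: fix $T$, write $\bigl|\bigcup_{i\in T}S_i\bigr|$ as a sum of independent Bernoulli indicators, lower–bound the mean, apply Chernoff, and union–bound over the $\binom{n}{n/3}\le 2^n$ choices of $T$. In fact your treatment of the expectation is more careful than the paper's. The paper simply asserts $\mathbb{E}\bigl|\bigcup_{i\in T}S_i\bigr|\ge m/3$ ``from linearity of expectation''; your concavity argument (push all mass in the degrees $d_T(x)$ to the cap $n/3$) is the right way to justify this step, and it yields the sharp worst case $\mu_T\ge m\bigl(1-(1-1/n)^{n/3}\bigr)\ge m(1-e^{-1/3})\approx 0.2835\,m$. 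Note that this is \emph{strictly below} $m/3$ (take all $Q_i$ equal and let $n\to\infty$), so your bound is the correct one and the paper's $m/3$ is slightly optimistic.

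The one place where your write–up does not close is the final numerical check. With $\mu_T\ge m/4$ and the standard bound $P\bigl(X\le(1-\delta)\mu\bigr)\le e^{-\delta^2\mu/2}$, the event $\{X\le m/6\}$ has $\delta=1/3$, giving $e^{-m/72}$; after the union bound you need $m\ge 72(1+\ln 2)\,n\approx 122\,n$, and even with the sharper $\mu_T\ge 0.2835\,m$ you still need roughly $m\gtrsim 70\,n$. So ``$m\ge 48n$ with enough slack'' is not actually enough for $e^{-n}$ with your (correct) expectation bound; the paper reaches $e^{-m/24}\le e^{-2n}$ only by using the unsubstantiated $m/3$. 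Since the lemma is used downstream only up to constants, the honest fix is to either weaken the conclusion to $e^{-\Omega(n)}$ or enlarge the hypothesis constant; your argument then goes through verbatim.
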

\begin{proof}
For convenience denote $Q=\bigcup_{i\in [n]}Q_i$.
Let us fix any subset $T\subseteq [n]$ of size $\nfrac{n}{3}$ and denote $S^T =\bigcup_{i\in T} S_i$. For every element $q\in Q$ define a random variable $Y_q \in \{0,1\}$ that is equal to $1$ if $q\in S^T$ and $0$ otherwise. Under this notation, we have
$$\abs{S^T} = \sum_{q\in Q} Y_q.$$
We will apply Chernoff Bound to upper bound the probability of $$\sum_{q\in Q}Y_q\leq \frac{m}{6}.$$
For this, note first that $\{Y_q\}_{q\in Q}$ are independent.
Moreover, we have $\expect{\sum_{q\in Q}Y_q}\geq \frac{m}{3}$ from linearity of expectation. 
Hence from the Chernoff Bound we derive
$$P\inparen{\abs{S^T}\leq \frac{m}{6}}=P\inparen{\sum_{q\in Q} Y_q \leq \frac{m}{6}}\leq e^{-\frac{m}{24}}\leq e^{-2n}.$$
Finally, by taking a union bound over all (at most $2^n$ many) sets $T$ of size $\nfrac{n}{3}$ we obtain
$$P\inparen{\min_{T\subseteq [n], |T|=n/3} \abs{S^T}\leq \frac{m}{6}}\leq 2^n \cdot e^{-2n} = e^{-n}.$$

\end{proof}

\begin{lemma}\label{lemma:transaction-units-relation}
Let $T_R$ be the number of, not necessarily different, transactions (i.e., counted with repetitions) input to units during the first $R=\poly(N)$ rounds of the protocol, and $T'_R$ be the number of pairwise different transactions in these units, then $T_R = O(T'_R + RN)$ except with probability $e^{-\Omega(N)}$.
\end{lemma}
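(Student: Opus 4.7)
For each DAG round $r$ write $\msf{inc}_r$ for the total number of transactions placed in round-$r$ units across honest nodes, and $\msf{dist}_r$ for the number of \emph{distinct} such transactions, so $T_R = \sum_r \msf{inc}_r$. The plan is to bound $T_R$ in two steps: first, control $\msf{inc}_r$ per round by $\msf{dist}_r$ plus a small $O(N)$ overhead; second, bound $\sum_r \msf{dist}_r$ by $O(T'_R)$ using the fact that each transaction appears in units of only a bounded window of rounds. Combining the two yields the claim.

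The second step is the short, deterministic part. If a transaction $tx$ is first placed in some honest unit at \asynchr{} $t_0$, then by reliability of \chrbc{} every honest node holds that unit in its local \chdag{} by \asynchr{} $t_0 + O(1)$, and per the protocol each honest node erases $tx$ from its buffer as soon as it appears in any unit of the local \chdag{}. Combined with Lemma~\ref{lemma:unit-production-speed} -- each honest node creates a new unit at least once per five \asynchr{}s -- no honest node's \dagr{} can advance by more than a constant $d$ during this window, so $tx$ can be input into honest units of at most $d = O(1)$ distinct \dagr{}s. Summing over transactions gives $\sum_r \msf{dist}_r \leq d \cdot T'_R$ deterministically.

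For the first step I would case-split on the minimum honest buffer size $m_r$. By the Uniform Buffer Distribution assumption every honest buffer has size in $[m_r, C_B m_r]$, and since each buffered transaction is independently placed in the new unit with probability $\nfrac{1}{N}$, a Chernoff bound gives $\msf{inc}_r = O(m_r + N)$ except with probability $e^{-\Omega(N)}$. Rounds with $m_r \leq 48 N$ thus contribute $\msf{inc}_r = O(N)$, which after summing is absorbed into the $O(RN)$ term. For rounds with $m_r > 48N$ I would invoke an adaptation of Lemma~\ref{lemma:sets} to show $\msf{dist}_r \geq \Omega(m_r)$ except with probability $e^{-\Omega(N)}$; combined with the Chernoff upper bound this yields $\msf{inc}_r = O(\msf{dist}_r)$ in this regime. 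Putting both parts together,
\[
T_R \;=\; \sum_r \msf{inc}_r \;\leq\; O(RN) + O\!\left(\sum_r \msf{dist}_r\right) \;\leq\; O(RN) + O(d \cdot T'_R) \;=\; O(T'_R + RN),
\]
and a union bound over the $R = \poly(N)$ rounds keeps the total failure probability at $e^{-\Omega(N)}$.

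The main technical obstacle I expect is aligning Lemma~\ref{lemma:sets} with the protocol: the lemma's sampling probability is $\nfrac{1}{n}$ with $n$ the number of sets, whereas the protocol samples each buffered transaction with probability $\nfrac{1}{N}$ and there are only $n = 2f+1 < N$ honest buffers. The cleanest way around this is to pad with phantom, pairwise disjoint buffers of size $m_r$ disjoint from all real transactions, so that Lemma~\ref{lemma:sets} applies verbatim with $N$ sets and the lower bound on $\msf{dist}_r$ is read off by restricting the guaranteed $\Omega(m_r)$ distinct elements to real transactions; alternatively, one can re-run the Chernoff step in the lemma's proof with $p = \nfrac{1}{N}$ and $n = 2f+1$, which only changes numerical constants because $np = (2f+1)/N \geq \nfrac{2}{3}$ stays bounded away from zero. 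Beyond this bookkeeping, the per-round tails are standard Chernoff and the propagation-delay argument is already isolated above.
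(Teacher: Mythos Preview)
Your second step contains a genuine gap. You claim that ``no honest node's \dagr{} can advance by more than a constant $d$ during this window'' of $O(1)$ \asynchr{}s, citing Lemma~\ref{lemma:unit-production-speed}. But that lemma says each honest node creates \emph{at least} one unit per five \asynchr{}s --- it is a \emph{lower} bound on how fast \dagr{}s progress relative to \asynchr{}s, not an upper bound. In the asynchronous model the adversary may deliver messages arbitrarily quickly, so arbitrarily many \chrbc{} instances (and hence arbitrarily many \dagr{}s) can complete within a single \asynchr{}. Consequently, after $tx$ first lands in some honest unit $U$ at \dagr{} $r_0$, another honest node $j$ may create units at \dagr{}s $r_0, r_0+1, \ldots, r_0+k$ for unbounded $k$ before receiving $U$, and $tx$ may be sampled into several of those units (by several different nodes, so it lands in many distinct rounds). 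The bound $\sum_r \msf{dist}_r \leq d\cdot T'_R$ with $d = O(1)$ therefore does not follow from your argument.

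The paper's proof avoids any detour through \asynchr{}s and works entirely in \dagr{}s via the Spread Lemma (Lemma~\ref{lemma:spread}): at each round $r$ there is a set $\mathcal{S}_r$ of $2f+1$ units lying below \emph{every} unit of round $r+3$, so for any honest subset $\mathcal{H}_r\subseteq\mathcal{S}_r$ of size $\geq f+1$, the transaction sets $\mathcal{T}_r$ carried by $\mathcal{H}_r$ are pairwise disjoint across rounds $\geq 3$ apart (the creator of any unit in $\mathcal{H}_{r+3}$ necessarily had all of $\mathcal{H}_r$ in its local \chdag{} and hence already purged those transactions from its buffer). Then Lemma~\ref{lemma:sets} gives $|\mathcal{T}_r|\geq \Omega(B_r)-O(N)$ w.h.p., so $T'_R \geq \tfrac{1}{3}\sum_r|\mathcal{T}_r| \geq \Omega\bigl(\sum_r B_r\bigr) - O(RN)$, and since $T_R = O\bigl(\sum_r B_r\bigr)$ the conclusion follows. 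Your first step (the per-round Chernoff bound $\msf{inc}_r = O(m_r + N)$) and your remark about the $n$-versus-$N$ mismatch in Lemma~\ref{lemma:sets} are both fine; the fix is to replace your propagation argument by the structural Spread Lemma and compare $\sum_r B_r$ directly to $T'_R$ rather than going through $\sum_r \msf{dist}_r$.
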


\begin{proof}
For every round $r$ denote by $B_r$ the common (up to a constant) size of this round's buffer of any honest node.
Since every unit contains at most $O\inparen{\nfrac{B_r}{N}}$ transactions in round $r$, we have that the total number of transactions input in units this round is $O(B_r)$ and thus $T_R = O\inparen{\sum_{r=0}^R B_r}.$

From Lemma~\ref{lemma:spread} at every round $r$, there exists a set $\mathcal{S}_r$ of units at round $r$ such that every unit of round at least $r+3$ is above every unit in $\mathcal{S}_r$.
Let also $\mathcal{H}_r\subseteq \mathcal{S}_r$ be any subset of at least $f+1$ honest units and $\mathcal{T}_r$ be the set of all transactions in $\mathcal{H}_r$.
Since for any $r_1, r_2$ such that $r_2\geq r_1+3$ we have $\mathcal{H}_{r_1}\geq \mathcal{H}_{r_2}$ and thus $\mathcal{T}_{r_1} \cap \mathcal{T}_{r_2} = \emptyset$.
In particular
$$T_R'\geq \abs{\bigcup_{r=0}^R \mathcal{T}_r}\geq \frac{1}{3} \sum_{r=0}^R \abs{\mathcal{T}_r}.$$
From Lemma~\ref{lemma:sets} we obtain that unless $B_r < 48N$ it holds that $\abs{\mathcal{T}_r}\geq \frac{B_r}{6}$ with probability $1-e^{-N}$. Thus finally, we obtain that with high probability (no matter how large $B_r$ is)
$\abs{\mathcal{T}_r} \geq \frac{B_r}{6} - 8N,$
and consequently
$$T'_R \geq \frac{1}{3} \sum_{r=0}^R \abs{\mathcal{T}_r} \geq \frac{1}{18}\sum_{r=0}^R B_r - \frac{8}{3}RN,$$
and therefore
$$T_R =O\inparen{\sum_{r=0}^R B_r} =  O(T'_R + RN).$$
\end{proof}

\bigskip

\begin{proof}[Proof of Theorem~\ref{thm:AB}]$ $

\noindent {\bf Total Order} and {\bf Agreement.}
It is enough to show that all the honest nodes produce a common total order of units in \chdag{}.
Indeed, the transactions within units can be ordered in an arbitrary yet deterministic way, and then a consistent total order of units trivially yields a consistent total order of transactions.
The total order is deterministically computed based on the function $\msf{ChooseHead}$ applied to every round, which is consistent by Lemma~\ref{lemma:choose-head}({\bf Agreement}).
\ourskip

\noindent {\bf Censorship Resilience} and {\bf Latency.} By Lemma~\ref{lemma:transaction-input} we obtain that each transaction input to $k$ honest nodes is placed in a unit $U$ created by some honest node $\cP_i$ after $O(\nfrac{N}{k})$ \asynchr{}s.

By Lemma~\ref{lemma:RBC}({\bf Latency}) each honest node receives $U$ via \chrbc{} within $3$ \asynchr{}s from its creation and hence will append it as a parent of its next unit, which will be created within $5$ \asynchr{}s by Lemma~\ref{lemma:unit-production-speed} and added to \chdag{}s of all honest nodes within next $3$ \asynchr{}s, again by Lemma~\ref{lemma:RBC}({\bf Latency}).
Thus, after $O(\frac{N}{k})$, the unit $U$ containing $tx$ is below all maximal units created by honest nodes.
Denote by $r_d$ the maximum $\dagr{}$ over all honest nodes at that moment and by $r_a$ the currenct \asynchr{}.

Since each unit has at least $2f+1$ parents, it follows that every unit of round $r_d+1$ (even created by an dishonest node) is above $U$.
For this reason, the head of round $r_d+1$ is also above $U$ and in particular the unit $U$ is ordered at latest by the time when the head of this round is chosen.
Note that by Lemma~\ref{lemma:choose-head}({\bf latency}), the head at round $r_d+1$ is determined after only $O(1)$ \dagr{}s.
Combining this fact with Lemma~\ref{lemma:unit-production-speed} it follows that after $5\cdot O(1)$ \asynchr{} each honest node can determine the head from its version of the poset.
We obtain that the total latency is bounded by $$O(\nfrac{N}{k}) + O(1)\cdot O(1)= O(\nfrac{N}{k}).$$
\ourskip

\noindent {\bf Communication Complexity.} Let $t_r$ be the number of transactions that have been included in honest units at level $r$.
By Lemma~\ref{lemma:RBC}({\bf Communication Complexity}) we obtain that the total communication complexity utilized by instances of \chrbc{} over any sequence of $R$ \dagr{}s is
\[
    O\inparen{N\sum_{r=1}^Rt_r + RN^2\log N} = O(T_R+ RN^2\log N),
\]
where $T_R$ is the total number of transactions included in units at all rounds $r=0,1, \ldots, R$. From Lemma~\ref{lemma:transaction-units-relation} we conclude that $T_R = O(T'_R + RN)$ where $T'_R$ is the number of distinct transactions in these units. The final bound on communication complexity is
$$O(T'_R+ RN^2\log N).$$

\smallskip

\end{proof}

\section{Randomness Beacon}\label{ssec:proofs-coin}
The main goals of this section are to show that \msf{ABFT}-\msf{Beacon} implements a Randomness Beacon (thus providing a proof of Theorem~\ref{thm:strongcoin}), and to prove correctness of two implementations of $\msf{SecretBits}$ that are provided in Section~\ref{sec:global_coin}, i.e.,  prove Lemma~\ref{lemma:secretbits} and Lemma~\ref{lemma:secretbits2}.

Let us now briefly summarize what is already clear and what is left to prove regarding the randomness beacon.
\begin{enumerate}
    \item Every node $\cP_i$ reliably broadcasts its key box $KB_i$ (in $U[i;0]$) that contains a commitment to a polynomial $A_i$ (of degree at most $f$) and encrypted tossing keys for all the nodes. Some of the tossing keys might be incorrect if $\cP_i$ is dishonest.
    \item For every $i\in [N]$ the unit $U[i;6]$ defines a set $T_i \subseteq [N]$ and the $i$th $\msf{MultiCoin}$ is defined as a a threshold signature scheme as if it were generated using the polynomial
    $$B_i(x) = \sum_{j\in T_i} A_j(x).$$
    \item A threshold signature with respect to $\msf{MutliCoin}_i$, i.e., $m^{B_i(0)}$ for some $m\in G$ can be generated by simply multiplying together the threshold signatures $m^{A_j(0)}$ for $j\in T_i$. 
    \item Thus, on a high level, what remains to prove is that:
    \begin{itemize}
        \item For every $i\in [N]$ such that $U[i;6]$ is available, it is possible for the nodes to commonly generate threshold signatures with respect to key set $KS_j$ for every $j\in T_i$.
        \item For every $i\in [N]$ the key $B_i(0)$ remains secret, or more precisely, the adversary cannot forge signatures $m^{B_i(0)}$ for a randomly chosen $m\in G$.
    \end{itemize}
\end{enumerate}
\noindent Below, we proceed to formalizing and proving these properties.

\noindent{\bf Properties of Key Sets.} 
From now on we will use the abstraction of Key Sets to talk about pairs $(TK, VK)$ where 
\begin{align*}
TK& = \inparen{A(1), A(2), \ldots, A(N)}\\
    VK& = \inparen{g^{A(1)}, g^{A(2)}, \ldots, g^{A(N)}}
\end{align*} for some degree $\leq f$ polynomial $A\in \Z_q[x]$.
In the protocol, the vector $VK$ is always publicly known (computed from a commitment or several commitments) while $TK$ is distributed among the nodes. It can be though distributed dishonestly, in which case it might be that several (possibly all) honest nodes are not familiar with their corresponding tossing key.
In particular, every Key Box defines a key set and every MultiCoin defines a key set.

In the definition below we formalize what does it mean for a key set to be "usable" for generating signatures (and thus really generating randomness)
\begin{definition2}[Usable Key Set]
We say that a Key Set $KS=(TK, VK)$ is \emph{usable} with a set of key holders $T\subseteq  [N]$ if:
\begin{enumerate}
    \item every honest node is familiar with $VK$ and $T$,
    \item $|T|\geq 2f+1$,
    \item for every honest node $\cP_i$, if $i\in T$, then $i$ holds $tk_i$.
\end{enumerate}
\end{definition2}
\noindent 
The next lemma demonstrates that indeed a usable key set can be used to generate signatures.

\begin{lemma}\label{lemma:can-toss-approved-key-set}
Suppose that $(TK, VK)$ is a usable key set, then if the nodes being key holders include the shares for a nonce $m$ in the \chdag{} at round $r$, then every honest node can recover the corresponding signature $\sigma_m$ at round $r+1$.

\end{lemma}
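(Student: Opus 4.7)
The plan is to combine the Dissemination property of \chdag{}s with a simple counting argument about membership in $T$, and then invoke the existing $\msf{ThresholdSignatures}$ interpolation to produce $\sigma_m$ from any $f+1$ valid shares.

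First I would fix any round-$(r+1)$ unit $U$ in the local \chdag{} of an honest node. By Definition~\ref{def:chdag} (Dissemination and Diversity), $U$ has at least $2f+1$ round-$r$ parents created by pairwise distinct nodes. Since $|T|\geq 2f+1$ and there are $N=3f+1$ nodes in total, the set of creators \emph{outside} $T$ has size at most $N-|T|\leq f$; hence at least $(2f+1)-f = f+1$ of the round-$r$ parents of $U$ are created by nodes in $T$. Call this set of parents $\mathcal{P}_T$.

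Next I would argue that every unit in $\mathcal{P}_T$ contains a valid share for the nonce $m$. For parents created by honest members of $T$ this is immediate: by usability, such a node $\cP_i$ holds $tk_i$, so by hypothesis it includes $\msf{CreateShare}(m,tk_i)$, which is trivially valid. For parents created by dishonest members of $T$, the protocol's $\msf{Valid}$ predicate (which every node applies before inserting a unit into its local copy of the \chdag{}) requires that every key holder's round-$r$ unit contain a share that passes $\msf{VerifyShare}(m,\cdot,i,VK)$; this check is possible for every honest node because $VK$ is publicly known by usability. A unit by a key holder whose share does not verify is therefore not in the \chdag{} in the first place, so all units in $\mathcal{P}_T$ carry valid shares.

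Finally, I would collect any $f+1$ such shares and feed them to $\msf{GenerateSignature}(m,S,VK)$ from the $\msf{ThresholdSignatures}$ box. Since the key set comes from a polynomial $A$ of degree $\leq f$, Lagrange interpolation recovers $A(0)$ in the exponent, producing $\sigma_m = \wt{m}^{A(0)}$. Because every honest node sees $U$ (and hence its parents) once its local DAG advances to round $r+1$, and because all verification uses only the publicly known $VK$, the same $\sigma_m$ is computed by each honest node, completing the proof.

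The one nontrivial point is step two, namely that dishonest key holders cannot evade inclusion of valid shares without having their units discarded. This is where the argument leans on the fact that share validity is locally checkable from $VK$ alone and is folded into the $\msf{Valid}$ predicate underlying the \chdag{} construction; once that is granted, the rest is just the counting estimate $|T|\geq 2f+1$ plus standard threshold-signature interpolation.
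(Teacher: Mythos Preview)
Your proposal is correct and follows essentially the same approach as the paper: the counting argument that at least $f+1$ of the $\geq 2f+1$ round-$r$ parents lie in $T$ (since $|T|\geq 2f+1$ and $N=3f+1$), combined with share verifiability via $VK$ and standard Lagrange interpolation, is exactly what the paper does. You spell out more explicitly why dishonest key holders cannot slip in bad shares (via the $\msf{Valid}$ predicate), whereas the paper leaves this implicit in the phrase ``contain appropriate shares'' together with the observation that shares are locally verifiable; otherwise the arguments coincide.
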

\begin{proof}
Since the key set is usable, we know that there is exactly one underlying polynomial $A\in \Z_q[x]$ of degree $f$ that was used to generate $(TK, VK)$, hence every unit at round $r$ created by a share dealer $i\in T$ can be verified using $\msf{VerifyShare}$ and hence it follows that any set of $f+1$ shares decodes the same value $\sigma_m = m^{A(0)}$ using $\msf{ExtractShares}$ (see also~\cite{Boldyreva03}).
It remains to observe that among a set of (at least $2f+1$) round-$r$ parents of a round-$(r+1)$ unit, at least $f+1$ are created by nodes in $T$ and thus contain appropriate shares.
\end{proof}
\noindent

\ourskip

\noindent{\bf Generating Signatures using MultiCoins.} 
We proceed to proving that the nodes can generate threshold signatures with respect to MultiCoins.
The following structural lemma says, intuitively, that at every round $r$ there is a relatively large set of units that are below {\bf every} unit of round $r+3$.
\begin{lemma}[Spread]\label{lemma:spread}
Assume that an honest node $\cP_i$ has just created a unit $U$ of round $r+3$. 
At this moment, a set $\mathcal{S}_r$ of $2f+1$ units of round $r$ is determined such that for every unit $V$ of round $r+3$ and for any unit $W\in\mathcal{S}_r$, we have $W\leq V$.
\end{lemma}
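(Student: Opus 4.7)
The plan is to locate a single round-$(r+1)$ unit $U_0$ that is below every round-$(r+3)$ unit; then $\mathcal{S}_r$ can be taken to be the round-$r$ parents of $U_0$, of which there are at least $2f+1$ by the Dissemination rule.

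First I would look at the round-$(r+2)$ parents of $U$, call this set $P$, which has $|P| \geq 2f+1$. Every $Q \in P$ has at least $2f+1$ round-$(r+1)$ parents, all with distinct creators; moreover, since we are in the fork-free setting of \msf{DAG}-\msf{Grow}, there are at most $N=3f+1$ distinct round-$(r+1)$ units. Double-counting the parent edges between $P$ and round-$(r+1)$ units gives at least $(2f+1)^2$ edges spread over at most $3f+1$ round-$(r+1)$ units, so by pigeonhole there is some round-$(r+1)$ unit $U_0$ that is a parent of at least
\[
\left\lceil \frac{(2f+1)^2}{3f+1} \right\rceil \geq f+1
\]
units of $P$. (The inequality $(2f+1)^2 \geq (3f+1)(f+1)$ follows from $f^2 \geq 0$.) This is the step I would take the most care over, since the precise choice of the pigeonhole target is what makes the subsequent quorum-intersection argument go through.

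Next I would show that this $U_0$ is below every round-$(r+3)$ unit $V$, not only below $U$. By construction, at least $f+1$ round-$(r+2)$ units (namely members of $P$) are above $U_0$. Any round-$(r+3)$ unit $V$ has at least $2f+1$ round-$(r+2)$ parents, all distinct creators. Since $(f+1) + (2f+1) > 3f+1$, these two families of round-$(r+2)$ units must share at least one common element, and that element is simultaneously above $U_0$ and a parent of $V$. Hence $U_0 \leq V$.

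Finally, set $\mathcal{S}_r$ to be the round-$r$ parents of $U_0$. By the Dissemination condition in Definition~\ref{def:chdag}, $|\mathcal{S}_r| \geq 2f+1$, and every $W \in \mathcal{S}_r$ satisfies $W \leq U_0 \leq V$ for every round-$(r+3)$ unit $V$, as required. The main subtlety throughout is to make sure every pigeonhole count uses only information that is already committed to when $\cP_i$ emits $U$ (so that $\mathcal{S}_r$ is indeed ``determined at this moment''), and to exploit fork-freeness so that the universe of round-$(r+1)$ and round-$(r+2)$ units is bounded by $3f+1$; both requirements are met because $U$'s parent and grandparent sets are fixed at the time $U$ is created and because the underlying \chdag{} protocol is fork-free.
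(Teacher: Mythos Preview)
Your proof is correct and follows essentially the same three-step approach as the paper: locate a round-$(r+1)$ unit $U_0$ that is a parent of at least $f+1$ members of $\down{U}$ via pigeonhole, use quorum intersection to show $U_0$ lies below every round-$(r+3)$ unit, and set $\mathcal{S}_r := \down{U_0}$. The only difference is cosmetic: the paper first restricts to a set $\mathcal{T}$ of $2f+1$ round-$(r+1)$ units created by the same nodes as $\down{U}$ and counts $(2f+1)(f+1)$ incidences over $2f+1$ targets, whereas you count $(2f+1)^2$ incidences over all $3f+1$ round-$(r+1)$ units directly; both pigeonhole computations yield the same threshold $f+1$, and your version has the minor advantage of not needing to argue that $\mathcal{T}$ is well defined.
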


\begin{proof}
Let $K$ be a set of $2f+1$ nodes that created units in $\down{U}$. 
Let $\mathcal{T}$ be a set of $2f+1$ units of round $r+1$ created by nodes in $K$. 
Every unit in $\down{U}$ is above $2f+1$ units of round $r+1$, hence it is above at least $f+1$ units in $\mathcal{T}$ (the remaining $f$ units may be created by nodes outside of $K$). 
By the pigeonhole hole principle, there exists a unit $U_0\in\mathcal{T}$ such that at least $f+1$ units in $\down{U}$ are above it. 
Thus, every subset of $2f+1$ units of round $r+2$ must include at least one unit above $U_0$, so every valid unit $V$ of round $r+3$ will have at least one parent above $U'$, thus $V\geq U'$. Finally, choose $\mathcal{S}_r := \down{V}$.
\end{proof}

\noindent 
We now proceed to proving a structural lemma that confirms several claims that were made in Subsection~\ref{ssec:informal-setup} without proofs. We follow the notation of Subsection~\ref{ssec:informal-setup}.

\begin{lemma}\label{lemma:multicoin-can-be-tossed}
Let $T_i$ for $i\in [N]$ be the sets of indices of Key Sets forming the $i$th MultiCoin, then
\begin{enumerate}
    \item For every $i$ such that $U[i;6]$ exists, $T_i$ has at least $f+1$ elements.
    \item For every $i\in [N]$ and every $j\in T_i$, the Key Set $KS_j$ is usable.
    \item For every unit $V$ of round $r+1$ with $r\geq 6$, there exists a set of $f+1$ units in $\down{V}$ such that they contain all shares required to reconstruct all Key Sets that are below $V$.
\end{enumerate}
\end{lemma}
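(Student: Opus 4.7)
The plan is to derive all three claims from the Dissemination rule of Definition~\ref{def:chdag}, the Spread property (Lemma~\ref{lemma:spread}), and the determinism of the encryption scheme used inside each key box, which prevents any node from publishing a valid counter-proof against an honestly generated $KB$.

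For Part~(1), I would begin by applying Lemma~\ref{lemma:spread} with $r=0$ to $U[i;6]$, producing a set $\mathcal{S}_0$ of $2f+1$ round-$0$ units that lie below every round-$3$ unit in $\cD$. At least $f+1$ members of $\mathcal{S}_0$ have honest creators; for each such creator $k$ I have $U[k;0]\leq U[i;6]$ immediately, and for every $j$ with $U[j;3]\leq U[i;6]$ the spread forces $U[k;0]\leq U[j;3]$, so $j$ held $KB_k$ when casting its round-$3$ vote. An honest $j$ then records $\msf{VerKey}(KB_k,j)=1$; a dishonest $j$ cannot record anything else, because any counter-proof must publish $p=\dec_{k\to j}(e_{k,j})$ with $g^p\neq vk_{k,j}$, whereas determinism of the encryption together with correctness of $KB_k$ forces $p=tk_{k,j}$ and $g^{tk_{k,j}}=vk_{k,j}$, so the unit is dropped by $\msf{Valid}$. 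Hence every honest $k$ in $\mathcal{S}_0$ lies in $T_i$, giving $|T_i|\geq f+1$.

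For Part~(2), I would fix $j\in T_i$. The commitment $C_j$ lives inside $U[j;0]\leq U[i;6]$, so $VK_j$ and $T_i$ are deterministically computable from $\cD$ below $U[i;6]$ by every honest node. Defining the key holders $T^{(j)}:=\{\,l\in[N]:U[l;3]\leq U[i;6]\,\}$ and applying Lemma~\ref{lemma:spread} to $U[i;6]$ with $r=3$ gives $|T^{(j)}|\geq 2f+1$. By the defining clause of $T_i$, every $l\in T^{(j)}$ voted $\msf{VerKey}(KB_j,l)=1$, so every honest member of $T^{(j)}$ has explicitly verified $g^{tk_{j,l}}=vk_{j,l}$ and therefore holds the correct tossing key, satisfying all three conditions of usability.

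For Part~(3), I would let $V$ be a round-$(r+1)$ unit with $r\geq 6$ and fix a key set $KS_j$ to be reconstructed from $V$, i.e., $j\in T_i$ for some $i$ with $U[i;6]\leq V$. Dissemination gives $|\mathcal{P}|\geq 2f+1$ round-$r$ parents of $V$; combined with Part~(2) and the size-$3f+1$ universe, a counting argument produces
\[
|\mathcal{P}\cap T^{(j)}|\;\geq\;|\mathcal{P}|+|T^{(j)}|-N\;\geq\;f+1.
\]
Every $l\in\mathcal{P}\cap T^{(j)}$ whose round-$r$ unit has $U[i;6]$ as an ancestor is obligated by the protocol rule to include a valid share of $KS_j$ for the nonce, or else $U[l;r]$ would be rejected by $\msf{Valid}$; Lagrange interpolation (Lemma~\ref{lemma:can-toss-approved-key-set}) then reconstructs $m^{A_j(0)}$ from any $f+1$ such shares. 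The main obstacle is precisely verifying that at least $f+1$ members of $\mathcal{P}\cap T^{(j)}$ have $U[i;6]$ below their round-$r$ unit, and then consolidating the per-$j$ witnesses into a single set of $f+1$ units serving all $KS_j$ simultaneously: for honest parents this is forced by the growth rule once $r$ is sufficiently large, while the common-witness statement will require restricting to honest parents whose round-$3$ unit lies below $U[i;6]$ for every relevant $i$, which is where the most careful bookkeeping is needed.
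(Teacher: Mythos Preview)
Your arguments for Parts~(1) and~(2) are essentially the same as the paper's, and are correct. The genuine gap is in Part~(3), where you correctly name the two obstacles but do not close them.

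The key move you are missing is this: in Part~(2) you define the key-holder set $T^{(j)}=\{l:U[l;3]\leq U[i;6]\}$, which depends on the particular $i$ witnessing $j\in T_i$. The paper instead invokes the Spread Lemma at $r=3$ once, obtaining a single set $\mathcal{S}_3$ of $2f+1$ round-$3$ units that lie below \emph{every} round-$6$ unit simultaneously. Because each $\mathcal{S}_3$-unit is below $U[i;6]$ for \emph{all} $i$, its creator must have voted $1$ on $KB_j$ for every $j\in T:=\bigcup_i T_i$. Thus the creators of $\mathcal{S}_3$ form a \emph{universal} key-holder set of size $\geq 2f+1$ for all relevant key sets at once. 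Now Part~(3) is a one-line counting argument: intersecting the creators of $\mathcal{S}_3$ with the creators of $\down{V}$ yields $\geq (2f+1)+(2f+1)-(3f+1)=f+1$ nodes, and the corresponding $f+1$ units in $\down{V}$ carry valid shares for \emph{every} $KS_j$ with $j\in T$ simultaneously. This dissolves your consolidation obstacle outright---no per-$j$ bookkeeping is needed.

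Your second worry, that the share obligation is triggered only when $U[i;6]$ is an ancestor of the round-$r$ unit, is also handled by this global choice: the protocol's validity predicate obligates a node that voted $1$ on $KB_j$ at round $3$ to include a share for $KS_j$ whenever one is requested at round $r$; since creators of $\mathcal{S}_3$ voted $1$ on all of $T$, their round-$r$ units must carry the shares or be rejected. Your proposed fallback (``for honest parents this is forced by the growth rule once $r$ is sufficiently large'') does not work as stated, since the lemma must hold already for $r=6$.
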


\begin{proof}
Let $\cD$ here be the local view of an honest node that has created a unit of round at least $6$.
From Lemma~\ref{lemma:spread} we obtain sets $\mathcal{S}_0$ and $\mathcal{S}_3$, which are in $\cD$ because it includes at least one unit of round $6$.

For any $i\in [N]$ such that $U[i;6]$ exists in $\cD$, the  set $T_i$ is defined as a function of the unit $U[i;6]$ and all units below it, therefore it is consistent between honest nodes.
Additionally, $U[i;6]$ is above at least $f+1$ units from $\mathcal{S}_0$ created by honest nodes ($2f+1$ units minus at most $f$ created by malicious nodes), and creators of those units must be included in $T_i$.
Indeed, if a unit of round zero was created by an honest node, then it is valid, and every round-$3$ unit must vote $1$ on it, as false-negative votes cannot be forged.

From now on, let $T$ denote the union of all $T_i$s that are well defined (i.e., exist) in $\cD$.
Every unit $W$ in the set $\mathcal{S}_3$ must vote $1$ on every Key Box $KB_l$ for $l\in T$, because $W$ is seen by every unit of round $6$. Since $|\mathcal{S}_3|\geq 2f+1$ and due to the definition of votes, we see that every Key Set $KS_l$ for $l\in T$ is usable by (possibly a superset of) creators of $\mathcal{S}_3$.

If we take the intersection of the set of creators of $\mathcal{S}_3$, and the set of $\down{V}$ creators, then the resulting set contains at least $f+1$ nodes. Units in $\down{V}$ created by these $f+1$ nodes must contain valid shares from round $r$ to all Key Sets created by nodes from $T$. Indeed, this is required by the protocol, since these nodes declared at round $3$ that they will produce valid shares (recall that $f+1$ valid shares are enough to read the secret).

Finally we remark that the node who holds $\cD$ might not be familiar with $\mathcal{S}_3$, but, nevertheless, it can find at least $f+1$ units in $\down{V}$ containing all required shares, because it knows $T$.
\end{proof}

\noindent In part (3) of the above lemma we show that we can always collect shares from a single set of $f+1$ units, which yields an optimization from the viewpoint of computational complexity.
The reason is that each round of the setup requires generating $O(N)$ threshold signatures, but if we can guarantee that each signature has shares generated by the same set of $f+1$ nodes, then we may compute the Lagrange coefficients for interpolation only once (as they depend only on node indices that are in the set) thus achieve computational complexity $O(N^3)$ instead of $O(N^4)$ as in the basic solution.

\noindent{\bf Security of MultiCoins.} 
The remaining piece is unpredictability of multicoins, which is formalized below as

\begin{lemma}\label{lemma:multicoin-cannot-be-predicted2}
If a polynomially bounded adversary is able to forge a signature generated using $\msf{MultiCoin}_{i_0}$ (for any $i_0 \in [N]$) for a fresh nonce $m$ with probability $\eps>0$, then there exists a polynomial time algorithm for solving the computational Diffie-Hellman problem over the group $G$ that succeeds with probability $\eps$.
\end{lemma}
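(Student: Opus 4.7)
The reduction is a hybrid of the standard unforgeability proof for BLS signatures and a Pedersen-style argument that a tossing key realised as a sum with at least one honest contribution stays uniformly random. The simulator $\mathcal{S}$ receives a CDH instance $(g^{x},g^{y})$ and must return $g^{xy}$ while running the purported forger $\mathcal{A}$ as a black box inside an internal execution of $\msf{ABFT}$-$\msf{Beacon}$.

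First I would guess. Pick a uniformly random honest dealer $j^{*}$; by Lemma~\ref{lemma:multicoin-can-be-tossed} at least $f+1$ of the $2f+1$ honest dealers end up in $T_{i_0}$, so with probability at least $\tfrac{f+1}{2f+1}$ we have $j^{*}\in T_{i_0}$. Guess also one query $m^{*}$ out of the polynomially many hash queries and program $\hashg(m^{*}):=g^{y}$, while every other query $m$ is answered by $\hashg(m):=g^{\beta_{m}}$ for a fresh $\beta_{m}\in\Z_q$. With probability $1/\poly$, $m^{*}$ is the nonce on which $\mathcal{A}$ eventually forges.

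Next, simulate the honest dealers. For every honest $j\neq j^{*}$ sample $A_{j}$ honestly. For $j^{*}$, set $A_{j^{*}}(0)=x$ implicitly and sample $A_{j^{*}}(i)=\alpha_{i}$ uniformly for each corrupted $i\in \mathcal{C}$ with $|\mathcal{C}|=f$; these $f+1$ constraints pin down a unique degree-$f$ polynomial whose distribution matches the honest one. Every quantity that the protocol ever exposes about $A_{j^{*}}$---the commitment $C_{j^{*}}=(g^{a_{j^{*},l}})_{l=0,\dots,f}$, the verification keys $vk_{j^{*},k}=g^{A_{j^{*}}(k)}$, and every tossing share $\hashg(m)^{A_{j^{*}}(i)}=(g^{A_{j^{*}}(i)})^{\beta_{m}}$ at nonces $m\neq m^{*}$---is a linear combination of $g^{x}$ and $\{g^{\alpha_{i}}\}_{i\in\mathcal{C}}$ in the exponent, and hence computable by $\mathcal{S}$ via Lagrange interpolation in the exponent. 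Encrypted tossing shares $e_{j^{*},i}$ destined for honest receivers are filled with dummy plaintext; IND-CPA security of the dedicated keys $pk_{j^{*}\to i}$, combined with the fact that the honest receiver always votes $1$ on $KB_{j^{*}}$ and thus never opens its ciphertext, makes the simulated transcript computationally indistinguishable from the real one.

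Finally, extraction. When the two guesses are correct, $\mathcal{A}$ returns $\sigma^{*}=\hashg(m^{*})^{B_{i_0}(0)}=(g^{y})^{x+s}$ where $s:=\sum_{j\in T_{i_0}\setminus\{j^{*}\}}A_{j}(0)$. The value $s$ is computable by $\mathcal{S}$: for honest $j\in T_{i_0}\setminus\{j^{*}\}$ the polynomial $A_{j}$ was sampled by $\mathcal{S}$, so $A_{j}(0)$ is known directly; for every corrupted $j\in T_{i_0}$, $\mathcal{S}$---playing each honest receiver---decrypts $\dec_{j\to i}(e_{j,i})=A_{j}(i)$ at every honest index, obtaining at least $2f+1$ evaluations of the degree-$f$ polynomial $A_{j}$ and recovering $A_{j}(0)$ by interpolation. $\mathcal{S}$ then outputs $\sigma^{*}\cdot(g^{y})^{-s}=g^{xy}$, succeeding with probability $\Omega(\eps/\poly(N))$. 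The main obstacle, and the only place the argument must be executed with care, is verifying that the simulated transcript---specifically the dummy ciphertexts sent to honest receivers and the shares produced at all non-target nonces under the implicitly defined $A_{j^{*}}$---is indistinguishable from a real execution; once that is in place, the rest is a mechanical application of Lagrange interpolation in the exponent.
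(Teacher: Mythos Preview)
Your reduction is correct, but it takes a different route from the paper's. The paper does \emph{not} guess a single honest dealer $j^{*}$; instead it embeds the CDH challenge in \emph{every} honest polynomial simultaneously by setting $A_i(0)=\alpha+r_{i,0}$ and $A_i(j)=r_{i,j}$ for all adversarial indices $j\in[f]$ and all honest $i$. Since every coefficient of each $A_i$ is then affine in the unknown $\alpha$, all commitments, verification keys, and (via the programmed exponents $\delta_m$) all non-challenge shares are computable from $g^{\alpha}$ alone. The combined key becomes $A(0)=|H|\cdot\alpha+\mu$ with $|H|\geq 1$ guaranteed by Lemma~\ref{lemma:multicoin-can-be-tossed}, so no guess is needed and the extraction succeeds whenever the forger does---matching the ``probability $\eps$'' in the lemma statement rather than your $\Omega(\eps/\poly)$.

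Your single-embedding approach is the more standard BLS-style argument and is perfectly sound; the cost is the two guesses (dealer and hash query), which together give only a polynomial loss. Two small imprecisions worth fixing: (i) for a dishonest $j\in T_{i_0}$ you do not in general recover $2f+1$ evaluations of $A_j$---the protocol only guarantees that the honest nodes whose round-$3$ units lie below $U[i_0;6]$ voted positively, giving at least $f+1$ correct evaluations, which is still enough to interpolate; (ii) your indistinguishability step implicitly uses that honest receivers of $KB_{j^{*}}$ never open their dummy ciphertexts, which is true because in the simulation you have them vote $1$ regardless---this should be stated explicitly, since it is the one place where simulated honest behaviour deviates from the protocol text. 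The paper handles the analogous point by working in an idealised encryption model rather than invoking IND-CPA directly.
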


\begin{proof}
We start by formalizing our setting with regard to the encryption scheme that is used to encrypt the tossing keys in key boxes.
Since it is semantically secure we can model encryption as a black box.
More specifically we assume that for a key pair $(sk, pk)$, an adversary that does not know $sk$ has to query a black box encryption oracle $\enc_{pk}(d)$ to create the encryption of a string $d$ for this public key.
Further, we assume that whenever a string $s$ is fed into the decryption oracle $\dec_{sk}(\cdot )$ such that $s$ was not obtained by calling the encryption oracle by the adversary, then the output $\dec_{sk}(\cdot )$ is a uniformly random string of length $|s|$.
Since the key pairs we use in the protocol are dedicated, the adversary has no way of obtaining ciphertexts that were not created by him, for the key sets he is forced to use.

Given this formalization we are ready to proceed with the proof.
We fix any $i_0 \in [N]$ and consider the $\msf{MultiCoin}_{i_0}$.
We use a simulation argument (see e.g. \cite{PS96}) to show a reduction from forging signatures to solving the Diffie-Hellman problem.
To this end, let $(g^\alpha, g^\beta)$ be a random input to the Diffie-Hellman problem, i.e., the goal is to compute $g^{\alpha\beta}$. 

Suppose that an adversary can forge a signature of a random element $h\in G$ with probability $\eps>0$ when the honest nodes behave honestly.
We now define a {\bf simulated run} of the protocol (to distinguish from an {\bf honest run} in which all honest node behave honestly) which is indistinguishable from an honest run from the perspective of the adversary, yet we can transform a successful forgery into a solution of the given Diffie-Hellman triplet.

\noindent {\bf Description of the simulated run.} For convenience assume that the nodes controlled by the adversary have indices $1,2, \ldots, f$. 
At the very beginning, every honest node $i\in \{f+1, f+2, \ldots, N\}$ generates uniformly random values $r_{i,0}, r_{i,1}, \ldots, r_{i,f} \in \Z_q$ and defines a polynomial $A_i$ of degree at most $f$ by specifying its values at $f+1$ points as follows:
\begin{align*}
    A_i(0) &= \alpha + r_{i,0},\\
    A_i(1) &= r_{i,1},\\
    &\vdots\\
    A_i(f) &= r_{i,f},
\end{align*}
where $\alpha \in Z_q$ is the unknown value from the Diffie-Hellman instance.
Note that since $\alpha$ is unknown the $i$th node has no way to recover the polynomial $A_i(x) = \sum_{j=0}^f a_{i,j}x^j$ but still, it can write each of the coefficients in the following form
$$a_{i,j} = a_{i,j,0} + \alpha a_{i,j,1} \ \ \ \ \mbox{for }j=0,1, \ldots, f,$$
where $a_{i,j,0}, a_{i,j,1}\in \Z_q$ are known values.
Since $g^\alpha$ is known as well, this allows the $i$th node to correctly compute the commitment
$$C_i = \inparen{g^{a_{i,0}}, g^{a_{i,1}}, \ldots, g^{a_{i,f}}},$$
pretending that he actually knows $A_i$.
Similarly he can forge a key box $KB_i$ so that from the viewpoint of the adversary it looks honest.
This is the case because $\cP_i$ knows the tossing keys for the adversary $tk_1, tk_2, \ldots, tk_f$ since these are simply $r_{i,1}, r_{i,2}, \ldots, r_{i,f}$.
Thus $\cP_i$ can include in $E_i$ the correct values $e_{i,1}, e_{i,2}, \ldots, e_{i,f}$ and simply fill the remaining slots with random strings of appropriate length (since these are meant for the remaining honest nodes which will not complain about wrong keys in the simulation).

In the simulation all the honest nodes are requested to vote positively on key boxes created by honest nodes, even though the encrypted tossing keys are clearly not correct.
For the key boxes dealt by the adversary the honest nodes act honestly, i.e., they respect the protocol and validate as required.
Suppose now that the setup is over and that the honest nodes are requested to provide shares for signing a particular nonce $m$.
The value of the hash $h=\hashg(m) \in G$ is sampled in the simulation as follows: first sample a random element $\delta_m \in \Z_q$ and subsequently set
$$\hashg(m) = g^{\delta_m}.$$
Since in the simulation the value $\delta_m$ is revealed to the honest nodes, the honest node $\cP_j$ can compute its share corresponding to the $i$th key set (for $i\geq f+1$) as
$$h^{A_i(j)} = \inparen{g^{\delta_m}}^{A_i(j)} = \inparen{g^{A_i(j)}}^{\delta_m},$$
where the last expression can be computes since $g^{A_i(j)}$ is $vk_{i,j}$ (thus computable from the commitment) and $\delta_m$ is known to $j$.

\noindent {\bf Solving Diffie-Hellman from a forged signature.} After polynomially many rounds of signing in which the hash function is simulated to work as above, we let the hash function output $g^{\beta}$ (the element from the Diffie-Hellman input) for the final nonce $m'$ for which the adversary attempts forgery.
The goal of the adversary is now to forge the signature, i.e., compute $\inparen{g^{\beta}}^{A(0)},$ where
$$A(0) = \sum_{i\in T_{i_0}} A_i(0),$$
is the cumulative secret key for the multicoin $\msf{MultiCoin}_{i_0}$.
For that, the adversary does not see any shares provided by honest nodes.
Note that if we denote by $H$ and $D$ the set of indices of honest and dishonest nodes (respectively) within $T_{i_0}$ then the value of $A(0)$ is of the following form
\begin{align*}
    A(0) &= \sum_{i\in T_{i_0}} A_i(0)\\
    &= \sum_{i\in H} A_i(0) + \sum_{i\in D} A_i(0)\\
    &= |H| \cdot \alpha + \sum_{i\in H } r_{i,0} + \sum_{i\in D i} A_i(0).
\end{align*}
We observe that each value $A_i(0)$ for $i\in D$ can be recovered by honest nodes, since at least $f+1$ honest nodes must have validated their keys when voting on $KB_i$ and thus honest nodes collectively know at least $f+1$ evaluations of $A_i$ and thus can interpolate $A_i(0)$ as well.
Similarly, the values $r_{i,0}$ for $i\in H$ are known by honest nodes.
Finally we observe that since by Lemma~\ref{lemma:multicoin-can-be-tossed} (a), $|T_i|\geq f+1$, it follows that $|H|>0$.
Thus consequently, we obtain that $A(0)$ is of the form
$$A(0) = \gamma \alpha + \mu,$$
where $\gamma, \mu \in \Z_q$ are known to the honest nodes and $\gamma \neq 0$.
Hence, the signature forged by the adversary has the form $$\inparen{g^{\beta}}^{A(0)} = g^{\gamma \alpha \beta + \beta \mu} = \inparen{g^{\alpha \beta}}^{\gamma} \cdot \inparen{g^{\beta}}^{\mu},$$
and can be used to recover $g^{\alpha \beta}$ since $\gamma, \mu$ and $g^{\beta}$ are known.

The above reasoning shows that we can solve the Diffie-Hellman problem given an adversary that can forge signatures; however to formally conclude that, we still need to show that the view of the adversary in such a simulated run is indistinguishable from an honest run.

\noindent {\bf Indistinguishability of honest and simulated runs.}
We closely investigate all the data that is emitted by honest nodes throughout the protocol execution to see that these distributions are indistinguishable between honest and simulated runs. 
Below we list all possible sources of data that adversary observes by listening to honest nodes.
\begin{enumerate}
    \item The key boxes in an honest run and a simulated run look the same to the adversary, as they are always generated from uniformly random polynomials, and the evaluations at points $1,2,\ldots, f$ are always honestly encrypted for the nodes controlled by the adversary. The remaining encrypted keys are indistinguishable from random strings and thus the adversary cannot tell the difference between honest runs and simulated runs, since the values encrypted by honest nodes are never decrypted.
    \item The votes (i.e., the $\msf{VerKey}(\cdot, \cdot)$ values) of honest nodes on honest nodes are always positive, hence the adversary does not learn anything here.
    \item The shares generated by honest nodes throughout the execution of the protocol are always correct, no matter whether the run is honest or simulated (in which case the honest nodes do not even know their keys).
    \item The votes of honest nodes on dishonest key boxes: this is by far the trickiest to show and the whole reason why we need dedicated key pairs for encryption in key boxes.
    We argue that the adversary can with whole certainty predict the votes of honest nodes, hence this piece of data is insignificant: intuitivily the adversary cannot discover that he "lives in simulation" by inspecting this data.
    Here is the reason why the adversary learns nothing from these votes.
    There are two cases: 
    \begin{enumerate}
        \item a dishonest node $i\in [f]$ includes as $e_{i,j}$ some ciphertext that was generated using the encryption black box $\enc_{i\to j}$. That means that the adversary holds a string $d$ such that $e_{i,j}=\enc_{i \to j}(d)$. In this case the adversary can itself check whether $d$ is the correct tossing key. Thus the adversary can predict the vote of $\cP_j$: either it accepts if $d$ is the correct key, or it rejects in which case it reveals $\dec_{i \to j}(e_{i,j})=d$.
        \item a dishonest node $i\in [f]$ includes as $e_{i,j}$ some string that was never generated by the encryption black box $\enc_{i\to j}$.
        In this case, the decryption of $e_{i,j}$ is simply a random string and thus agrees with the key only with negligible probability.
        Thus the adversary can tell with whole certainty that the vote on $e_{i,j}$ will be negative and he only learns a random string $\dec_{i \to j}(e_{i,j})$.
        \end{enumerate}
\end{enumerate}
All the remaining data generated by honest nodes throughout the execution of the protocol is useless to the adversary as he could simulate it locally.
This concludes the proof of indistinguishability between honest and simulated runs from the viewpoint of the adversary.
\end{proof}

\noindent{\bf Proofs regarding SecretBits and ABFT-Beacon.} 
We are now ready to prove Lemma~\ref{lemma:secretbits2}, i.e. show that the implementation of  $\msf{SecretBits}$ with MultiCoin during the $\msf{ABFT}-\msf{Beacon}$ setup satisfies the Definition~\ref{def:secret-bits} of $\msf{SecretBits}$.
\begin{proof}[Proof of Lemma~\ref{lemma:secretbits2}]
Let $\cP_k$ be any honest node. 
We start by showing that during the run of the $\msf{ABFT}-\msf{Beacon}$ setup if $\cP_k$ invokes $\msf{SecretBits}(i,r)$, then $r\geq 10$ and the unit of round $6$ created by $\cP_i$ is already present in the local copy $\cD_k$.

All $\msf{SecretBits}$ invocations happen inside $\msf{ChooseHead}(6)$ or, more precisely, in $\msf{GeneratePermutation}(6,\cD)$ and $\msf{CommonVote}(U,\cdot)$, for units $U$ of round $6$.
Let $\mathcal{U}$ be a set of units of round $6$ present in $\cD_k$ at the time of any particular call to $\msf{ChooseHead}(6)$.
Additionally, let $\mathcal{C}$ be the set of all nodes that created units in $\mathcal{U}$. 
The procedure $\msf{GeneratePermutation}(6,\cD)$ calls $\msf{SecretBits}(i, 6+4)$ for all $i\in\mathcal{C}$, and function $\msf{CommonVote}(U,V)$ is deterministic if $\H(V)\leq\H(U)+4$ and hence calls only $\msf{SecretBits}(i, k)$, where $i$ is $U$'s creator in $\mathcal{C}$ and $k\geq 10$.

Thus, from the above, the function $\msf{SecretBits}$ will be invoked sufficiently late, hence all units of round $6$ defining required MultiCoins will be already visible at that time.
Lemma~\ref{lemma:multicoin-can-be-tossed} guarantees that then the MultiCoin can be used to generate signatures since all the keys sets comprising it are usable. 
The secrecy follows from the fact that the signatures generated by MultiCoins are unpredictable for the adversary (Lemma~\ref{lemma:multicoin-cannot-be-predicted2}).
\end{proof}
\noindent 
The above proof justifies why is it easier to implement $\msf{SecretBits}$ as a function of two arguments, instead of only using the round number.

The next lemma shows that by combining key sets as in our construction, we obtain a single key set that is secure, i.e., that there are enough nodes holding tossing keys and that the generated bits are secret.

\noindent Given the above lemma, we are ready to conclude the proof of Theorem~\ref{thm:strongcoin}.
\begin{proof}[Proof of Theorem~\ref{thm:strongcoin}]

From Lemma~\ref{lemma:secretbits2} we obtain that the Setup phase of $\msf{ABFT}-\msf{Beacon}$ executes correctly and a common Key Set $(TK, VK)$ is chosen as its result.
Moreover, by Lemma~\ref{lemma:multicoin-cannot-be-predicted2} the output Key Set is secure and thus can be used to generate secrets. 
Consequently, $\msf{Setup}$ + $\msf{Toss}$ indeed implement a Randomness Beacon according to Definition~\ref{def:randomnessbeacon}.

\noindent{\bf Communication Complexity.}
What remains, is to calculate the communication complexity of the protocol. Observe that during the setup, every unit might contain at most:
\begin{itemize}
    \item $O(N)$ hashes of parent nodes,
    \item $O(N)$ verification keys,
    \item $O(N)$ encrypted tossing keys,
    \item $O(N)$ votes,
    \item $O(N)$ secret shares.
\end{itemize}
Since $\msf{ChooseHead}$ builds only an expected constant number of \dagr{}s before terminating (see Lemma~\ref{lemma:choose-head}), every node reliably broadcasts $O(1)$ units during the setup in expectation each of which of size $O(N)$, thus the total communication complexity per node is $O(N^2 \log N)$.
\ourskip

\noindent{\bf Latency.}
Assume that honest nodes decide to instantiate $\msf{Toss}$ for a given nonce $m$ during \asynchr{} $r_a$.
Then, each honest node multicasts its respective share before \asynchr{} $r_a+1$ ends, thus all such shares are bound to be delivered before the end of \asynchr{} $r_a+2$.
Thus, each honest node may reveal the value of a \msf{Toss} after at most $2$ \asynchr{}, what concludes the proof.
\end{proof}
\noindent 
The last remaining piece is to prove Lemma~\ref{lemma:secretbits} that claims correctness of the main implementation of $\msf{SecretBits}$.

\begin{proof}[Proof of Lemma~\ref{lemma:secretbits}]
The case with a trusted dealer follows for instance from~\cite{Boldyreva03}.
Consider now the harder case: when the $\msf{Setup}$ of $\msf{ABFT}-\msf{Beacon}$ is used to deal the keys.
Then, the nodes end up holding a Key Set that is secure by~Lemma~\ref{lemma:multicoin-cannot-be-predicted2} and can be used for signing messages by Lemma~\ref{lemma:can-toss-approved-key-set}.
\end{proof}

\section{Reliable Broadcast}\label{sec:rbc}

In this section we describe \chrbc{} -- a modified version of Reliable Broadcast (RBC) protocol introduced by~\cite{Bracha87} and improved by~\cite{CT05} by utilizing erasure codes.
The protocol is designed to transfer a message from the sender (who instantiates the protocol) to the set of receiving nodes in such a way that receivers can be certain that they all received the same version of the message. 
More formally, a protocol implements Reliable Broadcast if it meets the following conditions:
\begin{enumerate}
    \item {\bf Termination.} If an honest node instantiates RBC, all honest nodes eventually terminate.
    \item {\bf Validity.} If an honest node instantiates RBC with a message $m$, then any honest node can only output $m$.
    \item {\bf Agreement.} If an honest node outputs a message $m$, eventually all honest nodes output $m$.
\end{enumerate}

We implement four rather straightforward modifications to the original RBC protocol. 
First, before multicasting \msf{prevote} message each node checks the size of the proposed unit (line $10$) to prevent the adversary from sending too big units and hence consuming too much bandwidth of honest nodes\footnote{Since in this version of RBC nodes learn about the content of broadcast message only after multicasting it as \msf{prevote}, without this modification adversary could propose arbitrarily big chunks of meaningless data and hence inflate message complexity arbitrarily high.}.
Second, before multicasting prevote, we wait until \chdag{} of a node reaches at least one round lower then round of prevoted unit (line $11$).
This check counters an attack in which adversary initializes \chrbc{} for meaningless pieces of data of rounds which do not exist in the \chdag{} yet, hence wasting bandwidth of honest nodes. 
Third, after reconstructing a unit $U$, we check if $U$ is valid (line $16$), i.e., we check if it has enough parents, if one of its parents was created by $U$'s creator, and if the data field of $U$ contains all required components.
Finally, before multicasting $\msf{commit}$ message for unit $U$, nodes wait until they receive all parents of $U$ via RBC (line $17$) to ensure that it will be possible to attach $U$ to \chdag{}\footnote{Note that this condition is not possible to check before multicasting \msf{prevote} since nodes are not able to recover $U$ at that stage of the protocol.}.

\renewcommand{\algorithmcfname}{\chrbc{}}
\begin{algorithm-hbox}
    \SetKwFunction{FMain}{$\msf{Main}$}
  \SetKwProg{Fn}{ }{:}{}
  \tcc{(Protocol for node $\cP{}_i$ with sender $\cP{}_{s}$ broadcasting unit $U$ of round $r$)}
    \SetKwProg{Upon}{upon}{ do}{end}
    \If{$\cP{}_i=\cP{}_{s}$}
        {
        $\{s_j\}_{j\in N}\!\!\leftarrow\!$ shares of $(f\!\!+\!1,N)$-erasure coding of $U$\\
        $h\leftarrow$ Merkle tree root of $\{s_j\}_{j\in N}$\\
        \For{$j\in N$}
            {
            $b_i\leftarrow$  Merkle branch of $s_j$\\
            {\bf send} $\msf{propose}(h,b_j,s_j)$ to $\cP_j$\\
            }
        }
    \Upon{receiving $\msf{propose}(h,b_j,s_j)$ from $\cP_{s}$}
        {
        \If{$\msf{received\_propose}(\cP_i,r)$}{{\bf terminate}}
        \If{$\msf{check\_size}(s_j)$}
            {
            {\bf wait} until $\cD_i$ reaches round $r-1$\\
            {\bf multicast} $\msf{prevote}(h,b_j,s_j)$
            }
        $\msf{received\_propose}(\cP_i,r) = \msf{True}$
        }    
    \Upon{receiving $2f+1$ valid $\msf{prevote}(h,\cdot,\cdot)$}
        {
        {\bf reconstruct} $U$ from $s_j$\\
        \lIf{$U$ is not valid}
            {
            {\bf terminate}
            }
        {\bf wait} until all $U$'s parents are locally output by $\cP_i$\\
        {\bf interpolate} all $\{s'_j\}_{j\in N}$ from $f+1$ shares\\
        $h'\leftarrow$ Merkle tree root of $\{s'_j\}_{j\in N}$\\
        \If{$h\!=\!h'$ and $\msf{commit}(\cP_{s},r,\cdot)$ has not been send}
            {
            {\bf multicast} $\msf{commit}(\cP_{s},r,h)$\\
            }
        }
    \Upon{receiving $f+1$ $\msf{commit}(\cP_{s},r,h)$}
        {
        \If{$\msf{commit}(\cP_{s},r,\cdot)$ has not been send}
            {
            {\bf multicast} $\msf{commit}(\cP_{s},r,h)$
            }
        }
    \Upon{receiving $2f+1$ $\msf{commit}(\cP_{s},r,h)$}
        {
        {\bf output} $U$ decoded from $s_j$'s
        }
    \BlankLine
  \SetKwFunction{FMain}{$\msf{check\_size}$}
  \SetKwProg{Fn}{ }{:}{}
  \Fn{\FMain{$s_j$}}
    {
    $T\leftarrow$ number of transactions in $U$ inferred from $s_j$-th size\\
    \tcc{$C_B$ is a global parameter of the protocol}
    \If{$T>C_B \cdot$($i$-th batch size in round $r$)}
        {
        {\bf output} \msf{False}
        }
    \lElse
        {
        {\bf output} \msf{True}
        }
    }
\caption{ }
\end{algorithm-hbox}

The following lemma summarizes the most important properties of \chrbc{} protocol. Note that {\bf Latency} is a strictly stronger condition then {\bf Termination}, and similarly {\bf Fast Agreement} is stronger then {\bf Agreement}.
Thus, in particular, this lemma proves that \chrbc{} implements Reliable Broadcast.

\begin{lemma}\label{lemma:RBC}(reliable broadcast)
For any given round $r$, all \chrbc{} runs instantiated for units of that round have the following properties:
\begin{enumerate}
    \item {\bf Latency.} \chrbc{} instantiated by an honest node terminates within three \asynchr{}s for all honest nodes.
    \item {\bf Validity} If an honest node outputs a unit $U$ in some instance of \chrbc{}, $U$ had to be proposed in that instance.
    \item {\bf Fast Agreement.} If an honest node outputs a unit $U$ at \asynchr{} $r_a$, all honest nodes output $U$ before \asynchr{} $r_a+2$ ends,
    \item {\bf Message Complexity.} In total, all the \chrbc{} instances have communication complexity\footnote{We remark that $N^2 \log(N)$ can be improved to $N^2$ in the communication complexity bound by employing RSA accumulators instead of Merkle Trees in the RBC protocol.} at most $O(t_r+N^2\log N)$ per node, where $t_r$ is the total number of transactions input in honest units of round $r$.
\end{enumerate}
\end{lemma}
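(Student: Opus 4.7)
The plan is to establish the four properties in turn, leveraging the fact that \chrbc{} is structurally a minor adaptation of the Cachin--Tessaro reliable broadcast. Validity and Fast Agreement follow the classical threshold-counting template, Latency requires induction on the round number to handle the two extra wait conditions introduced in lines 11 and 17, and Message Complexity is a direct accounting exercise that crucially relies on the size check at line 10.

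\textbf{Validity and Fast Agreement.} For Validity, I would observe that a Merkle root $h$ uniquely binds the set of shares, and that an honest node emits $\msf{commit}(\cP_s,r,h)$ only after collecting $2f+1$ matching $\msf{prevote}$s and verifying that the root $h'$ recomputed from $f+1$ interpolated shares equals $h$. Any two honest nodes that commit therefore commit to the same $h$ (by collision resistance of the hash) and any decoded unit is necessarily the one originally proposed. For Fast Agreement, suppose an honest node outputs $U$ at \asynchr{} $r_a$, having collected $2f+1$ commits for $h$. At least $f+1$ of these come from honest nodes, so within one \asynchr{} every other honest node receives these $f+1$ commits and multicasts its own $\msf{commit}(\cP_s,r,h)$ by the amplification rule; within the next \asynchr{} every honest node collects $2f+1$ commits and outputs. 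The only subtle point is that a node which amplifies purely from commits still needs $f+1$ shares to decode; I would handle this by noting the standard pull-step (request shares from committers) which completes within the same two \asynchr{}s.

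\textbf{Latency.} This is the step I expect to be the main obstacle, since the extra waits on line 11 (DAG round $r-1$) and line 17 (parents locally output) could, in principle, stall the protocol. I would proceed by induction on $r$. For $r=0$, both waits are vacuous and the usual three-\asynchr{} bound (one for propose, one for prevote, one for commit) applies. For the inductive step, the hypothesis is that every \chrbc{} instance initiated by an honest node for a round-$(r-1)$ unit terminates within three \asynchr{}s for every honest node. Since the honest sender of the round-$r$ unit $U$ had itself output all parents of $U$ via \chrbc{} before constructing $U$, by Fast Agreement and the inductive hypothesis every honest node has also output these parents within two \asynchr{}s from that point; in particular, by the time the $\msf{propose}$ for $U$ is delivered, every honest node has both reached DAG round $r-1$ and locally output all parents of $U$. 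Both waits are thus immediately satisfied, and the three-\asynchr{} schedule of propose/prevote/commit proceeds uninterrupted.

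\textbf{Message Complexity.} I would account for the per-node cost over all $N$ \chrbc{} instances of round $r$. In each instance for a unit $U$, a node sends and receives $O(N)$ prevotes and $O(N)$ commits, each prevote carrying one erasure-coded share of size $O(|U|/N)$ plus a Merkle branch of size $O(\lambda\log N)$, and each commit of size $O(\lambda)$. Summed over the $N$ instances, the per-node communication is $O\bigl(\sum_U |U| + N^2\log N\bigr)$; the size check at line 10 ensures that only units whose transaction content is proportional to the sender's legitimate buffer share are prevoted on, keeping $\sum_U |U|$ at $O(t_r + N^2\log N)$ where the $N^2\log N$ accounts for parent hashes and protocol metadata. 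Combining the two contributions yields the claimed $O(t_r+N^2\log N)$ bound per node.
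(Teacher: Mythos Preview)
Your proposal is correct and follows essentially the same approach as the paper. The paper's proof is in fact terser than yours: for Latency it simply traces propose $\to$ prevote $\to$ commit through three \asynchr{}s and relegates the line-17 wait to a footnote (invoking Fast Agreement on the parents), while not discussing line 11 explicitly; your inductive treatment of both waits is a cleaner way to handle the same point. Your Fast Agreement and Message Complexity arguments match the paper's almost verbatim, and your Validity argument via Merkle-root binding is slightly more detailed than the paper's (which just observes that honest prevotes are only sent for received proposals). The pull-step you mention for decoding after commit-only amplification is a genuine subtlety the paper does not address; it is not needed for the lemma as the paper states and proves it, but your instinct there is sound.
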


\begin{proof} 
{\bf Latency.} Let $\cP_i$ be an honest node instantiating \chrbc{} for unit $U_i$ during the \asynchr{} $r$ so that the \msf{propose} messages are sent during $r$ as well. 
Each honest node during the atomic step in which it receives the \msf{propose} message multicasts the corresponding \msf{prevote}.
Since proposals were sent during \asynchr{} $r$, \asynchr{} $r+1$ cannot end until all of them are delivered, all honest nodes multicast prevotes during \asynchr{} $r+1$. 
Similarly, all prevotes sent by honest nodes need to reach their honest recipients during \asynchr{} $r+2$, hence honest nodes multicast their commits during \asynchr{} $r+2$\footnote{A cautious reader probably wonders about the {\bf wait} in line $16$. It will not cause an execution of \chrbc{} instantiated by an honest node to halt since before sending proposals during \asynchr{} $r_a$ a node must have had all $U$'s parents output locally by other instances of \chrbc{}, hence each other node will have them output locally by the end of \asynchr{} $r_a+2$, by condition {\bf Fast Agreement}}.
Finally, the commits of all honest nodes reach their destination before the end of \asynchr{} $r+3$ causing all honest nodes to locally terminate at that time, what concludes the proof. 
\smallskip

\noindent {\bf Validity.} The first honest node to produce the \msf{commit} message for a given unit $U$ has to do it after receiving $2f+1$ \msf{precommit} messages for $U$, out of which at least $f+1$ had to be multicast by other honest nodes.
Since honest nodes are sending prevotes only for shares they received as proposals, $U$ had to be proposed by node initiating \chrbc{}.
\smallskip 

\noindent {\bf Fast Agreement.} Assume that an honest node $\cP_i$ has output a unit $U$ at \asynchr{} $r_a$ during some instance of \chrbc{}.
Since it can happen only after receiving $2f+1$ commit messages for $U$, at least $f+1$ honest nodes had to multicast \msf{commit} for $U$.
Each honest node engages in \chrbc{} for only one version of each node's unit for a given \dagr{}, so it is not possible for any other node to output a unit different than $U$ for that instance of \chrbc{}.

On the other hand, $\cP_i$ must have received at least $f+1$ commit messages for $U$, which have been multicast before \asynchr{} $r_a$ ended.
By definition of \asynchr{}, all of these commits must have reached their honest recipients before \asynchr{} $r_a+1$ ended.
Consequently, after gathering $f+1$ commits for $U$, all honest multicasted their commits before \asynchr{} $r_a+1$ ended.
Thus, each honest node received commits from every other honest node before the end of \asynchr{} $r_a+2$, and terminated outputting $U$.
\smallskip

\noindent {\bf Message Complexity.} Each honest node engages in at most $N$  \chrbc{} instances for each \dagr{} $r$, since it accepts only one proposal from each node.
Let now $\cP_i$ be an honest node with buffer of size $B_r$.
Due to the condition in line $10$, $\cP_i$ engages only in \chrbc{} instances broadcasting units of size at most $C_B\frac{B_r}{N}$, where $C_B$ is the constant from the {\it uniform buffer distribution} assumption.

A single instance of \chrbc{} proposing a unit $U$ has communication complexity of $O(\mathrm{size}(U)+N\log N)$, see~\cite{CT05} (here the communication complexity is computed {\it per honest node}).
Now, because of the {\it unifform buffer distribution} property we know that every valid unit of round $r$ includes between $C_B^{-1}\frac{ B_r}{N}$ and $C_B \frac{B_r}{N}$ transactions (with $C_B=\Theta(1)$) thus if $U$ is such a unit, we have
$$\mathrm{size}(U) = O\inparen{\frac{B_r}{N} + N},$$
\noindent 
where the $+N$ comes from the fact that such a unit contains $O(N)$ parent hashes.
Consequently the total communication complexity in round $r$ is upper bounded by

\begin{align*}
    \sum_{i=1}^N \big(\mathrm{size}(U[i;r])+N\log N\big) &=N\cdot O\inparen{\frac{B_r}{N} + N + N \log N}\\
    &= O(B_r+N^2 \log N)\\
    &= O(t_r+N^2 \log N).
\end{align*}
\noindent 
Where the last equality follows because the total number of transactions in honest units is $t_r = \nfrac{2N}{3} \cdot \Theta \inparen{\nfrac{B_r}{N}} = \Theta \inparen{B_r}$.

\end{proof}

\noindent {\bf General RBC with proofs of termination.} Note that the above scheme differs from classical RBC only in four mentioned \chdag{}-related aspects, hence after deleting additional checks it can be used to broadcast arbitrary pieces of data. 
Additionally, one simple modification can be done, which provides a proof that RBC has terminated for each node which had received the output locally.
Namely, together with \msf{commit} message, each node can send share of a threshold signature of hash $h$ with threshold $2f+1$.
Then, each node that gathered $2f+1$ is able to construct the threshold signature and use it as a proof of the fact that RBC had finalized at its end.

\section{Asynchronous Model}\label{sec:async_round}

Slightly informally, the difference between synchronous an asynchronous network is typically explained as the absence of a global bound $\Delta$ on all message delays in asynchronous setting.
In this paper, we use the following formalization adopted from Canetti and Rabin~\cite{CR93}, where the adversary has a full control over the network delays, but is committed to eventually deliver each message to its recipient. 

\begin{definition2}[Asynchronous Network Model]
The execution of the protocol proceeds in consecutive \emph{atomic steps}. 
In each atomic step the adversary chooses a single node $\cP_i$ which is then allowed to perform the following actions:
\begin{itemize}
    \item read a subset $S$, chosen by the adversary, among all its pending messages. Messages in $S$ are considered \emph{delivered},
    \item perform a polynomially bounded computation,
    \item send messages to other nodes; they stay pending until delivered. 
\end{itemize}
\end{definition2}
\noindent 
Thus, in every atomic step one of the nodes may read messages, perform computations, and send messages. 
While such a model may look surprisingly single-threaded, note that an order in which nodes act is chosen by an adversary, hence intuitively corresponds to the worst possible arrangements of network delays.
The following definition by~\cite{CR93} serves as a main measure of time for the purpose of estimating latency of our protocols. 

\begin{definition2}[Asynchronous Round]
Let $l_0$ be an atomic step in which the last node is chosen by adversary to perform an action, i.e., after $l_0$ each node acted at least once.
The asynchronous rounds of the protocol and the subsequent $l$s are defined as follows:
\begin{itemize}
    \item All atomic steps performed after $l_{i-1}$ until (and including) $l_i$ have asynchronous round $i$,
    \item $l_i$ is the atomic step in which the last message sent during asynchronous round $l_{i-1}$ is delivered.
\end{itemize}
\end{definition2}

\section{Fork bomb attack}\label{sec:fork-bomb}

\begin{figure}
    \centering
    \includegraphics[scale=0.62]{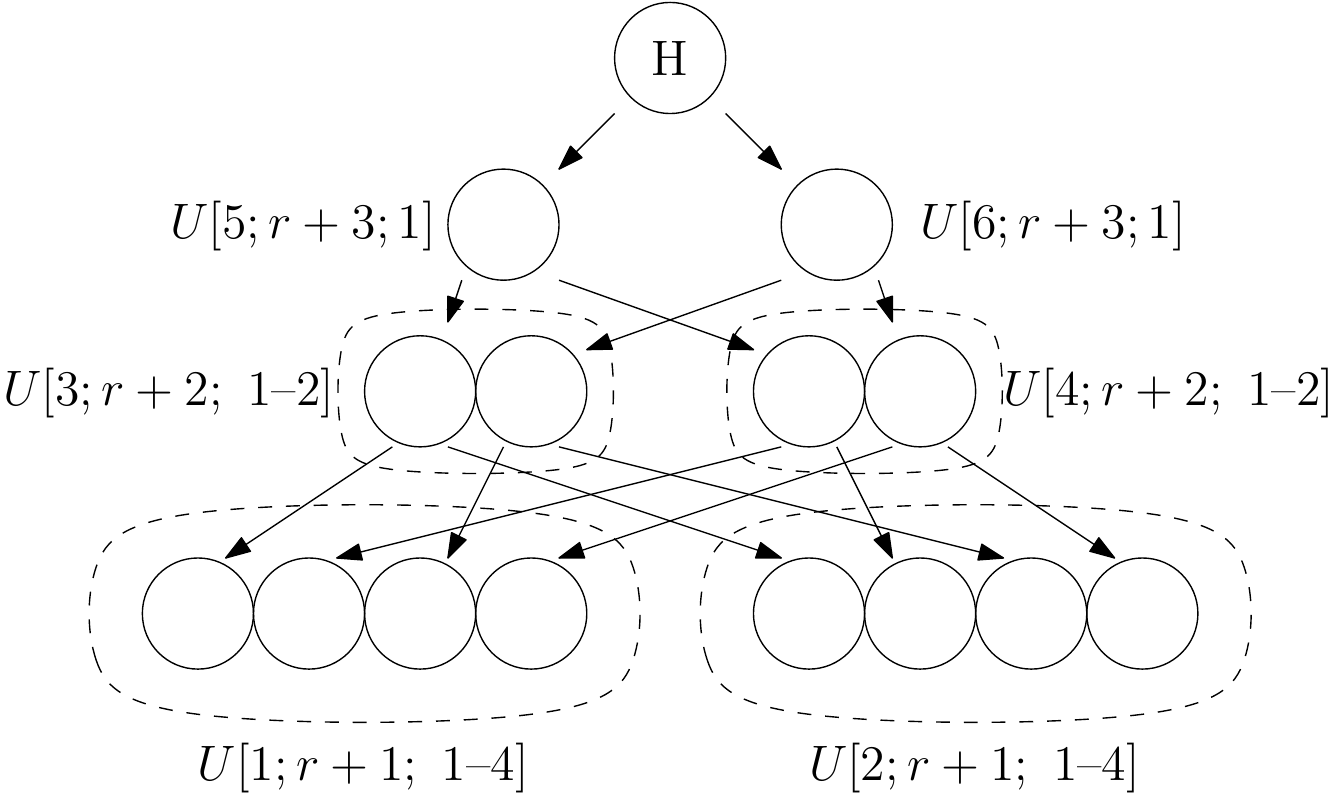}
    \caption{A "Fork Bomb" for $K=3$. Only the top unit (marked $H$) is created by an honest node. Every collection of units within a dashed oval depicts forks produced by a certain dishonest node.}
    \label{fig:bomb}
\end{figure}

In this section we describe a resource-exhaustion attack scenario on DAG-based protocols.
Generally speaking, this attack applies to protocols that allows adding nodes (which in our case are called units -- we stick to this name for the purpose of explaining the attack) to the DAG performing only ``local validation".
By this we mean that upon receiving a unit, the node checks only that its parents are correct and that the unit satisfies some protocol-specific rules that are checked based only on the data included in this unit and on its parents (possibly recursively), and adds this unit to the DAG if the checks are successful.
In particular, such a protocol has to allow adding forked units to the DAG, as it can happen that two distinct honest nodes built upon two distinct variants of a fork, hence there is no way to roll it back.
At this point we also remark that a mechanism of banning nodes that are proved to be forking (or malicious in some other way) is not sufficient to prevent this attack.

We note that while the \quickalef{} protocol (without the alert system) satisfies the above condition, the \alef{} protocol and the \quickalef{} extended with the alert system, do not.
This is because \alef{} uses reliable broadcast to disseminate units (which gives a non-local validation mechanism) and similarly the alert system for \quickalef{} adds new non-local checks before adding units created by forkers.
On the other hand, protocols such as Hashgraph~\cite{Baird16} or Blockmania~\cite{DH18} satisfy this condition and thus are affected, if not equipped with an appropriate defense mechanism.

The main idea of the attack is for a group of malicious nodes to create an exponentially large number of forking units, which all must be accepted by honest nodes if the protocol relies only on locally validating units.
The goal is to create so many valid forked units that honest nodes which end up downloading them (and they are forced to do so since these units are all valid) are likely to crash or at least slow down significantly because of exhausting RAM or disc memory.
Even though this leads to revealing the attacking nodes, there might be no way to 
punish them, if most of the nodes are unable to proceed with the protocol.

\noindent 
{\bf Description of the attack.}
We describe the attack using the language and rules specific to the \quickalef{} protocol (without the alert system).
For simplicity, when describing parents of units we only look at their dishonest parents and assume that apart from that, they also connect to all honest units of an appropriate round.
At the end of this section we also give some remarks on how one can adjust this description to any other DAG-based protocol.

Technically, the attack is performed as follows: for $K$ consecutive rounds all malicious nodes create forking units, but abstain from sending them to any honest node,
after thesee $K$ rounds, they start broadcasting forks from this huge set.
Assume that the malicious nodes participating in the attack are numbered $1, 2, \ldots, 2K-1, 2K$. Recall that $U[j;r]$ denotes the round-$r$ unit created by node $j$. We extend this notation to include forks -- let $U[j;r;i]$ denote the $i$-th fork, hence $U[j;r]=U[j;r;1]$.

Suppose it is round $r$ when the attack starts, it will then finish in round $r+K$.
We refer to Figure~\ref{fig:bomb} for an example of this construction for $K=3$.
A formal description follows: fix a $k \in \{1,2, \ldots, K\}$, the goal of nodes $2k-1$ and $2k$ is to fork at round $r+k$, more specifically:
\begin{enumerate}
    \item For the initial $k-1$ rounds (starting from round $r$) these nodes create regular, non-forking units.
    \item At round $r+k$ the nodes create $2^{K-k}$ forks each, which are sent only to other malicious nodes if $k<K$, and broadcast to every other node if $k=K$. Note that nodes $2K-1$ and $2K$ in fact do not fork, hence their units must be eventually accepted by honest nodes.
    \item Units $U[1;r+1;i]$ and $U[2;r+1;i]$ can have any round-$r$ units as parents, for $i=1,\ldots,2^{G-1}$.
    \item For $k>1$ unit $U[2k-1;r+k;i]$ has $U[2k-3;r+k-1;2i-1]$ and $U[2k-2;r+k-1;2i-1]$ as parents, for $i=1,\ldots,2^{K-k}$.
    \item For $k>1$ unit $U[2k;r+k;i]$ has $U[2k-3;r+k-1;2i]$ and $U[2k-2;r+k-1;2i]$ as parents, for $i=1,\ldots,2^{K-k}$.
\end{enumerate}

Validity of all the forks is easily checked by induction: all units $U[1;r+1;i]$ and $U[2;r+1;i]$ are valid, because there are no forks below them (although they are invalid in the sense that they are forked, this fact always becomes apparent only after some other valid nodes has built upon them). Now assume that units $U[2k-3;r+k-1;2i-1]$ and $U[2k-2;r+k-1;2i-1]$ are valid. Then the unit $U[2k-1;r+k;i]$ is valid, because there are no forks created by nodes $2k-3$ or $2k-2$ below it. Analogously, the unit $U[2k;r+k;i]$ is valid.

Since the nodes $2K-1$ and $2K$ cannot be banned, then units $U[2K-1;r+K;1]$ and $U[2K;r+K;1]$ with their lower cones must be accepted by all honest nodes, and both of them contain at least $2^{K+1}-2$ units in total, counting only $K$ most recent rounds.

To conclude let us give a couple of remarks regarding this attack:
\begin{enumerate}
    \item The attack is rather practical, as it does not require any strong assumptions about the adversary (e.g. no control over message delays is required). It is enough for the adversary to coordinate its pool of malicious nodes.

    \item The attack allows any number of parents for a unit. If the protocol requires exactly two parents per unit, with the previous unit created by the same node (as in Hashgraph) then one round of the attack can be realized in two consecutive rounds.
    
    \item If a larger number of parents is required, then the adversary can connect in addition to any subset of honest units. This is possible because honest nodes can always advance the protocol without the help of malicious nodes, and the number of required parents cannot be higher than the number of honest nodes.
\end{enumerate}

\end{document}